\def\dOi{13(4:25)2017}
\newlist{inlinelist}{enumerate*}{1}
 \setlist*[inlinelist,1]{%
  label=(\roman*),
}
\newlist{inlinelistarabic}{enumerate*}{1}
 \setlist*[inlinelistarabic,1]{%
  label=(\arabic*),
}
\newlist{inlinelistalph}{enumerate*}{1}
 \setlist*[inlinelistalph,1]{%
  label=(\alph*),
}
\newcommand{\codefont}{\fontsize{10}{10}\selectfont}
\newcommand{\code}[1]{{\tt\codefont {#1}}}
\begin{document}

\title[Timed session types]{Timed session types\rsuper*}

\author[M.~Bartoletti]{Massimo Bartoletti\rsuper a} %required
\address{{\lsuper{a,b,c}}Universit\`a degli Studi di Cagliari, Italy}	%required
\email{\{bart, t.cimoli\}@unica.it}  %optional
% \thanks{thanks 1, optional.}	%optional

\author[T.~Cimoli]{Tiziana Cimoli\rsuper b} %required
\address{\vspace{-18 pt}}	%required
%\email{t.cimoli@unica.it}  %optional
% \thanks{thanks 1, optional.}	%optional

\author[M.~Murgia]{Maurizio Murgia\rsuper c} %required
\address{{\lsuper c}University of Kent, UK}	%required
\email{M.Murgia@kent.ac.uk}  %optional
% \thanks{thanks 1, optional.}	%optional

%% mandatory lists of keywords and classifications:
\keywords{session types, timed systems, compliance}
%\ACMCCS{[{\bf Theory of computation}]: Models of computation; Concurrency --
%Semantics and reasoning -- Program reasoning -- Program specifications -- Program verification;
%Semantics and reasoning -- Program semantics -- Operational semantics.}
\titlecomment{{\lsuper *}Full version of an Extended Abstract presented at FORTE'15.} 
% \titlecomment{} 
% OPTIONAL comment concerning the title, \eg, if a variant
% or an extended abstract of the paper has appeared elsewehere
%%%%%%%%%%%%%%%%%%%%%%%%%%%%%%%%%%%%%%%%%%%%%%%%%%%%%%%%%%%%%%%%%%%%%%%%%%%

\pagestyle{headings}    % switches on printing of running heads

\begin{abstract}
  Timed session types 
  formalise timed communication protocols between
  two participants at the endpoints of a session.
  They feature a decidable compliance relation, 
  which generalises to the timed setting the progress-based 
  compliance between untimed session types.
  We show a sound and complete technique to decide 
  when a timed session type admits a compliant one. 
  Then, we show how to construct
  the most precise session type compliant with a given one,
  according to the subtyping preorder induced by compliance.
  Decidability of subtyping follows from these results.
\end{abstract}

\maketitle              % typeset the title of the contribution

%%% main text

\section{Introduction} \label{sec:introduction}

Session types are formal descriptions of interaction protocols
involving two or more participants over a network~\cite{HondaVK98CommProg,TakeuchiHK94}.
They can be used to specify the behavioural interface
of a service or a component, and to statically check 
through a (session-)type system that this conforms to its implementation,
so enabling compositional verification of distributed applications.
Session types support formal definitions of compatibility or \emph{compliance}
(when two or more session types, composed together, behave correctly), 
and of substitutability or \emph{subtyping}
(when a service can be safely replaced by another one, while preserving the 
interaction capabilities with the context).
Since these notions are often decidable and computationally tractable 
(for synchronous session types),
or safely approximable (for asynchronous ones),
session typing is becoming a particularly attractive approach 
to the problem of correctly designing distributed applications.
This is witnessed by a steady flow of foundational studies~\cite{Bartoletti15jlamp,Castagna09ppdp,Denielou13icalp,Dezani09wsfm,Huttel16csur}
and of tools~\cite{Ancona16ftpl,Corin08secure,Yoshida13scribble}
based on them in the last few years.

In the simplest setting, session types are terms of a process algebra
featuring a selection construct (an \emph{internal choice} among a set of branches),
a branching construct (an \emph{external choice} offered to the environment), 
and recursion.
In this basic form, session types cannot faithfully capture a natural and relevant
aspect of interaction protocols, namely the timing constraints
among the communication actions.
While formal methods for time have been studied for at least a couple of decades,
they have approached the realm of session types 
very recently~\cite{Bocchi14concur,Neykova14beat},
with the goal of extending compositional verification techniques~\cite{Honda08MPS,Honda16jacm}.
These approaches introduce time into an already sophisticated setting,
featuring multiparty session types with asynchronous communication (via unbounded buffers).

We think that studying timed session types in a basic setting
(synchronous communication between two endpoints, as in the seminal untimed version)
is worthy of attention.
From a theoretical point of view, the objective is to lift to the timed case
some decidability results, like those of compliance and subtyping.
Some intriguing problems arise: unlike in the untimed case, 
a timed session type not always admits a compliant.
Hence, besides deciding if two session types \emph{are compliant},
it becomes a relevant problem whether a session type \emph{has a compliant}.
From a more practical perspective, decision procedures for timed session types,
like those for compliance and for dynamic verification,
would enable the implementation of programming tools and infrastructures
for the development of safe communication-oriented distributed applications.

\paragraph{\textbf{Timed session types.}}

In this paper we present a theory of binary timed session types (TSTs).
For instance, we describe as the following TST
the protocol of a service % which offers a remote function
taking as input a zip code, and then either providing as output the current weather,
or aborting:
\begin{equation}
  \label{eq:intro:p}
  \tsbP \; = \; \TsumE{\atomIn{zip}}{\clockX}{
    \left(
      \TsumI{\atomOut{weather}}{5 < \clockX < 10}{} \sumInt \TsumI{\atomOut{abort}}{\clockX < 1}{}
    \right)
  }
\end{equation}
The prefix $\TsumE{\atomIn{zip}}{\clockX}{}$ states that the service 
can receive a $\atom{zip}$ code, and then reset a clock~$\clockX$. 
The continuation is an \emph{internal choice} between two outputs:
either send $\atom{weather}$ in a time window of $(5,10)$
time units, or $\atom{abort}$ the protocol within $1$ time unit.

A standard notion of compliance in the untimed setting
is the one which relates two session types whenever their interaction 
never reaches a deadlock~\cite{Bartoletti15plabs}.
For instance, consider the two session types:
\begin{equation}
  \label{eq:intro:untimed-weather}
  \TsumE{\atomIn{zip}}{}{
    \left(
      \TsumI{\atomOut{weather}}{}{} \sumInt \TsumI{\atomOut{abort}}{}{}
    \right)
  }
  \qquad\qquad
  \TsumI{\atomOut{zip}}{}{
    \left(
      \TsumE{\atomIn{weather}}{}{} \sumExt \TsumE{\atomIn{abort}}{}{}
    \right)
  }
\end{equation}

After the synchronisation on $\atom{zip}$,  
the leftmost session type can internally choose between one of the outputs
$\atom{weather}$ and $\atom{abort}$, and each one of these choices
is matched by the external choice in the rightmost session type.
Therefore, the two session types in~\eqref{eq:intro:untimed-weather} 
are considered compliant.
In the \emph{untimed} setting, the notion of compliance is decidable:
a procedure to decide whether two session types are compliant
can just explore the finite-state LTS which describes their interaction,
and return true whenever there are not deadlocks.

In the timed setting, 
we can use the same notion of compliance as in the untimed case, 
but applied on the \emph{timed} LTS that describes the interaction of the two TSTs.
For instance, the TST $\tsbP$ in~\eqref{eq:intro:p} is \emph{not} compliant with:
\begin{equation}
  \label{eq:intro:q}
  \tsbQ \; = \; \TsumI{\atomOut{zip}}{\clockY}{
    \left(
      \TsumE{\atomIn{weather}}{\clockY < 7}{} \sumExt \TsumE{\atomIn{abort}}{\clockY < 5}{}
    \right)
  }
\end{equation}
because $\tsbQ$ is available to receive $\atom{weather}$ until $7$ time units
since it has sent the $\atom{zip}$ code, while $\tsbP$ can choose
to send $\atom{weather}$ until $10$ time units
(note that $\tsbP$ and $\tsbQ$, cleaned from all time annotations, 
are compliant in the untimed setting).
Differently from the untimed case, 
the timed LTS of TSTs is
infinite-state, hence compliance is not trivially decidable.
A relevant problem is then the decidability of compliance in the timed case.

A further difference from the untimed case is that not every TST admits
a compliant one. 
In the untimed case, a session type is always compliant to its syntactic dual,
which is obtained by swapping outputs with inputs, and internal with external choices,
as in~\eqref{eq:intro:untimed-weather}.
For instance, consider the client protocol:
\[
\tsbQi \; = \; \TsumI{\atomOut{zip}}{\clockY < 10}{
  \left(
    \TsumE{\atomIn{weather}}{\clockY < 7}{} \sumExt \TsumE{\atomIn{abort}}{\clockY < 5}{}
  \right)
}
\]
No service can be compliant with $\tsbQi$, because if $\tsbQi$
sends the $\atom{zip}$ code, \eg, at time $8$, 
one cannot send $\atom{weather}$ or $\atom{abort}$
in the given time constraints.
This observation gives rise to other two questions.
How to decide if a TST admits a compliant one?
In such case, can we effectively construct a TST which is compliant with the given one?

Another relevant notion in the untimed setting
is that of semantic subtyping~\cite{Barbanera15mscs}.
Roughly, a session type $\tsbPi$ is a subtype of $\tsbP$ whenever
all the session types which are compliant with $\tsbP$ are compliant also with $\tsbPi$.
This notion can be used to detect when a service can be safely substituted by another one.
Indeed, a service with type $\tsbP$ can be replaced by
one with a subtype $\tsbPi$ of $\tsbP$,
guaranteeing that all the services which interacted correctly
with the old one will do the same with the new one. 
Establishing decidability of subtyping for TSTs
would allow to check substitutability of time-aware services.

\paragraph{\textbf{Contributions.}}

We summarize the main contributions of the present work as follows:
\begin{enumerate}

\item \label{intro:contrib:semantics}
  We give the syntax and semantics of TSTs.
  Their semantics is a conservative extension of the
  synchronous semantics of untimed session types~\cite{Barbanera15mscs},
  adding \emph{clock valuations} to associate each clock with a positive real.

\item %
  We develop a sound and complete decision procedure 
  for verifying compliance between TSTs (\Cref{th:compliance:decidable}).
  To do that, we reduce this problem to that of model-checking deadlock freedom  
  in timed automata~\cite{Alur94theory}, which is decidable.

\item We develop a procedure to detect whether a TST admits a compliant one.
  This procedure takes the form of a kind system which associates,
  to each TST $\tsbP$, a set of clock valuations under 
  which $\tsbP$ admits a compliant. %
  The kind system is sound and complete 
  (\Cref{th:dual-sound} and~\Cref{th:dual-complete}),
  and kind inference is decidable (\Cref{th:kind-inference:decidable}).
  From this we infer a decidable (sound and complete) procedure
  for the existence of a compliant.

\item We exploit the kind system to define the \emph{canonical compliant} 
  of a TST (\Cref{def:dual}).
  From the decidability results on the kind system 
  we obtain an effective procedure to construct the canonical compliant 
  (\Cref{th:dual:computable}).

\item We study the semantic subtyping preorder for TSTs.
  Building upon the decidability of compliance and of kind inference,
  we prove that semantic subtyping between TSTs is decidable (\Cref{th:subtyping-decidable}).
  We also show that the canonical compliant of a TST is the \emph{greatest} TST compliant with the original one,
  according to the preorder (\Cref{lem:dual-minimal}). 

\item We provide TSTs with a \emph{monitoring semantics} 
  (\Cref{def:tst:monitoring-semantics}),
  which detects when a participant does not respect its TST.
  This semantics % enjoys some desirable properties:
  is deterministic (\Cref{lem:tst-monitoring:determinism}), 
  and it is coherent with the semantics of item~\eqref{intro:contrib:semantics}
  with respect to compliance (\Cref{lem:st:mcompliance}).
  Further, it guarantees that in each state of an interaction, 
  either we have reached success, 
  or someone is in charge of a move, 
  or not respecting its TST (\Cref{lem:onduty-culpable}).

\item We develop a suite of tools which implement the primitives discussed above, 
  including a compliance checker, the canonical compliant construction,
  and an execution monitor based on TSTs. 
  Our tools are available at \href{http://co2.unica.it}{co2.unica.it}.

\end{enumerate}

\paragraph{\textbf{Applications of timed session types.}}

A theory of timed session types with the primitives outlined above 
can be applied to the design and implementation of distributed applications. 
For instance, the message-oriented middleware in~\cite{CO2} 
exploits our theory to allow disciplined interactions 
between mutually distrusting services. %
The idea is a contract-oriented, bottom-up composition, 
where only those services with compliant contracts can interact
via binary sessions. %
The middleware makes available a global store where services can advertise
contracts, in the form of TSTs.

The middleware guarantees that sessions are established only between services with compliant TSTs.
More specifically, assume that a service $\pmvA$ 
advertises a contract $\tsbP$ to the store
(this is only possible if $\tsbP$ admits a compliant).
A session between $\pmvA$ and another service $\pmvB$ can be established if %
\begin{inlinelist}
\item $\pmvB$ advertises a contract $\tsbQ$ compliant with $\tsbP$, or
\item $\pmvB$ \emph{accepts} the contract $\tsbP$ 
  (in this case, the contract of $\pmvB$ is the canonical compliant of $\tsbP$).
\end{inlinelist}
In the first case, the middleware exploits the decision procedure 
for compliance (\Cref{th:compliance:decidable}),
while in the second one it uses the canonical compliant 
construction (\Cref{def:dual}).

The middleware also implements an execution monitor for TSTs, 
to check if the actions performed by services conform to their protocol, 
and --- otherwise --- %
to detect which services have caused the violation. %
More specifically, 
when the session is established, $\pmvA$ and $\pmvB$
can interact by sending/receiving messages through the session.
During the interaction, all their actions are monitored 
(according to~\Cref{def:tst:monitoring-semantics}),
and possible misbehaviours are detected
(according to~\Cref{def:culpable}) and sanctioned. %
Sanctions negatively affect the reputation of a service, 
and consequently its chances to establish new sessions. %

In systems of \emph{honest} services that always respect their contracts,
compliance ensures progress of the whole application. 
In systems with dishonest services, dynamic verification
of all the exchanged messages guarantees safe executions,
and the sanction mechanism
automatically marginalizes the dishonest services.

The middleware APIs (available at~\href{http://co2.unica.it}{co2.unica.it})
implement the following primitives:
\begin{inlinelistalph}
\item \code{tell} advertises a TST (say, $\tsbP$) to the middleware.
Firstly, the middleware checks if $\tsbP$ admits a compliant one
(otherwise, it rejects $\tsbP$).
Then, it searches the store for a TST compliant with $\tsbP$. 
When such TST is found, the middleware starts a new session 
between the respective services, sending them the session identifier. % 
\item \code{accept} is used by a service $\pmvA$ 
to directly establish a session with a service $\pmvB$, 
knowing the identifier of a TST $\tsbP$ advertised by $\pmvB$.
In this case $\pmvA$ does not advertise a TST, and its contract in the 
new session will be the canonical compliant of $\tsbP$.
\item \code{send} and \code{receive} 
perform outputs and inputs of messages in an already established session.
The middleware monitors all the exchanged messages,
also keeping track of the passing of time, to detect and sanction contract violations.
\end{inlinelistalph}

\paragraph{\textbf{Structure of the paper.}}

We start in~\Cref{sec:tst} by giving the syntax and semantics of TSTs.
In~\Cref{sec:tst-compliance} we define a compliance relation between TSTs,
extending to the timed setting the progress-based notion of compliance between untimed session types.
In~\Cref{sec:tst-duality} we study when a TST admits a compliant one, 
and we define a construction to obtain the canonical compliant of a TST.
Decidability of this construction is established in~\Cref{sec:comput-dual}.
In~\Cref{sec:tst-subtyping} we study the semantic subtyping preorder for TSTs. %
In~\Cref{sec:ex-paypal-full} we report a case study of a user agreement policy, 
showing how to model it as a TST.
We detail the encoding of TSTs into timed automata in~\Cref{sec:tst-to-ta}.
In~\Cref{sec:tst-monitoring} we address the problem of
dynamically monitoring interactions regulated by TSTs. 
Finally, in~\Cref{sec:related-work} we discuss some related work.
The proofs of all our statements are either in the main text, 
or in~\Cref{sec:proofs-compliance,sec:proofs-duality,sec:proofs-comput-dual,sec:proofs-tst-to-ta,sec:proofs-monitoring}.

\section{Timed session types: syntax and semantics} 
\label{sec:tst}
\label{sec:tst-semantics}

Let $\Act$ be a set of \emph{actions},
ranged over by $\atomA, \atomB, \ldots$.
We denote with $\ActOut$ the set $\setcomp{\atomOut{a}}{\atomA \in \Act}$
of \emph{output actions},
with $\ActIn$ the set $\setcomp{\atomIn{a}}{\atomA \in \Act}$
of \emph{input actions},
and with $\BLab = \ActOut \cup \ActIn$ the set of \emph{branch labels},
ranged over by $\labL,\labLi,\ldots$.
We use $\delta, \delta', \ldots$ to range over the set $\Realpos$
of positive real numbers including zero,
and $d,d',\ldots$ to range over 
\iftoggle{techreport}{the set $\Nat$ of natural numbers}{$\Nat$}.
Let $\Clocks$ be a set of
\emph{clocks}, namely variables in $\Realpos$, ranged over by
$\clockT, \clockTi, \ldots$.  We use $\resetR,\resetT,\ldots \subseteq
\Clocks$ to range over sets of clocks. 

\newcommand{\guardsyntax}{%
  \guardG 
  \; ::= \;
  \guardTrue 
  \ \sep \ 
  \neg \guardG 
  \ \sep \ 
  \guardG  \land \guardG  
  \ \sep \
  \clockT \circ d
  \ \sep \
  \clockT - \clockTi \circ d
}

\begin{defi}[\textbf{Guard}]
  \label{def:guards}
  We define the set $\GuardG[\Clocks]$ of \emph{guards} 
  over clocks $\Clocks$ as follows:
  \[
  \guardsyntax
  \tag*{where $\circ \in \setenum{<,\leq,=,\geq,>}$}
  \]
\end{defi}\bigskip

\noindent A TST $\tsbP$ models the behaviour of a single participant
involved in an interaction (\Cref{def:tst:syntax}).
To give some intuition, we consider two participants, 
Alice ($\pmvA$) and Bob ($\pmvB$), 
who want to interact. %
$\pmvA$ advertises an \emph{internal choice}
$\TSumInt[i]{\atomOut[i]{a}}{\guardG[i],\resetR[i]}{\tsbP[i]}$ 
when she wants to do one of the outputs $\atomOut[i]{a}$ 
in a time window where $\guardG[i]$ is true;
further, the clocks in $\resetR[i]$ will be reset after the output is performed.
Dually, $\pmvB$ advertises an \emph{external choice} 
$\TSumExt[i]{\atomIn[i]{a}}{\guardG[i],\resetR[i]}{\tsbQ[i]}$
to declare that he is available to receive
each message $\atom[i]{a}$ in \emph{any instant} 
within the time window defined by $\guardG[i]$
(and the clocks in $\resetR[i]$ will be reset after the input).

\begin{defi}[\textbf{Timed session type}] 
  \label{def:tst:syntax}
  \emph{Timed session types} $\tsbP,\tsbQ,\ldots$ 
  are terms of the following grammar:
  \begin{align*}
    \tsbP \;\; & ::= \;\;
    % \textstyle
    % \cnil
    % \ \sep \
                 \success
                 \ \sep \
                 \TSumInt[i \in I]{\atomOut[i]{a}}{\guardG[i],\resetR[i]}{\tsbP[i]} 
                 \ \sep  \
                 \TSumExt[i \in I]{\atomIn[i]{a}}{\guardG[i],\resetR[i]}{\tsbP[i]}
                 \ \sep \
                 \rec \tsbX \tsbP
                 \ \sep \
                 \tsbX
  \end{align*}
  where   
  \begin{inlinelist}
  \item the set $I$ is finite and non-empty,
  \item the actions in internal/external choices are pairwise distinct,
  \item recursion is guarded 
    (\eg, we forbid both $\rec \tsbX \tsbX$ and $\rec{\tsbX}{\rec{\tsbY}{\tsbP}}$).
  \end{inlinelist}  
\end{defi}

Except where stated otherwise, we consider TSTs up-to unfolding of recursion.
A TST is \emph{closed} when it has no recursion variables.
If $\tsbQ = \TSumInt[i \in I]{\atomOut[i]{a}}{\guardG[i],\resetR[i]}{\tsbP[i]}$ and $0 \not\in I$,
we write $\atomOut[0]{a}.\tsbP[0] \sumInt \tsbQ$ for 
$\TSumInt[i \in I \cup \setenum{0}]{\atomOut[i]{a}}{\guardG[i],\resetR[i]}{\tsbP[i]}$
(the same for external choices).
True guards, empty resets, 
and trailing occurrences of the \emph{success state} $\success$ can be omitted.

\begin{exa}[\textbf{Simplified PayPal}] \label{ex-paypal}
  Along the lines of PayPal User Agreement~\cite{PayPal},
  we specify the protection policy for buyers of a simple on-line payment
  platform, called PayNow 
  (see~\iftoggle{techreport}{\Cref{sec:ex-paypal-full}}{\cite{tst-long}}
  for the full version). 
  PayNow helps customers in on-line purchasing, providing protection
  against seller misbehaviours. %
  In case a buyer has not received what he
  has paid for, he can open a dispute within $180$ days 
  from the date the buyer made the payment. %
  After opening of the dispute, the buyer and
  the seller may try to come to an agreement. %
  If this is not the case,
  within $20$ days, the buyer can escalate the dispute to a claim. %
  However, the buyer must wait at least $7$ days from the date of
  payment to escalate a dispute. %
  Upon not reaching an agreement, if
  still the buyer does not escalate the dispute to a claim within $20$
  days, the dispute is considered aborted. %
  During a claim procedure, PayNow will ask the buyer to provide
  documentation to certify the payment,
  within $3$ days of the date the dispute was escalated to a claim. %
  After that, the payment will be refunded within 7 days.
  PayNow's agreement is described by the following TST $\tsbP$:
  \begin{align*}
    \tsbP \; = \; & \atomIn{pay} \setenum{\clockT[pay]}.
                    \left( 
                    \atomIn{ok}
                    \sumExt \; 
                    \atomIn{dispute}\setenum{\clockT[pay]<180,\clockT[d]}.\; \tsbPi
                    \right)
                    \tag*{where}
    \\
    \tsbPi \; = \; &
                     \atomIn{ok} \setenum{\clockT[d] < 20}
                     \;\sumExt 
    \\
                  & \atomIn{claim}\setenum{\clockT[d] < 20 \land \clockT[pay] > 7, \clockT[c]}.
                    \atomIn{rcpt}\setenum{\clockT[c]<3,\clockT[c]}.\atomOut{refund}\setenum{\clockT[c]<7}                
                    \;\sumExt 
    \\
                  & \atomIn{abort}
  \end{align*}
\end{exa}

\paragraph{Semantics.}

To define the behaviour of TSTs we use \emph{clock valuations},
which associate each clock with its value.
The state of the interaction between two TSTs is described
by a \emph{configuration} $(\tsbP,\clockN) \mid (\tsbQ,\clockE)$,
where the clock valuations $\clockN$ and $\clockE$ 
record (keeping the same pace) 
the time of the clocks in $\tsbP$ and $\tsbQ$, respectively.
The dynamics of the interaction is formalised as a transition relation
between configurations (\Cref{def:tst:semantics}).
This relation describes all and only the \emph{correct} interactions:
\eg, we do not allow time passing to
make unsatisfiable all the guards in an internal choice,
since doing so would prevent a participant from respecting her protocol.
In~\Cref{sec:tst-monitoring} we will study another semantics, 
which can also describe the behaviour of \emph{dishonest} participants
who do not respect their protocols.

\begin{defi}[\textbf{Clock valuations}]
  \label{def:clock-valuation}
We denote with $\Val = \Clocks \rightarrow \Realpos$ 
the set of \emph{clock valuations}. 
We use meta-variables $\clockN, \clockE, \ldots$ to range over $\Val$, %
and we denote with $\clockN[0], \clockE[0]$ the \emph{initial} clock valuations, 
which map each clock to zero.
% $\clockN : \Clocks \rightarrow \Realpos$,
% 
Given a clock valuation $\clockN$,
we define the following valuations:
\begin{itemize}

\item $\clockN + \delta$ 
increases each clock in $\clockN$ by the delay $\delta \in \Realpos$, \ie: 
\[
  (\clockN + \delta)(\clockT) = \clockN(\clockT) + \delta
  \tag*{for all $\clockT \in \Clocks$}
\]

\item $\reset{\clockN}{\resetR}$ \emph{resets} 
all the clocks in the set $\resetR \subseteq \Clocks$, \ie:
\[
\reset{\clockN}{\resetR}(\clockT) = \begin{cases}
  0 & \text{if $\clockT \in \resetR$} \\
  \clockN(\clockT) & \text{otherwise}
\end{cases}
\]

\end{itemize}
\end{defi}

\begin{defi}[\textbf{Semantics of guards}]
  For all guards $\guardG$,
  we define the set of clock valuations $\sem{\guardG}$
  inductively as follows, where $\circ \in \setenum{<,\leq,=,\geq,>}$:
  \[
  \begin{array}{lcl}
    \sem{\guardTrue} = \Val
    &
    \sem{\neg\guardG} = \Val \setminus \sem{\guardG} 
    \qquad
    &
    \sem{\guardG[1] \land \guardG[2]} = \sem{\guardG[1]} \cap \sem{\guardG[2]} 
    \\[5pt]
    \sem{\clockT \circ d} = \setcomp{\clockN}{\clockN(\clockT) \circ d} 
    \quad
    & & 
    \sem{\clockT - \clockTi \circ d} = \setcomp{\clockN}{\clockN(\clockT) - \clockN(\clockTi) \circ d} 
  \end{array}
  \]
\end{defi}\bigskip

\noindent Before defining the semantics of TSTs, 
we recall from \cite{BengtssonY03} 
some basic operations on \emph{sets} of clock valuations
(ranged over by~$\kindK,\kindKi,\ldots \subseteq \Val$).

\begin{defi}[\textbf{Past and inverse reset}]
  \label{def:past-invreset}
  For all sets $\kindK$ of clock valuations, 
  the set of clock valuations $\past{\kindK}$ (the \emph{past} of $\kindK$)
  and $\invReset{\kindK}{\resetT}$ (the \emph{inverse reset} of~$\kindK$)
  are defined as:
  \iftoggle{techreport}{%
  \[
  \past{\kindK} 
  \; = \; 
  \setcomp{\clockN}{\exists \delta \geq 0 : \clockN+\delta \in \kindK}
  \hspace{50pt}
  \invReset{\kindK}{\resetT} 
  \; = \; 
  \setcomp{\clockN}{\reset{\clockN}{\resetT} \in \kindK}
  \]
  }
  {%
  $\past{\kindK} = \setcomp{\clockN}{\exists \delta \geq 0 : \clockN+\delta \in \kindK}$,
  $\invReset{\kindK}{\resetT} = \setcomp{\clockN}{\reset{\clockN}{\resetT} \in \kindK}$.
  }
\end{defi}

\begin{defi}[\textbf{Semantics of TSTs}] \label{def:tst:semantics}
  \label{def:rdy}
  A \emph{configuration} is a term of the form 
  $(\tsbP,\clockN) \mid (\tsbQ,\clockE)$,
  where $\tsbP, \tsbQ$ are TSTs extended with \emph{committed choices}
  $\todo[\TsumI{\atomOut{a}}{\guardG,\resetR}{}] \tsbP$.
  The semantics of TSTs is a labelled relation $\smove{}$ over
  configurations (\Cref{fig:tst:s_semantics}), whose
  labels are either silent actions $\tau$,
  delays $\delta$, or branch labels,
  and where we define the set of clock valuations $\rdy{\tsbP}$ as: 
  \[
  \rdy{\tsbP} \; = \;
  \begin{cases}
    \past{\bigcup \sem{\guardG[i]}}
    & \text{if } \tsbP = \TSumInt[i \in I]{\atomOut[i]{a}}{\guardG[i],\resetR[i]}{\tsbP[i]}
    \\
    \Val
    & \text{if } \tsbP =  \SumExtRaw \cdots \text{ or } \tsbP =  \success 
    \\ 
    \emptyset
    & \text{otherwise }
  \end{cases}
  \]
  As usual, we write $\tsbP \smove{\alpha} \tsbPi$ as a shorthand for $(\tsbP,\alpha,\tsbPi) \in \smove{}$, with $\alpha \in \BLab \cup \setenum{\tau} \cup \Realpos$.
  Given a relation $\rightarrow$, we denote with $\rightarrow^*$ 
  its reflexive and transitive closure.
\end{defi}

\begin{figure}[t]
  \[
  \begin{array}{c}
    \begin{array}{cll}
      {( {\TsumI{\atomOut{a}}{\guardG,\resetR}{\tsbP}} \sumInt \tsbPi, \; \clockN)
        \;\smove{\tau}\;
        ({\todo[\TsumI{\atomOut{a}}{\guardG,\resetR}{}]{\tsbP}},\; \clockN)
      }
      \hspace{20pt}
      & \text{if } \clockN \in \sem{\guardG} 
      & \smallnrule{[$\sumInt$]}
      \\[10pt] 
      {({\todo[\TsumI{\atomOut{a}}{\guardG,\resetR}{}]{\tsbP}}, \; \clockN)
        \;\smove{\atomOut{a}}\;
        (\tsbP,\; \reset{\clockN}{\resetR})
      }
      & % \qquad \text{if } \clockN \models \guardG 
      & \smallnrule{[\bang]}    
      \\[10pt]
      {(\TsumE{\atomIn{a}}{\guardG,\resetR}{\tsbP} + \tsbPi, \; \clockN)
        \;\smove{\atomIn{a}}\;
        (\tsbP, \; \reset{\clockN}{\resetR})}
      & \text{if } \clockN \in \sem{\guardG} 
      & \smallnrule{[\qmark]}    
      \\[10pt]
      (\tsbP,\; \clockN)\smove{\; \delta \; }(\tsbP,\; \clockN+\delta)
      & \text{if }  \delta> 0 \ \land \ \clockN + \delta \in \rdy{\tsbP}
      \hspace{10pt}
      & \smallnrule{[Del]}    
    \end{array}
    \\[45pt]
    \irule{(\tsbP,\clockN) \smove{\; \tau \; } (\tsbPi,\clockNi)}
    {(\tsbP,\clockN) \mid (\tsbQ,\clockE)\smove {\; \tau \;}(\tsbPi,\clockNi) \mid (\tsbQ,\clockE)}
    \;\smallnrule{[S-$\sumInt$]}
    \hspace{30pt}
    \irule
    {(\tsbP,\clockN) \smove{\; \delta \; } (\tsbP,\clockNi) \quad 
      (\tsbQ,\clockE) \smove{\; \delta \; } (\tsbQ,\clockEi)}
    {(\tsbP,\clockN) \mid (\tsbQ,\clockE) \smove{\; \delta \;} 
      (\tsbP,\clockNi) \mid (\tsbQ,\clockEi)}
    \;\smallnrule{[S-Del]}
    \\[15pt]
    \irule
    {(\tsbP,\clockN) \smove{\; \atomOut{a} \; } (\tsbPi,\clockNi) \quad 
      (\tsbQ,\clockE) \smove{\; \atomIn{a} \; } (\tsbQi,\clockEi)}
    {(\tsbP,\clockN) \mid (\tsbQ,\clockE) \smove{\; \tau \;} 
      (\tsbPi,\clockNi) \mid (\tsbQi,\clockEi)}
    \;\smallnrule{[S-$\tau$]}
    \iftoggle{techreport}{}{%
    \\[15pt]
    \rdy{\TSumInt{\atomOut[i]{a}}{\guardG[i],\resetR[i]}{\tsbP[i]}} = \past{\bigcup \sem{\guardG[i]}}
    \hspace{12pt}
    \rdy{\SumExtRaw \cdots} = \rdy{\success} = \Val
    \hspace{12pt}
    \rdy{\todo[\TsumI{\atomOut{a}}{\guardG,\resetR}{}] \tsbP} = \emptyset
    }
  \end{array}
  \]
  \caption{Semantics of timed session types (symmetric rules omitted).}
  \label{fig:tst:s_semantics}
\end{figure}

We now comment the rules in~\Cref{fig:tst:s_semantics}.
The first four rules describe the behaviour of a TST in isolation. %
Rule~\nrule{[$\sumInt$]} allows a TST to commit to the branch
$\atomOut{a}$ of her internal choice, provided that the corresponding
guard is satisfied in the clock valuation $\clockN$.
This results in the term
$\todo[\TsumI{\atomOut{a}}{\guardG,\resetR}{}] \tsbP$,
which can only fire $\atomOut{a}$ 
through rule~\nrule{[\bang]},
without making time pass.
This term % $\todo[\TsumI{\atomOut{a}}{\guardG,\resetR}{}] \tsbP$ 
represents a state where the endpoint has committed to 
branch $\atomOut{a}$ in a specific time instant%
\footnote{
This is quite similar to the handling of internal choices 
in~\cite{Barbanera10ppdp,BCPZ15jlamp,BSZ14concur}.
In these works, an internal choice $\atomOut{a}.\tsbP \sumInt \tsbQ$
first commits to one of the branches (say, $\atomOut{a}.\tsbP$)
through an internal action, taking a transition to
a singleton internal choice $\atomOut{a}.\tsbP$.
In this state, only the action $\atomOut{a}$ is enabled ---
as in our $\todo[\TsumI{\atomOut{a}}{\guardG,\resetR}{}] \tsbP$.
}.
Rule~\nrule{[\qmark]} allows an external choice to fire any of its
input actions whose guard is satisfied.
Rule~\nrule{[Del]} allows time to pass;
this is always possible for external choices and success term, 
while for an internal choice we require that at least one of the 
guards remains satisfiable;
this is obtained through the function $\rdy{}$.
The last three rules deal with configurations.
Rule~\nrule{[S-$\sumInt$]} allows a TST
to commit in an internal choice.
Rule~\nrule{[S-$\tau$]} is the standard synchronisation rule \emph{\`a la} CCS.
Rule~\nrule{[S-Del]} allows time to pass, equally for both endpoints.

\begin{exa}
  \label{ex:tst:committed-choice}
  Let 
  $
  \tsbP = 
  \TsumI{\atomOut{a}}{}{} \sumInt 
  \TsumI{\atomOut{b}}{\clockT > 2}{}
  $,
  let 
  $
  \tsbQ = 
  % \TsumE{\atomIn{a}}{}{} \sumExt 
  \TsumE{\atomIn{b}}{\clockT > 5}{}
  $,
  and consider the computations:
  \begin{align}
    % \nonumber
    \label{eq:tst:committed-choice:0}
    (\tsbP,\clockN[0]) \mid (\tsbQ,\clockE[0]) 
    \smove{\, 7 \,} \smove{\, \tau \,} \,
    & (\todo[\TsumI{\atomOut{b}}{\clockT>2}{}],\clockN[0] + 7) \mid 
    (\tsbQ,\clockE[0] + 7)
    \smove{\, \tau \,} \,
    (\success,\clockN[0] + 7) \mid 
    (\success,\clockE[0] + 7)
    \\
    \label{eq:tst:committed-choice:1}
    (\tsbP,\clockN[0]) \mid (\tsbQ,\clockE[0]) 
    \smove{\, \delta \,} \smove{\, \tau \,} \,
    & (\todo[\TsumI{\atomOut{a}}{}{}],\clockN[0] + \delta) \mid 
    (\tsbQ,\clockE[0] + \delta)
    \\
    \label{eq:tst:committed-choice:2}
    (\tsbP,\clockN[0]) \mid (\tsbQ,\clockE[0]) 
    \smove{\, 3 \,} \smove{\, \tau \,} \,
    & (\todo[\TsumI{\atomOut{b}}{\clockT>2}{}],\clockN[0] + 3) \mid 
    (\tsbQ,\clockE[0] + 3)
  \end{align}
  The computation in~\eqref{eq:tst:committed-choice:0} reaches success,
  while the other two computations reach a deadlock state.
  In~\eqref{eq:tst:committed-choice:1},
  $\tsbP$ commits to the choice $\atomOut{a}$ after some delay $\delta$;
  at this point, time cannot pass 
  (because the leftmost endpoint is a committed choice),
  and no synchronisation is possible 
  (because the other endpoint is not offering~$\atomIn{a}$).
  In~\eqref{eq:tst:committed-choice:2},
  $\tsbP$ commits to $\atomOut{b}$ after $3$ time units;
  here, the rightmost endpoint would offer $\atomIn{b}$,
  but not in the time chosen by the leftmost endpoint.
  Note that, were we allowing time to pass in committed choices,
  then we would have obtained \eg that 
  $(\TsumI{\atomOut{b}}{\clockT > 2}{},\clockN[0]) \mid (\tsbQ,\clockE[0])$ 
  never reaches deadlock --- contradicting our intuition that
  these endpoints should not be considered compliant.
\end{exa}

Note that, even when $\tsbP$ and $\tsbQ$ have shared clocks,
the rules in~\Cref{fig:tst:s_semantics} ensure that there is no interference between them.
For instance, if a transition of $(\tsbP,\clockN)$ % via rule~\nrule{[?]}
resets some clock $\clockT$, this has no effect on a clock with the same name
in $\tsbQ$, \ie on a transition of $(\tsbP,\clockN) \mid (\tsbQ,\clockE)$.
Thus, without loss of generality  
we will assume that the clocks in $\tsbP$ and in $\tsbQ$ are disjoint.

\section{Compliance between TSTs} 
\label{sec:tst-compliance}

We extend to the timed setting
the standard progress-based compliance between (untimed) session types
\cite{Barbanera15mscs,Bartoletti15plabs,Castagna09toplas,Laneve07concur}.
If $\tsbP$ is compliant with $\tsbQ$, then
whenever an interaction between~$\tsbP$ and~$\tsbQ$ becomes stuck,
it means that both participants have reached the success state.
Intuitively, when two TSTs are compliant, 
the interaction between services correctly implementing%
\footnote{%
  The notion of ``correct implementation'' of a TST
  is orthogonal to the present work.
  A possible instance of this notion
  can be obtained by extending to the timed setting 
  the \emph{honesty} property of~\cite{BSTZ16lmcs}.
}
them will progress (without communication and time errors), 
until both services reach a success state.

\begin{defi}[\textbf{Compliance}] 
  \label{def:deadlock}
  \label{def:compliance}
  We say that $(\tsbP,\clockN) \mid (\tsbQ,\clockE)$
  is \emph{deadlock} whenever
  $(i)$ it is not the case that both $\tsbP$ and $\tsbQ$ are $\success$,
  % $(p \neq \success \;\lor\; q \neq \success)$
  and $(ii)$ there is no $\delta$ such that
  $(\tsbP,\clockN+\delta) \mid (\tsbQ,\clockE+\delta) \smove{\tau}$.
  We then write $(\tsbP,\clockN) \compliant (\tsbQ,\clockE)$
  whenever:
  \[
  (\tsbP,\clockN) \mid (\tsbQ,\clockE)
  \smove{}^* 
  (\tsbPi,\clockNi) \mid (\tsbQi,\clockEi)
  \quad \text{ implies } \quad 
  (\tsbPi,\clockNi) \mid (\tsbQi,\clockEi)
  \text{ not deadlock}
  \]
  We say that $\tsbP$ and $\tsbQ$ are \emph{compliant}
  whenever $(\tsbP,\clockN[0]) \compliant (\tsbQ,\clockE[0])$
  (in short, $\tsbP \compliant \tsbQ$).
\end{defi}

Note that item $(ii)$ of the definition of deadlock can be equivalently phrased as follows:
$(\tsbP,\clockN) \mid (\tsbQ,\clockE) \not\smove{\;\;\tau\,}$ 
(\ie, the configuration cannot do a $\tau$-move in the \emph{current} clock valuation),
and there does not exist any $\delta > 0$ such that
$(\tsbP,\clockN) \mid (\tsbQ,\clockE) \, \smove{\,\delta\,} \, \smove{\,\tau\,} \,$.

\begin{exa}
  The TSTs $\tsbP = \atomIn{a}\setenum{\clockT < 5} . \atomOut{b}\setenum{\clockT < 3}$
  and
  $\tsbQ = \atomOut{a}\setenum{\clockT < 2}.\atomIn{b}\setenum{\clockT < 3}$
  are compliant, 
  but $\tsbP$ is \emph{not} compliant with 
  $\tsbQi = \atomOut{a}\setenum{\clockT < 5}.\atomIn{b}\setenum{\clockT < 3}$.
  Indeed, if $\tsbQi$ outputs $\atomA$ at, say, time $4$, the configuration will reach a state where no actions are possible, and time cannot pass. 
  This is a deadlocked state, according to~\Cref{def:deadlock}.
\end{exa}

\begin{exa}
  Consider a customer of PayNow (\Cref{ex-paypal}) %
  who is willing to wait 10 days to receive the item she has paid for, 
  but after that she will open a claim. %
  Further, she will instantly provide PayNow with
  any documentation required. %
  The customer contract is described by the following TST,
  which is compliant with PayNow's $\tsbP$ in~\Cref{ex-paypal}: %
  \[
  \begin{array}{rl}
    \atomOut{pay}\setenum{\clockT[pay]}.
    (
    & \!\!\atomOut{ok}\setenum{\clockT[pay] < 10}
    \;\sumInt 
    \\
    & \!\!\atomOut{dispute}\setenum{\clockT[pay] = 10}.\atomOut{claim}\setenum{\clockT[pay] = 10}.\atomOut{rcpt}
    \setenum{\clockT[pay] = 10}.\atomIn{refund}
    )
  \end{array}
  \]
  % Indeed, the customer and PayNow's contracts are compliant.
\end{exa}\bigskip

\noindent Compliance between TSTs is more liberal than the untimed notion,
as it can relate terms which, when cleaned from all the time annotations,
would not be compliant in the untimed setting.
For instance, the following~\namecref{ex:tst-compliance:rec-vs-sumext} shows
that a recursive internal choice can be compliant with 
a \emph{non}-recursive external choice --- 
which can never happen in untimed session types. 

\begin{exa}
  \label{ex:tst-compliance:rec-vs-sumext}
  {%
    Let
    $
    \tsbP = \rec \tsbX {\big(
      \TsumI{\atomOut{a}}{}{} 
      \;\sumInt\; 
      \TsumI{\atomOut{b}}{\clockX \leq 1}{\TsumE{\atomIn{c}}{}{\tsbX}}
      \big)}
    $,
    $
    \tsbQ = \TsumE{\atomIn{a}}{}{} 
    \;\sumExt\; 
    \TsumE{\atomIn{b}}{\clockY \leq 1}
    {\TsumI{\atomOut{c}}{\clockY > 1}{\TsumE{\atomIn{a}}{}{}}}
    $.}
  We have that $\tsbP \compliant \tsbQ$.
  Indeed, if $\tsbP$ chooses the output $\atomOut{a}$,
  then $\tsbQ$ has the corresponding input, and they both succeed;
  instead, if $\tsbP$ chooses $\atomOut{b}$, then it will read $\atomIn{c}$
  when $\clockX>1$, and so at the next loop it is forced to choose
  $\atomOut{a}$, since the guard of $\atomOut{b}$ has become
  unsatisfiable.
\end{exa}

\Cref{def:coind-compliance} and~\Cref{lem:coind-compliance} below
coinductively characterise compliance between TSTs,
by extending to the timed setting the coinductive compliance 
for untimed session types in~\cite{Barbanera10ppdp}. %
Intuitively, an internal choice $\tsbP$ is compliant with $\tsbQ$ when 
\begin{inlinelist}
\item $\tsbQ$ is an external choice, 
\item for each output $\atomOut{a}$ that $\tsbP$ can fire after $\delta$
  time units,
  there exists a corresponding input $\atomIn{a}$
  that $\tsbQ$ can fire after $\delta$ time units, and
\item their continuations are coinductively compliant.
\end{inlinelist}
The case where $\tsbP$ is an external choice is symmetric. %

\begin{defi}
  \label{def:coind-compliance}
  We say $\relR$ is a \emph{coinductive compliance}
  iff $(\tsbP,\clockN) \relR (\tsbQ,\clockE)$ implies:
  \begin{enumerate}
    
  \item \label{def:coind-compliance:i} $ 
    \tsbP = \success \iff \tsbQ = \success
    $
    
    \vspace{2pt}
    
  \item\label{def:coind-compliance:ii} $
    \tsbP = \TSumInt[i \in I]{\atomOut[i]{a}}{\guardG[i],\resetR[i]}{\tsbP[i]} \implies 
    \clockN \in \rdy{\tsbP} 
    \;\land\;
    \tsbQ = 
    \TSumExt[j \in J]{\atomIn[j]{a}}{\guardG[j],\resetR[j]}{\tsbQ[j]} \;\land\; 
    $ \\
    $
    \forall \delta,i:~\clockN + \delta \in \sem{\guardG[i]} \implies \exists j : \atom[i]{a} = \atom[j]{a} 
    \land \clockE + \delta \in \sem{\guardG[j]} \land (\tsbP[i],\clockN+\delta[\resetR[i]]) \relR
    (\tsbQ[j],\clockE+\delta[\resetR[j]])
    $

    \vspace{2pt}

  \item\label{def:coind-compliance:iii} $
    \tsbP = \TSumExt[j \in J]{\atomIn[j]{a}}{\guardG[j],\resetR[j]}{\tsbP[j]} \implies 
    \clockE \in \rdy{\tsbQ}
    \;\land\;
    \tsbQ = 
    \TSumInt[i \in I]{\atomOut[i]{a}}{\guardG[i],\resetR[i]}{\tsbQ[i]} \;\land\;
    $ \\
    $
    \forall \delta,i:~\clockE + \delta \in \sem{\guardG[i]} \implies \exists j : \atom[i]{a} = \atom[j]{a} 
    \land \clockN + \delta \in \sem{\guardG[j]} \land (\tsbP[j],\clockN+\delta[\resetR[j]]) \relR
    (\tsbQ[i],\clockE+\delta[\resetR[i]])
    $
  \end{enumerate}
\end{defi}

\newcommand{\lemcoindcompliance}{%
  $
  \tsbP  \compliant \tsbQ \iff 
  \exists \relR \text{coinductive compliance} \; : \; 
  (\tsbP,\clockN[0]) \relR (\tsbQ,\clockE[0])
  $
}
\begin{reslemma}{lem:coind-compliance}
  \lemcoindcompliance
\end{reslemma}

The following~\namecref{th:compliance:decidable} establishes
decidability of compliance.
To prove it, we reduce the problem of checking $\tsbP \compliant \tsbQ$
to that of model-checking % with Uppaal~\cite{Uppaal04tutorial} 
deadlock freedom in a network of timed automata
constructed from $\tsbP$ and $\tsbQ$.

\begin{thm} \label{th:compliance:decidable}
  Compliance between TSTs is decidable.
\end{thm}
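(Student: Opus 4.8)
The plan is to reduce compliance to a decidable model-checking problem on timed automata, following the route announced just after the statement. Concretely, I would define a translation $\trd{\tsbP}{}$ sending each TST to a timed automaton whose locations are the (finitely many, thanks to guarded recursion) subterms of $\tsbP$, whose clocks are those of $\tsbP$, and whose edges mirror the rules of \Cref{fig:tst:s_semantics}: each branch of an internal or external choice becomes an edge labelled by the corresponding action, guarded by $\guardG[i]$ and resetting $\resetR[i]$. Two features of the semantics are non-standard and must be encoded with care. First, a committed choice $\todo[\ldots]\,\tsbP$ forbids the passage of time; I would model this by making the corresponding intermediate location \emph{urgent} (as in \UPP), so that no delay is enabled there. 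Second, rule \smallnrule{[Del]} lets time pass out of an internal choice only while it stays \emph{ready}, i.e.\ while the valuation remains in $\rdy{\tsbP} = \past{\bigcup \sem{\guardG[i]}}$; I would place this set as the invariant of the internal-choice location, external-choice and $\success$ locations receiving the trivial invariant $\Val$.

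Next I would compose the two automata $\trd{\tsbP}{}$ and $\trd{\tsbQ}{}$ into a network $\netN$, letting an output edge $\atomOut{a}$ synchronise with an input edge $\atomIn{a}$ to produce a $\tau$-step (rule \smallnrule{[S-$\tau$]}), with delays shared by both components (rule \smallnrule{[S-Del]}). The central lemma would be a tight operational correspondence: the timed transition system of the configuration $(\tsbP,\clockN) \mid (\tsbQ,\clockE)$ is (timed) bisimilar to that of $\netN$ started in the matching location and clock valuation, so that reachable configurations correspond to reachable network states and, crucially, \emph{deadlocked} configurations in the sense of \Cref{def:deadlock} correspond exactly to reachable network states that are neither the double-$\success$ state nor able to perform a synchronising $\tau$ now or after any delay.

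Finally, I would phrase $\tsbP \compliant \tsbQ$ as the absence of such a reachable stuck state in $\netN$. Because deadlock-freedom and reachability are decidable for (networks of) timed automata via the region construction of \cite{Alur94theory} --- equivalently, via zones \cite{BengtssonY03}, which is what a tool such as \UPP checks --- and the region graph of $\netN$ is finite, one can effectively decide whether any deadlocked state is reachable; compliance holds precisely when none is. This yields decidability of compliance, establishing \Cref{th:compliance:decidable}.

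I expect the main obstacle to be the faithful encoding of the time-passing discipline rather than the final appeal to decidability, which is standard. The urgency of committed choices is routine, but the invariant $\rdy{\tsbP} = \past{\bigcup \sem{\guardG[i]}}$ is the past of a \emph{union} of guards and hence need not be convex, whereas timed-automaton invariants are conjunctions of clock constraints; I would therefore split each internal-choice location into finitely many copies, one per convex region of $\rdy{\tsbP}$, so that each copy carries a legitimate convex invariant while their union reproduces the intended behaviour. Getting the bisimulation to account for this splitting, for the transient committed-choice locations, and for the subtle second clause of \Cref{def:deadlock} (no $\tau$ enabled after \emph{any} delay) is where the care is needed; the detailed construction and its correctness are what I would develop in \Cref{sec:tst-to-ta}.
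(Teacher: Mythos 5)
Your overall route is the paper's route: translate the two TSTs into a network of timed automata (the paper first passes through a defining-equation normal form, \Cref{normalform}, but that is only a device for handling recursion, equivalent to your ``subterms as locations''), prove an operational correspondence (\Cref{lem:bsim}), and reduce compliance to deadlock-freedom of the network (\Cref{th:compliance}), which is decidable. The genuine gap is in your treatment of the invariant $\rdy{\tsbP}$. You place $\rdy{\tsbP}$ directly on \emph{the} internal-choice location, i.e.\ on the location that the preceding input/output edges lead into. But in the semantics of networks (rules \nrule{[TA1]} and \nrule{[TA2]} in \Cref{fig:net_sem}) an edge can be taken only if the invariants of the \emph{target} locations hold after the resets. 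Consequently, in your network a synchronisation whose continuation is an internal choice all of whose guards have already expired is \emph{disabled}, whereas in the TST semantics rule \smallnrule{[S-$\tau$]} allows exactly that synchronisation, which then leads to a deadlocked configuration in the sense of \Cref{def:deadlock}. So your central lemma as stated --- bisimilarity, with reachable configurations corresponding to reachable network states --- is false for your encoding: the post-synchronisation TST configuration has no counterpart in the network. This is precisely the problem the paper's encoding is designed around: in \Cref{mapping} each internal choice gets \emph{two} locations, an urgent entry location $\tsbX$ with invariant $\guardTrue$ (so that the preceding action can always complete) followed by a $\tau$-edge into the non-urgent location $\tsbTauX$ carrying the invariant $\rdy{\tsbP}$; if $\rdy{\tsbP}$ is already violated, the system completes the action and then deadlocks inside the urgent location, matching the TST deadlock and making the strong bisimulation of \Cref{lem:bsim} go through.

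Your reduction may in fact still be sound for the final yes/no question, because in the problematic scenario the blocked synchronisation leaves your network stuck in an urgent committed location, so a deadlock still exists --- it merely occurs one transition earlier; but then the correctness proof must be restructured around a weaker, deadlock-reachability correspondence rather than the bisimulation you announce, and this off-by-one-step mismatch has to be argued explicitly in both directions. A second, smaller point: your splitting of internal-choice locations into convex copies is essentially what the paper does only at the implementation level (\Cref{sec:upp_guards}); at the level of the theory it is unnecessary, since the paper's invariants are arbitrary guards (closed under negation and conjunction, hence able to express disjunction), and the paper itself observes that this splitting destroys bisimilarity while preserving deadlock-freedom --- a further warning that ``bisimilar'' is too strong a claim for your construction as proposed.
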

\begin{proof}
  We defer the actual proof after~\Cref{th:compliance} in~\Cref{sec:tst-to-ta}.
\end{proof}

\section{Admissibility of a compliant} 
\label{sec:tst-duality}

In the untimed setting, each session type $\tsbP$ admits a compliant, 
\ie there exists some $\tsbQ$ such that $\tsbP \compliant \tsbQ$.
For instance, we can compute $\tsbQ$
by simply swapping internal choices with external ones 
(and inputs with outputs) in~$\tsbP$
(this $\tsbQ$ is called the \emph{canonical dual} of $\tsbP$ in some papers \cite{Castagna09ppdp,GayV10}). %
A na\"{\i}ve attempt to extend this construction to TSTs can be
to swap internal with external choices, as in the untimed case,
and leave guards and resets unchanged. %
% Unfortunately, 
This construction does not work as expected,
as shown by the following~\namecref{ex:duality}.

\begin{exa}\label{ex:duality}
  Consider the following TSTs:
  \[
  \begin{array}{l}
    \tsbP[1] = \TsumI{\atomOut{a}}{\clockX \leq 2}{\TsumI{\atomOut{b}}{\clockX \leq 1}{}}
    \hspace{80pt}
    \tsbP[2] = \TsumI{\atomOut{a}}{\clockX \leq 2}{} \sumInt
    \TsumI{\atomOut{b}}{\clockX \leq 1}{\TsumE{\atomIn{a}}{\clockX \leq 0}{}}
    \\[5pt]
    \tsbP[3] = \rec {\tsbX}{\TsumE {\atomIn{a}} {\clockX \leq 1 \land \clockY \leq 1}
      {\TsumI{\atomOut{a}}{\clockX \leq 1,\setenum{\clockX}}{\tsbX} 
        % \sumInt \TsumI{\atomOut{b}}{\guardTrue,\setenum{\clockX,\clockY}}{\tsbX}
      }}
  \end{array}
  \]
  The TST $\tsbP[1]$ is not compliant with
  its na\"{\i}ve dual
  $\tsbQ[1] = \TsumE{\atomIn{a}}{\clockX \leq 2}{\TsumE{\atomIn{b}}{\clockX \leq 1}{}}$:
  even though $\tsbQ[1]$ can do the input $\atomIn{a}$ in the required
  time window, $\tsbP[1]$ cannot perform $\atomOut{b}$ 
  if $\atomOut{a}$ is performed after 1 time unit.
  For this very reason, no TST is compliant with $\tsbP[1]$. %
  Note instead that 
  $
  \tsbQ[1] \compliant
  \TsumI{\atomOut{a}}{\clockX \leq 1}{\TsumI{\atomOut{b}}{\clockX \leq 1}{}}
  $,
  which is \emph{not} its na{\"i}ve dual. %
  In $\tsbP[2]$,
  a similar deadlock situation occurs if the $\atomOut{b}$ branch is chosen,
  and so also $\tsbP[2]$ does not admit a compliant.
  The reason why $\tsbP[3]$ does not admit a compliant is more subtle:
  actually, $\tsbP[3]$ can loop until the clock $\clockY$
  reaches the value $1$; after this point, the guard $\clockY \leq 1$
  can no longer be satisfied, and then $\tsbP[3]$ reaches a deadlock.
\end{exa}

To establish when a TST admits a compliant,
we define a kind system which associates to each $\tsbP$ 
a set of clock valuations~$\kindK$ (called \emph{kind of $\tsbP$}).
The kind of a TST is unique, and each closed TST is kindable 
(\Cref{lem:every-tst-kindable}).
If $\tsbP$ has kind $\kindK$, then 
there exists some $\tsbQ$ such that, for all $\clockN \in \kindK$, 
the configuration $(\tsbP,\clockN) \mid (\tsbQ,\clockN)$
never reaches a deadlock (\Cref{th:dual-sound}).
Also the converse statement holds: if, for some $\tsbQ$, 
$(\tsbP,\clockN) \mid (\tsbQ,\clockN)$
never reaches a deadlock, then $\clockN \in \kindK$ (\Cref{th:dual-complete}).
Therefore, $\tsbP$ admits a compliant whenever 
the initial clock valuation $\clockN[0]$ belongs to~$\kindK$. 
We give a constructive proof of the correctness of the kind system, 
by showing a TST $\co{\tsbP}$ which we call the \emph{canonical compliant} of~$\tsbP$.

\begin{figure}[t]
  \iftoggle{techreport}{%
  \[
  \begin{array}{cl}
    {\Gamma \vdash \success: \Val} & \smallnrule{[T-$\success$]}
    \\[8pt]
    \irule{\Gamma \vdash \tsbP[i]: \kindK[i] \qquad \forall i \in I}
    {\Gamma \vdash \TSumExt[i \in I]{\atomIn[i]{a}}{\guardG[i],\resetT[i]}{\tsbP[i]}: 
      \bigcup_{i \in I} \past{\big(\sem {\guardG[i]} \cap \invReset{\kindK[i]}{\resetT[i]}\big)}} 
    & \smallnrule{[T-$\sumExt$]} 
    \\[15pt]
    \irule{\Gamma \vdash \tsbP[i]: \kindK[i] \qquad \forall i \in I}
    {\Gamma \vdash \TSumInt[i \in I]{\atomOut[i]{a}}{\guardG[i],\resetT[i]}{\tsbP[i]}:
      \big(\bigcup_{i \in I}\past{\sem{\guardG[i]}}\big) \setminus 
      \big(\bigcup_{i \in I}\past{(\sem{\guardG[i]} \setminus
      \invReset{\kindK[i]}{\resetT[i]})\big)}} 
    \hspace{10pt}
    & \smallnrule{[T-$\sumInt$]} 
    \\[15pt]
    {\Gamma,\tsbX:\kindK \vdash \tsbX:\kindK} & \smallnrule{[T-Var]}
    \\[8pt]
    \irule{\Gamma,\tsbX:\kindK \vdash \tsbP : \kindKi}
    {\Gamma \vdash \rec \tsbX \tsbP:\bigcup{\setcomp{\kindK[0]}{\exists \kindK[1]:~\Gamma,\tsbX:\kindK[0] 
          \vdash \tsbP : \kindK[1] \land \kindK[0] \subseteq \kindK[1]}}} & \smallnrule{[T-Rec]}
    % \\[15pt]
  \end{array}
  \]
  }
  {%
  \[
  \begin{array}{c}
    {\Gamma \vdash \success: \Val} \;\; \smallnrule{[T-$\success$]}
    \hspace{10pt}
    \irule{\Gamma \vdash \tsbP[i]: \kindK[i] \qquad \text{for }i \in I}
    {\Gamma \vdash \TSumExt[i \in I]{\atomIn[i]{a}}{\guardG[i],\resetT[i]}{\tsbP[i]}: 
      \bigcup_{i \in I} \past{\big(\sem {\guardG[i]} \cap \invReset{\kindK[i]}{\resetT[i]}\big)}} 
    \; \smallnrule{[T-$\sumExt$]} 
    \\[10pt]
    \irule{\Gamma \vdash \tsbP[i]: \kindK[i] \qquad \text{for }i \in I}
    {\Gamma \vdash \TSumInt[i \in I]{\atomOut[i]{a}}{\guardG[i],\resetT[i]}{\tsbP[i]}:
      \big(\bigcup_{i \in I}\past{\sem{\guardG[i]}}\big) \setminus 
      \big(\bigcup_{i \in I}\past{(\sem{\guardG[i]} \setminus
        \invReset{\kindK[i]}{\resetT[i]})\big)}} 
    \; \smallnrule{[T-$\sumInt$]} 
    \\[15pt]
    {\Gamma,\tsbX:\kindK \vdash \tsbX:\kindK} \;\; \smallnrule{[T-Var]}
    \hspace{10pt}
    \irule{\Gamma,\tsbX:\kindK \vdash \tsbP : \kindKi}
    {\Gamma \vdash \rec \tsbX \tsbP:\bigcup{\setcomp{\kindK[0]}{\exists \kindK[1]:~\Gamma,\tsbX:\kindK[0] 
          \vdash \tsbP : \kindK[1] \land \kindK[0] \subseteq \kindK[1]}}}
    \; \smallnrule{[T-Rec]}
  \end{array}
  \]
  \iftoggle{techreport}{}{\vspace{-10pt}}
  }
  \caption{Kind system for TSTs.}
  \label{fig:tsb:type-system}
  \iftoggle{techreport}{}{\vspace{-10pt}}
\end{figure}

\begin{defi}[\bf Kind system for TSTs]
  Kind judgements $\Gamma \vdash \tsbP: \kindK$
  are defined in \Cref{fig:tsb:type-system},
  where $\Gamma$ is a partial function which associates kinds to
  recursion variables. %
\end{defi}

Rule \nrule{[T-$\success$]} says that the success TST $\success$ admits 
a compliant in every $\clockN$: 
indeed, $\success$ is compliant with itself. %
The kind of an external choice is the union of the kinds of its branches
(rule \nrule{[T-$\sumExt$]}), where the kind of a branch is the past of those clock valuations
which satisfy both the guard and, after the reset, the kind of their continuation.
Internal choices are dealt with by rule \nrule{[T-$\sumInt$]}, 
which computes the difference between the union of the past of the guards and a set of error
clock valuations. %
The error clock valuations are those which can satisfy a guard but not
the kind of its continuation. Rule \nrule{[T-Var]} is standard.
Rule \nrule{[T-Rec]} looks for a kind which is preserved by unfolding of recursion 
(hence a fixed point).

\begin{exa}
  Recall $\tsbP[2]$ from~\Cref{ex:duality}.
  We have the following kinding derivation:
  \[
    \AXC{$\vdash \success: \Val$}
    \AXC{$\vdash \success: \Val$}
    \RL{\nrule{[T-$\sumExt$]}}
    \UIC{$\vdash \TsumI{\atomOut{a}}{\clockX \leq 0}{}:\past{\sem{\clockX \leq 0}} \cap \Val = \sem{\clockX \leq 0}$}
    \RL{\nrule{[T-$\sumInt$]}}
    \BIC{$\vdash \tsbP[2]:
      \big(\past{\sem{\clockX \leq 2}} \cup \past{\sem{\clockX \leq 1}}\big) \setminus
      \big(\past{\sem{\clockX \leq 2} \setminus \Val)} \cup \past{\sem{\clockX \leq 1} \setminus \sem{\clockX \leq 0}} \big)
      = \kindK$}
    \DisplayProof
  \]
  where $\kindK = \sem{(\clockX > 1) \land (\clockX \leq 2)}$.
  As noted in~\Cref{ex:duality},
  intuitively $\tsbP[2]$ has no compliant;
  this will be asserted by~\Cref{th:dual-complete} below,
  as a consequence of the fact that $\clockN[0] \not\in \kindK$. % 
  However, since $\kindK$ is non-empty, 
  \Cref{th:dual-sound} guarantees that
  there exists $\tsbQ$ such that
  $(\tsbP[2],\clockN) \compliant (\tsbQ,\clockN)$, 
  for all clock valuations $\clockN \in \kindK$. %
\end{exa}

The following~\namecref{lem:every-tst-kindable} 
states that \emph{every} closed TST is kindable, as well as uniqueness of kinding. %
We stress that being kindable does not imply admitting a compliant: %
this holds if and only if the initial clock valuation $\clockN[0]$ belongs to the kind.
% (see~\Cref{th:dual-sound} and~\Cref{th:dual-complete}).
Note that uniqueness of kinding holds at the \emph{semantic} level,
but the same kind can be represented syntactically in different ways.
In~\Cref{sec:comput-dual} we show that uniqueness of kinding 
may be obtained also at the \emph{syntactic} level,
by representing kinds as guards in normal form~\cite{BengtssonY03}.

\newcommand{\lemeverytstkindable}{%
  For all $\tsbP$ and $\Gamma$ with $\fv{\tsbP} \subseteq \dom{\Gamma}$, 
  there exists unique $\kindK$ such that $\Gamma \vdash \tsbP:\kindK$.
}

\begin{restheorem}[\textbf{Uniqueness of kinding}]{lem:every-tst-kindable}
  \lemeverytstkindable
\end{restheorem}

By exploiting the kind system we define the \emph{canonical compliant} of kindable TSTs. %
Roughly, we turn internal choices into external ones 
(without changing guards nor resets),
and external into internal, 
changing the guards so that the kind of continuations is preserved. %

\begin{defi}[\textbf{Canonical compliant}] \label{def:dual}
  For all kinding environments $\Gamma$ and $\tsbP$ kindable in $\Gamma$, 
  we define the TST $\dual[\Gamma]{\tsbP}$ 
  in~\Cref{fig:dual}.
  We will abbreviate $\dual[\Gamma]{\tsbP}$ as $\dual{\tsbP}$ when $\Gamma = \emptyset$.
\end{defi}

\begin{figure}[t]
  \[
  \begin{array}{rcll}
    \dual[\Gamma]{\success} & = & \success 
    \\[2pt]
    \dual[\Gamma]{\TSumExt[i \in I]{\atomIn[i]{a}}{\guardG[i],\resetT[i]}{\tsbP[i]}} 
    & = &
    \TSumInt[i \in I]{\atomOut[i]{a}}{\guardG[i] \land \invReset{\kindK[i]}{\resetT[i]},\resetT[i]}
    {\dual[\Gamma]{\tsbP[i]}} 
    & \text{ if } \Gamma \vdash \tsbP[i]:\kindK[i]
    \\[2pt]
    \dual[\Gamma]{\TSumInt[i \in I]{\atomOut[i]{a}}{\guardG[i],\resetT[i]}{\tsbP[i]}} 
    & = &
    \TSumExt[i \in I]{\atomIn[i]{a}}{\guardG[i],\resetT[i]}{\dual[\Gamma]{\tsbP[i]}} 
    \\[2pt]
    \dual[\Gamma]{\tsbX} 
    & = & 
    \tsbX 
    & \text{ if } \Gamma(\tsbX) \text{ defined}
    \\[2pt]
    \dual[\Gamma]{\rec {\tsbX}{\tsbP}} & = & \rec {\tsbX}{\dual[\Gamma\setenum{\bind{\tsbX}{\kindK}}]
      {\tsbP}} & \text{ if } \Gamma \vdash \rec {\tsbX}{\tsbP}:\kindK
  \end{array}
  \]
  \iftoggle{techreport}{}{\vspace{-10pt}}
  \caption{Canonical compliant of a TST.}
  \label{fig:dual}
  \iftoggle{techreport}{}{\vspace{-10pt}}
\end{figure}

The following~\namecref{th:dual-sound} 
states the soundness of the kind system: %
is particular, if the initial clock valuation $\clockN[0]$ belongs 
to the kind of $\tsbP$, then $\tsbP$ admits a compliant.

\newcommand{\thdualsound}{%
  If $\,\vdash \tsbP : \kindK$ and $\clockN \in \kindK$, 
  then
  $(\tsbP,\clockN) \compliant (\dual{\tsbP},\clockN)$.
}

\begin{restheorem}[\textbf{Soundness}]{th:dual-sound}
  \thdualsound
\end{restheorem}

\begin{exa} \label{ex:dual}
  Recall 
  $\tsbQ[1] = \TsumE{\atomIn{a}}{\clockX \leq 2}{\TsumE{\atomIn{b}}{\clockX \leq 1}{}}$ 
  from~\Cref{ex:duality}. %
  We have
  $
  \dual{\tsbQ[1]} 
  \; = \;
  \TsumI{\atomOut{a}}{\clockX \leq 1}{\TsumI{\atomOut{b}}{\clockX \leq 1}{}}
  $.
  Since $\vdash \tsbQ[1] : \kindK = \sem{\clockX\leq1}$ 
  and $\clockN[0] \in \kindK$, 
  by~\Cref{th:dual-sound} we have that
  $\tsbQ[1] \compliant \dual{\tsbQ[1]}$,
  as anticipated in~\Cref{ex:duality}.
\end{exa}

The following~\namecref{th:dual-complete} 
states the kind system is also \emph{complete}:
in particular, if $\tsbP$ 
admits a compliant, then the clock valuation $\clockN[0]$ belongs to 
the kind of $\tsbP$.

\newcommand{\thdualcomplete}{%
  If $\,\vdash \tsbP : \kindK$
  and $\exists \tsbQ,\clockE .\; (\tsbP,\clockN) \compliant (\tsbQ,\clockE)$,
  then $\clockN \in \kindK$.
}

\begin{restheorem}[\textbf{Completeness}]{th:dual-complete}
  \thdualcomplete
\end{restheorem}

Compliance is not transitive, in general:
however, \Cref{th:dual-transitive} below states 
that transitivity holds when passing through the canonical compliant.

\begin{reslemma}{lem:compliance-trans}
  For all $\tsbP,\tsbQ,\clockN,\clockE$ and $\tsbPi,\clockNi$ such that $\vdash \tsbPi:\kindK$ and
  $\clockNi \in \kindK$:
  \[
  (\tsbP,\clockN)\compliant(\tsbPi,\clockNi) \;\land\; (\dual{\tsbPi},\clockNi)\compliant(\tsbQ,\clockE) 
  \;\implies\;
  (\tsbP,\clockN)\compliant(\tsbQ,\clockE) 
  \]
\end{reslemma}

\newcommand{\thdualtransitive}{%
  If $\tsbP \compliant \tsbPi$ and $\dual{\tsbPi} \compliant \tsbQ$, 
  then $\tsbP \compliant \tsbQ$.
}

\begin{thm}[\bf Transitivity of compliance] \label{th:dual-transitive}
  \thdualtransitive
\end{thm}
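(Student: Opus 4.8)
The plan is to reduce the statement, which concerns the top-level compliance $\compliant$ between (closed) TSTs, to the configuration-level transitivity already packaged in \Cref{lem:compliance-trans}. Recall that $\tsbP \compliant \tsbPi$ unfolds to $(\tsbP,\clockN[0]) \compliant (\tsbPi,\clockN[0])$ and $\dual{\tsbPi} \compliant \tsbQ$ to $(\dual{\tsbPi},\clockN[0]) \compliant (\tsbQ,\clockN[0])$, where $\clockN[0]$ is the zero valuation, while the goal $\tsbP \compliant \tsbQ$ unfolds to $(\tsbP,\clockN[0]) \compliant (\tsbQ,\clockN[0])$. So it suffices to instantiate \Cref{lem:compliance-trans} with all three valuations equal to $\clockN[0]$.

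To apply that lemma I must discharge its two side conditions on the middle type $\tsbPi$: that it is kindable, say $\vdash \tsbPi:\kindK$, and that $\clockN[0] \in \kindK$. Kindability is immediate: the TSTs involved in a compliance judgement are closed, so by \Cref{lem:every-tst-kindable} (with $\Gamma = \emptyset$) there is a unique $\kindK$ with $\vdash \tsbPi:\kindK$.

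The crux is showing $\clockN[0] \in \kindK$, and here I would invoke completeness of the kind system (\Cref{th:dual-complete}): whenever $(\tsbPi,\clockN[0])$ admits a compliant, then $\clockN[0]$ lies in the kind of $\tsbPi$. A compliant for $(\tsbPi,\clockN[0])$ is already at hand, provided we may read compliance on either side. Indeed, from the hypothesis $(\tsbP,\clockN[0]) \compliant (\tsbPi,\clockN[0])$, symmetry of $\compliant$ gives $(\tsbPi,\clockN[0]) \compliant (\tsbP,\clockN[0])$, so $(\tsbP,\clockN[0])$ witnesses admissibility and \Cref{th:dual-complete} yields $\clockN[0] \in \kindK$. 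Symmetry of $\compliant$ is the one ingredient not stated explicitly above; it follows directly from the fact that the deadlock predicate (\Cref{def:deadlock}) and the configuration semantics of \Cref{fig:tst:s_semantics} are symmetric in the two endpoints (the rules $\smallnrule{[S-\sumInt]}$, $\smallnrule{[S-Del]}$, $\smallnrule{[S-\tau]}$ come in symmetric pairs), so $(\tsbP,\clockN)\mid(\tsbQ,\clockE)$ is deadlock iff $(\tsbQ,\clockE)\mid(\tsbP,\clockN)$ is, and this is preserved along $\smove{}^*$.

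With $\vdash \tsbPi:\kindK$ and $\clockN[0] \in \kindK$ in place, applying \Cref{lem:compliance-trans} to the two hypotheses $(\tsbP,\clockN[0]) \compliant (\tsbPi,\clockN[0])$ and $(\dual{\tsbPi},\clockN[0]) \compliant (\tsbQ,\clockN[0])$ delivers $(\tsbP,\clockN[0]) \compliant (\tsbQ,\clockN[0])$, that is $\tsbP \compliant \tsbQ$. The only delicate point — and thus the expected main obstacle — is the appeal to symmetry needed to feed completeness with $\tsbPi$ on the left; once that is granted, the result is a direct specialisation of the already-proved lemmas to the initial valuation.
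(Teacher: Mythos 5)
Your proof is correct and follows essentially the same route as the paper, which simply declares the theorem ``straightforward after~\Cref{lem:compliance-trans}'': you instantiate that lemma at the initial valuation and discharge its side conditions via \Cref{lem:every-tst-kindable} and \Cref{th:dual-complete}. The appeal to symmetry of $\compliant$ that you flag as the delicate point is indeed needed (since \Cref{th:dual-complete} expects $\tsbPi$ on the left of $\compliant$) and is exactly what the paper itself uses implicitly here and explicitly (as ``commutativity of compliance'') in the proof of \Cref{th:weak-dual-convolutive}.
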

\begin{proof}
  Straightforward after~\Cref{lem:compliance-trans}.
\end{proof}

\section{Computability of the canonical compliant}
\label{sec:comput-dual}

In this~\namecref{sec:comput-dual} we show that 
the canonical compliant construction is computable. 
To prove this, we first show the decidability of kind inference
(\Cref{th:kind-inference:decidable}). %
This fact is not completely obvious, because
the cardinality of the set of kinds is $2^{2^{\aleph_0}}$; %
however, the kinds constructed by our inference rules
can always be represented syntactically by guards. % (as in~\Cref{def:guards}).

\begin{figure}
  \[
  \begin{array}{c}
    \Gamma \vdash_I \success: \Val
    \;\; \nrule{[I-$\success$]}
    \hspace{10pt}
    \irule{\Gamma \vdash_I \tsbP[i]: \kindK[i] \qquad \forall i \in I}
    {\Gamma \vdash_I \TSumInt[i \in I]{\atomOut[i]{a}}{\guardG[i],\resetT[i]}{\tsbP[i]}:
      \big(\bigcup_{i \in I}\past{\sem{\guardG[i]}}\big) \setminus 
      \big(\bigcup_{i \in I}\past{(\sem{\guardG[i]} \setminus
        \invReset{\kindK[i]}{\resetT[i]})\big)}} 
    \; \nrule{[I-$\sumInt$]} 
    \\[12pt]
    \Gamma,\tsbX:\kindK \vdash_I \tsbX:\kindK 
    \;\; \nrule{[I-Var]}
    \hspace{15pt}
    \irule{\Gamma \vdash_I \tsbP[i]: \kindK[i] \qquad \forall i \in I}
    {\Gamma \vdash_I \TSumExt[i \in I]{\atomIn[i]{a}}{\guardG[i],\resetT[i]}{\tsbP[i]}: 
      \bigcup_{i \in I} \past{\big(\sem {\guardG[i]} \cap \invReset{\kindK[i]}{\resetT[i]}\big)}} 
    \; \nrule{[I-$\sumExt$]} 
    \\[12pt]
    \Gamma \vdash_I \rec \tsbX \tsbP : \bigsqcap_{i \geq 0} \hat{F}_{\Gamma,\tsbX,\tsbP}^i (\Val)
    \;\; \nrule{[I-Rec]}
    \quad \textit{where } \hat{F}_{\Gamma,\tsbX,\tsbP}(\kindK) = \kindKi \text{ iff }\Gamma,\tsbX:\kindK \vdash_I \tsbP:\kindKi
  \end{array}
  \]
  \caption{Kind inference rules.}
  \label{fig:tsb:kind-inference}
\end{figure}

To prove the decidability of kind inference we will use the fundamental notion of clock region, 
introduced in \cite{Alur94theory}
(although our definition is more similar to the one in \cite{HenzingerNSY94}).

\begin{defi}[\textbf{Region}]
  For all $d \in \Nat$, a \emph{$d$-region} is a set of clock valuations $\kindK$ where:
  \begin{enumerate}
  \item there exists a guard $\guardG$, with all constants bounded by $d$, such that
    $\sem{\guardG} = \kindK$;
  \item if $\kindK$ properly contains a $d$-region $\kindKi$, then $\kindKi$ is empty.
  \end{enumerate}
  We call \emph{$d$-zones}\footnote{The term ``zone'' is often referred to convex sets of clocks, 
    while our definition includes non-convex ones.}
  (ranged over by $\zoneZ,\zoneZi,\hdots$) the finite unions of $d$-regions.
\end{defi}

It is well known that, for all $d$, the set of $d$-regions is finite \cite{Alur94theory}, and 
so also the induced set of $d$-zones.
Further, $d$-zones are closed under past, inverse reset, union, intersection and difference,
and for every guard $\guardG$ with all constants below $d$, 
$\sem{\guardG}$ is a $d$-zone~\cite{HenzingerNSY94,BengtssonY03}.
By the above, it follows immediately that the set of $d$-zones, ordered by set inclusion, 
forms a finite complete lattice.
In our case, $d$ will be the greatest constant in the TST under analysis.
% , and we will omit it. 

The following~\namecref{lem:domain-theory} recalls some well known 
facts about functions over complete lattices and their fixed points.
These facts will be used to prove that kind inference is decidable.

\begin{reslemma}{lem:domain-theory}
  Let $(D,\leq)$ be a complete lattice with top $\top$, 
  and let $f$ be a monotonic endofunction over $D$. 
  Then:
  \begin{enumerate}
    
  \item \label{item:finite-monotonic-is-cocontinuous}
    if $D$ is finite, then $f$ is cocontinuous.

  \item \label{item:gfp-smaller-glb}
    \iftoggle{lmcs}{%
      $\gfp f \leq \bigsqcap_i f^i(\top)$.
    }
    {%
      \[
      \gfp f \leq \bigsqcap_i f^i(\top)
      \]
    }

  \item \label{item:sangiorgi}
    if $f$ is cocontinuous, then
    \iftoggle{lmcs}{\;%
      $\gfp f \; = \; \bigsqcap_i f^i(\top)$
    }
    {%
      \[
      \textstyle
      \gfp f \; = \; \bigsqcap_i f^i(\top)
      \]
    }

  \end{enumerate}
\end{reslemma}

We start by showing that the kind obtained by rule~\nrule{[T-Rec]} in~\Cref{fig:tsb:type-system} can be characterised
as the greatest fixed point (over the lattice $(2^{\Val},\subseteq)$) 
of a functional, defined below.

\begin{defi}\label{def:kinding-functional}
  Given $\Gamma,\tsbX,\tsbP$, with $\tsbX$ \emph{not} in the domain of $\Gamma$, we define: 
  \[F_{\Gamma,\tsbX,\tsbP}(\kindK) = \kindKi 
  \;\; \text{ whenever } \;\; 
  \Gamma,\tsbX:\kindK \vdash \tsbP:\kindKi\]
  The subscript of $F_{\Gamma,\tsbX,\tsbP}$ will be omitted when clear from the context.
\end{defi}  

Note that, by uniqueness of kinding, $F$ is a function;
further, \Cref{lem:every-tst-kindable} ensures that $F$ is total
when $\fv{\tsbP} \subseteq \dom{\Gamma} \cup \setenum{\tsbX}$.
The following~\namecref{lem:gamma-monotonicity} states that $F$ is monotonic.
Then, the Knaster-Tarski fixed point theorem~\cite{tarski1955}
ensures that~\nrule{[T-Rec]} yields the gfp of $F$.

\begin{reslemma}{lem:gamma-monotonicity}
  The function $F_{\Gamma,\tsbX,\tsbP}$ is monotonic,
  for all $\Gamma,\tsbX,\tsbP$ with $\fv{\tsbP} \subseteq \dom{\Gamma} \cup \setenum{\tsbX}$.
\end{reslemma}

To prove decidability of the kinding relation, we introduce in~\Cref{fig:tsb:kind-inference}
an alternative set of rules, with judgements of the form $\Gamma \vdash_I \tsbP : \kindK$.
\Cref{lem:judments-equivalence} below shows that the kind relations $\vdash$ and $\vdash_I$
are equivalent.
The new set of rules can be exploited as a kind inference algorithm:
in particular, rule~\nrule{[I-Rec]} allows for computing 
the kind of a recursive TST $\rec \tsbX \tsbP$
by evaluating the non-increasing sequence $\hat{F}_{\Gamma,\tsbX,\tsbP}^i (\Val)$
until it stabilizes.
The following~\namecref{lem:inferred-kinds-are-zones} states that
the kinds inferred through the relation $\vdash_I$ are zones.
By~\cite{BengtssonY03} it follows that
the kind of a TST can always be represented as a guard.

\begin{lem}\label{lem:inferred-kinds-are-zones}
  If $\Gamma \vdash_I \tsbP : \kindK$,
  for some $\Gamma$ which maps variables to zones,
  then $\kindK$ is a zone.
\end{lem}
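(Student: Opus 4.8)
The plan is to prove the statement by structural induction on $\tsbP$, after fixing a uniform bound $d$ that makes every relevant set a $d$-zone. First I would set $d$ to be the greatest constant occurring in $\tsbP$ or in the guards representing the zones in the range of $\Gamma$; recalling that a $d'$-zone is also a $d$-zone whenever $d \ge d'$, this choice guarantees that $\Val$, each $\sem{\guardG[i]}$ appearing in $\tsbP$, and each $\Gamma(\tsbX)$ are $d$-zones. I would then invoke the closure facts cited in the text (from \cite{Alur94theory,HenzingerNSY94,BengtssonY03}): the family of $d$-zones is finite, it contains $\sem{\guardG}$ for every guard with constants bounded by $d$, and it is closed under past, inverse reset, union, intersection and difference. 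The hypothesis I would actually carry through the induction is the strengthening: \emph{for every $\Gamma$ mapping variables to $d$-zones, if $\Gamma \vdash_I \tsbP : \kindK$ then $\kindK$ is a $d$-zone}; this universal quantification over $\Gamma$ is exactly what the recursion case needs.

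The base and choice cases are routine. For \nrule{[I-$\success$]} the kind is $\Val = \sem{\guardTrue}$, a $d$-zone, and for \nrule{[I-Var]} it is $\Gamma(\tsbX)$, a $d$-zone by the hypothesis on $\Gamma$. For \nrule{[I-$\sumExt$]} and \nrule{[I-$\sumInt$]} the induction hypothesis gives that each branch kind $\kindK[i]$ is a $d$-zone; then $\invReset{\kindK[i]}{\resetT[i]}$ is a $d$-zone, and combining it with the $d$-zones $\sem{\guardG[i]}$ via intersection, difference, past and finite union exactly as written in \Cref{fig:tsb:kind-inference} produces a $d$-zone.

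The hard part will be the recursion case \nrule{[I-Rec]}, where the kind is the infinite infimum $\bigsqcap_{i \ge 0} \hat{F}^i(\Val)$ and the obstacle is to see that an infimum of infinitely many iterates is still a zone. Since $\vdash_I$ is syntax-directed, $\hat{F} = \hat{F}_{\Gamma,\tsbX,\tsbP}$ is a well-defined total function. I would first show that $\hat{F}$ maps $d$-zones to $d$-zones: if $\kindK$ is a $d$-zone, then $\Gamma,\tsbX:\kindK$ still maps variables to $d$-zones, so the induction hypothesis applied to the body $\tsbP$ (a strictly smaller term, whose constants remain bounded by $d$) gives that $\hat{F}(\kindK)$ is a $d$-zone. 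As $\Val$ is a $d$-zone, a side induction on $i$ then shows that every iterate $\hat{F}^i(\Val)$ is a $d$-zone.

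Finally, because the set of $d$-zones is finite, $\setcomp{\hat{F}^i(\Val)}{i \ge 0}$ consists of finitely many distinct $d$-zones, so $\bigsqcap_{i \ge 0} \hat{F}^i(\Val)$ is a finite intersection of $d$-zones and hence a $d$-zone by closure under intersection. (Monotonicity of $\hat{F}$, which follows from \Cref{lem:gamma-monotonicity} through the equivalence \Cref{lem:judments-equivalence}, together with \Cref{lem:domain-theory}, additionally shows that the sequence is non-increasing and stabilizes, matching the algorithmic reading of the rule; but finiteness alone already suffices here.) This closes the induction, and since a $d$-zone is in particular a zone, the lemma follows.
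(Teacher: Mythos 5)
Your proof is correct and takes essentially the same approach as the paper's, which dispatches the lemma in one line: induction on the structure of $\tsbP$ and inspection of the kind inference rules, exploiting finiteness of the set of zones. Your write-up just makes explicit what the paper leaves implicit — fixing a uniform bound $d$ that is preserved through the induction, and observing that in the \nrule{[I-Rec]} case the infinite meet collapses to a finite intersection of $d$-zones — and you rightly keep \Cref{lem:judments-equivalence} out of the core argument, since its proof presupposes this lemma and relying on it would be circular.
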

\begin{proof}
  Easy, by induction on the structure of $\tsbP$ and inspection of the kind inference rules (exploiting finiteness of the set of zones).
\end{proof}

\begin{reslemma}{lem:judments-equivalence}
  For all $\Gamma$ mapping variables to zones,
  and for all $\tsbP$ such that $\fv{\tsbP} \subseteq \dom{\Gamma}$:
  \[
  \Gamma \vdash \tsbP:\kindK \iff \Gamma \vdash_I \tsbP:\kindK
  \]
\end{reslemma}

\newcommand{\thkindinferencedecidable}{Kind inference is decidable.}
\begin{thm}\label{th:kind-inference:decidable}
  \thkindinferencedecidable
\end{thm}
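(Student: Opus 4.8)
The plan is to reduce decidability of kind inference to the effective computation of the inference judgement $\Gamma \vdash_I \tsbP : \kindK$ from Figure~\ref{fig:tsb:kind-inference}, which by \Cref{lem:judments-equivalence} coincides with the declarative relation $\vdash$. Since every kind inferred through $\vdash_I$ is a zone (\Cref{lem:inferred-kinds-are-zones}), and zones are finitely representable (as guards, following~\cite{BengtssonY03}), it suffices to show that the inference rules constitute a terminating algorithm that, given $\tsbP$, produces a finite syntactic representation of its kind. I would fix $d$ to be the greatest constant occurring in $\tsbP$, so that all the sets involved live in the finite complete lattice of $d$-zones, closed under \past, inverse reset, union, intersection and difference.

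First I would argue termination by induction on the structure of $\tsbP$. For the non-recursive rules \nrule{[I-$\success$]}, \nrule{[I-$\sumInt$]}, \nrule{[I-$\sumExt$]} and \nrule{[I-Var]}, the kind is obtained by a finite combination of the zone operations applied to the kinds of strict subterms, each of which is computable by the induction hypothesis; since every such operation is effective on the guard representations of zones, the resulting kind is itself an effectively computable zone. The only delicate case is \nrule{[I-Rec]}, which prescribes the kind of $\rec \tsbX \tsbP$ as $\bigsqcap_{i \geq 0} \hat{F}_{\Gamma,\tsbX,\tsbP}^i(\Val)$.

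The heart of the proof, and the step I expect to be the main obstacle, is showing that this greatest-fixed-point computation terminates and yields the correct value. Here I would invoke \Cref{lem:domain-theory}. By \Cref{lem:gamma-monotonicity} the functional $F = F_{\Gamma,\tsbX,\tsbP}$ is monotonic; restricting attention to the lattice of $d$-zones, which is \emph{finite}, part~\eqref{item:finite-monotonic-is-cocontinuous} of \Cref{lem:domain-theory} gives that $F$ is cocontinuous, and then part~\eqref{item:sangiorgi} gives $\gfp F = \bigsqcap_i F^i(\Val)$. Crucially, finiteness of the lattice forces the non-increasing chain $\hat{F}^i(\Val)$ to stabilise after finitely many steps: once $\hat{F}^{k+1}(\Val) = \hat{F}^k(\Val)$ we have reached the infimum, and a decidable equality test on zone representations lets the algorithm detect this. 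Thus the inner loop terminates, producing the gfp as a zone. I would need to be careful that $F$ when restricted to $d$-zones is well-defined, i.e.\ that feeding a $d$-zone to $F$ returns a $d$-zone; this follows from \Cref{lem:inferred-kinds-are-zones} together with the observation that the zone operations do not introduce constants larger than $d$.

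Finally I would assemble these pieces: the inference procedure recursively computes kinds for subterms, applies the effective zone operations at each non-recursive rule, and iterates $\hat{F}$ to a fixed point at each recursion, all within the finite lattice of $d$-zones. Each step manipulates only finite guard representations and terminates, so the overall procedure terminates and returns a guard denoting $\kindK$. By \Cref{lem:judments-equivalence} this $\kindK$ is exactly the kind assigned by $\vdash$, and since kinds are unique (\Cref{lem:every-tst-kindable}) the algorithm decides the relation $\Gamma \vdash \tsbP : \kindK$. Hence kind inference is decidable.
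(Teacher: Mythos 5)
Your proof is correct and follows essentially the same route as the paper's own (much terser) argument: reduce to the inference rules via \Cref{lem:judments-equivalence}, rely on the known computability of the zone operations, and use finiteness of the zone lattice to make the $\hat{F}$-iteration in rule \nrule{[I-Rec]} terminate. The extra details you supply (stabilisation detection for the non-increasing chain, well-definedness of $\hat{F}$ on $d$-zones, monotonicity via \Cref{lem:gamma-monotonicity}) are exactly the steps the paper leaves implicit.
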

\begin{proof}
  By~\Cref{lem:judments-equivalence}, kinds of closed TSTs 
  can be inferred by the rules in~\Cref{fig:tsb:kind-inference}. %
  All the operations between zones used in~\Cref{fig:tsb:kind-inference}
  (except $\hat{F}$) are well know to be computable~\cite{HenzingerNSY94,BengtssonY03}. %
  By finiteness of the set of zones, also $\hat{F}$ is computable. %
\end{proof}

The following~\namecref{th:dual:computable} states that 
the canonical compliant construction is computable.

\newcommand{\thdualcomputable}{The function $\dual{\cdot}$ is computable.}
\begin{thm} % [\bf Computability of the canonical compliant]
  \label{th:dual:computable}
  \thdualcomputable
\end{thm}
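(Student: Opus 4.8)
The plan is to show that the defining equations for $\dual[\Gamma]{\cdot}$ in \Cref{fig:dual} constitute a terminating algorithm, once we check that all their side conditions can be discharged effectively and that each clause produces a genuine TST term, i.e.\ a term whose guards are \emph{syntactic}. I would actually prove the slightly more general statement that $\dual[\Gamma]{\tsbP}$ is computable for every $\Gamma$ mapping recursion variables to zones and every $\tsbP$ kindable in $\Gamma$; the theorem is then the instance $\Gamma = \emptyset$, which is well defined since every closed TST is kindable (\Cref{lem:every-tst-kindable}). The recursion in \Cref{fig:dual} is structural --- each recursive call is on a proper subterm of its argument --- so termination is immediate by well-founded induction on the structure of $\tsbP$.

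The only clauses that are not purely syntactic are the external-choice and the recursion cases, both of which refer to kinds of subterms. Each such kind is computable by \Cref{th:kind-inference:decidable}, so the side conditions $\Gamma \vdash \tsbP[i] : \kindK[i]$ and $\Gamma \vdash \rec{\tsbX}{\tsbP} : \kindK$ can be evaluated effectively. The one real point to verify is that the external-choice clause yields a legal TST: its branch guards are $\guardG[i] \land \invReset{\kindK[i]}{\resetT[i]}$, yet a TST admits only syntactic guards, not arbitrary sets of clock valuations. Here I would invoke \Cref{lem:inferred-kinds-are-zones}: since $\Gamma$ maps variables to zones, each inferred $\kindK[i]$ is a zone, and zones are closed under inverse reset, so $\invReset{\kindK[i]}{\resetT[i]}$ is again a zone. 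By~\cite{BengtssonY03} every zone equals $\sem{\guardGi}$ for some effectively computable guard $\guardGi$, hence $\guardG[i] \land \invReset{\kindK[i]}{\resetT[i]}$ can be replaced by the syntactic guard $\guardG[i] \land \guardGi$, and the output is a well-formed TST.

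Finally, the invariant that $\Gamma$ maps variables to zones must be shown to be preserved along the recursion, so that \Cref{lem:inferred-kinds-are-zones} applies at every step. The only clause that extends $\Gamma$ is the one for $\rec{\tsbX}{\tsbP}$, which adds the binding of $\tsbX$ to the computed kind $\kindK$ of $\rec{\tsbX}{\tsbP}$ --- again a zone by \Cref{lem:inferred-kinds-are-zones}. Starting from the empty environment, the invariant therefore holds throughout. Assembling these observations into a structural induction gives computability of $\dual[\Gamma]{\cdot}$, and hence of $\dual{\cdot}$. I expect the main obstacle to be exactly the syntactic representability just discussed: without the zone machinery the canonical compliant would be defined only semantically, as a term carrying set-valued guards, and could not be effectively produced as an actual TST; everything else is routine structural recursion resting on the decidability of kind inference.
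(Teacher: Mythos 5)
Your proposal is correct and follows essentially the same route as the paper, whose entire proof is the one-line observation that all the operations in \Cref{def:dual} are computable (resting on \Cref{th:kind-inference:decidable} and the representability of zones as guards via \Cref{lem:inferred-kinds-are-zones} and~\cite{BengtssonY03}). You have simply made explicit the details the paper leaves implicit --- the structural induction, the syntactic-guard representation of $\invReset{\kindK[i]}{\resetT[i]}$ in the external-choice clause, and the preservation of the zone-valued environment invariant --- all of which are sound and consistent with the paper's argument.
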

\begin{proof}
  It follows by the fact that all the operations in \Cref{def:dual} are computable.
\end{proof}

\section{Subtyping} \label{sec:tst-subtyping}

In this~\namecref{sec:tst-subtyping} we study the semantic subtyping preorder,
which is a sound and complete model of the Gay and Hole subtyping
relation (in reverse order) for untimed session types~\cite{Barbanera15mscs}.
Intuitively, $\tsbP$ is subtype of $\tsbQ$ if every $\tsbQi$ compliant
with $\tsbQ$ is compliant with $\tsbP$, too.

\begin{defi}[\textbf{Semantic subtyping}] 
  \label{def:subtyping}
  For all TSTs $\tsbP$, we define the set $\compset{(\tsbP,\clockN)}$ as:
  \[
  \compset{(\tsbP,\clockN)}
  \; = \;
  \setcomp{(\tsbQ,\clockE)}{(\tsbP,\clockN) \compliant (\tsbQ,\clockE)}
  \]
  Then, we define the preorder $\sqsubseteq$ between TSTs as follows:
  \[
  \tsbP \sqsubseteq \tsbQ
  \hspace{20pt}
  \text{whenever}
  \hspace{20pt}
  \compset{(\tsbP,\clockN[0])} \supseteq
  \compset{(\tsbQ,\clockN[0])}
  \]
\end{defi}

The following~\namecref{lem:dual-minimal} states that, 
as in the untimed setting, the canonical compliant of $\tsbP$
is the maximum (\ie, the most ``precise'') in the set of TSTs 
compliant with $\tsbP$. % (if not empty).

\newcommand{\lemdualminimal}{
  $
  \tsbQ \compliant \tsbP \implies \tsbQ \sqsubseteq \dual{\tsbP}
  $
}

\begin{thm} \label{lem:dual-minimal}
  \lemdualminimal
\end{thm}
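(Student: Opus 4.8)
The plan is to unfold~\Cref{def:subtyping} and reduce the goal to a single application of the transitivity lemma~\Cref{lem:compliance-trans}. By definition of the preorder, the conclusion $\tsbQ \sqsubseteq \dual{\tsbP}$ is exactly the set inclusion $\compset{(\dual{\tsbP},\clockN[0])} \subseteq \compset{(\tsbQ,\clockN[0])}$. I would therefore fix an arbitrary $(\tsbR,\clockE) \in \compset{(\dual{\tsbP},\clockN[0])}$ --- that is, a configuration with $(\dual{\tsbP},\clockN[0]) \compliant (\tsbR,\clockE)$ --- and aim to show $(\tsbQ,\clockN[0]) \compliant (\tsbR,\clockE)$, which places $(\tsbR,\clockE)$ in $\compset{(\tsbQ,\clockN[0])}$ and closes the inclusion. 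Note that $\dual{\tsbP}$ is well defined here, since $\tsbP$ is a closed TST and hence kindable by~\Cref{lem:every-tst-kindable}.

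Before invoking~\Cref{lem:compliance-trans} I would discharge its two side conditions for the type $\tsbP$. Uniqueness of kinding~(\Cref{lem:every-tst-kindable}) provides the unique $\kindK$ with $\vdash \tsbP:\kindK$. To obtain $\clockN[0] \in \kindK$ I would use the hypothesis $\tsbQ \compliant \tsbP$: since the configuration $(\cdot)\mid(\cdot)$ and the notion of deadlock in~\Cref{def:deadlock} are both symmetric, compliance of configurations is symmetric, so $(\tsbP,\clockN[0]) \compliant (\tsbQ,\clockN[0])$ holds as well. Thus $\tsbP$ admits a compliant, and completeness of the kind system~(\Cref{th:dual-complete}) yields $\clockN[0] \in \kindK$.

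I would then instantiate~\Cref{lem:compliance-trans} taking its outer participant to be $\tsbQ$, its pivot type (the one whose canonical compliant appears) to be $\tsbP$, and its remaining side to be $\tsbR$, with all clock valuations equal to $\clockN[0]$ except that of $\tsbR$, which is $\clockE$. Its two antecedents then become precisely the hypothesis $(\tsbQ,\clockN[0]) \compliant (\tsbP,\clockN[0])$ and the standing assumption $(\dual{\tsbP},\clockN[0]) \compliant (\tsbR,\clockE)$, while its side conditions are the two facts just established. The lemma therefore delivers $(\tsbQ,\clockN[0]) \compliant (\tsbR,\clockE)$, as required.

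Since the argument is a direct instantiation, there is no genuine computational obstacle; the substantive reasoning about the interaction between an arbitrary compliant, the original type, and its canonical compliant is already encapsulated in~\Cref{lem:compliance-trans}. The only points demanding care are getting the direction of the inclusion in~\Cref{def:subtyping} right and justifying $\clockN[0] \in \kindK$ through completeness, which in turn relies on the symmetry of compliance.
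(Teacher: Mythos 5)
Your proof is correct and follows essentially the same route as the paper's: unfold \Cref{def:subtyping}, fix an arbitrary $(\tsbR,\clockE)$ compliant with $\dual{\tsbP}$, obtain $\clockN[0] \in \kindK$ via \Cref{th:dual-complete}, and conclude with a single application of \Cref{lem:compliance-trans}. The only difference is that you spell out two details the paper leaves implicit --- kindability of $\tsbP$ and the commutativity of compliance needed to apply completeness --- which is a welcome precision, not a deviation.
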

\begin{proof}
  Assume that $\tsbQ \compliant \tsbP$
  and $\vdash \tsbP : \kindK$,
  and let $(\tsbR,\clockE) \in \compset{(\co{\tsbP},\clockN[0])}$.
  Since $(\tsbQ,\clockN[0]) \compliant (\tsbP,\clockN[0])$,
  by~\Cref{th:dual-complete} we obtain $\clockN[0] \in \kindK$.
  Therefore, by
  $(\co{\tsbP},\clockN[0]) \compliant (\tsbR,\clockE)$
  and \Cref{lem:compliance-trans}
  we conclude that
  $(\tsbQ,\clockN[0]) \compliant (\tsbR,\clockE)$.
  So, $(\tsbR,\clockE) \in \compset{(\tsbQ,\clockN[0])}$,
  from which the thesis follows.
\end{proof}

The following~\namecref{th:compliance-subtyping} 
reduces the problem of deciding $\tsbP \sqsubseteq \tsbQ$
to that of checking compliance between $\tsbP$ and $\dual{\tsbQ}$,
when $\tsbQ$ admits a compliant
(otherwise $\compset{(\tsbQ,\clockE[0])} = \emptyset$, so $\tsbQ$ is supertype of every $\tsbP$).
Since compliance, the canonical compliant construction, 
and checking the admissibility of a compliant 
are all decidable (\Cref{th:compliance:decidable}, \Cref{th:dual:computable}),
this implies decidability of subtyping (\Cref{th:subtyping-decidable}).

\newcommand{\thcompliancesubtyping}{%
  For all TSTs $\tsbP,\tsbQ$:
  \[
  \tsbP \sqsubseteq \tsbQ
  \;\; \iff \;\;
  \begin{cases}
    \tsbP \compliant \dual{\tsbQ}
    & \text{if $\tsbQ$ admits a compliant} \\
    \true 
    & \text{otherwise}
  \end{cases}
  \]
}

\begin{thm}\label{th:compliance-subtyping}
  \thcompliancesubtyping
\end{thm}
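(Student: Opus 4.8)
The plan is to fix $\tsbP,\tsbQ$, let $\vdash\tsbQ:\kindK$ (which exists and is unique by \Cref{lem:every-tst-kindable}), and split on whether $\tsbQ$ admits a compliant. The first thing I would record is that, by combining \Cref{th:dual-sound} and \Cref{th:dual-complete}, ``$\tsbQ$ admits a compliant'' is equivalent to $\clockN[0]\in\kindK$. If $\tsbQ$ does \emph{not} admit a compliant, then by definition $\compset{(\tsbQ,\clockN[0])}=\emptyset$, so the inclusion $\compset{(\tsbP,\clockN[0])}\supseteq\compset{(\tsbQ,\clockN[0])}$ required by \Cref{def:subtyping} holds vacuously; hence $\tsbP\sqsubseteq\tsbQ$ for every $\tsbP$, matching the value $\true$ on the right-hand side. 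So from now on I assume $\tsbQ$ admits a compliant, i.e. $\clockN[0]\in\kindK$, and it remains to prove $\tsbP\sqsubseteq\tsbQ\iff\tsbP\compliant\dual{\tsbQ}$.

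For the forward implication I would simply observe that $\dual{\tsbQ}$ is itself a witness in $\compset{(\tsbQ,\clockN[0])}$: since $\clockN[0]\in\kindK$, \Cref{th:dual-sound} gives $(\tsbQ,\clockN[0])\compliant(\dual{\tsbQ},\clockN[0])$, that is, $(\dual{\tsbQ},\clockN[0])\in\compset{(\tsbQ,\clockN[0])}$. Assuming $\tsbP\sqsubseteq\tsbQ$, the defining inclusion $\compset{(\tsbP,\clockN[0])}\supseteq\compset{(\tsbQ,\clockN[0])}$ then places $(\dual{\tsbQ},\clockN[0])$ in $\compset{(\tsbP,\clockN[0])}$, which is exactly $\tsbP\compliant\dual{\tsbQ}$.

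The converse is the only direction that requires work. Assuming $\tsbP\compliant\dual{\tsbQ}$, I must show $\compset{(\tsbP,\clockN[0])}\supseteq\compset{(\tsbQ,\clockN[0])}$, so I would take an arbitrary $(\tsbR,\clockE)$ with $(\tsbQ,\clockN[0])\compliant(\tsbR,\clockE)$ and derive $(\tsbP,\clockN[0])\compliant(\tsbR,\clockE)$. The key tool is \Cref{lem:compliance-trans}, instantiated with $\tsbQ$ as the middle term; this is legitimate precisely because $\vdash\tsbQ:\kindK$ and $\clockN[0]\in\kindK$, which is where the standing hypothesis that $\tsbQ$ admits a compliant is consumed. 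Reading compliance as symmetric (immediate from the symmetry of the deadlock condition and of the transition rules of \Cref{fig:tst:s_semantics}), from $(\tsbQ,\clockN[0])\compliant(\tsbR,\clockE)$ I obtain $(\tsbR,\clockE)\compliant(\tsbQ,\clockN[0])$, and from $\tsbP\compliant\dual{\tsbQ}$ I obtain $(\dual{\tsbQ},\clockN[0])\compliant(\tsbP,\clockN[0])$. Feeding these into \Cref{lem:compliance-trans} (first premise $(\tsbR,\clockE)\compliant(\tsbQ,\clockN[0])$, second premise $(\dual{\tsbQ},\clockN[0])\compliant(\tsbP,\clockN[0])$) yields $(\tsbR,\clockE)\compliant(\tsbP,\clockN[0])$, and symmetry once more gives $(\tsbP,\clockN[0])\compliant(\tsbR,\clockE)$, as required.

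I expect the main obstacle to be purely bookkeeping: keeping straight which clock valuation is attached to which endpoint when instantiating \Cref{lem:compliance-trans}, and pinning down the symmetry of $\compliant$ used to align its two premises. Everything substantive is already in place, since \Cref{th:dual-sound} and \Cref{th:dual-complete} reduce ``admits a compliant'' to the membership $\clockN[0]\in\kindK$, and \Cref{lem:compliance-trans} supplies exactly the transitivity-through-canonical-compliant step. One could alternatively route the converse through \Cref{lem:dual-minimal}, but the direct application of \Cref{lem:compliance-trans} above appears cleanest.
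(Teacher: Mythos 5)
Your proposal is correct and follows essentially the same route as the paper's proof: the vacuous case when $\tsbQ$ admits no compliant, \Cref{th:dual-sound} together with \Cref{th:dual-complete} to exhibit $(\dual{\tsbQ},\clockN[0])$ as a member of $\compset{(\tsbQ,\clockN[0])}$ for the forward direction, and transitivity through the canonical compliant for the converse. The only difference is that you invoke \Cref{lem:compliance-trans} directly at the configuration level where the paper routes through its corollary \Cref{th:dual-transitive}; your version is in fact slightly more precise, since elements of $\compset{(\tsbQ,\clockN[0])}$ carry arbitrary clock valuations rather than only initial ones, and \Cref{lem:compliance-trans} (unlike the TST-level wrapper) accommodates exactly that.
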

\begin{proof}
  If $\tsbQ$ does not admit a compliant then the thesis is trivial,
  so assume that $\compset{(\tsbQ,\clockE[0])} \neq \emptyset$.
  For the $(\Rightarrow)$ direction,
  assume that $\tsbP \sqsubseteq \tsbQ$.
  Since $\tsbQ$ admits a compliant, by~\Cref{th:dual-complete} there exists some
  $\kindK$ such that $\vdash \tsbQ : \kindK \ni \clockN[0]$.
  By~\Cref{th:dual-sound}, it follows that
  $\dual{\tsbQ} \compliant \tsbQ$.
  Then, by~\Cref{def:subtyping} we conclude that
  $\tsbP \compliant \dual{\tsbQ}$.
  For the $(\Leftarrow)$ direction,
  assume that $\tsbP \compliant \dual{\tsbQ}$,
  and let $\tsbQi$ be such that $\tsbQi \compliant \tsbQ$.
  Then, by~\Cref{th:dual-transitive} we conclude that
  $\tsbQi \compliant \tsbP$, from which the thesis follows.
\end{proof}

\begin{thm}[\textbf{Decidability of subtyping}]
  \label{th:subtyping-decidable}
  Subtyping between TSTs is decidable.
\end{thm}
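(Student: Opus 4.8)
The plan is to reduce decidability of subtyping to the machinery already established, principally \Cref{th:compliance-subtyping}, which gives a clean case split on whether $\tsbQ$ admits a compliant. The overall strategy is to show that every ingredient appearing in the right-hand side of that equivalence is effectively computable or decidable, so that the equivalence itself becomes a decision procedure for $\tsbP \sqsubseteq \tsbQ$.

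First I would observe that, given the two input TSTs $\tsbP$ and $\tsbQ$, we must decide which branch of \Cref{th:compliance-subtyping} applies, \ie whether $\tsbQ$ admits a compliant. By the soundness and completeness of the kind system (\Cref{th:dual-sound} and \Cref{th:dual-complete}), $\tsbQ$ admits a compliant if and only if $\clockN[0] \in \kindK$, where $\vdash \tsbQ : \kindK$. Since kind inference is decidable (\Cref{th:kind-inference:decidable}) and the inferred kind is a zone representable as a guard (\Cref{lem:inferred-kinds-are-zones}, via \cite{BengtssonY03}), membership of the initial valuation $\clockN[0]$ in $\kindK$ is a decidable check on that guard. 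This settles the case distinction effectively.

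Next I would handle each branch. In the case where $\tsbQ$ does not admit a compliant, \Cref{th:compliance-subtyping} tells us $\tsbP \sqsubseteq \tsbQ$ holds unconditionally, so the procedure simply returns \true. In the case where $\tsbQ$ does admit a compliant, the equivalence reduces the question to deciding $\tsbP \compliant \dual{\tsbQ}$. Here I would invoke computability of the canonical compliant construction (\Cref{th:dual:computable}) to actually build the TST $\dual{\tsbQ}$, and then decidability of compliance (\Cref{th:compliance:decidable}) to decide $\tsbP \compliant \dual{\tsbQ}$. Chaining these gives a terminating algorithm that outputs the correct truth value of $\tsbP \sqsubseteq \tsbQ$ in both cases.

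Since each step is a black-box appeal to a previously established decidability or computability result, this proof is essentially an assembly argument and I do not anticipate a genuine obstacle. If anything, the only point deserving care is ensuring that the case split (admissibility of a compliant for $\tsbQ$) is itself decidable rather than assumed; I would make that explicit by pointing to the kind-inference machinery as above, so the reader sees that no oracle is being smuggled in. The proof is therefore short: state the reduction via \Cref{th:compliance-subtyping}, note that admissibility is decidable through kind inference, and conclude that the combination of \Cref{th:dual:computable} and \Cref{th:compliance:decidable} yields a decision procedure.
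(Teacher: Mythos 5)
Your proposal is correct and follows essentially the same route as the paper: the paper's proof simply cites \Cref{th:compliance-subtyping} together with \Cref{th:compliance:decidable}, relying (as noted in the surrounding text) on \Cref{th:dual:computable} and the decidability of admissibility via kind inference for the remaining ingredients. Your version merely makes explicit the point the paper leaves implicit — that the case split on whether $\tsbQ$ admits a compliant is itself decidable through the kind system — which is a welcome clarification but not a different argument.
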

\begin{proof}
  Immediate consequence of~\Cref{th:compliance:decidable} 
  and~\Cref{th:compliance-subtyping}.
\end{proof} 

Unlike in the untimed case, the canonical compliant construction is not involutive,
\ie $\dual{\dual{\tsbP}}$ is not equal to $\tsbP$, in general. 
However, $\tsbP$ and $\dual{\dual{\tsbP}}$ are still strongly related, 
as they have the same set of compliant TSTs, in every $\clockN$ in the kind of $\tsbP$
(\Cref{th:weak-dual-convolutive}). %
By~\Cref{def:subtyping}, this implies that $\tsbP \sqsubseteq \sqsupseteq \dual{\dual{\tsbP}}$,
for all kindable $\tsbP$.

\newcommand{\thweakdualconvolutive}{
  Let $\,\vdash \tsbP : \kindK$ and $\clockN \in \kindK$. Then:
  $
  \compset{(\tsbP,\clockN)} = \compset{(\dual{\dual {\tsbP}},\clockN)}
  % \text{ and } (\dual{\dual {\tsbP}},\clockN) \sqsubseteq (\tsbP,\clockN)
  $.
}

\begin{thm} \label{th:weak-dual-convolutive}
  \thweakdualconvolutive
\end{thm}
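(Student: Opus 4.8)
The plan is to prove the two set inclusions $\compset{(\tsbP,\clockN)} \subseteq \compset{(\dual{\dual{\tsbP}},\clockN)}$ and $\compset{(\tsbP,\clockN)} \supseteq \compset{(\dual{\dual{\tsbP}},\clockN)}$ separately, reducing each to a single application of the transitivity-through-the-canonical-compliant lemma (\Cref{lem:compliance-trans}). Fix $\tsbP$ with $\vdash \tsbP : \kindK$ and $\clockN \in \kindK$. Before starting I would record three facts. First, $\dual{\tsbP}$ and $\dual{\dual{\tsbP}}$ are well-defined closed TSTs, hence kindable by \Cref{lem:every-tst-kindable}; write $\vdash \dual{\tsbP} : \kindKi$. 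Second, by \Cref{th:dual-sound} applied to $\tsbP$ we have $(\tsbP,\clockN) \compliant (\dual{\tsbP},\clockN)$, call it (a). Third, by \Cref{th:dual-sound} applied to $\dual{\tsbP}$ we get $(\dual{\tsbP},\clockN) \compliant (\dual{\dual{\tsbP}},\clockN)$, call it (b)—but this last step is only legal once we know $\clockN \in \kindKi$.

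Establishing $\clockN \in \kindKi$ is the linchpin of the argument, and the step I expect to be the most delicate. I would obtain it as follows: compliance is symmetric (both the configuration relation and the deadlock predicate of \Cref{def:compliance} are symmetric, since \Cref{fig:tst:s_semantics} includes the omitted symmetric rules), so (a) also gives $(\dual{\tsbP},\clockN) \compliant (\tsbP,\clockN)$; thus $\dual{\tsbP}$ admits a compliant at $\clockN$, and \Cref{th:dual-complete} then yields $\clockN \in \kindKi$. With this in hand, fact (b) follows from \Cref{th:dual-sound}.

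For the inclusion $\compset{(\tsbP,\clockN)} \subseteq \compset{(\dual{\dual{\tsbP}},\clockN)}$, take $(\tsbR,\clockE)$ with $(\tsbP,\clockN) \compliant (\tsbR,\clockE)$, i.e.\ by symmetry $(\tsbR,\clockE) \compliant (\tsbP,\clockN)$. I would invoke \Cref{lem:compliance-trans} with $\tsbP$ as the middle term (it is kindable and $\clockN \in \kindK$): the two premises are exactly $(\tsbR,\clockE) \compliant (\tsbP,\clockN)$ and (b) $(\dual{\tsbP},\clockN) \compliant (\dual{\dual{\tsbP}},\clockN)$, so the conclusion is $(\tsbR,\clockE) \compliant (\dual{\dual{\tsbP}},\clockN)$, as required. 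For the reverse inclusion, take $(\tsbR,\clockE)$ with $(\dual{\dual{\tsbP}},\clockN) \compliant (\tsbR,\clockE)$ and invoke \Cref{lem:compliance-trans} with $\dual{\tsbP}$ as the middle term (legal since $\clockN \in \kindKi$): the premises are (a) $(\tsbP,\clockN) \compliant (\dual{\tsbP},\clockN)$ and $(\dual{\dual{\tsbP}},\clockN) \compliant (\tsbR,\clockE)$, giving $(\tsbP,\clockN) \compliant (\tsbR,\clockE)$.

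The only real subtlety, besides the kind-membership step above, is choosing the correct middle term in each direction—$\tsbP$ for one inclusion and $\dual{\tsbP}$ for the other—so that the second premise of \Cref{lem:compliance-trans}, which always mentions the canonical compliant of the middle term, lines up with one of the precomputed compliances (a) or (b). A naive attempt to use $\dual{\tsbP}$ as the middle in both directions, or to prove $(\dual{\dual{\tsbP}},\clockN) \compliant (\tsbP,\clockN)$ directly, runs into circularity (indeed there is no reason for $\dual{\dual{\tsbP}}$ to be compliant with $\tsbP$—already in the untimed case $\dual{\dual{\tsbP}} = \tsbP$ and an internal choice is never compliant with itself), so getting this pairing right is where the care is needed.
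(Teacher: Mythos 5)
Your proof is correct, and its skeleton matches the paper's: both establish $(\tsbP,\clockN) \compliant (\dual{\tsbP},\clockN)$ by \Cref{th:dual-sound}, derive $\clockN \in \kindKi$ (the kind of $\dual{\tsbP}$) from that compliance via \Cref{th:dual-complete} and commutativity, obtain $(\dual{\tsbP},\clockN) \compliant (\dual{\dual{\tsbP}},\clockN)$ by a second application of \Cref{th:dual-sound}, and then discharge each inclusion with \Cref{lem:compliance-trans}; your first inclusion (middle term $\tsbP$, premises your (a$'$) $(\tsbR,\clockE) \compliant (\tsbP,\clockN)$ and your (b)) is literally the paper's. Where you genuinely diverge is the inclusion $\compset{(\tsbP,\clockN)} \supseteq \compset{(\dual{\dual{\tsbP}},\clockN)}$: the paper first introduces a \emph{third} dual, establishing $(\dual{\dual{\tsbP}},\clockN) \compliant (\dual{\dual{\dual{\tsbP}}},\clockN)$, combines it with your (a) through one application of \Cref{lem:compliance-trans} to get $(\dual{\dual{\dual{\tsbP}}},\clockN) \compliant (\tsbP,\clockN)$, and only then applies the lemma a second time with $\dual{\dual{\tsbP}}$ as the middle term. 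You instead apply \Cref{lem:compliance-trans} exactly once, with $\dual{\tsbP}$ as the middle term, premises $(\tsbP,\clockN) \compliant (\dual{\tsbP},\clockN)$ and $(\dual{\dual{\tsbP}},\clockN) \compliant (\tsbR,\clockE)$, concluding $(\tsbP,\clockN) \compliant (\tsbR,\clockE)$ directly. This buys a real economy: one fewer invocation of the lemma, no mention of $\dual{\dual{\dual{\tsbP}}}$, and no need for the (implicit in the paper) fact that $\clockN$ lies in the kind of $\dual{\dual{\tsbP}}$, which the paper's second application of the lemma requires for its middle term. Your route also sidesteps what appears to be a typo in the paper's final line, which states the conclusion as $(\tsbQ,\clockE) \compliant (\dual{\dual{\tsbP}},\clockN)$ where $(\tsbQ,\clockE) \compliant (\tsbP,\clockN)$ is what is actually derived and needed.
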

\begin{proof}
  Suppose that $\,\vdash \tsbP : \kindK$ and $\clockN \in \kindK$. 
  By~\Cref{th:dual-sound}: 
  \begin{equation} \label{eq:weak-dual-convolutive:aux1}
    (\tsbP,\clockN) 
    \; \compliant \;
    (\dual{\tsbP},\clockN)
  \end{equation} 
  Assume that $\vdash \dual{\tsbP} : \kindKi$.
  By~\eqref{eq:weak-dual-convolutive:aux1} and~\Cref{th:dual-complete} 
  it follows that $\clockN \in \kindKi$. 
  By repeating the same argument twice, we also obtain that: 
  \begin{align} 
    \label{eq:weak-dual-convolutive:aux2}
    (\dual{\tsbP},\clockN) 
    & \; \compliant \;
    (\dual{\dual{\tsbP}},\clockN)
    \\
    \label{eq:weak-dual-convolutive:aux3}
    (\dual{\dual{\tsbP}},\clockN) 
    & \; \compliant \;
    (\dual{\dual{\dual{\tsbP}}},\clockN)
  \end{align}
  To prove $\compset{(\tsbP,\clockN)} \subseteq \compset{(\dual{\dual {\tsbP}},\clockN)}$,
  let $(\tsbQ,\clockE) \in \compset{(\tsbP,\clockN)}$.
  By applying~\Cref{lem:compliance-trans} on
  $(\tsbQ,\clockE) \compliant (\tsbP,\clockN)$
  and on~\eqref{eq:weak-dual-convolutive:aux2},
  we obtain
  $(\tsbQ,\clockE) \compliant (\dual{\dual {\tsbP}},\clockN)$.

  \smallskip\noindent
  To prove $\compset{(\tsbP,\clockN)} \supseteq \compset{(\dual{\dual {\tsbP}},\clockN)}$,
  let $(\tsbQ,\clockE) \in \compset{(\dual{\dual{\tsbP}},\clockN)}$.
  By applying~\Cref{lem:compliance-trans}
  on~\eqref{eq:weak-dual-convolutive:aux1} and~\eqref{eq:weak-dual-convolutive:aux3}
  (and using commutativity of compliance),
  we obtain:
  \begin{equation} \label{eq:weak-dual-convolutive:aux4}
    (\dual{\dual{\dual{\tsbP}}},\clockN) 
    \; \compliant \;
    (\tsbP,\clockN)
  \end{equation}
  Finally, by applying~\Cref{lem:compliance-trans} on 
  $(\tsbQ,\clockE) \compliant (\dual{\dual{\tsbP}},\clockN)$
  and on~\eqref{eq:weak-dual-convolutive:aux4},
  we conclude that
  $(\tsbQ,\clockE) \compliant (\dual{\dual {\tsbP}},\clockN)$.
\end{proof}

\section{Case study: Paypal User Agreement}
\label{sec:ex-paypal-full}

As a case study, we formalise as a TST (part of)
the ``protection for buyers'' section of the PayPal User Agreement~\cite{PayPal},
which regulates the interaction between Paypal and buyers
in trouble during online purchases. %
When a buyer has not received the item they have paid for ($\atom{inr}$), 
or if they have received something significantly
different from what was described ($\atom{snad})$, 
they can 
open a \emph{dispute}. %
The dispute can be opened within $180$ days ($\clockT[pay] < 180$) of
the payment date ($\atom{pay}$). %
After opening the dispute, the buyer and the seller may try to solve
the problem, or it might be the case that the item finally arrives;
otherwise, if an agreement ($\atom{ok}$) is not found within $20$ days
($\clockT[inr] < 20$), the buyer can escalate the dispute to a claim
($\atom{claimINR}$,$\atom{claimSNAD}$). %
However, in case of an item not received, the buyer must wait at least
$7$ days from the date of payment to escalate the dispute
($\clockT[pay] > 7$). %
Upon not reaching an agreement, if still the buyer does not escalate
the dispute to a claim within $20$ days ($\clockT[pay] > 20$), PayPal
will close the dispute ($\atom{close}$). %

During the claim process, PayPal may require the buyer to provide
documentation to support the claim, for instance receipts
($\atom{rcpt}$) or photos ($\atom{photo}$), and the buyer must comply
in a \emph{timely manner} to what they are required to do. %
For SNAD claims, if the claim is accepted, PayPal may require the
buyer to ship the item back to the Seller, to PayPal, or to a third
party and to provide proof of delivery. In case the item is
counterfeit, the item will be destroyed ($\atom{destroy}$) and not
shipped back to the seller($\atom{sendBack}$). After that, the buyer
will be refunded. %
In some cases, the buyer is not eligible for a refund ($\atom{notEligible}$).

We can formalise this agreement as the following TST:
\[
\small
\begin{array}{l}
  \atomIn{pay} \setenum{\guardTrue, \clockT[pay]}.(
  \hspace{5pt} \atomIn{ok} 
  \\ \hspace{10pt} + \; \atomIn{inr}\setenum{\clockT[pay]<180,\clockT[inr]}.(\atomIn{ok} \setenum{\clockT[inr]< 20}+  \; \atomIn{close}\setenum{\clockT[pay]\geq 20}
  \\\hspace{40pt} + \; \atomIn{claimINR}\setenum{\clockT[inr]< 20 \land \clockT[pay] > 7, \clockT[c]}.
  \hspace{0pt}\atomIn{rcpt}.(\atomOut{refund} \oplus \atomOut{notEligible} )  \\          
  \hspace{10pt} + \; \atomIn{snad}\setenum{\clockT[pay]<180,\clockT[snad]}.(\atomIn{ok}\setenum{\clockT[snad]< 20}+  \; \atomIn{close}\setenum{\clockT[pay]\geq 20}\\
  \hspace{40pt} + \;  \atomIn{claimSNAD}\setenum{\clockT[snad]<20, \clockT[c]}.\atomIn{photo}. \\
  \hspace{50pt} ( \, \atomOut{sendBack}.\atomIn{ackSendBack}.\atomOut{refund} 
   \oplus \;  \atomOut{destroy}.\atomIn{ackDestroy}.\atomOut{refund}  \oplus \atomOut{notEligible} ))
\end{array}
\]\medskip

\noindent Let us consider a possible buyer Alice, who wants to see if she may
entrust PayPal for her transactions. %
Alice is willing to wait $10$ days to receive the item she has paid
for, but after that she will open a claim.  She will readily provide
PayPal with every documentation they may need in order to issue the
refund. %
In case she receives an item significantly different from what she has
paid for, she will complain to PayPal by opening a claim as soon as
the item is received. %
Alice will timely comply to do whatever PayPal requires (either to
destroy the item or to send it back) in order to be refunded. %  
Alice's requirements can be formalised as the following TST: %
\[
\small
\begin{array}{l}
  \atomOut{pay}\setenum{\guardTrue, \clockT[pay]}.
  \hspace{0pt} (\; \atomOut{ok}\setenum{\clockT[pay] < 10}
  \\  \hspace{10pt} \oplus \atomOut{inr}\setenum{\clockT[pay] = 10}.\atomOut{claimINR}\setenum{\clockT[pay]= 10}.\atomOut{rcpt}\setenum{\clockT[pay]= 10}.(\atomIn{refund}+\atomIn{notEligible}) \; 
  \\  \hspace{10pt} \oplus \atomOut{snad}\setenum{\clockT[pay]<10,\clockT[snad]}.\atomOut{claimSNAD}\setenum{\clockT[snad]= 0}.\atomOut{photo}\setenum{\clockT[snad] = 0}.
  \\  \hspace{20pt} (\, \atomIn{sendBack}\setenum{\clockT[c]}.\atomOut{ackSendBack}\setenum{\clockT[c]<3}.\atomIn{refund} 
   \\ \hspace{20pt} + \; \atomIn{destroy}\setenum{\clockT[c]}.\atomOut{ackDestroy}\setenum{\clockT[c]<3}.\atomIn{refund} + \atomIn{notEligible}\, ) ) 
\end{array}
\]\medskip

\noindent Alice's and PayPal's TSTs are compliant, according to~\Cref{def:compliance}.
However, we can see that PayPal's TST lacks some important details:
what does it means \emph{timely comply to what is required}? 
And, most importantly: how long will it take for a buyer to be refunded?
Without a deadline on the $\atomOut{refund}$ action, 
Alice may possibly wait forever.

\section{Encoding TSTs into Timed Automata} \label{sec:tst-to-ta}

We define a semantic-preserving encoding 
of TSTs into timed automata (\Cref{mapping}),
and we exploit it to devise an effective procedure to 
decide compliance (\Cref{th:compliance}).

\subsection{Timed automata} \label{sec:ta}

A timed automaton (\TA) \cite{Alur94theory} is a
non-deterministic automaton with a finite set of clocks,
used to put constraints on when to take an edge, or to stay in a location.
We denote with $\LabAut = \ActOut \cup \ActIn \cup \setenum{\tau}$
the set of labels, ranged over by $\labLtau,\labLitau,\ldots$.

\begin{defi}[\textbf{Timed automaton~\cite{Alur94theory}}]
  \label{ta}
  A \TA is a tuple
  $\autA = (\Loc, \UrgLoc, \locL[0], \Edg, \Inv )$ where:
  $\Loc$ is a finite set of \emph{locations}; 
  $\UrgLoc \subset \Loc$ is the set of \emph{urgent} locations; 
  $\locL[0] \in \Loc$ is the \emph{initial} location;
  $\Edg \subseteq \Loc \times \LabAut \times \GuardG[\Clocks] \times \powset
  \Clocks \times \Loc$
  is a set of edges; and $\Inv : \Loc \rightarrow \GuardG[\Clocks]$ is the
  \emph{invariant} function.
\end{defi}

To define the semantics of \TA, we consider a \emph{network} of 
{\TA} which can take internal transitions or synchronize on channels. 
For our purposes, we just need networks of \emph{two} {\TA}
interacting through \emph{binary} channels, which correspond to actions $\atomA, \atomB, \ldots$.

\begin{defi}[\textbf{Semantics of {\TA}}]
  \label{net_sem}
  \label{umove}
  Let $\autA[i] = (\Loc[i], \UrgLoc[i], \locInit[i], \Edg[i],\Inv[i])$ be {\TA},
  for $i \in \setenum{1,2}$. 
  We define the behaviour of the network $\net{\autA[1]}{\autA[2]}$
  as the timed LTS $(S, s_0, \LabNet, \umove{})$, where: 
  $S = \Loc[1] \times \Loc[2] \times \Val$; 
  $s_0 = ( \locInit[1], \locInit[2],\clockN[0])$; 
  $\LabNet = \Chan\ \cup \setenum{\tau} \cup \Realpos$; 
  $\umove{}$ is defined in~\Cref{fig:net_sem}.
\end{defi}

\begin{figure}
  \[
  \small
  \begin{array}{c}
    \inference[\smallnrule{[TA1]}]
    {(\locL[1], \tau, \guardG, \resetR, \locLi[1]) \in \Edg[1] & 
      \clockN \in \sem{\guardG} &  
      \reset{\clockN}{\resetR} \in \sem{\Inv[1](\locLi[1]) \land \Inv[2](\locL[2])} }  
    {(\locL[1], \locL[2],\clockN) \umove{\quad \tau \quad } (\locLi[1] , \locL[2],  \reset{\clockN}{\resetR})}
    \\[7pt]
    \inference[\smallnrule{[TA2]}]
    { (\locL[1], \atomOut{a}, \guardG[1], \resetR[1], \locLi[1]) \in \Edg[1] &
      (\locL[2], \atomIn{a},  \guardG[2], \resetR[2], \locLi[2]) \in \Edg[2]  &
      \clockN \in \sem{\guardG[1] \land \guardG[2]} & 
      \resett{\clockN}{\resetR[1]}{\resetR[2]} \in \sem{\Inv[1](\locLi[1]) \land \Inv[2](\locLi[2]) }
    }
    {(\locL[1],\locL[2],\clockN) \umove{\quad \atom{a} \quad } (\locLi[1],\locLi[2],\resett{\clockN}{\resetR[1]}{\resetR[2]})} 
    \\[10pt]
    \inference[\smallnrule{[TA3]}]
    {\forall i \in \setenum{1,2} : (\clockN + \delta) \in \sem{\Inv[i](\locL[i])}  \land \locL[i] \not\in \UrgLoc[i]} 
    {(\locL[1],\locL[2],\clockN) \umove{\quad \delta \quad } (\locL[1],\locL[2],\clockN + \delta)}  
  \end{array}
  \]
  \vspace{-10pt}
  \caption{Semantics of networks of \TA (symmetric rules omitted).}
  \label{fig:net_sem}
\end{figure}

The state of a network is given by the locations of its automata, 
and a clock valuation. 
Rule~\nrule{[TA1]} allows one of the \TA to take an internal edge, 
if its guard is satisfied and if, after resetting the clocks in $\resetR$, 
the invariants of the target locations are satisfied.
Rule~\nrule{[TA2]} allows two \TA to synchronize on a channel $\atomA$ if 
(i) their current locations are outbound edges
with complementary labels (\eg, $\atomOut{a}$ and $\atomIn{a}$);
(ii) the guards of those edges are satisfied by the clock valuation; and
(iii) the invariants of the target locations are satisfied after the clocks reset.
Rule~\nrule{[TA3]} allows time to pass, if the locations in
the current state are not urgent, and their invariants are satisfied
after the time delay.
Note that all the clocks progress with the same pace, and
taking an edge is an instantaneous action.

If the current locations of a state have no outgoing edges, 
then such state is called \emph{success},
while a state is called \emph{deadlock} if it is not success and 
no \emph{action}-transitions are possible
(neither in the current clock valuation, nor in the future).

\begin{defi}[\textbf{Deadlock freedom}] \label{def:netdeadlock}
  % Let $\netN = \net{\autA[1]}{\autA[2]}$ be a network of \TA s. 
  We say that a network state $s$ is \emph{deadlock} whenever: 
  (i)  $s$ is not \emph{success}, and 
  (ii) $\not \exists \delta \geq 0,\ \atom[\tau]{a} \in \Act \cup
  \setenum{\tau} : s \umove{\delta}\umove{ \atom[\tau]{a}}$.
  A network is \emph{deadlock-free} if none of its
  reachable states is deadlock.
\end{defi}

% \subsection{\bf \TA\  patterns.} % \label{sec:patterns}

We now introduce some operators to compose \TA.
The \emph{union} of a set of \TA collects all the locations and all the edges;
the invariant on a location is the conjunction of all the invariants 
defined on that location;
the initial location is specified as a parameter.
The \idle\ pattern creates a \TA with only a success location;
the \pfx\ pattern prefixes a location to a \TA;
finally, the \br\ pattern prefixes a location to a set of \TA,
using guarded edges.

\begin{figure}\centering
  \small
  \begin{tikzpicture}[scale=1.2]
    % \draw[help lines] (-1,-1) grid (1,1);
    \draw [] (0,0) circle (0.2);
    \draw [] (0,0) circle (0.15);
    \node [right] at (-0.4,0.25) {$\locInit$};
    % caption
    \node [right] at (-0.5,-1.5) {\scriptsize $\idle(\locInit)$};
  \end{tikzpicture}
  \hspace{20pt}
  \begin{tikzpicture}[scale=1.2]
    \draw [] (0,0) circle (0.15);
    \draw [] (0,0) circle (0.2);
    \node [right] at (-0.2, 0) {{\scriptsize U}};
    % triangle
    \draw (1,0)--(2,0.5) -- (2, -0.5) -- (1,0);
    \node [right] at (1.4, 0) {$\autA$};
    % edge
    \draw [->] (0.2,0) -- (1,0);
    \node [right] at (0.3,0.2) {$\labL, \; \resetR$};
    \node [right] at (-0.4,0.3) {$\locInit$};
    % caption
    \node [right] at (-0.5,-1.5) {\scriptsize $\pfx(\locInit, \labL, \resetR, \autA)$};
  \end{tikzpicture}
  \hspace{20pt}
  \begin{tikzpicture}[scale=1.2]
    % \small
    % \draw[help lines] (0,0) grid (5,2);
    % 
    \draw [] (0,0) circle (0.15);
    \draw [] (0,0) circle (0.2);
    \node [right] at (-0.3, -0.3) {$\guardG$};
    \node [right] at (-0.4,0.3) {$\locInit$};
    % 
    % triangle 1
    \draw (1,0.5)--(2,0.9)--(2, 0.1)--(1,0.5);
    \node [right] at (1.4, 0.5) {$\autA[1]$};
    \node [above] at (0.6,0.5) {$\guardG[1], \labL[1], \resetR[1]$};
    \draw [->] (0.2,0) -- (1,0.5);
    % dots
    \draw [dotted] (1.5,0.2) -- (1.5,-0.2);
    % triangle 2
    \draw (1,-0.5)--(2,-0.9)--(2, -0.1)--(1,-0.5);
    \node [right] at (1.4, -0.5) {$\autA[n]$};
    \node [below] at (0.6,-0.5) {$\guardG[n], \labL[n],\resetR[n]$};
    \draw [->] (0.2,0) -- (1,-0.5);
    % caption
    \node [right] at (-0.5,-1.5) {\scriptsize $\br(\locInit, \guardG, \setenum{ \labL[i], \guardG[i], \resetR[i],  \autA_i}_i)$};
  \end{tikzpicture}
  \caption{Patterns for \TA composition, represented as in~\cite{Dong08}. 
    Circles denote locations (those marked with $U$ are urgent),
    and arrows denote edges.
    Internal actions, $\guardTrue$ guards/invariants and empty resets 
    are left blank.
    A \TA is depicted as a triangle, 
    whose left vertex represents its initial location (a double circle). 
    An arrow from $\locInit$ to triangle $\autA$ represents 
    an edge from $\locInit$ to the initial location of~$\autA$.
  }
  \label{fig:pattern}
\end{figure}

\begin{defi}[\bf Union] \label{def:aut_composition}
  For all $i \in I$, let $\autA[i] =
  (\Loc[i],\UrgLoc[i],\locInit[i],\Edg[i],\Inv[i])$, 
  and let $\locL \in \bigcup_i \Loc[i]$. 
  % The union of all the $\autA[i]$ is a \TA defined as:
  We define
  $
  \sumAut{i \in I}{\; \locL}{\autA[i]} = (\bigcup_i  \Loc[i],  \bigcup_i \UrgLoc[i], \locL, \bigcup_i \Edg[i], \Inv)
  $,
  where $\Inv(\locL[j]) = \bigwedge_i \Inv[i] (\locL[j])$ for all $\locL[j] \in  \bigcup_i \Loc[i]$.
\end{defi}

\begin{defi}[\bf Idle pattern]{\label{idle_pattern}} 
  Let  $\locInit$ be a location. 
  Then, $\idle(\locInit) = (\setenum{\locInit},\emptyset,\locInit, \emptyset,\emptyset)$.
\end{defi}

\begin{defi}[\bf Prefix pattern]{\label{pfx_pattern}}
  Let $\autA = (\Loc[1],\UrgLoc[1],\locInit[1], \Edg[1], \Inv[1])$ be a \TA,
  let $\labLtau \in \LabAut$, $\locInit \not\in \Loc[1]$, 
  and $\resetR \subseteq \Clocks$.
  Then, 
  $
  \pfx(\locInit, \labLtau, \resetR, \autA) 
  = 
  (\Loc[1] \cup \setenum{\locInit},\UrgLoc[1] \cup \setenum{\locInit}, 
  \locInit, \Edg, \Inv[1]\setenum{\bind{\locInit}{\guardTrue}})
  $,
  where 
  $\Edg = \Edg[1] \cup \setenum{(\locInit, \labLtau, \guardTrue, \resetR, \locInit[1])}$.
\end{defi}

\begin{defi}[\bf Branch pattern]{\label{br_pattern}}
  For all $i \in I$, let $\autA[i] =
  (\Loc[i],\UrgLoc[i], \locInit[i],\Edg[i],\Inv[i])$, 
  and let $\guardG[i], \guardG \in \GuardG[\Clocks]$, 
  $\resetR[i] \subseteq \Clocks$,  
  $\labLtau[i] \in \LabAut$.  
  Then,
  $
  \br(\locInit, \guardG, \setcomp{(\labLtau[i],\guardG[i],\resetR[i],\autA[i])}{i \in I}) = 
  (\Loc,\UrgLoc, \locInit, \Edg, \Inv)
  $,
  where: $\Loc = \bigcup_i \Loc[i] \cup \setenum{\locInit}$,
  $\UrgLoc = \bigcup_i \UrgLoc[i]$,  $E = \bigcup_i E_i \cup
  \bigcup_i\setenum{(\locInit,\labLtau[i], \guardG[i], \resetR[i], \locInit[i])}$, 
  and
  $I = \bigcup_i I_i \cup \setenum{(\locInit, \guardG)}$.
\end{defi}

\subsection{Defining equations} \label{sec:def_eq}

The first step of our encoding from TSTs to \TA is to 
put TSTs in a normal form where 
recursive terms $\rec \tsbX \tsbP$ are replaced by 
\emph{defining equations}. % of the form $\tsbXi \eqdef \tsbP$. 
This alternative representation of infinite-state processes is quite common
in concurrency theory~\cite{Milner89}, 
hence we will defer some standard technicalities to~\Cref{sec:proofs-tst-to-ta}. 
In our normal form (called \denf) each process is represented as a pair,
composed of 
a recursion variable % (which represents the initial process),
and a set of defining equations of the form $\tsbX[i] \eqdef \tsbP[i]$, 
where $\tsbX[i]$ is a recursion variable and $\tsbP[i]$ is a term 
where every recursion variable is guarded by exactly one action (\Cref{normalform}).
 
\begin{defi}[\bf Defining-equation normal form]\label{normalform}
  % Let $\ide$ be a set of identifiers. Then, 
  A \denf is a pair $(\tsbX,D)$, 
  where $D$ is a set of
  defining equations of the form $\tsbXi \eqdef \tsbP$,
  where $\tsbP$ has the following syntax:
  \begin{align*}
    \tsbP \;\; & ::= \;\;
                 \success
                 \ \sep \
                 \TSumInt[i \in I]{\atomOut[i]{a}}{\guardG[i],\resetR[i]}{\tsbX[i]} 
                 \ \sep  \
                 \TSumExt[i \in I]{\atomIn[i]{a}}{\guardG[i],\resetR[i]}{\tsbX[i]}
  \end{align*}
  and \begin{inlinelist}
  \item the index set $I$ is finite and non-empty, and
  \item the actions in internal/external choices are pairwise distinct
  \end{inlinelist}
\end{defi}

Given a \denf $(\tsbX,D)$, we denote with 
$\urv{D}$ the set of recursion variables \emph{used} in~$D$, 
and with $\rv{D}$ the set of recursion variables \emph{defined} in $D$.
We say that $(\tsbX,D)$ is \emph{closed} when $\setenum{\tsbX} \cup \urv{D} \subseteq \rv{D}$, and 
each $\tsbX \in \urv{D}$ is defined exactly once.
Every TST $\tsbP$ can be translated into a \denf $\nf[V]{\tsbP}$ 
(where $V$ is a set of fresh recursion variables, see~\Cref{def:mapping:denf})
in a way that preserves closedness (\Cref{lem:tst-to-denf:closed})
and compliance (\Cref{lem:preserv_compliance}).
Hereafter, we will always assume closed \denf.

To define the semantics of \denf, we use a new set $\states$ of terms
(\Cref{demove}).
Intuitively, in the state $\tsbTauX$ 
the process has performed an internal move
(without choosing the branch), while in
$\todo[\atomOut{a} \setenum{\guardG, \resetR}]{\tsbX}$ 
it has committed to the branch $\atomOut{a}$. 

\begin{defi}[\bf Semantics of \denf]
  \label{def:states}
  \label{demove}
  Let  $\states$ be a set of terms of the form:
  \[
  \tsbT \;\;  ::= \;\;
  \tsbTauX  
  \sep \todo[\atomOut{a} \setenum{\guardG, \resetR}]{\tsbX}
  \sep \tsbX
  \tag{where $\atom{a} \in \Act$,  $\guardG \in \GuardG[\Clocks]$, and $\resetR \subseteq \Clocks$}
  \]
  and let $D$ be a set of defining equations. 
  Then, the relation $\demove{}$ is inductively defined by the set of rules 
  in~\Cref{fig:tst:s_semantics_bis}, where we use $\tsbT, \tsbS\ldots$ to range
  over $\states$.
\end{defi}

\begin{figure}[t]
  \small
  \[
  \begin{array}{c}
    \inference[\smallnrule{\erule}]
    {(\tsbX \eqdef \tsbP)  \in D &  p = \bigoplus \ldots &  \clockN \in \rdy{p}}
    {(\tsbX,\; \clockN) \;\demove{\tau}\; (\tsbTauX,\; \clockN)}
    \qquad
    \inference[\smallnrule{\crule}]
    {(\tsbX \eqdef {\TsumI{\atomOut{a}}{\guardG,\resetR}{\tsbY}} \oplus  \tsbP ) \in D & \clockN \in \sem{\guardG}}
    {(\tsbTauX,\; \clockN) \;\demove{\tau}\; ({\todo[\TsumI{\atomOut{a}}{\guardG,\resetR}{}]{\tsbY}},\; \clockN) }
    \\[10pt] 
    \inference[\smallnrule{\outrule}]
    {}
    {({\todo[\TsumI{\atomOut{a}}{\guardG,\resetR}{}]{\tsbY}}, \; \clockN)
    \;\demove{\atomOut{a}}\;
    (\tsbY,\; \reset{\clockN}{\resetR})
    }
    \qquad
    \inference[\smallnrule{\inrule}]
    {(\tsbX \eqdef \TsumE{\atomIn{a}}{\guardG,\resetR}{\tsbY} + \tsbP) \in D & \clockN \in \sem{\guardG}}
    {(\tsbX, \; \clockN)
    \;\demove{\atomIn{a}}\;
    (\tsbY, \; \reset{\clockN}{\resetR}) }
    \\[10pt] 
    \inference[\smallnrule{\edrule}]
    {(\tsbX \eqdef  \sum \ldots ) \in D  \; \lor \;   (\tsbX \eqdef   \success ) \in D}
    {(\tsbX, \; \clockN)\demove {\; \delta \;} (\tsbX, \clockN +\delta )} 
    \qquad
    \inference[\smallnrule{\idrule} ]
    {\tsbX \eqdef  \tsbP \in D & \clockN + \delta \in \rdy{\tsbP}}
    {(\tsbTauX,\; \clockN)\demove{\; \delta \; }(\tsbTauX,\; \clockN+\delta)}   
    \\[10pt]
    \inference[\smallnrule{\serule} ]
    {(\tsbS,\clockN) \demove{\; \tau \; } (\tsbSi,\clockN)}
    {(\tsbS,\clockN) \mid (\tsbT,\clockE) \demove {\; \tau \;}(\tsbSi,\clockN) \mid (\tsbT,\clockE)} 
    \qquad
    \inference[\smallnrule{\sdrule}]
    {(\tsbS,\clockN) \demove{\; \delta \; } (\tsbS,\clockNi) & 
    (\tsbT,\clockE) \demove{\; \delta \; } (\tsbT,\clockEi)}
    {(\tsbS,\clockN) \mid (\tsbT,\clockE) \demove{\; \delta \;} 
    (\tsbS,\clockNi) \mid (\tsbT,\clockEi)} 
    \\[10pt]  
    \inference[\smallnrule{\strule}]
    {(\tsbS,\clockN) \demove{\; \atomOut{a} \; } (\tsbSi,\clockNi) &
    (\tsbT,\clockE) \demove{\; \atomIn{a} \; } (\tsbTi,\clockEi)}
    {(\tsbS,\clockN) \mid (\tsbT,\clockE) \demove{\; \atom{a} \;} 
    (\tsbS',\clockNi) \mid (\tsbTi,\clockEi)}   
  \end{array}
  \]
  \caption{Semantics of \denf (symmetric rules omitted).}
  \label{fig:tst:s_semantics_bis}
\end{figure}

There is almost a one-to-one correspondence between the rules of 
$\smove{}$ and $\demove{}$,
aside from the syntactic differences between TST and \denf.
Rules~\nrule{\edrule} and~\nrule{\idrule} allow time to pass in the
same way as~\nrule{[Del]} does: they have been split in two 
only to accommodate with the new term $\tsbTauX$.
Rule~\nrule{\erule} forces every internal choice to perform an internal
step before committing to a branch. 
However, this is done only if $\clockN \in \rdy{p}$, 
\ie if at least one of the branches is available. 
Note that before the internal step has been performed, time cannot pass, 
hence the only possible move is via~\nrule{\crule}
(so, a sequence of rules~\nrule{\erule} and~\nrule{\crule}
corresponds to rule \nrule{[$\sumInt$]} in $\smove{}$).

\begin{defi}[\bf Compliance for \denf] 
  \label{def:dedeadlock}
  \label{def:decompliance}
  A state $(\tsbX,\clockN)\mid (\tsbY,\clockE)$ in $\demove{}$ 
  is \emph{deadlock} whenever
  $(i)$ it is not the case that both $\tsbX \eqdef \success$ and $\tsbY \eqdef \success$ are in $D$,
  and $(ii)$ there is no $\delta$ and no $\atom[\tau]{a} \in \Act \cup \setenum{\tau}$
  such that $(\tsbX,\clockN +\delta)\mid (\tsbY,\clockE + \delta) \demove{\; \atom[\tau]{a} \,}$.
  We write $(x,\clockN) \decompliant (y,\clockE)$ when:
  \[
  (x,\clockN)\mid(y,\clockE)
  \demove{}^* 
  (x',\clockNi) \mid (y',\clockEi)
  \quad \text{ implies } \quad 
  (x',\clockNi) \mid (y',\clockEi)
  \text{ not deadlock}
  \]
  and we write $(x,D') \compliant (y,D'')$
  whenever $(x,\clockN[0]) \decompliant[D' \cup D''] (y,\clockE[0])$.
\end{defi}

\begin{reslemma}{lem:preserv_compliance}
  Let $\tsbP$ and $\tsbQ$ be two closed TSTs with no shared clocks.  
  Let $V_1$ and $V_2$ be two sets of recursion variables not
  occurring in $\tsbP$ and $\tsbQ$ and such that
  $V_1 \cap V_2 = \emptyset$. 
  Then:
  \[
  \tsbP  \compliant \tsbQ \iff  \nf[V_1]{\tsbP}   \compliant \nf[V_2]{\tsbQ}  
  \]
\end{reslemma}

\subsection{Encoding  \denf into \TA }\label{sec:mapping}

In~\Cref{mapping} we transform \denf into \TA:
first, we use the function $\sem{\cdot}$ to transform each 
defining equation into a \TA; 
then, we compose all the resulting \TA with the union operator $\sumAut{}{}{}$
(introduced in~\Cref{def:aut_composition}).
Our encoding produces a location for every term in $\states$ 
reachable in $\demove{}$; it creates an edge for each move that
can be performed by a TST, so that, in the end, the moves
performed by the network associated to $(\tsbX,D)$ 
coincide with the moves of $\tsbX$ in $\demove{}$. 
To avoid some technicalities in proofs,
we assume that disjunctions never occur in guards 
(while they can occur in invariants);
in~\Cref{sec:upp_guards} we discuss how to deal with the general case.

\begin{defi}[\textbf{Encoding of \denf into \TA}] \label{mapping}
  For all closed \denf $(\tsbX,D)$,
  we define the function
  $
  \trn{(\tsbX,D)} = \sumAut{d \in D}{\tsbX}{\sem{d}}
  $, 
  where:
  \[
  \sem{\tsbX \triangleq \tsbP} 
  =
  \begin{cases}
    \idle(\tsbX)
    & \text{if $\tsbP = \success$}
    \\[4pt]
    \br(\tsbX, \rdy{\tsbP}, \setenum{ (\atomIn[i]{a}, \guardG[i], \resetR[i], \idle(\tsbX[i]))}_i)
    & \text{if $\displaystyle\tsbP = \sum_{i \in I} \atomIn{a}\setenum{\guardG[i], \resetR[i]}.\tsbX[i]$}
    \\
    \!\!\!\begin{array}{l}
      \pfx(\tsbX, \tau, \emptyset, \br(\tsbTauX, \rdy{\tsbP}, \setenum{(\tau, \guardG[i], \emptyset,  A_i)}_i), \text{where}
      \\
      A_i = \pfx( \todo[{\atomOut[i]{a}} \setenum{\guardG[i], \resetR[i]}]{\tsbX[i]}, \atomOut[i]{a}, \resetR[i], \idle(\tsbX[i]))
    \end{array}
    & \text{if $\displaystyle\tsbP = \bigoplus_{i \in I} \atomOut[i]{a}\setenum{\guardG[i], \resetR[i]}.\tsbX[i]$}
  \end{cases}
  \]
\end{defi}

\begin{figure}[t]
  \centering
  \begin{tikzpicture}[scale=1.2]
    \draw [] (0,0) circle (0.15);
    \draw [] (0,0) circle (0.2);
    \node [right] at (-0.2, 0.3) {{\tiny $\tsbX$}};
    \node [right] at (-0.2, 0) {{\footnotesize U}};
    \draw [->] (0.2,0) -- (0.8,0);
    \node [above] at (0.4,0) {};%$\tau$% 
    \draw [] (1,0) circle (0.2);
    \node [right] at (0.5, -0.3) {\tiny $\rdy{\tsbP}$};
    \node [right] at (0.7, 0.3) {{\tiny $\tsbTauX$}};
    % edge to f.u.n.
    \draw [dotted, -] (2.5,0.1) -- (2.5, 0.65); 
    \draw [rounded corners=1mm, ->] (1.2,0) -- (1.7, 0.6) --  (2.3, 0.6)-- (2.4, 0.72); 
    \node [right] at (1.7, 0.8) {$\guardG[1]$}; %guardia
    \draw [] (2.5,0.9) circle (0.2);
    \node [right] at (2.3 , 0.9) {{\footnotesize U}};
    \node [right] at (1.6 , 1.25) {{\tiny  $\todo[{\atomOut[1]{a}} \setenum{\guardG[1], \resetR[1]}]{\tsbX[1]}$}};
    \draw [rounded corners=1mm, ->] (2.6, 0.73) --  (2.7, 0.6) --  (3.9,0.6) --  (4,0.73) ;
    \node [right] at (2.8, 0.8) {$ \atomOut[1]{a} \, \resetR[1]$};
    \draw [] (4.1,0.9) circle (0.2);
    \node [right] at (3.9, 1.25) {{\tiny  ${\tsbX[1]}$ }};
    \draw [rounded corners=1mm, ->] (1.2,0) --  (1.7, -0.6)--  (2.3, -0.6) -- (2.4, -0.47); 
    \draw [] (2.5,-0.3) circle (0.2);
    \node [right] at (2.3 , -0.3) {{\footnotesize U}};
    \node [right] at (1.6 , 0.05) {{\tiny  $\todo[{\atomOut[n]{a}} \setenum{\guardG[n], \resetR[n]}]{\tsbX[n]}$}};
    \draw [rounded corners=1mm, ->] (2.6, - 0.47) --  (2.7, -0.6)  -- (3.9,-0.6) -- (3.98, -0.45);
    \node [right] at (2.8, -0.4) {$\atomOut[n]{a} \,  \resetR[n]$};
    \node [right] at (1.7, -0.4) {$\guardG[n]$}; %guardia
    \draw [] (4.1 ,-0.3) circle (0.2);
    \node [right] at (3.9 , 0.05) {\tiny $\tsbX[n]$};
    % caption
  \end{tikzpicture}
  \hspace{20pt}
  \begin{tikzpicture}[scale=1.2]
    % \small
    % 
    \draw [] (1.5,0) circle (0.15);
    \draw [] (1.5,0) circle (0.2);
    \node [right] at (1.3, 0.3) {\tiny $\tsbX$};
    \draw [rounded corners=1mm, ->] (1.7,0) -- (2.2,0.6)-- (3.9,0.6) --  (4,0.73) ;
    \node [right] at (2.2, 0.8) {$\guardG[1] \;\; \atomIn[1]{a} \;\; \resetR[1]$};
    \draw [dotted, -] (3,-0.25) -- (3, 0.56); 
    \draw [] (4.1,0.9) circle (0.2);
    \node [right] at (3.9, 1.25) {{\tiny  ${\tsbX[1]}$ }};
    \draw [rounded corners=1mm,->]  (1.7,0) -- (2.2,-0.6) --  (3.9,-0.6) -- (3.98, -0.45);
    \node [right] at (2.2,-0.4) {$\guardG[n] \;\; \atomIn[n]{a} \;\; \resetR[n]$};
    % success automaton
    \draw [] (4.1,-0.3) circle (0.2);
    \node [right] at (3.9, 0.05) {\tiny $\tsbX[n]$};
  \end{tikzpicture}
  \caption{Encoding of an internal choice (left) and of an external choice (right).}
  \label{fig:mapping_choices}
\end{figure}\medskip

\noindent The encoding of $\tsbX \triangleq \success$ 
produces an \emph{idle} \TA (\Cref{idle_pattern}), 
with a single success location $\tsbX$.
The encoding of an \emph{external choice}
$\tsbX \triangleq \sum_{i \in I} \atomIn[i]{a}\setenum{\guardG[i],\resetR[i]}.\tsbX[i]$ 
generates a location $\tsbX$ (with $\guardTrue$ invariant), and % $\rdy{\tsbP} = \true$ and
outgoing edges towards all the locations $\tsbX[i]$: 
these edges have guards $\guardG[i]$, reset sets $\resetR[i]$, 
and synchronization labels $\atomIn[i]{a}$
(see~\Cref{fig:mapping_choices}, right).  
Basically, the \TA is listening on all its channels $\atomIn[i]{a}$, 
and since the location $\tsbX$ is not urgent, 
time can pass forever while in there.
The encoding of $\tsbX \triangleq \tsbP$ when
$\tsbP$ is an \emph{internal choice} 
$\bigoplus_{i \in I} \atomOut[i]{a}\setenum{\guardG[i], \resetR[i]}.\tsbX[i]$
is a bit more complex (see \Cref{fig:mapping_choices}, left). 
First, we generate an urgent location $\tsbX$,
and an edge leading to a non-urgent location $\tsbTauX$
with invariant $\rdy{\tsbP}$.
Note that, although $\rdy{\tsbP}$ is a semantic object (a set of clock valuations),
it can always be represented syntactically as a guard~\cite{HenzingerNSY94}. 
Second, we connect the latter location to $i$ urgent locations
(named $\todo[{\atomOut[i]{a}} \setenum{\guardG[i], \resetR[i]}]{\tsbX[i]}$)
via internal edges with guards~$\guardG[i]$;
each of these locations is connected to $\tsbX[i]$
through an edge with reset set $\resetR[i]$ and label $\atomOut[i]{a}$.
The resulting \TA can wait some time before deciding on which branch committing,
since time can pass in location $\tsbTauX$;
however, time passing cannot make all the guards on the branches unsatisfiable,
because the invariant $\rdy{\tsbP}$ on $\tsbTauX$ 
ensures that the location is left on time.
As soon as this happens, since the arriving location is urgent, time cannot pass
anymore, and a synchronization is performed (if possible).
The reason we use both locations $\tsbX$ and $\tsbTauX$ is that, in some
executions, all the guards of an internal choice may have \emph{already} expired. %
In this case, the invariant $\rdy{\tsbP}$ on location $\tsbTauX$ would be
false, and the system could not enter it, so preventing a previous
action (if any) to be done.  To avoid this problem, we have put an
extra location ($\tsbX$) before $\tsbTauX$.

Since location names are in $\states$, every defined/used variable
$\tsbX$ has a location called $\tsbX$, in \emph{every} \TA which
defines/uses it. 
When the \TA obtained from different equations are composed with $\bigsqcup$, 
locations with the same name collapse, therefore
connecting together the call of a recursion variable with its definition.

\begin{exa} 
  \label{ex:mapping}
  Let
  $
  \tsbP[1] = \rec
  {\tsbX[1]}
  {\left( 
      \atomOut{a}\setenum{\clockFmt{c} = 2,\setenum{\clockFmt{c}}}.\tsbX[1] \oplus \atomOut{b}\setenum{\clockT < 7}
    \right)}
  $,
  and let $V_1$  be a set of recursion variables not in $\tsbP[1]$.
  The \denf of $\tsbP[1]$ is $\nf[V_1]{\tsbP[1]} = (\tsbX, D)$, 
  where:
  \[
  D \; = \;
  \setenum{
    \;\;
    \tsbX \eqdef \tsbP, 
    \;\;
    \tsbW \eqdef \success
    \;\;
    \tag*{where $\tsbP = \atomOut{a}\setenum{\clockFmt{c} = 2,\setenum{\clockFmt{c}}}.\tsbX \oplus \atomOut{b}\setenum{\clockT < 7}.\tsbW$}
  }
  \]
  \Cref{fig:ex:mapping} (left) shows the \TA $\trn{(\tsbX,D)}$.
  All the locations but $\tsbTauX$ and $\tsbW$ are urgent;
  all the invariants are $\guardTrue$, except for 
  $
  \Inv( \tsbTauX) = \rdy{\tsbP} 
  = \past{(\sem{\clockFmt{c} = 2} \cup  \sem{\clockT < 7})}
  $
  (represented by the guard $\clockFmt{c} \leq 2 \lor \clockT < 7$).

  \smallskip
  Now, let
  $
  \tsbQ[1] = \rec
  {\tsbY[1]}
  {\atomIn{a}\setenum{\clockFmt{r} > 1 \land \clockFmt{r} < 5,
      \setenum{\clockFmt{r}}}.\tsbY[1] + \atomIn{b}\setenum{\clockFmt{r} < 7}}
  $,
  and let $V_2$ be a set of recursion variables not in $\tsbQ[1]$.
  The \denf of $\tsbQ[1]$ is $\nf[V_2] {\tsbQ[1]} = (\tsbY, F)$, where: 
  \[
  F \; = \;
  \setenum{
    \;\;
    \tsbY \eqdef \tsbQ,
    \;\;
    \tsbZ \eqdef \success
    \;\;
  }
  \tag*{where $\tsbQ = \atomIn{a}\setenum{r > 1 \land r < 5 ,\setenum{r}}.\tsbY + \atomIn{b}\setenum{r < 7}.\tsbZ$}
  \]
  \Cref{fig:ex:mapping} (right) shows the \TA $\trn{(\tsbY,F)}$. %
  Its locations are $\tsbY$ and $\tsbZ$ (both non-urgent), with invariants
  $\Inv(\tsbZ)  = \guardTrue$ and 
  $
  \Inv(\tsbY) = \rdy{\tsbQ} = \Val
  $
  (represented by the guard $\guardTrue$). 
\end{exa}

\begin{figure}[t]
  \centering
  \begin{tikzpicture}[scale=1.2]
    % \small
    % \draw[help lines] (0,0) grid (5,2); 
    % first node
    \draw [] (0,0) circle (0.15);
    \draw [] (0,0) circle (0.2);
    \node [right] at (-0.35, 0.3) {{\tiny $\tsbX$}};
    \node [right] at (-0.2, 0) {{\footnotesize U}};
    % edge from first to second
    \draw [->] (0.2,0) -- (0.8,0);
    \node [above] at (0.4,0) {};%$\tau$% 
    % second node
    \draw [] (1,0) circle (0.2);
    \node [right] at (0.5, -0.3) {\tiny $\clockT<7 \; \lor$};
    \node [right] at (0.5, -0.5) {\tiny $\clockFmt{c} \leq 2$};
    \node [right] at (0.7, 0.3) {{\tiny $\tsbTauX$}};
    % edge to f.u.n.
    \draw [rounded corners=1mm, ->] (1.2,0) -- (1.7, 0.6) --  (2.3, 0.6)-- (2.4, 0.72); ; 
    \node [right] at (1.3, 0.8) {$\clockFmt{c}=2$}; %guardia
    \draw [] (2.5,0.9) circle (0.2);
    \node [right] at (2.3 , 0.9) {{\footnotesize U}};
    \node [right] at (1.3 , 1.25) {{\tiny  $ \todo[\atomOut{a}\setenum{\clockFmt{c}=2,\setenum{\clockFmt{c}}}]{\tsbX}$}};
    \draw [rounded corners=1mm, ->] (2.6, 0.73) --  (2.7, 0.6) --  (4.1,0.6)  -- (4.1, 1.5) -- (0,1.5) -- (0,0.2);
    \node [right] at (2.9, 0.8) {$ \atomOut{a}\; \setenum{\clockT}$};
    % edge to f.s.n.
    \draw [rounded corners=1mm, ->] (1.2,0) --  (1.7, -0.6)--  (2.3, -0.6) -- (2.4, -0.47); 
    \node [right] at (1.5, -0.8) {$\clockT<7$}; % \, \tau$};
    \draw [] (2.5,-0.3) circle (0.2);
    \node [right] at (2.3 , -0.3) {{\footnotesize U}};
    \node [right] at (1.7 , 0) {{\tiny  $\todo[\atomOut{b}\setenum{\clockT<7,\emptyset}]{\tsbW}$}};
    \draw [rounded corners=1mm, ->] (2.6, - 0.47) --  (2.7, -0.6)  --  (3.72,-0.6);
    \node [right] at (3, -0.4) {$\atomOut{b}$};
    % success automaton
    \draw [] (3.8 + 0.1 ,-0.5) circle (0.2);
    % \node [right] at (3.4 + 0.1 , -0.5) {{\footnotesize U}};
    \node [right] at (3.6 + 0.1 , -0.2) {\tiny $\tsbW$};
    % caption
  \end{tikzpicture}
  \hspace{20pt}
  \begin{tikzpicture}[scale=1.2]
    % \small
    % first node
    \draw [] (1.5,0) circle (0.15);
    \draw [] (1.5,0) circle (0.2);
    \node [right] at (1.1, 0.3) {\tiny $\tsbY$};
    % loop edge to init
    \draw [rounded corners=1mm, ->] (1.7,0) -- (2.2,0.6)--(5.9,0.6) -- (5.9, 1.6) -- (1.5,1.6) -- (1.5,0.2);
    \node [right] at (2.2, 0.8) {$(\clockFmt{r} > 1 \land \clockFmt{r}<5) \;\; \atomIn{a} \;\; \setenum{\clockFmt{r}} $};
    % edge to  success state
    \draw [rounded corners=1mm,->]  (1.7,0) -- (2.2,-0.6) -- (4,-0.6);
    \node [right] at (2.2,-0.4) {$\clockFmt{r}<7 \;\; \atomIn{b}$};
    % success automaton
    \draw [] (4.2,-0.5) circle (0.2);
    % \node [right] at (4, -0.5) {{\footnotesize U}};
    \node [right] at (4, -0.2) {\tiny $\tsbZ$};
    % caption
    \node [right] at (1, -0.8) {$ $};
  \end{tikzpicture}
  \caption{Encoding of the TSTs in~\Cref{ex:mapping}.}
  \label{fig:ex:mapping}
\end{figure}

\Cref{lem:bsim} shows a strict correspondence between the timed LTSs $\demove[D]{}$ and $\umove{}$:
matching states are strongly bisimilar.
To make explicit that some state $q$ belongs to a LTS $\rightarrow$, 
we write it as a pair $(q,\rightarrow)$.

\begin{reslemma}{lem:bsim}
  Let $(\tsbX,D')$ and $(\tsbY,D'')$ be \denf
  such that $\rv{D'} \cap \rv{D''} = \emptyset$.
  Let $\netN = \net{\trn{(\tsbX,D')}}{\trn{(\tsbY,D'')}}$. %
  Then:
  \[  
  ( (\tsbX, \clockN[0]) \mid (\tsbY, \clockE[0]), \demove[D'\cup D'']{}) 
  \sim 
  ((\tsbX, \tsbY,\clockN[0] \sqcup \clockE[0]), \umove{} )
  \]
\end{reslemma}

\subsection{Decidability of compliance}\label{sec:model-checking}

To prove that compliance between TSTs is decidable, 
we reduce such problem to that of checking if a network of \TA is deadlock-free ---
which is known to be decidable~\cite{Alur94theory}.

\begin{thm} \label{th:compliance}
  Let $\tsbP$ and $\tsbQ$ be two closed TSTs. 
  Let $V_1$ and $V_2$ be two sets of recursion variables not
  occurring in $\tsbP$ and $\tsbQ$ and such that
  $V_1 \cap V_2 = \emptyset$.  Then:
  \[
  \tsbP \compliant \tsbQ 
  \iff  
  \net{\trn{\nf[V_1]{\tsbP}}}{\trn{\nf[V_2]{\tsbQ}}} 
  \text{ is deadlock-free}
  \]
\end{thm}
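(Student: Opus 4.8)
The plan is to prove the biconditional as a chain of equivalences that walks the two translation steps ($\tsbP \to \nf[V_1]{\tsbP} \to \trn{\nf[V_1]{\tsbP}}$, and likewise for $\tsbQ$) one at a time, reducing compliance of $\tsbP,\tsbQ$ first to deadlock-freedom of the \denf LTS, and then to deadlock-freedom of the network of \TA.

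First I would invoke \Cref{lem:preserv_compliance} to replace the goal $\tsbP \compliant \tsbQ$ by $\nf[V_1]{\tsbP} \compliant \nf[V_2]{\tsbQ}$; the hypotheses on $V_1,V_2$ are exactly those the lemma requires. Writing $(\tsbX,D') = \nf[V_1]{\tsbP}$ and $(\tsbY,D'') = \nf[V_2]{\tsbQ}$, by \Cref{def:decompliance} this \denf-compliance unfolds to the assertion $(\tsbX,\clockN[0]) \decompliant[D'\cup D''] (\tsbY,\clockE[0])$, i.e.\ that the LTS $\demove[D'\cup D'']{}$ reaches no deadlock state from $(\tsbX,\clockN[0]) \mid (\tsbY,\clockE[0])$ — that this \denf LTS is deadlock-free. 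Since $V_1 \cap V_2 = \emptyset$ and the variables are fresh, $\rv{D'} \cap \rv{D''} = \emptyset$, so \Cref{lem:bsim} applies and yields a strong bisimulation between $((\tsbX,\clockN[0]) \mid (\tsbY,\clockE[0]), \demove[D'\cup D'']{})$ and $((\tsbX,\tsbY,\clockN[0]\sqcup\clockE[0]), \umove{})$. Because both clock valuations are the all-zero valuation and the clocks of $\tsbP,\tsbQ$ are disjoint, $\clockN[0]\sqcup\clockE[0]$ is the all-zero valuation over the combined clock set, so the latter is exactly the initial state of $\net{\trn{\nf[V_1]{\tsbP}}}{\trn{\nf[V_2]{\tsbQ}}}$ fixed by \Cref{mapping}. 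It then remains only to show that strong bisimilarity transports deadlock-freedom between the two LTSs.

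For this step I would prove that the bisimulation preserves the deadlock predicate state-by-state and then lift this to deadlock-freedom using that strong bisimilarity preserves reachability (every state reachable from one initial state is bisimilar to a reachable state from the other). Comparing \Cref{def:decompliance} with \Cref{def:netdeadlock}, the two deadlock definitions have identical shape: a state is deadlock iff it is not a success state and, for no delay $\delta$ and no $\atom[\tau]{a} \in \Act \cup \setenum{\tau}$, can it first let time $\delta$ pass and then fire $\atom[\tau]{a}$. Since $\sim$ matches delay labels and action/$\tau$ labels, related states agree on condition $(ii)$ verbatim. Agreement on condition $(i)$, the success predicate, is the only point that does not follow from the labelled transitions alone, and this is the crux.

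The difficulty is genuine, because success is not characterisable by the available transitions: both a success configuration and, for instance, two mismatched external choices can let time pass forever without ever enabling a $\tau$, yet only the former is non-deadlock. I would resolve this using the explicit bisimulation constructed in the proof of \Cref{lem:bsim}, which relates a \denf configuration $(\tsbX,\clockN) \mid (\tsbY,\clockE)$ to the network state $(\tsbX,\tsbY,\clockN\sqcup\clockE)$ carrying the \emph{same} head variables (location names live in $\states$). In the \denf semantics the configuration is success iff $\tsbX \eqdef \success$ and $\tsbY \eqdef \success$ are in $D'\cup D''$, while inspection of \Cref{mapping} shows that the only locations produced without outgoing edges are precisely the idle locations arising from equations $\tsbX \eqdef \success$. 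Hence the network state has no outgoing edges iff both head variables are success, so the two success predicates coincide on related states. Combining this with the agreement on condition $(ii)$ shows that related states are simultaneously deadlock or non-deadlock, and with reachability preservation this gives that one LTS is deadlock-free iff the other is. Chaining the equivalences then establishes the theorem.
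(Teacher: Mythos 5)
Your proof is correct and follows essentially the same route as the paper: reduce $\tsbP \compliant \tsbQ$ to \denf-compliance via \Cref{lem:preserv_compliance}, transfer to the network via the strong bisimulation of \Cref{lem:bsim}, and handle the success predicate separately by inspecting \Cref{mapping}. Your treatment of the success-transfer point is in fact slightly more explicit than the paper's (which just states that $x=\success$ and $y=\success$ yield success locations), but the underlying argument is identical.
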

\begin{proof}
  Let $(\tsbX,D') = \nf[V_1]{\tsbP}$ and $(\tsbY,D'') = \nf[V_2]{\tsbQ}$.
  We show that:
  \begin{equation} \label{eq:compliance:denf}
    (\tsbX,D') \compliant (\tsbY,D'') 
    \;\; \iff \;\;
    \net{\trn{(\tsbX,D')}}{\trn{(\tsbY,D'')}} 
    % \net{\sem{\tsbX}}{\sem{\tsbY}} 
    \text{ is deadlock-free}
  \end{equation}
  For the $(\Rightarrow)$ direction, 
  assume by contradiction that $(\tsbX,D') \compliant (\tsbY,D'')$
  but the associated network $\netN$ is \emph{not} deadlock-free. 
  By~\Cref{def:netdeadlock}, there exist a reachable deadlocked state $s$, \ie
  $(\tsbX, \tsbY, \clockN[0] \sqcup \clockE[0]) \umove{}^* s = (x,y,\clockN \sqcup \clockE)$
  and there are no $ \delta \geq 0$ and
  $\atom[\tau]{a} \in \Act \cup \setenum{\tau}$ such that
  $s \umove{\delta} \umove{ \atom[\tau]{a}}$.
  By~\Cref{lem:bsim},
  $(\tsbX, \clockN[0]) \mid (\tsbY, \clockE[0]) \demove[D' \cup D'']{}^* (x,\clockN) \mid (y, \clockE)$,
  and the last state is bisimilar to $s$. %
  However, since $(\tsbX,D') \compliant (\tsbY,D'')$,
  the state $(x,\clockN) \mid (y, \clockE)$ is not deadlock according to~\Cref{def:dedeadlock}. % 
  Here we have two cases.
  \begin{inlinelist}
  \item If $x = \success$ and $y = \success$, then by~\Cref{mapping} $x$ and $y$ are success locations --- 
    contradiction, because $s$ is not success;
  \item $(x,\clockN+\delta) \mid (y, \clockE+\delta) \demove[D' \cup D'']{\; \atom[\tau]{a} \,}$ 
    for some $\delta$ and no $\atom[\tau]{a} \in \Act \cup \setenum{\tau}$.
    Then, by~\Cref{lem:bsim} also $s$ can fire that moves --- contradiction.
  \end{inlinelist}
  The direction $(\xLeftarrow{})$ is similar. % to the previous case.
  The main statement follows from~\eqref{eq:compliance:denf} and from~\Cref{lem:preserv_compliance}.
\end{proof}

\begin{exa} \label{ex:mapping:compliance}
  Let us consider the two TSTs in~\Cref{ex:mapping}.
  Since the associated network of \TA (\Cref{fig:ex:mapping}) is deadlock-free,
  by~\Cref{th:compliance} we conclude that $\tsbP[1] \compliant \tsbQ[1]$.
\end{exa}

Our implementation of TSTs (\url{co2.unica.it}) uses
\UPP~\cite{Uppaal04tutorial} to check compliance. 
\UPP can verify deadlock-freedom of a network of \TA  
through its query language, which is a simplified version of Time Computation Tree Logic. %
In \UPP, the special state formula \code{deadlock} is satisfied by all deadlocked states;
hence, a network is deadlock-free if none of its reachable states
satisfies \code{deadlock}.
Note that checking deadlock-freedom with the path formula \code{A[] not deadlock}
would not be correct, 
because~\Cref{def:netdeadlock} does not consider as deadlock the success states.
The actual \UPP query we use takes into account success states.
E.g., let $\netN = \net{\autA[1]}{\autA[2]}$ be a network of \TA,
with $\locL[1]$ success location of $\autA[1]$, 
and $\locL[2],\locL[3]$ success locations of $\autA[2]$.
The query 
\code{A[]  deadlock imply  (A1.l1  \&\& (A2.l2 || A2.l3))}
correctly checks deadlock freedom according to~\Cref{def:netdeadlock}.

\section{Runtime monitoring of TSTs} \label{sec:tst-monitoring}

In this~\namecref{sec:tst-monitoring} we study runtime monitoring
based on TSTs. % 
The setting is the following: two participants $\pmvA$
and $\pmvB$ want to interact according to two (compliant) TSTs
$\tsbP[\pmvA]$ and $\tsbP[\pmvB]$, respectively.  This interaction
happens through a server, which monitors all the messages exchanged
between $\pmvA$ and $\pmvB$, while keeping track of the passing of
time. %
If a participant (say, $\pmvA$) sends a message not
expected by her TST, then the monitor classifies $\pmvA$ as
\emph{culpable} of a violation. %
There are other two circumstances where $\pmvA$ is culpable: 
(i) $\tsbP[\pmvA]$ is an internal choice, but $\pmvA$ loses time
until all the branches become unfeasible, or 
(ii) $\tsbP[\pmvA]$ is an external choice, 
but $\pmvA$ does not readily receive an incoming message sent by $\pmvB$.

Note that the semantics in~\Cref{fig:tst:s_semantics}
cannot be directly exploited to define such a runtime monitor,
for two reasons. %
First, the synchronisation rule is purely symmetric, 
while the monitor outlined above assumes an asymmetry between
internal and external choices.
Second, in the semantics in~\Cref{fig:tst:s_semantics}
all the transitions (both messages and delays) 
must be allowed by the TSTs:
for instance, $(\TsumI{\atomOut{a}}{\guardFmt{t \leq 1}}{}, \clockN)$
cannot take any transitions (neither $\atomOut{a}$ nor $\delta$)
if $\clockN(t) > 1$. %
In a runtime monitor we want to avoid such kind of situations,
where no actions are possible, and the time is frozen. %
More specifically, our desideratum is that the runtime monitor 
acts as a \emph{deterministic} automaton, which reads a 
\emph{timed trace} (a sequence of actions and time delays)
and it reaches a unique state $\gamma$,
which can be inspected to find which of the two participants (if any)
is culpable. % of a violation. % 

To reach this goal, we define the semantics of the runtime monitor
on two levels.
The first level, specified by the relation $\cmove{}$,
deals with the case of honest participants;
however, differently from the semantics in~\Cref{sec:tst-semantics},
here we decouple the action of sending from that of receiving.
More precisely, if $\pmvA$ has an internal choice
and $\pmvB$ has an external choice, then we postulate that
$\pmvA$ must move first, by doing one of the outputs in her choice,
and then $\pmvB$ must be ready to do the corresponding input.
The second level, called \emph{monitoring semantics} 
and specified by the relation $\mcmove{}$, 
builds upon the first one.
Each move accepted by the first level is also accepted
by the monitor.
Additionally, the monitoring semantics defines transitions 
for actions not accepted by the first level, 
for instance unexpected input/output actions, 
and improper time delays.
In these cases, the monitoring semantics signals which of the two 
participants is culpable.

\begin{defi}[\textbf{Monitoring semantics of TSTs}]
  \label{def:tst:turn-based}
  \label{def:tst:monitoring-semantics}
  \emph{Monitoring configurations} $\gamma,\gamma',\ldots$
  are terms of the form $\tb{P} \mmid \tb{Q}$, where
  $\tb{P}$ and $\tb{Q}$ are triples $(\tsbP, c, \clockN)$, 
  such that
  \begin{inlinelist}
   \item $\tsbP$ is either a TST or $\cnil$, and
   \item $\qmvA$ is a one-position buffer (either empty or containing an output label).
   \end{inlinelist}
  The transition relations $\cmove{}$ and $\mcmove{}$
  over monitoring configurations,
  with labels 
  $\lambda, \lambda', \ldots \in (\setenum{\pmvA, \pmvB} \times \BLab) \cup \Realpos$,
  are defined in~\Cref{fig:tst:monitoring-semantics}.
\end{defi}

\begin{figure}[t]
  \[
  \begin{array}{cll}
    {(\TsumI{\atomOut{a}}{\guardG,\resetR} \tsbP \sumInt \tsbPi, \queue{} , \clockN)  \mmid (\tsbQ , \queue{} , \clockE)
      \cmove{\pmv A \says \atomOut{a}}
      (\tsbP, \queue{\atomOut{a}}, \reset{\clockN}{\resetR}) \mmid 
      (\tsbQ , \queue{} , \clockE)
    }
    \quad
    & \text{if } \clockN \in \sem{\guardG}
    \;\;
    & \nrule{[M-$\sumInt$]}
    \\[10pt]
    { (\tsbP, \queue{\atomOut{a}}, \clockN) \mmid 
      (\TsumE{\atomIn{a}}{\guardG,\resetR}{\tsbQ} \sumExt \tsbQi}, \queue{}, \clockE)
    \cmove{\pmv B \says \atomIn{a}}
     (\tsbP, \queue{}, \clockN)  \mmid  (\tsbQ, \queue{},  \reset{\clockE}{\resetR})
    & \text{if } \clockN \in \sem{\guardG}
    & \nrule{[M-$\sumExt$]}
    \\[10pt]
    \irule{
      \clockN + \delta \in \rdy{\tsbP} \qquad 
      \clockE + \delta \in \rdy{\tsbQ}}
    {(\tsbP, \queue{}, \clockN) \mmid ( \tsbQ , \queue{}, \clockE) 
      \cmove{\delta} 
      (\tsbP, \queue{}, \clockN + \delta) \mmid (\tsbQ , \queue{}, \clockE + \delta)}
    & 
    & \nrule{[M-Del]}
    \\[15pt]
    \irule 
    {(\tsbP, \qmvA, \clockN) \mmid (\tsbQ,  \qmvB, \clockE) 
    \cmove{\lambda} 
    (\tsbPi, \qmvAi, \clockNi) \mmid (\tsbQi, \qmvBi, \clockEi)}
    {(\tsbP,\qmvA, \clockN) \mmid (\tsbQ, \qmvB, \clockE)  
    \mcmove{\lambda} 
    (\tsbPi, \qmvAi, \clockNi) \mmid (\tsbQi, \qmvBi, \clockEi)}
    & & \nrule{[M-Ok]}
    \\[15pt]
    \irule{(\tsbP, \qmvA, \clockN) \mmid (\tsbQ,  \qmvB, \clockE) 
    \not\cmove{\pmvA \says \labL} }
    {(\tsbP, \qmvA, \clockN) \mmid (\tsbQ,  \qmvB, \clockE) 
    \mcmove{\pmv A \says \labL}
    (\cnil, \qmvA, \clockN) \mmid (\tsbQ, \qmvB, \clockE)} 
    & & \nrule{[M-FailA]}
    \\[20pt]
    \irule{(\qmvB  =  \queue{} \land  \clockN + \delta \not\in \rdy{\tsbP}) \;\lor\; \qmvB  \neq \queue{}}
    {(\tsbP, \qmvA , \clockN) \mmid   (\tsbQ, \qmvB, \clockE) 
    \mcmove{\delta}  
    (\cnil, \qmvA, \clockN + \delta) \mmid   (\tsbQ, \qmvB, \clockE + \delta)}
    & & \nrule{[M-FailD]}
  \end{array}
  \]
  \caption{Monitoring semantics (symmetric rules omitted).}
  \label{fig:tst:monitoring-semantics}
  \label{fig:tst:turn-based}
\end{figure}

In the rules in~\Cref{fig:tst:turn-based},
we always assume that the leftmost TST is
governed by {\pmv A}, while the rightmost one
is governed by {\pmv B}.
In rule~\nrule{[M-$\sumInt$]}, $\pmvA$ has an internal choice, 
and she can fire one of her outputs $\atomOut{a}$,
provided that its buffer is empty, and the guard $\guardG$ is satisfied.
When this happens, 
the message $\atomOut{a}$ is written to the buffer,
and the clocks in $\resetR$ are reset.
Then, $\pmvB$ can read the buffer,
by firing $\atomIn{a}$ in an external choice
through rule~\nrule{[M-$\sumExt$]};
this requires that the buffer of $\pmvB$ is empty, 
and the guard $\guardG$ of the branch $\atomIn{a}$ is satisfied.
Rule~\nrule{[M-Del]} allows time to pass,
provided that the delay $\delta$ is permitted for both participants,
and both buffers are empty.
The last three rules specify the runtime monitor.  Rule~\nrule{[M-Ok]}
says that any move accepted by $\cmove{}$ is also accepted by the
monitor.  Rule~\nrule{[M-FailA]} is used when participant $\pmvA$ 
attempts to do an action not permitted by $\cmove{}$: 
this makes the monitor evolve to a configuration where $\pmvA$ is
culpable (denoted by the term $\cnil$). %
Rule~\nrule{[M-FailD]} 
makes $\pmvA$ culpable when time passes, in two cases:
either $\pmvA$ has an internal choice, 
but the guards are no longer satisfiable;
or she has an external choice, and there is an incoming message.

The following~\namecref{lem:tst-monitoring:determinism}
establishes that the monitoring semantics is deterministic,
\ie, if $\gamma \mcmove{\lambda} \gamma'$ 
and $\gamma \mcmove{\lambda} \gamma''$,
then $\gamma' = \gamma''$.
This is a very desirable property indeed,
because it ensures that the culpability of a participant at any
given time is uniquely determined by the past actions.
Further, for all finite timed traces $\vec{\lambda}$
(\ie, sequences of actions $\pmvA \says \labL$ or time delays $\delta$),
there exists some configuration $\gamma$ reachable
from the initial one.

\newcommand{\lemtstmonitoringdeterminism}{%
  Let $\gamma_0 = (\tsbP, \queue{}, \clockN[0]) \mmid (\tsbQ, \queue{}, \clockN[0])$, %
  where $\tsbP \compliant \tsbQ$.
  Then,
  for all finite timed traces~$\vec{\lambda}$
  there exists one and only one $\gamma$ such that
  $\gamma_0 \mcmove{\vec{\lambda}} \gamma$.
}

\begin{lem} \label{lem:tst-monitoring:determinism}
  \lemtstmonitoringdeterminism
\end{lem}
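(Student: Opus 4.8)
The plan is to prove the statement by induction on the length of the trace $\vec{\lambda}$, reducing the whole claim to a single-step property: for every monitoring configuration $\gamma$ reachable from $\gamma_0$ and every label $\lambda$, there is exactly one $\gamma'$ with $\gamma \mcmove{\lambda} \gamma'$. The base case ($\vec{\lambda}$ empty) is immediate with $\gamma = \gamma_0$, and the inductive step applies the induction hypothesis to get the unique prefix configuration and then discharges the single-step property there. To make the single-step analysis succeed I would carry along a structural invariant $\Phi$, holding at $\gamma_0$ and preserved by $\mcmove{}$, recording the ``compliance shape'' of a configuration $(\tsbP,\qmvA,\clockN) \mmid (\tsbQ,\qmvB,\clockE)$: at most one of the two buffers is non-empty, and whenever both buffers are empty and neither component is $\cnil$, the residual pair is still compliant, i.e.\ $(\tsbP,\clockN) \compliant (\tsbQ,\clockE)$. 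At $\gamma_0$ this is exactly the hypothesis $\tsbP \compliant \tsbQ$, and its preservation along the honest moves $\cmove{}$ follows from the coinductive characterisation of compliance (\Cref{def:coind-compliance}, \Cref{lem:coind-compliance}).

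For the single-step property I distinguish the two kinds of label. When $\lambda = \pmvA \says \labL$ (and symmetrically for $\pmvB$) at most one honest rule can apply: if $\labL$ is an output only \nrule{[M-$\sumInt$]}, if it is an input only the symmetric \nrule{[M-$\sumExt$]}. Since the actions in any internal/external choice are pairwise distinct (\Cref{def:tst:syntax}), the matching branch — hence continuation, guard and reset — is uniquely determined, so the honest successor, when it exists, is unique and \nrule{[M-Ok]} lifts it. When no honest move is available, the premise of \nrule{[M-FailA]} is by construction the exact negation of ``some honest $\pmvA$-move exists'', so \nrule{[M-FailA]} fires and yields the unique successor $(\cnil,\qmvA,\clockN) \mmid (\tsbQ,\qmvB,\clockE)$. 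Thus for action labels existence and uniqueness are immediate.

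The delicate case is $\lambda = \delta$. Totality is easy: if both buffers are empty and both sides are ready ($\clockN+\delta \in \rdy{\tsbP}$ and $\clockE+\delta \in \rdy{\tsbQ}$) then \nrule{[M-Del]} applies through \nrule{[M-Ok]}; otherwise one of the premises of the two symmetric instances of \nrule{[M-FailD]} holds, so at least one successor exists. The real work is \emph{uniqueness}: I must rule out \nrule{[M-Del]} competing with a failure rule, and both symmetric copies of \nrule{[M-FailD]} firing and blaming different participants. This is where $\Phi$ is used. If one buffer is non-empty, say $\qmvA \neq \queue{}$, then \nrule{[M-Del]} is disabled and the blame is forced onto $\pmvB$ — the pending message is $\pmvA$'s already-performed output, for which $\pmvA$ cannot be at fault, so only the $\pmvB$-copy of \nrule{[M-FailD]} is enabled. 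If both buffers are empty, $\Phi$ gives $(\tsbP,\clockN)\compliant(\tsbQ,\clockE)$, so by \Cref{lem:coind-compliance} one residual is an external choice or $\success$; for such a term $\rdy{\tsbP} = \Val$, hence that side is always ready and can never satisfy its \nrule{[M-FailD]} premise. Consequently at most one of the two readiness conditions can fail, making \nrule{[M-Del]} and the two failure copies mutually exclusive and pinning down a unique successor.

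The main obstacle I anticipate is the preservation of $\Phi$ together with the bookkeeping around configurations in which one component is already $\cnil$: after a failure the compliance shape is partly lost, and I must verify that the failure rules still admit a single successor — e.g.\ that the $\cnil$ side, for which $\rdy{\cnil}=\emptyset$, is the one carrying the blame while the live side, kept external or $\success$ by $\Phi$, remains ready. Formulating $\Phi$ so that it is genuinely inductive across both the honest rules and the two failure rules, and checking disjointness of the rule premises in each resulting case, is the technical heart of the argument; once $\Phi$ is established, existence and uniqueness of each step fall out from the case analysis above, and the induction on $\vec{\lambda}$ closes the proof.
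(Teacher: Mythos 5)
Your skeleton --- induction on the trace, reduced to existence and uniqueness of each single $\mcmove{\lambda}$-step --- is the natural way to flesh out the paper's one-line ``simple inspection'' proof, and your analysis of action labels is correct: the honest move, when it exists, is unique because branch actions are pairwise distinct, and \nrule{[M-FailA]} is enabled exactly when no honest move is. The genuine gap lies in the delay case, at precisely the two disjointness claims your argument hinges on. First, when $\qmvA \neq \queue{}$ you assert that only the $\pmvB$-copy of \nrule{[M-FailD]} is enabled, ``the pending message being $\pmvA$'s already-performed output''. This does not follow from the rules: the premise of the $\pmvA$-copy, $(\qmvB = \queue{} \,\land\, \clockN + \delta \not\in \rdy{\tsbP}) \lor \qmvB \neq \queue{}$, never inspects $\qmvA$, so both copies can fire together. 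Concretely, take the compliant pair $\tsbP = \TsumI{\atomOut{a}}{\clockT < 1}{\TsumI{\atomOut{b}}{\clockT < 2}{}}$ and $\tsbQ = \TsumE{\atomIn{a}}{\clockX < 2}{\TsumE{\atomIn{b}}{\clockX < 3}{}}$ and the trace $\seq{0.5,\ \pmvA \says \atomOut{a},\ 3}$. The first two steps are unique honest steps and lead to $(\TsumI{\atomOut{b}}{\clockT<2}{},\, \queue{\atomOut{a}},\, \clockN) \mmid (\tsbQ,\, \queue{},\, \clockE)$ with $\clockN(\clockT) = \clockE(\clockX) = 0.5$, a configuration satisfying your invariant $\Phi$. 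For $\delta = 3$ the $\pmvB$-copy of \nrule{[M-FailD]} fires because $\qmvA \neq \queue{}$, but the $\pmvA$-copy fires as well, because $\qmvB = \queue{}$ and $\clockN + 3 \not\in \rdy{\TsumI{\atomOut{b}}{\clockT<2}{}} = \sem{\clockT < 2}$; the two successors (left component turned into $\cnil$ with $\tsbQ$ live, versus $\tsbQ$ turned into $\cnil$ with the left component live) are distinct.

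Second, your handling of configurations with a $\cnil$ component is circular: $\Phi$ as you state it says nothing about such configurations, so it cannot ``keep the live side external or $\success$'' --- and indeed it does not. Starting from the compliant pair $\tsbP = \TsumE{\atomIn{a}}{\guardTrue}{}$, $\tsbQ = \TsumI{\atomOut{a}}{\clockX < 1}{}$, the step $\pmvA \says \atomOut{b}$ makes $\pmvA$ culpable via \nrule{[M-FailA]} while leaving $\tsbQ$ a live internal choice; for $\delta = 2$ both copies of \nrule{[M-FailD]} then fire (the $\pmvA$-copy because $\rdy{\cnil} = \emptyset$, the $\pmvB$-copy because $\clockE + 2 \not\in \sem{\clockX < 1}$), again with distinct successors. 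Note that both offending configurations are reachable and satisfy $\Phi$, so no strengthening of the invariant can close these cases: single-step uniqueness genuinely fails for the rules as literally written, and a proof can only go through under a stricter reading of \nrule{[M-FailD]} (e.g.\ requiring the blamed participant's own buffer to be empty in the first disjunct, and exempting an already-culpable participant) --- a reading that the paper's own one-line proof leaves implicit and never checks. As submitted, your two central uniqueness claims are exactly the ones that fail, so the proof has a real gap at its technical heart.
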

\begin{proof}
  Simple inspection of the rules in~\Cref{fig:tst:monitoring-semantics}.
\end{proof}

The goal of the runtime monitor is to detect, 
at any state of the execution, which of the two participants
is culpable (if any).
Further, we want to identify who is in charge of the next move.
This is formalised by the following~\namecref{def:onduty}.

\begin{defi}[\textbf{Duties \& culpability}]
  \label{def:onduty}
  \label{def:culpable}
  Let $\gamma = (\tsbP,\qmvA,\clockN) \mmid (\tsbQ,\qmvB,\clockE)$.
  We say that $\pmvA$ is \emph{culpable} in $\gamma$
  iff $\tsbP = \cnil$.
  We say that $\pmv A$ is \emph{on duty} in $\gamma$ if 
  \begin{inlinelist}
  \item {\pmv A} is not culpable in $\gamma$, and
  \item either $\tsbP$ is an internal choice and $\qmvA$ is empty, or $\qmvB$ is not empty.
  \end{inlinelist}
\end{defi}

\Cref{lem:onduty-culpable} states that, 
in each reachable configuration,
only one participant can be on duty; 
and if no one is on duty nor culpable, 
then both participants have reached success. %

\begin{lem} 
  \label{lem:onduty-culpable}
  If $\tsbP \compliant \tsbQ$ and
  $(\tsbP,\queue{},\clockN[0]) \mmid (\tsbQ,\queue{},\clockE[0]) \mcmove{}^* \gamma$,
  then:
  \begin{enumerate}
  \item there exists at most one participant on duty in $\gamma$,
  \item if there exists some culpable participants in $\gamma$,
    then no one is on duty in $\gamma$,
  \item if no one is on duty in $\gamma$,
    then $\gamma$ is success, or someone is culpable in $\gamma$.
  \end{enumerate}
\end{lem}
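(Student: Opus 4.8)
The plan is to prove the three statements simultaneously by induction on the length of the derivation $(\tsbP,\queue{},\clockN[0]) \mmid (\tsbQ,\queue{},\clockE[0]) \mcmove{}^* \gamma$, strengthening the claim into an invariant $\Phi$ on the reachable configuration $\gamma = (\tsbR,\qmvA,\clockN) \mmid (\tsbS,\qmvB,\clockE)$ that records: (a) at most one of the two buffers is non-empty; (b) if neither participant is culpable, the pair behaves like a compliant configuration — precisely, $(\tsbR,\clockN)\compliant(\tsbS,\clockE)$ when both buffers are empty, while if one buffer carries a committed output the holder plays the role of a committed choice $\todo[\cdots]$ whose partner offers the matching enabled input with compliant continuations; and (c) whenever one participant is culpable, the other is not on duty. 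Parts 1--3 then follow by reading $\Phi$ off the final configuration.

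First I would establish the buffer invariant (a): initially both buffers are empty, rule \nrule{[M-$\sumInt$]} fills a buffer only from the all-empty state, \nrule{[M-$\sumExt$]} empties it, and every other rule leaves buffers untouched. This already settles most of part~1: if $\qmvA$ is non-empty then $\qmvB$ is empty, so $\pmvA$ (the committed sender) fails both disjuncts of the on-duty clause and only $\pmvB$ (who must receive) can be on duty, and symmetrically.

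The core is invariant (b), which I would maintain by a subject-reduction argument translating between $\mcmove{}$ and the coinductive compliance of \Cref{def:coind-compliance}. For the honest transitions lifted through \nrule{[M-Ok]} from \nrule{[M-$\sumInt$]}, \nrule{[M-$\sumExt$]}, \nrule{[M-Del]}, I would use \Cref{lem:coind-compliance} to pass from $\compliant$ to a witnessing coinductive relation $\relR$ and check that each move lands in a related pair: a commit \nrule{[M-$\sumInt$]} turns an internal-choice branch into the buffered state $\todo[\cdots]$, which is exactly the committed-choice situation covered by \Cref{def:coind-compliance}; the dual read \nrule{[M-$\sumExt$]} consumes it and returns, via the residual of $\relR$, to an empty-buffer compliant pair; and \nrule{[M-Del]} is sound because its side conditions $\clockN+\delta\in\rdy{\tsbR}$ and $\clockE+\delta\in\rdy{\tsbS}$ are precisely what the coinductive clauses require. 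With (b) in force, the empty-buffer non-culpable configurations are exactly the compliant ones, so they can be neither both-internal (compliance forbids it) nor a $\success$/non-$\success$ mismatch; hence at most one side is an internal choice, which gives part~1, and if such a configuration is not $\success$ then some side is internal or some buffer is full, producing the on-duty witness needed for the contrapositive of part~3.

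The main obstacle, and the heart of part~2 together with clause (c), is the treatment of the failure rules \nrule{[M-FailA]} and \nrule{[M-FailD]} (and their symmetric partners): I must show that a failure can drive a participant to $\cnil$ only when that participant is the unique on-duty one, so that after the failure the survivor is off duty. For \nrule{[M-FailD]} this is almost immediate from the guard, since $\rdy{\tsbR}=\Val$ for external choices and for $\success$: the disjunct $\clockN+\delta\notin\rdy{\tsbR}$ can hold only for an internal $\tsbR$, while the disjunct $\qmvB\neq\queue{}$ is literally the pending-receive case, so in both cases the failing party was on duty. The delicate case is \nrule{[M-FailA]}: I would argue from invariant (b) that, in a non-culpable configuration, the labels $\labL$ with $\gamma\not\cmove{\pmvA\says\labL}$ that the monitor charges to $\pmvA$ are compatible with a failure outcome only when it is $\pmvA$'s turn, i.e. when $\pmvA$ is on duty. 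Ruling out a spurious failure attributed to the off-duty participant is exactly the step requiring care, and it is where the compliance backbone (b) is indispensable for pinning down whose move the monitor is reading. Once clause (c) is secured, part~2 is immediate; part~1 then extends to culpable configurations (with someone culpable, the at most one non-culpable participant is off duty, so nobody is on duty); and part~3 holds because a configuration with nobody on duty and nobody culpable must, by (b), be the $\success \mmid \success$ pair.
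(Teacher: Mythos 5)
Your invariants (a) and (b) are sound, and they do yield parts 1 and 3: buffer-exclusivity plus preservation of compliance along honest moves (possible since $\cnil$ is persistent, so every non-culpable reachable configuration is reachable by \nrule{[M-Ok]} moves alone) rules out two internal choices facing each other and pins down the success case. This is in fact more detail than the paper's one-line ``straightforward by inspection'' proof provides. The genuine gap is clause (c), on which your part 2 entirely rests. Your key claim --- that a failure rule can convict a participant only when she is the unique on-duty one --- is false, and invariant (b) cannot repair it, because the premise of \nrule{[M-FailA]} is merely that the attempted label is not enabled in $\cmove{}$; this holds for \emph{every} label precisely when the participant holds an external choice and no message is pending, i.e.\ precisely when she is \emph{off} duty. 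Concretely, take the paper's own example configuration $\gamma = (\TsumI{\atomOut{a}}{\guardTrue}{}, \queue{}, \clockN[0]) \mmid (\TsumE{\atomIn{a}}{\guardTrue}{}, \queue{}, \clockE[0])$ (the two TSTs are compliant, and $\gamma$ is the initial, hence reachable, configuration), but let $\pmvB$ fail first: since no transition $\gamma \cmove{\pmvB \says \labL}$ exists for any $\labL$, the symmetric instance of \nrule{[M-FailA]} gives $\gamma \mcmove{\pmvB \says \atomIn{b}} (\TsumI{\atomOut{a}}{\guardTrue}{}, \queue{}, \clockN[0]) \mmid (\cnil, \queue{}, \clockE[0])$. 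Here $\pmvB$ is culpable while $\pmvA$ --- not culpable, holding an internal choice with an empty buffer --- is still on duty according to \Cref{def:onduty}. So clause (c) is not an invariant of $\mcmove{}$, and your derivation of part 2 collapses. (Your parallel claim for \nrule{[M-FailD]} has the same defect: if $\qmvA$ is full and $\tsbP$ is an internal choice whose guards all expire under $\delta$, the first disjunct convicts $\pmvA$ even though $\pmvB$, who must read, is the on-duty one.)

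The deeper point is that part 2 is not a behavioural consequence of the failure rules at all, so no refinement of your subject-reduction argument can recover it. The monitor deliberately blames a participant who acts out of turn --- this is the paper's stated design (``if a participant sends a message not expected by her TST, the monitor classifies her as culpable''), so a culpable participant can perfectly well coexist with a partner whose term is an internal choice. Part 2 goes through only if one reads condition (i) of \Cref{def:onduty} as requiring that \emph{nobody} is culpable in $\gamma$ (culpability freezes all duties), in which case it is immediate from the definition --- which is evidently how the paper's one-line proof treats it, since it cites only \Cref{def:culpable}, determinism, and rule inspection. The step you yourself flag as ``requiring care'' is thus not merely delicate: as a statement about when \nrule{[M-FailA]} can fire, it is false, and the route to part 2 must go through the definition of duty rather than through an analysis of the failure rules.
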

\begin{proof}
  Straightforward by~\Cref{def:culpable}, \Cref{lem:tst-monitoring:determinism},
  and by the rules in~\Cref{fig:tst:monitoring-semantics}.
\end{proof}

Note that both participants may be culpable in a configuration.
For instance, let 
$\gamma = (\TsumI{\atomOut{a}}{\guardTrue}{}, \queue{}, \clockE[0]) \mmid (\TsumE{\atomIn{a}}{\guardTrue}{}, \queue{}, \clockE[0])$.
By applying~\nrule{[M-FailA]} twice, we obtain:
\[
\gamma
\;\mcmove{\pmv A \says {\atomIn b}}\;
(\cnil, \queue{}, \clockN[0]) \mmid (\TsumE{\atomIn{a}}{\guardTrue}{}, \queue{} , \clockE[0])
\;\mcmove{\pmv B \says {\atomIn b}}\;
(\cnil, \queue{}, \clockN[0]) \mmid (\cnil, \queue{}, \clockE[0])
\]
and in the final configuration both participants are culpable.

\begin{exa}
  Let $\tsbP = \atomOut{a}\setenum{2 < \clockT < 4}$ 
  and
  $\tsbQ = \atomIn{a}\setenum{2 < \clockT < 5} + \atomIn{b}\setenum{2 < \clockT < 5}$ 
  be the TSTs of participants $\pmvA$ and $\pmvB$, respectively.
  \iftoggle{techreport}{%
  Participant $\pmvA$ declares that she will send $\atom{a}$ between
  2 and 4 time unit (abbr.\ t.u.),
  while
  $\pmvB$ declares that he is willing to receive $\atom{a}$ or $\atom{b}$ 
  if they are sent within 2 and 5 t.u. 
  }
  {}
  We have that $\tsbP \compliant \tsbQ$.
  Let $\gamma_0 = (\tsbP, \queue{}, \clockN[0]) \mmid (\tsbQ, \queue{}, \clockN[0])$. %

  A correct interaction is given by the timed trace
  $\vec{\lambda} = \seq{1.2 ,\ \pmvA \says \atomOut{a} ,\ \pmvB \says \atomIn{a}}$. %
  Indeed, $\gamma_0 \mcmove{\vec{\lambda}} (\success, \queue{}, \clockN[0]) \mmid (\success,\queue{}, \clockN[0])$.
  On the contrary, things may go awry in three cases: 
  \begin{enumerate}
  \item  a participant does something not permitted. %
    E.g., if $\pmvA$  fires $\atom{a}$ at $1$ t.u.,
    by \nrule{[M-FailA]}:
    $
    \gamma_0 
    \mcmove{1}
    \mcmove{\pmv A \says {\atomOut a}} 
    (\cnil, \queue{},\clockN[0]+1) \mmid (\tsbQ, \queue{}, \clockE[0]+1)
    $, %
    where $\pmvA$ is culpable. %
    
  \item a participant avoids to do something she is supposed to do. %
    E.g., assume that after $6$ t.u., 
    $\pmvA$ has not yet fired $\atom{a}$. %
    By rule~\nrule{[M-FailD]}, 
    $\gamma_0 \;\mcmove{6 } (\cnil, \queue{}, \clockN[0]+6) \mmid (\tsbQ,
    \queue{}, \clockE[0]+6)$, %
    where $\pmvA$ is culpable. %
    
  \item a participant does not receive a message as soon as it is sent. %
    For instance, after $\atom{a}$ is sent at $1.2$ t.u.,
    at $5.2$ \text{t.u.} $\pmvB$ has not yet fired $\atomIn{a}$. %
    By~\nrule{[M-FailD]}, 
    $\gamma_0 \mcmove{1.2}\mcmove{\pmvA \says \atomOut{a}}\mcmove{4} 
    (\success, \queue{\atomOut{a}},
    \clockN[0]+5.2) \mmid (\cnil, \queue{}, \clockE[0]+5.2)$, %
    where $\pmvB$ is culpable. %

  \end{enumerate}

\end{exa}

\noindent To relate the monitoring semantics in~\Cref{fig:tst:turn-based} 
with the one in~\Cref{fig:tst:s_semantics} %
we use the notion \emph{turn-simulation} of~\cite{BCPZ15jlamp}.
This relation is between states of two arbitrary LTSs 
$\rightarrow_1$ and $\rightarrow_2$, 
and it is parameterised over two sets $S_1$ and $S_2$ of success states. %
A state $(s_2, \rightarrow_2)$ turn-simulates $(s_1, \rightarrow_1)$
whenever each move of $s_1$ can be matched by a sequence of moves of $s_2$ 
(ignoring the labels), and stuckness of $s_1$ implies that $s_2$ will get stuck
in at most one step. %
Further, turn-simulation must preserve success. %

\begin{defi}[\textbf{Turn-simulation}~\cite{BCPZ15jlamp}]
  \label{def:turn-simulation}
  For $i \in \setenum{1,2}$, 
  let $\rightarrow_i$ be an LTS over a state space $Z_i$, 
  and let $S_i$ be a set of states of $\rightarrow_i$.
  We say that a relation $\relR \;\subseteq Z_1\times Z_2$ 
  is a \emph{turn-simulation} iff 
  $s_1 \relR s_2$ implies:

  \vbox{
    \begin{enumerate}

    \item \label{item:turn-simulation:a}
      $
      s_1 \rightarrow_1 s_1' 
      \implies 
      \exists s_2' : s_2 \rightarrow_2^* s_2' \text{ and } s_1' \relR s_2'
      $

    \item \label{item:turn-simulation:b}
      $
      s_2 \rightarrow_2 s_2' 
      \implies
      s_1 \rightarrow_1 \text{ or } 
      (s_1 \relR s_2' \text{ and } s_2' \not\rightarrow_2)
      $

    \item \label{item:turn-simulation:c}
      $s_2 \in S_2 \implies s_1 \in S_1$
      
    \end{enumerate}
  } % end vbox
  \noindent
  If there is a turn-simulation between $s_1$ and $s_2$ (written $s_1 \relR s_2$), we say 
  that $s_2$ turn-simulates $s_1$.
  We denote with $\tsim$ the greatest turn-simulation.
  We say that $\relR$ is a \emph{turn-bisimulation} iff both 
  $\relR \subseteq Z_1\times Z_2$ 
  and  $\relR^{-1} \subseteq Z_2 \times Z_1$ are  turn-simulations.
\end{defi}

The following~\namecref{lem:st:turn-bisimilar} establishes
that the two semantics of TSTs
(Definitions~\ref{def:tst:semantics} and~\ref{def:tst:turn-based})
are turn-bisimilar.

\begin{reslemma}{lem:st:turn-bisimilar}
  $((\tsbP,\clockN) \mid (\tsbQ,\clockE), \smove{}) $ is turn-bisimilar to 
  $((\tsbP, \queue{}, \clockN) \mmid  (\tsbQ,  \queue{}, \clockE), \cmove{})$.
\end{reslemma}

When both participants behave honestly,
\ie, they never take~\nrule{[M-Fail*]} moves,
the monitoring semantics preserves compliance (\Cref{lem:st:mcompliance}). 
The \emph{monitoring compliance} relation $\mcompliant$
is the straightforward adaptation of that in~\Cref{def:compliance},
except that $\cmove{}$ transitions are used instead of $\smove{}$ ones
(\Cref{def:mcompliance}).

\begin{defi}[\textbf{Monitoring Compliance}] 
  \label{def:mdeadlock}
  \label{def:mcompliance}
  We say that a monitoring configuration $\gamma$ is \emph{deadlock} whenever
  $(i)$ it is not the case that both $\tsbP$ and $\tsbQ$ in $\gamma$ are $\success$,
  and $(ii)$ there is no $\lambda$ such that $\gamma \cmove{\lambda}$.
  We then write $(\tsbP,\qmvA,\clockN) \mcompliant (\tsbQ,\qmvB,\clockE)$
  whenever:
  \[ 
  (\tsbP, \qmvA, \clockN) \mmid (\tsbQ, \qmvB, \clockE)
  \cmove{}^*  
  \gamma
  \quad \text{ implies } \quad  
  \gamma
  \text{ not deadlock}
  \]
  We write $\tsbP \mcompliant \tsbQ$ 
  whenever 
  $(\tsbP,\queue{},\clockN[0]) \mcompliant (\tsbQ,\queue{},\clockE[0])$.
\end{defi}

\newcommand{\lemstmcompliance}{%
  $\compliant \; = \mcompliant$.
}

\begin{thm} \label{lem:st:mcompliance}
  \lemstmcompliance
\end{thm}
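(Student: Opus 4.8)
The plan is to derive the theorem from the turn-bisimilarity already established in \Cref{lem:st:turn-bisimilar}, by showing that turn-bisimulation transfers the deadlock-freedom property that defines compliance. First I would rephrase both sides uniformly. By \Cref{def:compliance}, $\tsbP \compliant \tsbQ$ says that no state $\smove{}$-reachable from $s_1 = (\tsbP,\clockN[0]) \mid (\tsbQ,\clockE[0])$ is deadlock; by \Cref{def:mcompliance}, $\tsbP \mcompliant \tsbQ$ says that no state $\cmove{}$-reachable from $s_2 = (\tsbP,\queue{},\clockN[0]) \mmid (\tsbQ,\queue{},\clockE[0])$ is deadlock. In both cases the success set is ``both components are $\success$''. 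Since \Cref{lem:st:turn-bisimilar} gives $s_1 \tsim s_2$, it suffices to prove a general transfer lemma: \emph{if $s_1 \tsim s_2$, then $s_1$ is deadlock-free in $\smove{}$ iff $s_2$ is deadlock-free in $\cmove{}$}, and then apply it to $s_1,s_2$ to obtain both inclusions $\compliant \subseteq \mcompliant$ and $\mcompliant \subseteq \compliant$ (using that $\tsim$ and $\tsim^{-1}$ are both turn-simulations).

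For the transfer lemma I would separate reachability from the pointwise deadlock correspondence. \textbf{Reachability.} Iterating the matching clause of \Cref{def:turn-simulation} shows that every state $\smove{}$-reachable from $s_1$ is turn-bisimilar to some state $\cmove{}$-reachable from $s_2$; the symmetric argument on $\tsim^{-1}$ gives the converse, so the two reachable state spaces are related by $\tsim$. \textbf{Pointwise correspondence.} It then remains to show that for any turn-bisimilar pair $t_1 \tsim t_2$, the state $t_1$ is deadlock (in the sense of \Cref{def:compliance}) iff $t_2$ is deadlock (in the sense of \Cref{def:mcompliance}). The success half is immediate from the success-preservation clause of \Cref{def:turn-simulation} applied to $\tsim$ and $\tsim^{-1}$: $t_1$ and $t_2$ reach success simultaneously. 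Combining reachability with the pointwise correspondence yields ``no reachable deadlock on one side iff no reachable deadlock on the other'', which is exactly $\compliant = \mcompliant$.

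The delicate half of the pointwise correspondence, and the main obstacle, is matching the two notions of ``being able to make progress''. In $\smove{}$ a non-success state escapes deadlock precisely when, after some delay $\delta$, a silent move is enabled, where at the configuration level such a $\tau$ is either a commit (\nrule{[S-$\sumInt$]}) or a synchronization (\nrule{[S-$\tau$]}). In $\cmove{}$ progress is instead witnessed by an \emph{action} move, namely an output into the buffer (\nrule{[M-$\sumInt$]}) or a matching read from it (\nrule{[M-$\sumExt$]}), with pure time-passing (\nrule{[M-Del]}) treated as neutral. I would show these two progress conditions coincide through the bisimulation by exploiting that a single atomic $\smove{}$-synchronization is simulated in the monitoring semantics by the two-step sequence output-then-input, and that delays are matched by delays on both sides; consequently $t_1$ admits a future $\tau$ exactly when $t_2$ admits a future action. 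Care is needed to account for the intermediate monitoring states in which one buffer is non-empty (a half-completed synchronization), since these are the states that $\tsim$ pairs with an $\smove{}$-configuration, and to verify that the stuckness clause of \Cref{def:turn-simulation} (``$t_2$ gets stuck in at most one step'') is precisely what lets a $\cmove{}$ deadlock be reflected back to an $\smove{}$ deadlock, rather than to a state that can still merely let time elapse without ever synchronising.
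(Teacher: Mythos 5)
Your proposal takes the same route as the paper: both inclusions are derived from the turn-bisimilarity established in \Cref{lem:st:turn-bisimilar}, followed by a transfer of deadlock-freedom across the turn-(bi)simulation. The only difference is how the transfer step is discharged: the paper does not prove it at all, but invokes Lemma 5.9 of~\cite{BCPZ15jlamp}, which packages exactly the property you set out to establish (a turn-simulation reflects compliance from the simulating state to the simulated one), so its proof is four lines; yours inlines a proof of that property, buying self-containedness at the price of redoing the work the citation avoids --- and that is where all the real difficulty sits. One caveat deserves emphasis: the ``general transfer lemma'' of your first paragraph is not provable from \Cref{def:turn-simulation} alone, because turn-simulation is label-agnostic while the two deadlock notions are label-sensitive and of different shapes --- \Cref{def:deadlock} asks that no $\tau$ be enabled after any delay, whereas in \Cref{def:mdeadlock} progress is witnessed by $\cmove{}$ moves and delay moves must be treated as neutral, as you indeed do. Consequently the pointwise correspondence ``$t_1$ deadlock iff $t_2$ deadlock'' cannot be established for an arbitrary turn-bisimilar pair; it has to be argued for the concrete relation $\relR$ constructed in the proof of \Cref{lem:st:turn-bisimilar}, whose specific pairs (empty-buffer configurations, half-completed synchronisations, committed choices) are what allow a buffered output-then-input sequence to be identified with one atomic synchronisation, and a delay with a delay. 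Your third paragraph effectively concedes this, so the plan is workable; but the transfer lemma should be stated relative to that relation (or with side conditions matching the cited Lemma 5.9) rather than for $\tsim$ in the abstract, since in the abstract form it is the one step that would fail.
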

\begin{proof}
  For the inclusion $\subseteq$, 
  assume that $\tsbP \compliant \tsbQ$.
  By~\Cref{lem:st:turn-bisimilar}, the states
  $(\tsbP,\clockN[0]) \mid (\tsbQ,\clockE[0])$ 
  and
  $(\tsbP,\queue{},\clockN[0]) \mmid (\tsbQ,\queue{},\clockE[0])$
  are turn-bisimilar.
  By
  $
  (\tsbP,\queue{},\clockN[0]) \mmid (\tsbQ,\queue{},\clockE[0])
  \tsim
  (\tsbP,\clockN[0]) \mid (\tsbQ,\clockE[0])
  $
  and $(\tsbP,\clockN[0]) \compliant (\tsbQ,\clockN[0])$,
  Lemma 5.9 in~\cite{BCPZ15jlamp}
  implies that
  $(\tsbP,\queue{},\clockN[0]) \mcompliant (\tsbQ,\queue{},\clockE[0])$,
  hence $\tsbP \mcompliant \tsbQ$.
  The other inclusion is similar.
\end{proof}

\section{Conclusions and Related work} \label{sec:related-work}

We have studied a theory of session types (TSTs), featuring
timed synchronous communication between two endpoints.
We have defined a decidable notion of compliance between TSTs,
a decidable procedure to detect when a TST admits a compliant,
a computable canonical compliant construction,
a decidable subtyping relation,
and a decidable runtime monitoring of interactions based on TSTs.

The current article corresponds
to a thoroughly revised and improved version of~\cite{Bartoletti15forte}.
Besides presenting the proofs of all statements,
the current work extends~\cite{Bartoletti15forte} in two main directions.
First, in~\Cref{sec:comput-dual} it presents an alternative set of kinding rules,
which we exploit to show the decidability of kind inference,
and from this the computability of the canonical compliant construction. 
Second, in~\Cref{sec:tst-to-ta} it provides a compliance-preserving encoding of TSTs 
into timed automata, which we exploit in our decision procedure for compliance.

The work~\cite{bettybook17tst}
is a step-by-step tutorial on how to exploit timed session types
for programming contract-oriented distributed applications.

Compliance between TSTs is loosely related to the notion of compliance
between \emph{untimed} session types (in symbols, $\compliant_{\erase{}}$).
Let $\erase{\tsbP}$ be the session type obtained by 
erasing from $\tsbP$ all the timing annotations.
It is easy to check that the semantics of 
$(\erase{\tsbP},\clockN[0]) \mid (\erase{\tsbQ},\clockN[0])$ 
in~\Cref{sec:tst}
coincides with the semantics of 
$\erase{\tsbP} \mid \erase{\tsbQ}$ in~\cite{Barbanera15mscs}.
Therefore, if $\erase{\tsbP} \compliant \erase{\tsbQ}$,
then $\erase{\tsbP} \compliant_{\erase{}} \erase{\tsbQ}$.
Instead, \emph{semantic conservation} of compliance does not hold,
\ie it is not true in general that
if $\tsbP \compliant \tsbQ$, then $\erase{\tsbP} \compliant_{\erase{}} \erase{\tsbQ}$.
E.g., let
$\tsbP = \TsumI{\atomOut{a}}{\clockT < 5}{} \sumInt \TsumI{\atomOut{b}}{\clockT < 0}{}$, 
and let $\tsbQ = \TsumE{\atomIn{a}}{\clockT < 7}{}$.
We have that $\tsbP \compliant \tsbQ$ 
(because the branch $\atomOut{b}$ can never be chosen), 
whereas $\erase{\tsbP} = \atomOut{a} \sumInt \atomOut{b} \not\!\!\!\compliant_{\erase{}} \atomIn{a} = \erase{\tsbQ}$.
Note that, for every $\tsbP$, $\erase{\dual{\tsbP}} = \dual{\erase{\tsbP}}$.

In the context of session types, time has been originally introduced in~\cite{Bocchi14concur}.
However, the setting is different than ours
(multiparty and asynchronous, while ours is bi-party and synchronous),
as well as its objectives: 
while we have focussed on primitives for the bottom-up approach
to service composition~\cite{BTZ12sacs}, 
\cite{Bocchi14concur} extends to the timed case 
the \emph{top-down} approach.
There, a \emph{choreography} 
(expressing the overall communication behaviour of a set of participants)
is projected into a set of \emph{session types}, %
which in turn are refined as processes, to be type-checked against their session type
in order to make service composition preserve the properties enjoyed by the choreography.

Our approach is a conservative extension of untimed session types, 
in the sense that a participant which performs an output action chooses not only the branch, 
but the time of writing too; dually, when performing an input, one has to passively
follow the choice of the other participant.
Instead, in \cite{Bocchi14concur} external choices can also delay the reading time. %
The notion of correct interaction studied in~\cite{Bocchi14concur} is called~\emph{feasibility}: 
a choreography is feasible iff all its reducts can reach the success state. %
This property implies progress, but it is undecidable in general, as shown by~\cite{KrcalY06}
in the context of communicating timed automata
(however, feasibility is decidable for the subclass of \emph{infinitely satisfiable} choreographies).
The problem of deciding if, given a local type $T$, there exists a choreography $G$ such that
$T$ is in the projection of $G$ and $G$ enjoys (global) progress is not addressed in \cite{Bocchi14concur}. %
A possible way to deal with this problem would be to adapt our kind system 
(in particular, rule \nrule{[T-+]}). % must be adjusted).

Another problem not addressed by~\cite{Bocchi14concur} 
is that of determinining if a set of session types enjoys progress
(which, as feasibility of choreographies, would be undecidable).
In our work we have considered this problem, 
under a synchronous semantics, 
and with the restriction of two participants.
Extending our semantics to an asynchronous one 
would make compliance undecidable
(as it is for untimed asynchronous session types~\cite{Denielou13icalp}).

Note that our notion of compliance does not 
imply progress with the semantics of~\cite{Bocchi14concur}
(adapted to the binary case).
For instance, consider the TSTs:
\[
\tsbP \; = \; \TsumE{\atomIn a}{\clockX \leq 2}{\TsumI{\atomOut{a}}{\clockX \leq 1}{}}
\hspace{50pt}
\tsbQ \; = \; \TsumI{\atomOut a}{\clockY \leq 1}{\TsumE{\atomIn{a}}{\clockY \leq 1}{}}
\]
These TSTs are compliant according to~\Cref{def:compliance},
while using the semantics of~\cite{Bocchi14concur}:
\[
(\clockN[0],(\tsbP,\tsbQ,\vec{w_0})) 
\; \pmove{\;\;}^* \;
(\clockN,({\TsumI{\atomOut{a}}{\clockX \leq 1}{}},
\TsumE{\atomIn{a}}{\clockY \leq 1}{},\vec{w_0}))
\tag*{with $\clockN(\clockX) = \clockN(\clockY) > 1$}
\]
This is a deadlocked state, 
hence $\tsbP$ and $\tsbQ$ do not enjoy progress according to~\cite{Bocchi14concur}.
% }

Dynamic verification of timed multiparty session types 
is addressed by~\cite{Neykova14beat},
where the top-down approach to service composition is pursued~\cite{Honda08MPS}.
Our middleware instead composes and monitors services 
in a bottom-up fashion~\cite{BTZ12sacs}.

The work~\cite{Bocchi15concur} studies \emph{communicating timed automata},
a timed version of communicating finite-state machines~\cite{Zafiropulo83jacm}.
In this model, participants in a network communicate asynchronously
through bi-directional FIFO channels;
similarly to~\cite{Bocchi14concur}, clocks, guards and resets are used 
to impose time constraints on when communications can happen.
A system enjoys \emph{progress} when no deadlocked state is reachable,
as well as no orphan messages, unsuccessful receptions, and unfeasible configurations.
Since deadlock-freedom is undecidable in the untimed case, 
then \emph{a fortiori} progress is undecidable for communicating timed automata.
So, the authors propose an approximated (sound, but not complete) decidable technique 
to check when a system enjoys progress. 
This technique is based on \emph{multiparty compatibility},
a condition which guarantees deadlock-freedom of untimed systems~\cite{Lange15popl}.

A classical property of timed systems addressed by~\cite{Bocchi15concur}
is \emph{zenoness}, \ie
the situation in which all the paths from a reachable state are infinite and time-convergent.
Again, multiparty compatibility is exploited by~\cite{Bocchi15concur} 
to devise an approximated decidable technique 
which guarantees non-zenoness of communicating timed automata.

In our setting, an example of zenoness is given by the following TSTs:
\[
\tsbP \; = \; \rec {\tsbX}{\TsumI{\atomOut{a}}{\clockX \leq 1}{\tsbX}}
\hspace{40pt} 
\tsbQ \; = \; \rec {\tsbX}{\TsumE{\atomIn{a}}{\clockX \leq 1,\clockX}{\tsbX}}
\] 

Although $\tsbP \compliant \tsbQ$, the total elapsed time cannot exceed 1. 
This implies that, in order to respect $\tsbP$, 
a participant should have to perform \emph{infinitely} many writings in \emph{a single} time unit. %
This problem can be solved imposing some restrictions to TSTs, in order to have the property that
composition of TSTs is always non-zeno. 
For instance, this is guaranteed by the notion of \emph{strong non-zenoness} of~\cite{Tripakis99}, 
which can be computed efficiently but is not complete.
Another possibility is to check non-zenoness directly in the network of timed automata constructed for checking compliance, using one of the techniques
appeared in the literature~\cite{HenzingerNSY94,Tripakis99,BowmanG06}.

In~\cite{DavidLLNTW15} timed specifications are studied in the setting
of \emph{timed I/O transition systems} (TIOTS). %
They feature a notion of correct composition, called \emph{compatibility},
following the \emph{optimistic approach} pursued in~\cite{Alfaro01sigsoft}:
roughly, two systems are compatible whenever 
there exists an environment which, composed with them,
makes ``undesirable'' states unreachable.
A notion of \emph{refinement} is coinductively
formalised as an alternating timed simulation.
Refinement is a preorder,
and it is included in the semantic subtyping relation
(using compatibility instead of $\compliant$).
Because of the different assumptions
(open systems and broadcast communications in~\cite{DavidLLNTW15},
closed binary systems in TSTs),
compatibility/refinement seem unrelated to our notions of 
compliance/subtyping.
Despite the main notions in~\cite{DavidLLNTW15} are defined on semantic
objects (TIOTS), they can be decided on timed I/O automata, 
which are finite representations of TIOTS. %
With respect to TSTs, timed I/O automata are more liberal:
\eg, they allow for \emph{mixed choices}, 
while in TSTs each state is either an input or an output. %
However, this increased expressiveness does not seem appropriate
for our purposes: first, it makes the concept of culpability unclear
(and it breaks one of our main properties,
\ie that at most one participant is on duty at each execution step); 
second, it seems to invalidate any dual construction.
This is particularly unwelcome, since this construction is
one of the crucial primitives of contract-oriented interactions.
 
\section*{Acknowledgement}

This work has been partially supported by
Aut.\ Reg.\ of Sardinia P.I.A.\ 2013 ``NOMAD'',
 by EU COST Action IC1201
``Behavioural Types for Reliable Large-Scale Software Systems'' (BETTY).
We thank the anonymous reviewers for their insightful comments. 

\bibliographystyle{abbrv}
\bibliography{main}

\appendix

\section{Proofs for~\texorpdfstring{\Cref{sec:tst-compliance}}
  {Section 3}}
\label{sec:proofs-compliance}

\begin{lem}\label{lem:compliance:shape}
  For all $(\tsbP,\clockN),(\tsbQ,\clockE)$ such that $(\tsbP,\clockN) \compliant (\tsbQ,\clockE)$:
  \begin{align*}
    & \tsbP = \textstyle 
    \TSumInt[i \in I]{\atomOut[i]{a}}{\guardG[i],\resetR[i]}{\tsbP[i]} \implies
    \tsbQ = \TSumExt[j \in J]{\atomIn[j]{a}}{\guardG[j],\resetR[j]}{\tsbP[j]}
    \\
    & \tsbP = \textstyle
    \TSumExt[i \in I]{\atomIn[i]{a}}{\guardG[i],\resetR[i]}{\tsbP[i]} \implies
    \tsbQ = \TSumInt[j \in J]{\atomOut[j]{a}}{\guardG[j],\resetR[j]}{\tsbP[j]}
  \end{align*}
\end{lem}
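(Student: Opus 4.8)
The plan is to proceed by case analysis on the shape of $\tsbQ$, exploiting that $(\tsbP,\clockN) \compliant (\tsbQ,\clockE)$ forbids \emph{every} reachable configuration (the initial one included) from being a deadlock. I would treat the first implication in detail; the second is symmetric, swapping the roles of outputs and inputs. So assume $\tsbP = \TSumInt[i \in I]{\atomOut[i]{a}}{\guardG[i],\resetR[i]}{\tsbP[i]}$, and recall that a TST $\tsbQ$ is, up to unfolding, one of $\success$, an internal choice, or an external choice. The goal is then to rule out the first two shapes.

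First I would isolate the structural fact that a $\tau$-transition between configurations arises only from \nrule{[S-$\sumInt$]} (one endpoint commits its internal choice) or from \nrule{[S-$\tau$]} (an output synchronising with the \emph{matching} input). Suppose, towards the two excluded cases, that $\tsbQ$ is $\success$ or an internal choice. Then neither $\tsbP$ nor $\tsbQ$ exposes any input; since delays (\nrule{[S-Del]}) do not change the control part of a configuration, commitments turn an internal choice into a term $\todo[\TsumI{\atomOut{a}}{\guardG,\resetR}{}] \tsbPi$ exposing only an output, and \nrule{[S-$\tau$]} is the only rule consuming such a committed output, a simple induction shows that \emph{no} reachable configuration ever exposes an input on either side. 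Consequently \nrule{[S-$\tau$]} never fires, so every reachable $\tau$-move is a commitment; and once either side commits, the ready set $\rdy{\cdot}$ of the resulting committed term is empty, which freezes time and disables \nrule{[S-Del]} forever after.

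With this in hand the contradiction is short. If no commitment is reachable from $(\tsbP,\clockN) \mid (\tsbQ,\clockE)$ after any sequence of delays, then no $\tau$-move is ever reachable; as the configuration is not both-$\success$, it is already a deadlock. Otherwise some reachable configuration $(\tsbPi,\clockNi) \mid (\tsbQi,\clockEi)$ performs a commitment; after it, time is frozen, no synchronisation is possible (the partner exposes no matching input), and the only conceivable further move is a single commitment by the other side, after which both endpoints expose outputs, no synchronisation is enabled, and no delay is available. Hence a deadlock is reachable. In either sub-case this contradicts compliance, so $\tsbQ$ can be neither $\success$ nor an internal choice, leaving $\tsbQ$ an external choice as required.

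The main obstacle is the careful treatment of delays and of the transient committed states: one must confirm that a committed output genuinely has no escape — its only enabler \nrule{[S-$\tau$]} needs a matching input that the excluded shapes never provide, and time cannot advance out of it — and must package the two sub-cases so that the witnessing deadlock is always a \emph{reachable} configuration, the degenerate sub-case being the initial configuration itself. The symmetric statement ($\tsbP$ an external choice $\Rightarrow \tsbQ$ an internal choice) is easier, since an external choice never commits: when $\tsbQ$ is $\success$ or an external choice, no output is ever exposed on either side, so \nrule{[S-$\tau$]} can never fire, no $\tau$-move is ever enabled, and the initial configuration is directly a deadlock.
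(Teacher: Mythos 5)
Your overall strategy --- exclude $\tsbQ = \success$ and $\tsbQ$ internal by exhibiting a reachable deadlock, using the fact that the excluded shapes never expose an input, so that \nrule{[S-$\tau$]} never fires and any commitment freezes time --- is the right one, and it is far more detailed than the paper's own proof, which simply declares the statement trivial. The second implication (external $\Rightarrow$ internal) is also handled correctly, and indeed is unproblematic.

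There is, however, one step that is not as innocent as it looks: your repeated inference ``stuck and not both-$\success$, hence deadlock''. This matches the transition-based re-phrasing of condition $(ii)$ stated in the text right after \Cref{def:deadlock} (no $\tau$ in the current valuation, and no $\smove{\delta}\smove{\tau}$), but \emph{not} the literal condition $(ii)$ of \Cref{def:deadlock}, which quantifies over clock shifts $\clockN+\delta, \clockE+\delta$ irrespective of whether a delay transition is actually enabled. The two readings diverge exactly at the configurations your proof produces. Take $\tsbP = \TsumI{\atomOut{a}}{\clockX \leq 0}{\success}$ and $\tsbQ = \TsumI{\atomOut{b}}{\clockY \geq 1}{\success}$: the only configurations reachable from $(\tsbP,\clockN[0]) \mid (\tsbQ,\clockE[0])$ are the initial one and $(\todo[\TsumI{\atomOut{a}}{\clockX \leq 0}{}]\success,\clockN[0]) \mid (\tsbQ,\clockE[0])$. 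The latter is stuck --- the committed left endpoint has an empty $\rdy{}$ set, so \nrule{[S-Del]} is disabled, and $\tsbQ$ cannot commit at $\clockY = 0$ --- yet it is \emph{not} a deadlock under the literal clock-shifting condition, because shifting by $\delta = 1$ enables $\tsbQ$'s commitment via \nrule{[S-$\sumInt$]}. Under that literal reading this pair of internal choices is compliant, so the lemma itself would be false; under the transition-based reading your argument (both the degenerate initial-configuration case and the post-commitment cases) goes through. So you must state explicitly that you are working with the transition-based phrasing, and it is worth noting that the paper's claim that the two phrasings are equivalent breaks down precisely at configurations where one endpoint's $\rdy{}$ blocks all delays --- which is also why neither the lemma nor its proof is really ``trivial''.
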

\begin{proof}
  Trivial.
\end{proof}

\begin{lem}\label{lem:coind-compliance:sound}
  Let $\relR$ be a coinductive compliance relation, and let $(\tsbP,\clockN)\relR(\tsbQ,\clockE)$.
  Then:
  \begin{enumerate}
  \item $(\tsbP,\clockN) \mid (\tsbQ,\clockE)$ not deadlock \label{lem:coind-compliance:sound:item:i}
  \item $(\tsbP,\clockN) \mid (\tsbQ,\clockE) \smove{\tau} \gamma \implies
    \existsunique \tsbPi,\tsbQi,\clockNi,\clockEi:\gamma \smove{\tau} 
    (\tsbPi,\clockNi)\mid(\tsbQi,\clockEi) \;\land\; (\tsbPi,\clockNi)\relR(\tsbQi,\clockEi)$
    \label{lem:coind-compliance:sound:item:ii}
  \item $(\tsbP,\clockN) \mid (\tsbQ,\clockE) \smove{\delta} (\tsbPi,\clockNi) \mid (\tsbQi,\clockEi)
    \implies (\tsbPi,\clockNi)\relR(\tsbQi,\clockEi)$
    \label{lem:coind-compliance:sound:item:iii}
  \end{enumerate}
\end{lem}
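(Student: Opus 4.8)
The plan is to prove each item by a case analysis on the shape of $\tsbP$ (hence, via \Cref{def:coind-compliance}, of $\tsbQ$), relying on one structural observation: whenever $(\tsbP,\clockN)\relR(\tsbQ,\clockE)$, neither $\tsbP$ nor $\tsbQ$ is a committed choice $\todo[\cdots]{}$, since $\relR$ only mentions $\success$, internal choices and external choices. This sharply restricts the moves available from such a configuration.

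For item~(1) I would argue by cases on $\tsbP$. If $\tsbP=\success$, then item (1) of \Cref{def:coind-compliance} gives $\tsbQ=\success$, so clause (i) of \Cref{def:deadlock} fails and the configuration is not deadlock. If $\tsbP=\TSumInt[i\in I]{\atomOut[i]{a}}{\guardG[i],\resetR[i]}{\tsbP[i]}$, then item (2) gives $\clockN\in\rdy{\tsbP}=\past{\bigcup\sem{\guardG[i]}}$, so there are $\delta\geq0$ and $i$ with $\clockN+\delta\in\sem{\guardG[i]}$; hence $(\tsbP,\clockN+\delta)$ can commit to the $i$-th branch by \nrule{[$\sumInt$]}, yielding a $\tau$-move of $(\tsbP,\clockN+\delta)\mid(\tsbQ,\clockE+\delta)$ through \nrule{[S-$\sumInt$]}, so clause (ii) of \Cref{def:deadlock} fails. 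The external-choice case is symmetric, using item (3) to commit on the $\tsbQ$ side.

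For item~(2), the structural observation shows that the only $\tau$-transitions out of $(\tsbP,\clockN)\mid(\tsbQ,\clockE)$ are commits via \nrule{[S-$\sumInt$]}: rule \nrule{[S-$\tau$]} cannot fire, since emitting $\atomOut{a}$ requires a committed choice, which is absent. So, up to symmetry, $\gamma=(\todo[\TsumI{\atomOut[i]{a}}{\guardG[i],\resetR[i]}{}]{\tsbP[i]},\clockN)\mid(\tsbQ,\clockE)$ with $\tsbP$ internal and $\clockN\in\sem{\guardG[i]}$. By item (2) of \Cref{def:coind-compliance} at $\delta=0$, $\tsbQ$ is external and there is $j$ with $\atom[i]{a}=\atom[j]{a}$, $\clockE\in\sem{\guardG[j]}$ and $(\tsbP[i],\reset{\clockN}{\resetR[i]})\relR(\tsbQ[j],\reset{\clockE}{\resetR[j]})$. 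Firing $\atomOut[i]{a}$ by \nrule{[\bang]}, synchronised with $\atomIn[j]{a}$ by \nrule{[\qmark]} (enabled since $\clockE\in\sem{\guardG[j]}$) through \nrule{[S-$\tau$]}, reaches exactly this $\relR$-related pair, giving existence. Uniqueness follows because no time can pass from $\gamma$ (a committed choice admits no delays), the committed side can only emit $\atomOut[i]{a}$, and the actions of the external choice $\tsbQ$ are pairwise distinct, so the synchronising branch $j$, and hence the $\tau$-successor, is unique.

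The delicate point is item~(3), which I expect to be the main obstacle. A $\delta$-transition can only come from \nrule{[S-Del]}, forcing $\tsbPi=\tsbP$, $\tsbQi=\tsbQ$, $\clockNi=\clockN+\delta$, $\clockEi=\clockE+\delta$ and, through \nrule{[Del]}, requiring $\clockN+\delta\in\rdy{\tsbP}$ and $\clockE+\delta\in\rdy{\tsbQ}$. However, $(\tsbP,\clockN+\delta)\relR(\tsbQ,\clockE+\delta)$ does not follow from $(\tsbP,\clockN)\relR(\tsbQ,\clockE)$ for an arbitrary coinductive compliance: e.g. the singleton $\setenum{((\success,\clockN),(\success,\clockE))}$ is a coinductive compliance that is not closed under delay. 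I would therefore first pass to the delay-closure
\[
\relRi=\setcomp{((\tsbP,\clockN+\delta),(\tsbQ,\clockE+\delta))}{(\tsbP,\clockN)\relR(\tsbQ,\clockE),\ \delta\geq0,\ \clockN+\delta\in\rdy{\tsbP},\ \clockE+\delta\in\rdy{\tsbQ}}
\]
and show it is again a coinductive compliance with $\relR\subseteq\relRi$. The only non-routine checks are items (2) and (3): for a pair obtained by a legal delay $\delta$ the guard-premise $\clockN+\delta\in\rdy{\tsbP}$ is built into $\relRi$, and the universally quantified clause transfers because $\clockN+\delta+\delta'=\clockN+(\delta+\delta')$ lets one reuse the original clause of $\relR$ at the shifted delay $\delta+\delta'$, with the continuations landing in $\relR\subseteq\relRi$. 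For $\relRi$, item (3) is then immediate by construction. Replacing $\relR$ by $\relRi$ is harmless for the intended use, since the proof of \Cref{lem:coind-compliance} only needs \emph{some} coinductive compliance through $(\tsbP,\clockN[0])\relR(\tsbQ,\clockE[0])$; items (1) and (2) go through for $\relRi$ exactly as above.
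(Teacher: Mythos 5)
Your treatment of items (1) and (2) coincides with the paper's own proof: case analysis on the shape of $\tsbP$ modulo unfolding, the commit-after-delay argument showing clause (ii) of \Cref{def:deadlock} fails, and the $\delta = 0$ instance of clause (2) of \Cref{def:coind-compliance} plus pairwise-distinctness of branch labels for existence and uniqueness of the $\tau$-successor. Where you genuinely diverge is item (3), and your diagnosis there is well founded. The paper's proof asserts that $(\tsbP,\clockN+\delta)\relR(\tsbQ,\clockE+\delta)$ ``follows by the assumption $(\tsbP,\clockN)\relR(\tsbQ,\clockE)$''; what actually follows is only that the delayed pair satisfies the defining clauses of \Cref{def:coind-compliance} with continuations landing in $\relR$, which yields membership in $\relR$ itself only when $\relR$ is the greatest coinductive compliance (or is already delay-closed), not for an arbitrary one. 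Your singleton counterexample $\setenum{((\success,\clockN),(\success,\clockE))}$ shows the statement is literally false as quantified --- note that the paper even declares all three items ``trivial'' in the success case, which is exactly where the counterexample lives. Your repair is sound: the delay-closure $\relRi$ is again a coinductive compliance, and the inclusion $\relR \subseteq \relRi$ (taking $\delta = 0$) holds because every pair of a coinductive compliance satisfies $\clockN \in \rdy{\tsbP}$ and $\clockE \in \rdy{\tsbQ}$ --- the clauses of \Cref{def:coind-compliance} impose this on the internal-choice side, and $\rdy{}$ equals $\Val$ on external choices and on $\success$. You are also right that the substitution is harmless downstream, since the $(\Leftarrow)$ direction of \Cref{lem:coind-compliance} only needs \emph{some} coinductive compliance relating the initial configurations. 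In short: the paper's route is shorter but silently assumes delay-closure of $\relR$; yours costs an auxiliary construction but is valid for arbitrary $\relR$, and it is what one actually needs to make the appeal to this lemma in the proof of \Cref{lem:coind-compliance} rigorous.
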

\begin{proof}
  Assume that $(\tsbP,\clockN)\relR(\tsbQ,\clockE)$. 
  We proceed by cases on the form of $\tsbP$, modulo unfolding of recursion
  (note that $\relR$ does not talk about committed choices, \ie terms of the form 
  ${\todo[\TsumI{\atomOut{a}}{\guardG,\resetR}{}]{\tsbP}}$).
  If $\tsbP=\success$, then by \Cref{def:coind-compliance} we have $\tsbQ=\success$,
  and so all items are trivial. 
  If $\tsbP$ is an internal choice, \ie
  $\tsbP = \TSumInt[i \in I]{\atomOut[i]{a}}{\guardG[i],\resetT[i]}{\tsbP[i]}$,
  by \Cref{def:coind-compliance} we have that
  $\tsbQ = \TSumExt[j \in J]{\atomIn[j]{a}}{\guardG[j],\resetT[j]}{\tsbQ[j]}$.
  Take some $\delta$ and $i$ such that 
  $\clockN + \delta \in \sem{\guardG[i]}$
  (their existence is guaranteed by~\Cref{def:coind-compliance}). 
  Let $\clockNi = \clockN + \delta$ and $\clockEi = \clockE + \delta$.
  Then:
  \begin{equation}\label{lem:coind-compliance:sound:proof:eq1}
    \AXC{$\clockNi \in \sem{\guardG[i]}$}
    \RL{\nrule{[$\sumInt$]}}
    \UIC{$(\TSumInt[i \in I]{\atomOut[i]{a}}{\guardG[i],\resetT[i]}{\tsbP[i]},\clockNi) 
      \smove{\; \tau \; } (\todo[\TsumI{\atomOut[i]{a}}{\guardG[i],\resetT[i]}{}]{\tsbP[i]},\clockNi)$}
    \RL{\nrule{[S-$\sumInt$]}}
    \UIC{$(\TSumInt[i \in I]{\atomOut[i]{a}}{\guardG[i],\resetT[i]}{\tsbP[i]},\clockNi) \mid 
      (\tsbQ,\clockEi)
      \smove {\; \tau \;} (\todo[\TsumI{\atomOut[i]{a}}{\guardG[i],\resetT[i]}{}]{\tsbP[i]},\; \clockNi)
      \mid (\tsbQ,\clockEi)$}
    \DisplayProof
  \end{equation}
  Hence, $(\tsbP,\clockN) \mid (\tsbQ,\clockE)$ is not deadlock, 
  which proves~\cref{lem:coind-compliance:sound:item:i}. 

  For \cref{lem:coind-compliance:sound:item:ii}, assume 
  $(\tsbP,\clockN) \mid (\tsbQ,\clockE) \smove{\tau} \gamma$. The derivation of such step must be as 
  in~\eqref{lem:coind-compliance:sound:proof:eq1}, with $\delta = 0$ and 
  $\gamma = (\todo[\TsumI{\atomOut[i]{a}}{\guardG[i],\resetT[i]}{}]{\tsbP[i]},\; \clockN)
  \mid (\tsbQ,\clockE)$. By \Cref{def:coind-compliance}, there exists $j \in J$
  such that $\atom[i]{a} = \atom[j]{a}$,
  $\clockE \in \sem{\guardG[j]}$ and $\tsbP[i] \relR \tsbQ[j]$.
  Hence:
  \[
  \AXC{}
  \RL{\nrule{[\bang]}}
  \UIC{$(\todo[\TsumI{\atomOut[i]{a}}{\guardG[i],\resetT[i]}{}]{\tsbP[i]},\; \clockN)
    \smove{\; \atomOut[i]{a} \;} (\tsbP[i],\reset{\clockN}{\resetT[i]})$}
  \AXC{$\clockE \in \sem{\guardG[j]}$}
  \RL{\nrule{[\qmark]}}
  \UIC{$(\tsbQ,\clockE) \smove{\; \atomIn[j]{a} \;} (\tsbQ[j],\reset{\clockE}{\resetT[j]})$}
  \RL{\nrule{[S-$\tau$]}}
  \BIC{$\gamma \smove{\; \tau \;} (\tsbP[i],\reset{\clockN}{\resetT[i]}) \mid 
    (\tsbQ[j],\reset{\clockE}{\resetT[j]})$}
  \DisplayProof
  \]
  Note that the target state is unique, because branch labels in choices are pairwise distinct.

  For \cref{lem:coind-compliance:sound:item:iii}, assume
  $(\tsbP,\clockN) \mid (\tsbQ,\clockE) \smove{\delta} (\tsbPi,\clockNi) \mid (\tsbQi,\clockEi)$.
  Its derivation must be as follows:
  \begin{equation}\label{lem:coind-compliance:sound:proof:eq2}
    \AXC{$\clockN + \delta \in \rdy{\tsbP} = \past{\bigcup\sem{\guardG[i]}}$}
    \RL{\nrule{[Del]}}
    \UIC{$(\tsbP,\clockN) \smove{\delta} (\tsbP,\clockN+\delta)$}
    \AXC{$\clockE + \delta \in \rdy{\tsbQ} = \Val$}
    \RL{\nrule{[Del]}}
    \UIC{$(\tsbQ,\clockE) \smove{\delta} (\tsbQ,\clockE + \delta)$}
    \RL{\nrule{[S-Del]}}
    \BIC{$(\tsbP,\clockN) \mid (\tsbQ,\clockE) \smove{\delta} 
      (\tsbP,\clockN+\delta) \mid (\tsbQ,\clockE+\delta)$}
    \DisplayProof
  \end{equation}
  Let $\clockNi = \clockN+\delta$ and $\clockEi = \clockE+\delta$. We have to show 
  $(\tsbP,\clockNi) \relR (\tsbQ,\clockEi)$.
  By~\eqref{lem:coind-compliance:sound:proof:eq2} we have that $\clockNi \in \rdy{\tsbP}$.
  It remains to show that, whenever 
  $\clockNi + \delta' \in \sem{\guardG[i]}$, there exist $j$ such that
  $\atom[i]{a} = \atom[j]{a}$ and $\clockEi + \delta'\in \sem{\guardG[j]}$ and
  $(\tsbP[i],\reset{\clockNi+\delta'}{\resetT[i]})\relR(\tsbQ[j],\reset{\clockEi+\delta'}{\resetT[j]})$.
  This follows by the assumption $(\tsbP,\clockN)\relR(\tsbQ,\clockE)$.
  The case where $\tsbP$ is an external choice is similar.
\end{proof}

\begin{proofof}{lem:coind-compliance}
  We prove the more general statement:
  \[
  (\tsbP,\clockN)  \compliant (\tsbQ,\clockE) \iff 
  \exists \relR \text{coinductive compliance} \; : \; 
  (\tsbP,\clockN) \relR (\tsbQ,\clockE)
  \]
  For the $(\Rightarrow)$ direction we proceed by showing that $\compliant$
  is a coinductive compliance relation.
  Assume $(\tsbP,\clockN) \compliant (\tsbQ,\clockE)$. 
  We show the case where $\tsbP = \TSumInt[i \in I]{\atomOut[i]{a}}{\guardG[i],\resetT[i]}
  {\tsbP[i]}$ (the case of external choice is similar, and the case $\tsbP = \success$ is trivial).
  By \Cref{lem:compliance:shape},
  $\tsbQ = \TSumExt[j \in J]{\atomIn[j]{a}}{\guardG[j],\resetT[j]}{\tsbQ[j]}$. 
  Since $(\tsbP,\clockN) \mid (\tsbQ,\clockE)$ is not deadlock,
  it must be $\clockN \in \rdy{\tsbP}$. 
  By \Cref{def:coind-compliance} it remains to show that, for all $\delta,i$:
  \[
  \clockN + \delta \in \sem{\guardG[i]} \implies 
  \exists j:~\atom[i]{a} = \atom[j]{a} 
  \;\land\;\clockE + \delta \in \sem{\guardG[j]}\;\land\;
  (\tsbP[i],\reset{(\clockN+\delta)}{\resetT[i]})
  \compliant (\tsbQ[j],\reset{(\clockE+\delta)}{\resetT[j]})
  \]
  By contradiction, suppose this is not the case, and take a $\delta$ and an $i$ such that: 
  \[
  \clockN + \delta \in \sem{\guardG[i]} \land (\forall j:~\atom[i]{a} \neq \atom[j]{a} 
  \lor \clockE + \delta \not\in \sem{\guardG[j]} \lor (\tsbP[i],\reset{(\clockN+\delta)}{\resetT[i]})
  \notcompliant (\tsbQ[j],\reset{(\clockE+\delta)}{\resetT[j]})
  \]
  There are two cases. 
  If $\delta = 0$, then:
  \[
  {(\tsbP,\clockN) \mid (\tsbQ,\clockE) \smove{\tau}
    (\todo[\TsumI{\atomOut[i]{a}}{\guardG[i],\resetT[i]}{}]{\tsbP[i]},\clockN)\mid
    (\tsbQ,\clockE) = \gamma}
  \]
  If, for all $j$, $\atom[i]{a} \neq \atom[j]{a} \lor \clockN + \delta \not\in \sem{\guardG[j]}$,
  then $\gamma$ is deadlock. Otherwise:
  \[
  \gamma \smove{\tau} (\tsbP[i],\reset{(\clockN+\delta)}{\resetT[i]}) \mid
  (\tsbQ[j],\reset{(\clockE+\delta)}{\resetT[j]}) = \gamma'
  \]
  with $(\tsbP[i],\reset{(\clockN+\delta)}{\resetT[i]}) \notcompliant 
  (\tsbQ[j],\reset{(\clockE+\delta)}{\resetT[j]})$,
  and so, by \Cref{def:compliance}, there exists some deadlock configuration $\gamma''$ such that
  $\gamma' \smove{}^* \gamma''$. Hence $(\tsbP,\clockN) \notcompliant (\tsbQ,\clockE)$.
  If $\delta > 0$:
  \[
  (\tsbP,\clockN) \mid (\tsbQ,\clockE) \smove{\delta}
  (\tsbP,\clockN + \delta) \mid (\tsbQ,\clockE + \delta)
  \]
  The thesis follows by an argument similar to the case with $\delta = 0$.

  \smallskip\noindent
  The $(\Leftarrow)$ direction is a straightforward consequence of~\Cref{lem:coind-compliance:sound}
  \qed
\end{proofof}

\section{Proofs for~\texorpdfstring{\Cref{sec:tst-duality}}{Section4}}
\label{sec:proofs-duality}

\begin{proofof}{lem:every-tst-kindable}
  We have to prove that:
  \[
  \fv{\tsbP} \subseteq \dom{\Gamma}
  \;\; \implies \;\;
  \existsunique \kindK \; : \;
  \Gamma \vdash \tsbP:\kindK
  \]
  Let $\Gamma$ as in the statement. 
  By induction on the structure of $\tsbP$, we have the following cases:
  \begin{itemize}
  \item $\tsbP = \success$. Trivial. 
  \item $\tsbP$ is a choice (internal or external). Straightforward by the induction hypothesis.
  \item $\tsbP = \tsbX$. 
    Since $\fv{\tsbP} \subseteq \dom{\Gamma}$, 
    the thesis follows by rule $\nrule{[T-Var]}$.
  \item $\tsbP = \rec \tsbX \tsbPi$. 
    Since $\fv{\tsbP} \subseteq \dom{\Gamma}$, 
    then for all $\kindKi$: 
    $
    \fv{\tsbPi} \subseteq \fv{\tsbP}\cup\setenum{\tsbX} \subseteq 
    \dom{\Gamma}\cup\setenum{\tsbX} = \dom{\Gamma,\tsbX:\kindKi}
    $.
    Hence, by the induction hypothesis we have that
    $
    \existsunique \kindKii:\Gamma,\tsbX:\kindKi \vdash \tsbPi : \kindKii
    $.
    The thesis follows by rule $\nrule{[T-Rec]}$.
    \qed
  \end{itemize}
\end{proofof}

\begin{lem}
  \label{lem:gamma-monotonicity:past-invReset}
  For all $\kindK,\kindKi$ such that $\kindK \subseteq \kindKi$, and
  for all sets of clocks $\resetT$:
  \[
  \past{\kindK} \subseteq \past{\kindKi} 
  \qquad \text{ and } \qquad
  \invReset{\kindK}{\resetT} \subseteq \invReset{\kindKi}{\resetT}
  \]
\end{lem}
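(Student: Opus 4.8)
The plan is to prove both inclusions by directly unfolding the definitions of $\past{\cdot}$ and $\invReset{\cdot}{\resetT}$ from~\Cref{def:past-invreset} and chasing memberships, using nothing more than the hypothesis $\kindK \subseteq \kindKi$. Both operators are defined so that membership of a valuation $\clockN$ reduces to membership of a \emph{derived} valuation (a time shift $\clockN + \delta$, or a reset $\reset{\clockN}{\resetT}$) in the argument set; since these derived valuations do not depend on whether the argument is $\kindK$ or $\kindKi$, monotonicity is immediate.

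For the past, first I would take an arbitrary $\clockN \in \past{\kindK}$. By definition there is some $\delta \geq 0$ with $\clockN + \delta \in \kindK$. Applying the hypothesis $\kindK \subseteq \kindKi$ gives $\clockN + \delta \in \kindKi$, and the same witness $\delta$ shows $\clockN \in \past{\kindKi}$; this establishes $\past{\kindK} \subseteq \past{\kindKi}$. For the inverse reset I would similarly take $\clockN \in \invReset{\kindK}{\resetT}$, so that $\reset{\clockN}{\resetT} \in \kindK$ by definition; the hypothesis then yields $\reset{\clockN}{\resetT} \in \kindKi$, whence $\clockN \in \invReset{\kindKi}{\resetT}$, giving the second inclusion.

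There is no genuine obstacle here: the statement is exactly the standard monotonicity of an existentially quantified operator (the past) and of a preimage operator (the inverse reset) with respect to set inclusion, and the only thing one has to observe is that in both cases the witness, respectively the derived valuation, is untouched by enlarging the target set. I would therefore keep the argument to a couple of lines, since its sole purpose is to serve as a building block in the proof that the kinding functional $F$ of~\Cref{lem:gamma-monotonicity} is monotonic.
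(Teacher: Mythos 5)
Your proof is correct and matches the paper's approach: the paper's proof is simply ``Straightforward by \Cref{def:past-invreset}'', and your argument is exactly that straightforward unfolding of the definitions, with the membership-chasing spelled out explicitly.
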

\begin{proof}
  Straightforward by \Cref{def:past-invreset}.
\end{proof}

Hereafter, we assume substitutions to be capture avoiding.

\begin{lem}[\textbf{Substitution}]
  \label{lem:type-preserving-substitution}
  Let $\Gamma \vdash \tsbPi:\kindKi$. Then, for all $\tsbP$:
  \[
  \Gamma,\tsbX:\kindKi \vdash \tsbP:\kindK 
  \;\;\iff\;\; 
  \Gamma \vdash \tsbP\setenum{\bind{\tsbX}{\tsbPi}}:\kindK
  \]
\end{lem}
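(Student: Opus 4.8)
The plan is to prove the statement by induction on the structure of $\tsbP$, treating substitution as capture-avoiding throughout. The first move is to observe that, by uniqueness of kinding (\Cref{lem:every-tst-kindable}), both $\tsbP$ and $\tsbP\setenum{\bind{\tsbX}{\tsbPi}}$ admit a \emph{unique} kind in their respective environments: the relevant free-variable conditions hold because $\fv{\tsbPi} \subseteq \dom{\Gamma}$, whence $\fv{\tsbP\setenum{\bind{\tsbX}{\tsbPi}}} \subseteq (\fv{\tsbP} \setminus \setenum{\tsbX}) \cup \fv{\tsbPi} \subseteq \dom{\Gamma}$. It therefore suffices to show that these two unique kinds \emph{coincide}; both directions of the biconditional then follow immediately. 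I will freely use that a kind environment is a partial function, so the order of its bindings is immaterial, and that adding a binding for a variable not free in a term does not alter that term's kind (a weakening property that is itself immediate from \Cref{lem:every-tst-kindable}).

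For the base cases: if $\tsbP = \success$ then $\tsbP\setenum{\bind{\tsbX}{\tsbPi}} = \success$ and both sides yield $\Val$ by \nrule{[T-$\success$]}. If $\tsbP = \tsbX$, then $\tsbP\setenum{\bind{\tsbX}{\tsbPi}} = \tsbPi$; rule \nrule{[T-Var]} gives $\Gamma,\tsbX:\kindKi \vdash \tsbX:\kindKi$, and the hypothesis $\Gamma \vdash \tsbPi:\kindKi$ supplies exactly the matching kind on the right. If $\tsbP = \tsbY$ with $\tsbY \neq \tsbX$, substitution is the identity and \nrule{[T-Var]} returns $\Gamma(\tsbY)$ in both environments.

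For a choice (internal or external), substitution acts only on the continuations, leaving guards and resets untouched, so that $\big(\TSumInt[i]{\atomOut[i]{a}}{\guardG[i],\resetT[i]}{\tsbP[i]}\big)\setenum{\bind{\tsbX}{\tsbPi}} = \TSumInt[i]{\atomOut[i]{a}}{\guardG[i],\resetT[i]}{\tsbP[i]\setenum{\bind{\tsbX}{\tsbPi}}}$, and symmetrically for $\TSumExt$. By the induction hypothesis each branch $\tsbP[i]$ receives the same kind $\kindK[i]$ on both sides, and since \nrule{[T-$\sumInt$]} and \nrule{[T-$\sumExt$]} compute the kind as a fixed set-theoretic function (past, inverse reset, union, difference) of the $\kindK[i]$, the two resulting kinds agree.

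The main obstacle is the recursion case $\tsbP = \rec{\tsbZ}{\tsbR}$. Because substitution is capture-avoiding I may assume $\tsbZ \notin \fv{\tsbPi} \cup \setenum{\tsbX}$, so that $\tsbP\setenum{\bind{\tsbX}{\tsbPi}} = \rec{\tsbZ}{\tsbR\setenum{\bind{\tsbX}{\tsbPi}}}$. Rule \nrule{[T-Rec]} defines the kind as $\bigcup \setcomp{\kindK[0]}{\exists \kindK[1]:~(\cdot),\tsbZ:\kindK[0] \vdash (\text{body}) : \kindK[1] \land \kindK[0] \subseteq \kindK[1]}$. For each candidate $\kindK[0]$ I would apply the induction hypothesis to the body $\tsbR$ in the extended environment $\Gamma,\tsbZ:\kindK[0]$; this is legitimate precisely because $\tsbZ$ is fresh, so $\Gamma,\tsbZ:\kindK[0] \vdash \tsbPi:\kindKi$ still holds by weakening, yielding
\[
\Gamma,\tsbZ:\kindK[0],\tsbX:\kindKi \vdash \tsbR:\kindK[1] \;\iff\; \Gamma,\tsbZ:\kindK[0] \vdash \tsbR\setenum{\bind{\tsbX}{\tsbPi}}:\kindK[1].
\]
Hence the two indexing sets appearing in the respective \nrule{[T-Rec]} unions are identical, and so are their unions. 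The two delicate points are exactly the freshness of $\tsbZ$ (needed both to commute substitution past $\recsym$ and to invoke weakening on $\tsbPi$) and the fact that the set over which \nrule{[T-Rec]} ranges depends solely on how the body is kinded; once these are established, the case—and with it the induction—closes.
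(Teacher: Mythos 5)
Your proof is correct, but it takes a genuinely different route from the paper's. The paper does not prove kind equality directly by induction: it first establishes two \emph{weakened} implications --- if $\Gamma,\tsbX:\kindKi \vdash \tsbP:\kindK$ then $\Gamma \vdash \tsbP\setenum{\bind{\tsbX}{\tsbPi}}:\kindKii$ for some $\kindKii \supseteq \kindK$, and symmetrically for the other direction --- each proved by induction, pushing inclusions (rather than equalities) through the kind operators via monotonicity of past and inverse reset (\Cref{lem:gamma-monotonicity:past-invReset}); it then recovers the exact statement by a squeeze: composing the two implications and invoking uniqueness of kinding (\Cref{lem:every-tst-kindable}) forces all the kinds involved to coincide. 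You instead invoke uniqueness and totality of kinding at the outset, reducing the biconditional to the single claim that the two unique kinds are equal, and your induction preserves equality exactly at every step, so the monotonicity lemma is never needed: in the choice cases the kind is the same set-theoretic function of the same branch kinds, and in the recursion case the two \nrule{[T-Rec]} indexing sets coincide thanks to the induction hypothesis (quantified over all environments), weakening, and permutation of environment bindings. Your route is shorter and eliminates an auxiliary lemma; the paper's inclusion-based formulation is what one would fall back on if the induction hypothesis could only be exploited asymmetrically, but that extra robustness turns out to be unnecessary here. Two small points of rigor: weakening is not literally ``immediate'' from \Cref{lem:every-tst-kindable} --- it needs its own (routine) structural induction --- though the paper glosses the very same step when it asserts $\Gamma,\tsbY:\kindK \vdash \tsbPi:\kindKi$ in its recursion case; and your reduction to ``the unique kinds coincide'' presupposes $\fv{\tsbP} \subseteq \dom{\Gamma} \cup \setenum{\tsbX}$ (your containment chain uses this silently), the remaining case being vacuous since then neither judgement is derivable.
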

\begin{proof}
  We prove that, under the assumption $\Gamma \vdash \tsbPi:\kindKi$, the following items hold: 
  \begin{enumerate}
  \item $\Gamma,\tsbX:\kindKi \vdash \tsbP:\kindK \implies
    \exists \kindKii \supseteq \kindK:~\Gamma \vdash \tsbP\setenum{\bind{\tsbX}{\tsbPi}}:\kindKii$.
    \label{lem:type-preserving-substitution:proof:i}
  \item $\Gamma \vdash \tsbP\setenum{\bind{\tsbX}{\tsbPi}}:\kindK \implies
    \exists \kindKii \supseteq \kindK:~\Gamma,\tsbX:\kindKi \vdash \tsbP:\kindKii$.
    \label{lem:type-preserving-substitution:proof:ii}
  \end{enumerate}
  Before proving the two items, we show that together they imply the thesis.
  Assume that $\Gamma,\tsbX:\kindKi \vdash \tsbP:\kindK$. 
  By~\cref{lem:type-preserving-substitution:proof:i},
  $\Gamma \vdash \tsbP\setenum{\bind{\tsbX}{\tsbPi}}:\kindKii$ for some $\kindKii \supseteq \kindK$. 
  Then, by~\cref{lem:type-preserving-substitution:proof:ii}, 
  $\Gamma,\tsbX:\kindKi \vdash \tsbP:\kindKiii$ 
  for some $\kindKiii \supseteq \kindKii$. 
  Therefore, by uniqueness of kinding,
  $
  \kindKii \subseteq \kindKiii = \kindK \subseteq \kindKii
  $.
  The other direction follows by a similar argument.

  \smallskip\noindent
  To prove~\cref{lem:type-preserving-substitution:proof:i},
  assume $\Gamma,\tsbX:\kindKi \vdash \tsbP:\kindK$.
  We proceed by induction on the typing rules.
  \begin{itemize}

  \item \nrule{[T-$\success$]}, \nrule{[T-Var]}. Trivial.

  \item \nrule{[T-$\oplus$]}. We have:
    \[
    \irule{\Gamma,\tsbX:\kindKi \vdash \tsbP[i] : \kindK[i]}
    {\Gamma,\tsbX:\kindKi \vdash 
      \TSumInt{\atomOut[i]{a}}{\guardG[i],\resetT[i]}{\tsbP[i]}:\big(\bigcup\past{\sem{\guardG[i]}}\big)
      \setminus \big(\bigcup\past{(\sem{\guardG[i]} \setminus \invReset{\kindK[i]}{\resetT[i]})}\big)}
    \]
    By the induction hypothesis:
    \[
    \irule{\Gamma \vdash \tsbP[i]\setenum{\bind{\tsbX}{\tsbPi}} : \tildekindK[i] \supseteq \kindK[i]}
    {\Gamma \vdash \TSumInt{\atomOut[i]{a}}{\guardG[i],\resetT[i]}
      {(\tsbP[i]\setenum{\bind{\tsbX}{\tsbPi}}}):
      \big(\bigcup\past{\sem{\guardG[i]}}\big) \setminus \big(\bigcup\past{(\sem{\guardG[i]} 
        \setminus \invReset{\tildekindK[i]}{\resetT[i]})}\big)}
    \]
    The thesis follows by \Cref{lem:gamma-monotonicity:past-invReset}.

  \item \nrule{[T-+]}. Similar to \nrule{[T-$\oplus$]}.

  \item \nrule{[T-Rec]}. 
    $\tsbP$ must have the form $\rec {\tsbY}{\tsbPii}$. 
    If $\tsbX = \tsbY$ the thesis is trivial; otherwise:
    \[\irule{\exists \kindK[0],\kindKi[0]:~\Gamma,\tsbX:\kindKi,\tsbY:\kindK[0] \vdash \tsbPii : \kindKi[0]}
    {\Gamma,\tsbX:\kindKi \vdash \rec {\tsbY} {\tsbPii}:\bigcup{\setcomp{\tildekindK \supseteq \tildekindKi}
        {\Gamma,\tsbX:\kindKi,\tsbY:\tildekindK \vdash \tsbPii: 
          \tildekindKi}} = \kindK}\]
    Since $\tsbX \neq \tsbY$, then 
    $
    \tsbP\setenum{\bind{\tsbX}{\tsbPi}} = 
    \rec {\tsbY} {(\tsbPii\setenum{\bind{\tsbX}{\tsbPi}})}
    $,
    and for all $\kindK[0],\kindKi[0]$:
    \begin{equation}\label{lem:type-preserving-substitution:proof:rec:ii}
      \Gamma,\tsbX:\kindKi
      ,\tsbY:\kindK[0] \vdash \tsbPi : \kindKi[0]
      \iff
      \Gamma,\tsbY:\kindK[0]
      ,\tsbX:\kindKi \vdash \tsbPi : \kindKi[0]
    \end{equation}
    The kinding derivation for $\tsbP\setenum{\bind{\tsbX}{\tsbPi}}$ must be as follows:
    \begin{equation}\label{lem:type-preserving-substitution:proof:rec:i}
      \irule{\exists \kindK[0],\kindKi[0]:~\Gamma
        ,\tsbY:\kindK[0] \vdash (\tsbPii\setenum{\bind{\tsbX}{\tsbPi}}) : \kindKi[0]}
      {\Gamma \vdash \tsbP\setenum{\bind{\tsbX}{\tsbPi}}:\bigcup{\setcomp{\tildekindK}
          {\Gamma,\tsbY:\tildekindK \vdash (\tsbPii\setenum{\bind{\tsbX}{\tsbPi}}): 
            \tildekindKi \land \tildekindK \subseteq \tildekindKi}} = \kindKii}
    \end{equation}
    We first show that the premise of~\eqref{lem:type-preserving-substitution:proof:rec:i} holds.
    Since substitutions are capture avoiding, $\tsbY$ is not free in $\tsbPi$, and hence
    $\Gamma,\tsbY:\kindK \vdash \tsbPi:\kindKi$ as well as $\Gamma$.
    Then, by~\Cref{lem:type-preserving-substitution:proof:rec:ii} together with the induction
    hypothesis, the thesis follows. 
    It remains to show $\kindK \subseteq \kindKii$. 
    If $\kindK$ is empty the thesis holds trivially. 
    Otherwise, take some $\tildekindK,\tildekindKi$ such that:
    \[
    \Gamma,\tsbX:\kindKi,\tsbY:\tildekindK \vdash \tsbPii: 
    \tildekindKi\land \tildekindK \subseteq \tildekindKi
    \]
    By~\eqref{lem:type-preserving-substitution:proof:rec:ii}, together with the induction
    hypothesis, we have that, for some $\tildekindKii$:
    \[\Gamma,\tsbY:\tildekindK \vdash \tsbPii\setenum{\bind{\tsbX}{\tsbPi}}:
    \tildekindKii \supseteq \tildekindKi\] from which the thesis follows.
  \end{itemize}

  \noindent
  To prove~\cref{lem:type-preserving-substitution:proof:ii}, assume
  $\Gamma \vdash \tsbP\setenum{\bind{\tsbX}{\tsbPi}}:\kindK$. We proceed by induction on the
  structure of $\tsbP$.
  \begin{itemize}
  \item $\tsbP=\success$. Trivial.
  \item $\tsbP=\TSumInt{\atomOut[i]{a}}{\guardG[i],\resetT[i]}{\tsbP[i]}$. 
    \[
    \irule{\Gamma \vdash \tsbP[i]\setenum{\bind{\tsbX}{\tsbPi}} : \kindK[i]}
    {\Gamma \vdash \TSumInt{\atomOut[i]{a}}{\guardG[i],\resetT[i]}
      {(\tsbP[i]\setenum{\bind{\tsbX}{\tsbPi}}}):
      \bigcup\past{\sem{\guardG[i]}} \setminus \bigcup\past{(\sem{\guardG[i]} \setminus
        \invReset{\kindK[i]}{\resetT[i]})}}\nrule{[T-$\oplus$]}
    \]
    By induction hypothesis:
    \[
    \irule{\Gamma\setenum{\bind{\tsbX}{\kindK[i]}} \vdash \tsbP[i] : \tildekindK[i] \supseteq \kindK[i]}
    {\Gamma,\tsbX:\kindKi \vdash \TSumInt{\atomOut[i]{a}}{\guardG[i],\resetT[i]}
      {\tsbP[i]}:\bigcup\past{\sem{\guardG[i]}} \setminus 
      \bigcup\past{(\sem{\guardG[i]} \setminus \invReset{\tildekindK[i]}{\resetT[i]})}}\nrule{[T-$\oplus$]}
    \]
    The thesis follows by~\Cref{lem:gamma-monotonicity:past-invReset}.

  \item $\tsbP=\TSumExt{\atomIn[i]{a}}{\guardG[i],\resetT[i]}{\tsbP[i]}$.
    Similar to the internal choice case.

  \item $\tsbP=\tsbY$. If $\tsbY \neq \tsbX$ the thesis follows trivially. Otherwise,
    let $\tsbP = \tsbX$. Then $\Gamma\vdash\tsbP\setenum{\bind{\tsbX}{\tsbPi}}=\tsbPi:\kindKi$ and
    $\Gamma,\tsbX:\kindKi \vdash \tsbP = \tsbX:\kindKi$. 

  \item $\tsbP = \rec {\tsbY}{\tsbPii}$. 
    Assume $\tsbX \neq \tsbY$ (otherwise the thesis holds trivially). 
    \[
    \irule{\exists \kindK[0],\kindKi[0]:~\Gamma
      ,\tsbY:\kindK[0] \vdash (\tsbPii\setenum{\bind{\tsbX}{\tsbPi}}) : \kindKi[0]}
    {\Gamma \vdash \tsbP\setenum{\bind{\tsbX}{\tsbPi}}:\bigcup{\setcomp{\tildekindK}
        {\Gamma,\tsbY:\tildekindK \vdash (\tsbPii\setenum{\bind{\tsbX}{\tsbPi}}): 
          \tildekindKi \land \tildekindK \subseteq \tildekindKi}} = \kindK}
    \]
    As in the proof of~\eqref{lem:type-preserving-substitution:proof:i},
    we have $\Gamma,\tsbY:\kindK \vdash \tsbPi:\kindKi$. 
    Then, by~\eqref{lem:type-preserving-substitution:proof:rec:ii} and the induction hypothesis:
    \[
    \irule{\exists \kindK[0],\kindKi[0]:~\Gamma,\tsbX:\kindKi
      ,\tsbY:\kindK[0] \vdash \tsbPii : \kindKi[0]}
    {\Gamma,\tsbX:\kindKi \vdash \rec {\tsbY} {\tsbPii}:\bigcup{\setcomp{\tildekindK}
        {\Gamma,\tsbX:\kindKi,\tsbY:\tildekindK \vdash \tsbPii: 
          \tildekindKi \land \tildekindK \subseteq \tildekindKi}} = \kindKii}
    \]
    It remains to prove $\kindK \subseteq \kindKii$. If $\kindK$
    is empty the thesis holds trivially. Otherwise, take some $\tildekindK,\tildekindKi$ such that:
    \[
    \Gamma,\tsbY:\tildekindK \vdash \tsbPii\setenum{\bind{\tsbX}{\tsbPi}}:
    \tildekindKi\land \tildekindK \subseteq \tildekindKi
    \]
    By~\eqref{lem:type-preserving-substitution:proof:rec:ii}, together with the induction
    hypothesis, we have that, for some $\tildekindKii$:
    \[
    \Gamma,\tsbX:\kindKi,\tsbY:\tildekindK \vdash \tsbPii:
    \tildekindKii \supseteq \tildekindKi\] from which the thesis follows.
    \qedhere
  \end{itemize}
\end{proof}

\begin{lem}[\textbf{Substitution under dual}]
\label{lem:dual-preserving-substitution}
  Let $\Gamma \vdash {\tsbPi}:\kindK$, with $\tsbX$ not free in $\tsbPi$.
  Then, for all $\tsbP$ such that
  $\tsbP$ is kindable with environment $\Gamma,\tsbX:\kindK$:
  \[\dual[\Gamma,\tsbX:\kindK]{\tsbP}\setenum{\bind{\tsbX}{\dual[\Gamma]{{\tsbPi}}}}
  =  \dual[\Gamma]{\tsbP\setenum{\bind{\tsbX}{{\tsbPi}}}}
  \]
\end{lem}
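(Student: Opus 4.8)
The plan is to proceed by structural induction on $\tsbP$, following exactly the clauses defining $\dual[\cdot]{\cdot}$ in \Cref{fig:dual}. Throughout I keep the hypotheses $\Gamma \vdash \tsbPi : \kindK$ and $\tsbX \notin \fv{\tsbPi}$, and I assume (by $\alpha$-renaming) that every bound recursion variable of $\tsbP$ is fresh for $\Gamma$, $\tsbX$ and $\tsbPi$, so that all substitutions are genuinely capture-avoiding. The base cases are immediate: for $\tsbP = \success$ both sides reduce to $\success$; for $\tsbP = \tsbX$ both sides reduce to $\dual[\Gamma]{\tsbPi}$; and for $\tsbP = \tsbY$ with $\tsbY \neq \tsbX$ both sides reduce to $\tsbY$ (using $\tsbY \in \dom{\Gamma}$, since $\tsbP$ is kindable in $\Gamma,\tsbX:\kindK$). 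The internal-choice case is routine, because $\dual[\cdot]{\cdot}$ leaves the guards and resets of an internal choice unchanged and turns it into an external choice branch-by-branch; substitution therefore commutes with the dual on each continuation, and the induction hypothesis closes the case directly.

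The external-choice case is the first place where kinds matter, since $\dual[\Gamma]{\TSumExt[i \in I]{\atomIn[i]{a}}{\guardG[i],\resetT[i]}{\tsbP[i]}}$ strengthens each guard to $\guardG[i] \land \invReset{\kindK[i]}{\resetT[i]}$, where $\Gamma \vdash \tsbP[i] : \kindK[i]$. Expanding the left-hand side, the guard contributed by branch $i$ uses the kind $\kindK[i]'$ of $\tsbP[i]$ computed in the environment $\Gamma,\tsbX:\kindK$; expanding the right-hand side, it uses the kind $\kindK[i]''$ of $\tsbP[i]\setenum{\bind{\tsbX}{\tsbPi}}$ computed in $\Gamma$. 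The crux is that these coincide: by the Substitution Lemma \Cref{lem:type-preserving-substitution} (instantiated with $\Gamma \vdash \tsbPi : \kindK$) we have $\Gamma,\tsbX:\kindK \vdash \tsbP[i] : \kindK[i]' \iff \Gamma \vdash \tsbP[i]\setenum{\bind{\tsbX}{\tsbPi}} : \kindK[i]'$, and uniqueness of kinding \Cref{lem:every-tst-kindable} forces $\kindK[i]' = \kindK[i]''$. With the guards matched and the induction hypothesis applied to each $\tsbP[i]$, the case closes; since guards and resets contain no recursion variables, the substitution distributes over the choice without disturbing them.

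The recursion case $\tsbP = \rec{\tsbY}{\tsbPii}$ (with $\tsbY \neq \tsbX$; the subcase $\tsbY = \tsbX$ is trivial, as $\tsbX$ is then bound and the substitution is vacuous) is the main obstacle, because the dual of a recursion re-binds $\tsbY$ to the kind of the whole recursion and I must track this through the substitution. Writing $\Gamma,\tsbX:\kindK \vdash \rec{\tsbY}{\tsbPii} : \kindKi$, the left-hand side unfolds to $\rec{\tsbY}{\dual[\Gamma,\tsbX:\kindK,\tsbY:\kindKi]{\tsbPii}\setenum{\bind{\tsbX}{\dual[\Gamma]{\tsbPi}}}}$. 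I would then apply the induction hypothesis to $\tsbPii$ with the enlarged environment $\Gamma,\tsbY:\kindKi$, which needs two easy weakening facts, namely that adding a binding for $\tsbY \notin \fv{\tsbPi}$ changes neither the kind of $\tsbPi$ nor the term $\dual[\Gamma]{\tsbPi}$, so that $\dual[\Gamma,\tsbY:\kindKi]{\tsbPi} = \dual[\Gamma]{\tsbPi}$ and $\Gamma,\tsbY:\kindKi \vdash \tsbPi : \kindK$. The induction hypothesis rewrites the body to $\dual[\Gamma,\tsbY:\kindKi]{\tsbPii\setenum{\bind{\tsbX}{\tsbPi}}}$.

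Finally, to match the outermost bindings, I apply the Substitution Lemma to the whole recursion, obtaining $\Gamma \vdash \rec{\tsbY}{\tsbPii\setenum{\bind{\tsbX}{\tsbPi}}} : \kindKi$; uniqueness of kinding identifies $\kindKi$ with the kind $\kindKii$ used when computing the right-hand side $\dual[\Gamma]{\rec{\tsbY}{\tsbPii\setenum{\bind{\tsbX}{\tsbPi}}}}$, so both sides reassemble to the same $\rec{\tsbY}{\cdot}$ term. I expect the only delicate part to be the environment bookkeeping in this case — shadowing of $\tsbX$ and $\tsbY$, freshness of bound variables, and the two weakening facts — whereas the entire conceptual content is carried by \Cref{lem:type-preserving-substitution} together with \Cref{lem:every-tst-kindable}.
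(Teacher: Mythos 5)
Your proposal is correct and follows essentially the same route as the paper's proof: structural induction on $\tsbP$, with \Cref{lem:type-preserving-substitution} (plus uniqueness of kinding, \Cref{lem:every-tst-kindable}) matching the kinds in the external-choice case, and the recursion case handled by reordering the environment, exploiting $\tsbY \notin \fv{\tsbPi}$ to weaken both the kinding judgement and $\dual[\Gamma]{\tsbPi}$, applying the induction hypothesis over $\Gamma,\tsbY:\kindKi$, and reassembling via the substitution lemma applied to the whole recursion. The environment bookkeeping you flag as the delicate part is exactly where the paper's proof also spends its effort.
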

\begin{proof}
  By induction on the structure of $\tsbP$:
  \begin{itemize}
  \item $\tsbP = \TSumInt{\atomOut[i]{a}}{\guardG[i],\resetT[i]}{\tsbP[i]}$:\\[5pt]
    $\dual[\Gamma,\tsbX:\kindK]{\tsbP}\setenum{\bind{\tsbX}
      {\dual[\Gamma]{{\tsbPi}}}} = $
    \[\begin{array}{cll}
      = & \TSumExt{\atomIn[i]{a}}{\guardG[i],\resetT[i]}{(\dual[\Gamma,\tsbX:\kindK]
        {\tsbP[i]}\setenum{\bind{\tsbX}{\dual[\Gamma]{{\tsbPi}}}})}\\
      = & \TSumExt{\atomIn[i]{a}}{\guardG[i],\resetT[i]}{\dual[\Gamma]{\tsbP[i]
          \setenum{\bind{\tsbX}{{\tsbPi}}}}} & \text{by induction hypothesis}\\
      = & \dual[\Gamma]{\tsbP\setenum{\bind{\tsbX}{{\tsbPi}}}}
    \end{array}\]
  \item $\tsbP = \TSumExt{\atomIn[i]{a}}{\guardG[i],\resetT[i]}{\tsbP[i]}$: 
    Let $\Gamma,\tsbX:\kindK\vdash\tsbP[i]:\kindK[i]$ for all $i \in I$. Then:
    \\[5pt]
    $\dual[\Gamma,\tsbX:\kindK]{\tsbP}
    \setenum{\bind{\tsbX}{\dual[\Gamma]{{\tsbPi}}}} = $
    \[\begin{array}{cll}
      = & \TSumInt{\atomOut[i]{a}}{\guardG[i] \land \invReset{\kindK[i]}{\resetT[i]},\resetT[i]}
      {(\dual[\Gamma,\tsbX:\kindK]{\tsbP[i]}
        \setenum{\bind{\tsbX}{\dual[\Gamma]{{\tsbPi}}}})}\\
      = & \TSumInt{\atomOut[i]{a}}{\guardG[i] \land \invReset{\kindK[i]}{\resetT[i]},\resetT[i]}
      {\dual[\Gamma]{\tsbP[i]
          \setenum{\bind{\tsbX}{{\tsbPi}}}}} & \text{by induction hypothesis}\\
      = & \dual[\Gamma]{\tsbP\setenum{\bind{\tsbX}{{\tsbPi}}}} 
      & \text{by~\Cref{lem:type-preserving-substitution}}
    \end{array}\]
  \item $\tsbP = \success$: Trivial
  \item $\tsbP = \tsbY$: Trivial, whether or not $\tsbY = \tsbX$
  \item $\tsbP = \rec {\tsbY}{\tsbPii}$: If $\tsbY = \tsbX$ the thesis is trivial. Otherwise, assume
    $\Gamma,\tsbX:\kindK \vdash \rec {\tsbY}{\tsbPii}: \kindKi$. First note that, since we are assuming
    capture avoiding substitutions, $\tsbY$ must not be free in $\tsbPi$. Indeed, if $\tsbY$ were free in $\tsbPi$,
    then $\tsbY$ would be captured in $\tsbP\setenum{\bind{\tsbX}{{\tsbPi}}}$.
    Then:\\[5pt]
    $\dual[\Gamma,\tsbX:\kindK]{\tsbP}
    \setenum{\bind{\tsbX}{\dual[\Gamma]{{\tsbPi}}}} = $
    \[
    \begin{array}{cll}
      = & \rec {\tsbY}{(\dual[\Gamma,\tsbX:\kindK,\tsbY:\kindKi]
        {\tsbPii} \setenum{\bind{\tsbX}{\dual[\Gamma]{{\tsbPi}}}})}&\text{by~\Cref{def:dual}}\\
      = & \rec {\tsbY}{(\dual[\Gamma,\tsbY:\kindKi,\tsbX:\kindK]{\tsbPii}
        \setenum{\bind{\tsbX}{\dual[\Gamma]{{\tsbPi}}}})}&
      \text{since } \tsbX\neq\tsbY\\
      = & \rec {\tsbY}{(\dual[\Gamma,\tsbY:\kindKi,\tsbX:\kindK]{\tsbPii}
        \setenum{\bind{\tsbX}{\dual[\Gamma,\tsbY:\kindKi]{{\tsbPi}}}})}
      & \text{since $\tsbY$ is not free in $\tsbPi$}\\
      = & \rec {\tsbY}{\dual[\Gamma,\tsbY:\kindKi]
        {\tsbPii\setenum{\bind{\tsbX}{{\tsbPi}}}}}
      & \text{by induction hypothesis}\\
      = & \dual[\Gamma,\tsbY:\kindKi]{\tsbP\setenum{\bind{\tsbX}{{\tsbPi}}}}&\text{by~\Cref{def:dual}}\\
      = & \dual[\Gamma]{\tsbP\setenum{\bind{\tsbX}{{\tsbPi}}}}
      & \text{since $\tsbY$ is not free in $\tsbP\setenum{\bind{\tsbX}{{\tsbPi}}}$}
    \end{array}
    \]
  \end{itemize}
  Note that the induction hypothesis is trivially applicable: $\tsbP$ is kindable in $\Gamma,\tsbX:\kindK$, necessarily by rule \nrule{[T-Rec]}. The premise
  of rule \nrule{[T-Rec]} implies $\tsbPii$ is kindable in $\Gamma,\tsbX:\kindK,\tsbY:\kindKi$.
\end{proof}

\begin{defi}\label{def:struct-congruence}
We define the structural congruence relation $\equiv$ between TSTs 
as the least congruence relation closed under the following equation:
\[
\rec {\tsbX} {\tsbP} \equiv \tsbP\setenum{\bind{\tsbX}{\rec {\tsbX} {\tsbP}}}
\]
\end{defi}

\begin{lem}\label{lem:dual:struct}
  For all kindable $\tsbP$ and $\tsbQ$, we have that:
  $
  \tsbP \equiv \tsbQ \implies \dual{\tsbP} \equiv \dual{\tsbQ}
  $.
\end{lem}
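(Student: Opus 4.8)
The plan is to induct on the derivation of $\tsbP \equiv \tsbQ$, after first generalising the statement to an arbitrary kinding environment $\Gamma$: I would prove that $\tsbP \equiv \tsbQ$ implies $\dual[\Gamma]{\tsbP} \equiv \dual[\Gamma]{\tsbQ}$ whenever both terms are kindable in $\Gamma$. Since $\dual{\cdot}$ in \Cref{def:dual} inspects the kind of a term in the external-choice and recursion cases, this induction cannot proceed in isolation: I would run it \emph{simultaneously} with a kind-invariance claim, namely that $\tsbP \equiv \tsbQ$ and $\Gamma \vdash \tsbP : \kindK$ imply $\Gamma \vdash \tsbQ : \kindK$. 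By uniqueness of kinding (\Cref{lem:every-tst-kindable}) this forces the kinds consulted by $\dual{\cdot}$ on $\tsbP$ and on $\tsbQ$ to coincide. Generalising to arbitrary $\Gamma$ is essential because the congruence rule for recursion dualises the body under the extended environment $\Gamma,\tsbX:\kindK$, so the inductive hypothesis must be available there rather than only at $\Gamma = \emptyset$.

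First I would treat the base case, the unfolding axiom $\rec{\tsbX}{\tsbP} \equiv \tsbP\setenum{\bind{\tsbX}{\rec{\tsbX}{\tsbP}}}$ of \Cref{def:struct-congruence}. Writing $\Gamma \vdash \rec{\tsbX}{\tsbP}:\kindK$, \Cref{def:dual} gives $\dual[\Gamma]{\rec{\tsbX}{\tsbP}} = \rec{\tsbX}{\dual[\Gamma,\tsbX:\kindK]{\tsbP}}$. Applying the unfolding axiom of $\equiv$ to this very term yields $\rec{\tsbX}{\dual[\Gamma,\tsbX:\kindK]{\tsbP}} \equiv \dual[\Gamma,\tsbX:\kindK]{\tsbP}\setenum{\bind{\tsbX}{\dual[\Gamma]{\rec{\tsbX}{\tsbP}}}}$, and then \Cref{lem:dual-preserving-substitution} (instantiated with substituted term $\rec{\tsbX}{\tsbP}$, which has no free $\tsbX$ and is kinded $\kindK$ in $\Gamma$) rewrites the right-hand side to $\dual[\Gamma]{\tsbP\setenum{\bind{\tsbX}{\rec{\tsbX}{\tsbP}}}}$, giving exactly $\dual[\Gamma]{\rec{\tsbX}{\tsbP}} \equiv \dual[\Gamma]{\tsbP\setenum{\bind{\tsbX}{\rec{\tsbX}{\tsbP}}}}$. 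For the accompanying kind-invariance claim in this case I would use the fixed-point characterisation of the kind assigned by \nrule{[T-Rec]}, which gives $\Gamma,\tsbX:\kindK \vdash \tsbP : \kindK$, together with the Substitution lemma \Cref{lem:type-preserving-substitution} to conclude $\Gamma \vdash \tsbP\setenum{\bind{\tsbX}{\rec{\tsbX}{\tsbP}}} : \kindK$, so that unfolding preserves the kind.

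The remaining cases are the closure rules of the congruence. Reflexivity, symmetry and transitivity are immediate, since $\equiv$ is itself an equivalence and the kind-invariance claim composes the same way. For congruence under a constructor I would use the inductive hypotheses on the immediate subterms: for an internal choice $\dual{\cdot}$ merely swaps the choice operator and recurses, so congruence of $\equiv$ under external choice closes the case; for an external choice and for recursion, $\dual{\cdot}$ additionally depends on subterm kinds (through the guards $\guardG[i] \land \invReset{\kindK[i]}{\resetT[i]}$, respectively through the binder kind $\kindK$), and here the simultaneous kind-invariance hypothesis guarantees these annotations agree on the two sides, after which congruence of $\equiv$ again closes the case. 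I expect the main obstacle to be precisely this coupling — arranging the bookkeeping so that kind invariance is available exactly where $\dual{\cdot}$ inspects kinds, and threading the environment and capture-avoidance correctly through the recursion cases — rather than any single hard computation, since the one genuinely computational step, the unfolding base case, is already discharged by \Cref{lem:dual-preserving-substitution}.
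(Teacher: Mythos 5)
Your proposal is correct and matches the paper's proof in its essentials: the paper likewise generalises the statement to an arbitrary kinding environment $\Gamma$ and discharges the recursion-unfolding case via \Cref{lem:dual-preserving-substitution}, which is exactly your base case. The only difference is presentational: the paper phrases the argument as a (very terse) structural induction on $\tsbP$, whereas you perform rule induction on the derivation of $\equiv$ and make explicit the companion kind-invariance claim (equivalent terms have equal kinds, via \Cref{lem:every-tst-kindable} and \Cref{lem:type-preserving-substitution}) that the paper's sketch uses silently where $\dual{\cdot}$ inspects kinds in the external-choice and recursion cases.
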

\begin{proof}
  The thesis follows by the following more general statement: 
  \[
  \forall \Gamma \;\; : \;\;
  \forall \tsbP,\tsbQ \textit{ kindable in $\Gamma$} \;\; : \;\;
  \tsbP \equiv \tsbQ \implies \dual[\Gamma]{\tsbP} \equiv \dual[\Gamma]{\tsbQ}
  \]
  The proof is by easy induction on the structure of $\tsbP$, 
  using~\Cref{lem:dual-preserving-substitution} in the case $\tsbP = \rec {\tsbX} {\tsbPi}$, which is
  applicable because the substitution $\tsbPi\setenum{\bind{\tsbX}{\rec {\tsbX}{\tsbPi}}}$ does not capture the free variables in $\rec {\tsbX}{\tsbPi}$.
\end{proof}

\begin{lem}\label{lem:unfold-shape}
  Every closed $\tsbP$ is structurally equivalent to a TST of the following shape:
  \[
  \begin{array}{ccc}
    \success 
    \hspace{40pt}
    &
    \TSumInt{\atomOut[i]{a}}{\guardG[i],\resetT[i]}{\tsbP[i]}
    \hspace{40pt}
    &
    \TSumExt{\atomIn[i]{a}}{\guardG[i],\resetT[i]}{\tsbP[i]}
  \end{array}
  \]
\end{lem}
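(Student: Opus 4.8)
The plan is to proceed by a case analysis on the top-level syntactic form of the closed TST $\tsbP$, following the grammar of \Cref{def:tst:syntax}. In three of the five cases — namely $\tsbP = \success$, $\tsbP = \TSumInt[i \in I]{\atomOut[i]{a}}{\guardG[i],\resetT[i]}{\tsbP[i]}$, and $\tsbP = \TSumExt[i \in I]{\atomIn[i]{a}}{\guardG[i],\resetT[i]}{\tsbP[i]}$ — the term is already in one of the three displayed shapes, so the claim follows at once from reflexivity of $\equiv$. The case $\tsbP = \tsbX$ is vacuous: a bare recursion variable has $\fv{\tsbP} = \setenum{\tsbX} \neq \emptyset$, contradicting closedness. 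Thus the only genuine case is $\tsbP = \rec{\tsbX}{\tsbPi}$.

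For the recursion case, the key observation is that the \emph{body} $\tsbPi$ of a well-formed recursion is itself one of the three head shapes $\success$, internal choice, or external choice. This follows from the syntactic discipline on recursion stated after \Cref{def:tst:syntax}. First, $\tsbPi$ cannot be a variable: if $\tsbPi = \tsbY$, then closedness of $\rec{\tsbX}{\tsbPi}$ forces $\tsbY = \tsbX$, and $\rec{\tsbX}{\tsbX}$ is explicitly forbidden as unguarded. Second, $\tsbPi$ cannot itself be a recursion $\rec{\tsbY}{\tsbPii}$, since terms of the form $\rec{\tsbX}{\rec{\tsbY}{\tsbPii}}$ are likewise excluded. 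Hence $\tsbPi$ is $\success$ or a choice.

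With this in hand I would apply a single unfolding through \Cref{def:struct-congruence},
\[
\rec{\tsbX}{\tsbPi} \;\equiv\; \tsbPi\setenum{\bind{\tsbX}{\rec{\tsbX}{\tsbPi}}}
\]
and observe that substitution leaves the head constructor unchanged whenever the head is $\success$ or a choice: one has $\success\setenum{\bind{\tsbX}{\rec{\tsbX}{\tsbPi}}} = \success$, while substituting into a choice only propagates into the continuations, leaving the labels, guards, and resets intact. Therefore the unfolded term $\tsbPi\setenum{\bind{\tsbX}{\rec{\tsbX}{\tsbPi}}}$ is again in one of the three shapes, and $\tsbP$ is structurally equivalent to it, as required.

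The whole argument is short; the only point requiring care is the second paragraph, where the informally stated guardedness convention (together with closedness) must be turned into the precise statement that the body of an outermost recursion has a non-recursive, non-variable head. This is also exactly the reason a \emph{single} unfolding suffices: were nestings $\rec{\tsbX}{\rec{\tsbY}{\cdots}}$ permitted, one would need to iterate the unfolding, whereas here the body is directly exposed. A cleaner but heavier alternative would be to first prove, by structural induction on well-formed TSTs, the auxiliary fact that the body of any outermost recursion is a choice or $\success$; given the explicit exclusions of $\rec{\tsbX}{\tsbX}$ and $\rec{\tsbX}{\rec{\tsbY}{\tsbPii}}$, the direct inspection sketched above is enough.
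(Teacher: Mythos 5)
Your proposal is correct, and it is exactly the argument the paper leaves implicit (its own proof is just ``Trivial''): a case split on the top-level constructor, with closedness ruling out a bare variable and the guardedness convention of \Cref{def:tst:syntax} ensuring that the body of a recursion is $\success$ or a choice, so that a single unfolding via \Cref{def:struct-congruence} exposes one of the three shapes. Your care in spelling out why one unfolding suffices is a faithful elaboration of what the paper takes for granted.
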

\begin{proof}
  Trivial.
\end{proof}

\begin{proofof}[Soundness]{th:dual-sound}
  Let:
  \[\relR = \setcomp{((\tsbP,\clockN),(\dual{\tsbP},\clockN))}
  {\exists \kindK:\;\Gamma \vdash \tsbP:\kindK \land \clockN \in \kindK}\]
  By~\Cref{lem:coind-compliance}, it is enough to show that $\relR$ is a coinductive compliance relation
  (\Cref{def:coind-compliance}).
  Assume that $\Gamma \vdash \tsbP:\kindK$ and $\clockN \in \kindK$. 
  We proceed by cases on the shape of $\tsbP$.
  According to~\Cref{lem:unfold-shape}, we have the following cases:
  \begin{itemize}
  \item $\tsbP \equiv \success$. 
    By~\Cref{lem:dual:struct} $\dual{\tsbP} \equiv \success$, from which the thesis follows.

  \item $\tsbP \equiv \TSumInt{\atomOut[i]{a}}{\guardG[i],\resetT[i]}{\tsbP[i]}$. 
    By \Cref{lem:dual:struct},
    $
    \dual{\tsbP} \equiv \TSumExt{\atomIn[i]{a}}{\guardG[i],\resetT[i]}
    {\dual{\tsbP[i]}}
    $.
    By~rule \nrule{[T-$\oplus$]} it follows that $\nu \in \rdy{\tsbP}$.
    To conclude:
    \[\forall \delta,i:~\clockN + \delta \in \sem{\guardG[i]} \implies 
    (\tsbP[i],\reset{(\clockN + \delta)}{\resetT[i]})\relR
    (\dual{\tsbP[i]},\reset{(\clockN+\delta)}{\resetT[i]})\]
    Let $\vdash \tsbP[i]:\kindK[i]$. By the typing rule \nrule{[T-$\oplus$]} we have that:
    \begin{align*}
      \kindK & = 
      \big(\bigcup_{i \in I}\past{\sem{\guardG[i]}}\big) \setminus 
      \big(\bigcup_{i \in I}\past{(\sem{\guardG[i]} \setminus
        \invReset{\kindK[i]}{\resetT[i]})\big)}
      \\
      & =
      \big(\bigcup_{i \in I}\past{\sem{\guardG[i]}}\big) \setminus 
      \setcomp{\clockN}{\exists \delta,i:\clockN + \delta \in \sem{\guardG[i]} \land
        \reset{\clockN+\delta}{\resetT[i]} \not\in \kindK[i]}
    \end{align*}
    from which the thesis follows.

  \item $\tsbP \equiv \TSumExt{\atomIn[i]{a}}{\guardG[i],\resetT[i]}{\tsbP[i]}$.
    By \Cref{lem:dual:struct},
    $
    \dual{\tsbP} = \TSumInt{\atomOut[i]{a}}{\guardG[i] \land \kindK[i],\resetT[i]}{\dual{\tsbP[i]}}
    $
    with $\vdash \tsbP[i]:\kindK[i]$.
    By rule \nrule{[T-+]}: 
    \[\kindK = \bigcup \past{\big(\sem {\guardG[i]} \cap \invReset{\kindK[i]}{\resetT[i]}\big)}
    = \setcomp{\clockN}{\exists \delta,i:~\clockN + \delta \in \sem{\guardG[i]}
      \land \reset{(\clockN + \delta)}{\resetT[i]}\in \kindK[i]}\] 
    from which the thesis follows. 
  \end{itemize}
\end{proofof}

\begin{defi}\label{def:has-compliant-nu}
  We define the function $\hcn$ from TSTs to sets of clock valuations as follows:
  \[
  \hcn[\tsbP]\mmdef\setcomp{\clockN}{\exists\tsbQ,\clockE:(\tsbP,\clockN)\compliant(\tsbQ,\clockE)}
  \]
\end{defi}

\begin{lem}\label{lem:dual-complete:aux1}
  Let $\Gamma = \tsbX[1]:\hcn[{\tsbP[1]}],\dots,\tsbX[m]:\hcn[{\tsbP[m]}]$, where all $\tsbP[i]$ are closed, and let
  $\vec{\tsbX} = (\tsbX[1],\dots,\tsbX[m]),\vec{\tsbP}=(\tsbP[1],\dots,\tsbP[m])$.
  Then, for all $\tsbP$ such that $\fv{\tsbP} \subseteq \vec{\tsbX}$:
  \[
  \Gamma \vdash \tsbP:\kindK \implies \hcn[\tsbP\setenum{\bind{\vec{\tsbX}}{\vec{\tsbP}}}] 
  \subseteq \kindK
  \]
\end{lem}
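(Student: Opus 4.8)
The plan is to prove the statement by induction on the structure of $\tsbP$, crucially keeping $\Gamma$ \emph{universally quantified} over all environments of the prescribed shape (each recursion variable mapped to $\hcn$ of some \emph{closed} TST); this generality is what will make the recursion case go through. Write $\sigma = \setenum{\bind{\vec{\tsbX}}{\vec{\tsbP}}}$ for the substitution, so the goal reads $\hcn[\tsbP\sigma] \subseteq \kindK$ whenever $\Gamma \vdash \tsbP:\kindK$. The two base cases are immediate: for $\tsbP = \success$, rule \nrule{[T-$\success$]} gives $\kindK = \Val$ while $\hcn[\success] = \Val$; for $\tsbP = \tsbX[j]$, rule \nrule{[T-Var]} gives $\kindK = \hcn[\tsbP[j]]$, and since $\tsbX[j]\sigma = \tsbP[j]$ the two sides literally coincide.

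For the choice cases the key tool is the coinductive characterisation of compliance (\Cref{lem:coind-compliance,def:coind-compliance}). Fix $\clockN \in \hcn[\tsbP\sigma]$, so that $(\tsbP\sigma,\clockN) \compliant (\tsbQ,\clockE)$ for some $\tsbQ,\clockE$, and unfold \Cref{def:coind-compliance}. If $\tsbP = \TSumExt[i \in I]{\atomIn[i]{a}}{\guardG[i],\resetT[i]}{\tsbP[i]}$, then $\tsbQ$ is an internal choice ready at $\clockE$; readiness yields a delay $\delta \geq 0$ with some output of $\tsbQ$ enabled at $\clockE+\delta$, and the third clause of \Cref{def:coind-compliance} supplies a branch $j \in I$ with $\clockN+\delta \in \sem{\guardG[j]}$ and a compliant continuation, whence $\reset{(\clockN+\delta)}{\resetT[j]} \in \hcn[\tsbP[j]\sigma]$. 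The induction hypothesis gives $\hcn[\tsbP[j]\sigma] \subseteq \kindK[j]$ (where $\Gamma \vdash \tsbP[j]:\kindK[j]$), so $\clockN+\delta \in \sem{\guardG[j]} \cap \invReset{\kindK[j]}{\resetT[j]}$ and therefore $\clockN \in \past{(\sem{\guardG[j]} \cap \invReset{\kindK[j]}{\resetT[j]})} \subseteq \kindK$ as computed by \nrule{[T-$\sumExt$]}. The internal-choice case is dual: membership of $\clockN$ in $\bigcup_i \past{\sem{\guardG[i]}}$ follows from $\clockN \in \rdy{\tsbP\sigma}$, while if $\clockN$ belonged to some error set $\past{(\sem{\guardG[i]} \setminus \invReset{\kindK[i]}{\resetT[i]})}$ one would reach a $\clockN+\delta \in \sem{\guardG[i]}$ with $\reset{(\clockN+\delta)}{\resetT[i]} \notin \kindK[i]$, contradicting the compliant continuation via the induction hypothesis; hence $\clockN$ lies in the set difference assigned by \nrule{[T-$\sumInt$]}.

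The recursion case is the main obstacle, and the reason the statement is phrased with a quantified $\Gamma$. Let $\tsbP = \rec {\tsbX}{\tsbPi}$ and let $\tsbR = \rec {\tsbX}{(\tsbPi\sigma)}$ be the (closed) result of the substitution; put $\kindK[R] = \hcn[\tsbR]$. The plan is to exhibit $\kindK[R]$ as a post-fixed point of the kinding functional, so that it occurs among the sets whose union \nrule{[T-Rec]} takes as the kind of $\tsbP$, yielding $\kindK[R] \subseteq \kindK$. To this end consider $\Gamma' = \Gamma,\tsbX:\kindK[R]$, which is again of the prescribed shape since $\tsbR$ is closed, and let $\kindK[1]$ be the unique kind with $\Gamma' \vdash \tsbPi:\kindK[1]$ (existence and uniqueness by \Cref{lem:every-tst-kindable}). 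Applying the induction hypothesis to the strict subterm $\tsbPi$ under the environment $\Gamma'$ and the extended substitution $\sigma' = \sigma \cup \setenum{\bind{\tsbX}{\tsbR}}$ gives $\hcn[\tsbPi\sigma'] \subseteq \kindK[1]$. Finally, because $\tsbX \notin \vec{\tsbX}$ and $\tsbR$ is closed, $\tsbPi\sigma' = (\tsbPi\sigma)\setenum{\bind{\tsbX}{\tsbR}}$, which is exactly the one-step unfolding of $\tsbR$; since TSTs, compliance, and hence $\hcn$ are taken up to unfolding, $\hcn[\tsbPi\sigma'] = \kindK[R]$. Thus $\kindK[R] \subseteq \kindK[1]$, so $\kindK[R]$ is a post-fixed point and $\kindK[R] \subseteq \kindK$, completing the induction. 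The delicate points to get right are the bookkeeping of the capture-avoiding composition $\sigma' = \sigma \cup \setenum{\bind{\tsbX}{\tsbR}}$ and the identification of $\hcn[\tsbPi\sigma']$ with $\kindK[R]$ through the up-to-unfolding convention.
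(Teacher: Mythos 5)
Your proof is correct and follows essentially the same route as the paper's: identical base cases, choice cases argued via the coinductive characterisation of compliance plus the induction hypothesis on continuations (the paper phrases these by contradiction, you argue directly — same content), and the recursion case by exhibiting $\hcn$ of the substituted recursive term as a post-fixed point of the kinding functional, using invariance of $\hcn$ under unfolding, exactly as in the paper. The only harmless deviations are that you use plain structural induction with the environment universally quantified where the paper uses a well-founded induction on recursion-nesting level combined with the strict-subterm order (legitimate, since in both proofs the induction hypothesis is only ever applied to immediate subterms), and that you assume $\tsbX \notin \vec{\tsbX}$ (justifiable by $\alpha$-renaming, given the paper's capture-avoiding convention) where the paper treats the shadowing case by an explicit case split on $\vec{\tsbXi},\vec{\tsbPi}$.
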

\begin{proof}
  We start with an auxiliary definition.
  The recursion nesting level ($\recLevel$) for a TST is inductively defined as follows:
  \iftoggle{lmcs}{%
    \[
    \begin{array}{ll}
      \recLevel[\success] \mmdef \recLevel[\tsbX] \mmdef 0
      &
      \recLevel[{\TSumExt[i \in I]{\atomIn[i]{a}}{\guardG[i],\resetT[i]}{\tsbP[i]}}] 
      \mmdef max~\setenum{\recLevel[{\tsbP[i]}]}_{i \in I}
      \\
      \recLevel[\rec{\tsbX}{\tsbP}] \mmdef 1 + \recLevel[\tsbP]
      &
      \recLevel[{\TSumInt[i \in I]{\atomOut[i]{a}}{\guardG[i],\resetT[i]}{\tsbP[i]}}] 
      \mmdef max~\setenum{\recLevel[{\tsbP[i]}]}_{i \in I}
    \end{array}
    \]
  }
  {%
    \[
    \begin{array}{rcl}
      \recLevel[\success] & \mmdef & 0\\
      \recLevel[{\TSumExt[i \in I]{\atomIn[i]{a}}{\guardG[i],\resetT[i]}{\tsbP[i]}}] & \mmdef 
      & max~\setenum{\recLevel[{\tsbP[i]}]}_{i \in I}\\
      \recLevel[{\TSumInt[i \in I]{\atomOut[i]{a}}{\guardG[i],\resetT[i]}{\tsbP[i]}}] & \mmdef
      & max~\setenum{\recLevel[{\tsbP[i]}]}_{i \in I}\\
      \recLevel[\tsbX] & \mmdef & 0 \\
      \recLevel[\rec{\tsbX}{\tsbP}] & \mmdef & 1 + \recLevel[\tsbP]
    \end{array}
    \]
  }
  \newcommand{\altsqleq}{\prec} %to be moved in macro.tex
  We then define the relation $\altsqleq$ as:
  \[
  \tsbP\;\altsqleq\;\tsbQ \text{ whenever } 
  \recLevel[\tsbP] < \recLevel[\tsbQ] \lor
  (\recLevel[\tsbP] = \recLevel[\tsbQ] \land \tsbP \text{ is a strict subterm of } \tsbQ)
  \]
  It is trivial to check that $\altsqleq$ is a well-founded relation 
  (exploiting the fact that the strict subterm relation is well-founded as well).
  We then proceed by well-founded induction on~$\altsqleq$. 
  We have the following cases,
  according to the form of $\tsbP$:
  \begin{itemize}
  \item $\tsbP = \success$. Since $\kindK = \Val$ (by kinding rule \nrule{[T-$\success$]}),
    the thesis follows trivially.
  \item $\tsbP = \tsbX[i]$, for some $i \in \setenum{1,\dots,m}$.
    $\kindK = \Gamma(\tsbX[i]) = \hcn[{\tsbP[i]}]
    = \hcn[{\tsbX[i]\setenum{\bind{\vec{\tsbX}}{\vec{\tsbP}}}}]$.
  \item $\tsbP = \TSumExt[i \in I]{\atomIn[i]{a}}{\guardG[i],\resetT[i]}{\tsbP[i]}$.
    \[
    \irule{\Gamma \vdash \tsbP[i]: \kindK[i] \qquad \text{for }i \in I}
    {\Gamma \vdash \TSumExt[i \in I]{\atomIn[i]{a}}{\guardG[i],\resetT[i]}{\tsbP[i]}: 
      \bigcup_{i \in I} \past{\big(\sem {\guardG[i]} \cap
        \invReset{\kindK[i]}{\resetT[i]}\big)} = \kindK} 
    \; \nrule{[T-$\sumExt$]}
    \]
    Since, for all $i \in I$ it holds $\tsbP[i]\;\altsqleq\;\tsbP$, by the induction hypothesis:
    \begin{equation}\label{lem:dual-complete:aux1:proof:eq1}
      \hcn[{\tsbP[i]\setenum{\bind{\vec{\tsbX}}{\vec{\tsbP}}}}]\subseteq \kindK[i]
    \end{equation}
    Now suppose, by contradiction, $\hcn[\tsbP\setenum{\bind{\vec{\tsbX}}{\vec{\tsbP}}}]
    \not\subseteq \kindK$. Then there exist $\clockN,\clockE,\tsbQ$ such that 
    $\hcn[\tsbP\setenum{\bind{\vec{\tsbX}}{\vec{\tsbP}}}] \ni \clockN \not\in \kindK$ and
    $(\tsbP\setenum{\bind{\vec{\tsbX}}{\vec{\tsbP}}},\clockN) \compliant (\tsbQ,\clockE)$.
    By~\Cref{def:coind-compliance} we have 
    $\tsbQ = \TSumInt[j \in J]{\atomOut[j]{a}}{\guardG[j],\resetT[j]}{\tsbQ[j]}$, with 
    $\clockE \in \rdy{\tsbQ}$ and
    \[
    \forall \delta,j:~\clockE + \delta \in \sem{\guardG[j]}\implies 
    \]
    \[
    \exists i : \atom[i]{a} = \atom[j]{a}\land \clockN + \delta \in \sem{\guardG[i]} \land
    (\tsbP[i]\setenum{\bind{\vec{\tsbX}}{\vec{\tsbP}}},\reset{(\clockN+\delta)}{\resetR[i]}) 
    \compliant (\tsbQ[j],\reset{(\clockE+\delta)}{\resetR[j]})
    \]
    But then, by \Cref{def:has-compliant-nu} and \Cref{lem:dual-complete:aux1:proof:eq1}:
    \[\reset{(\clockN+\delta)}{\resetR[i]}\in\hcn[{\tsbP[i]\setenum{\bind{\vec{\tsbX}}{\vec{\tsbP}}}}]
    \subseteq \kindK[i]\]
    Since $\kindK$ is past closed (\ie for all $\clockN,\delta$: 
    $\clockN + \delta \in \kindK \implies \clockN \in \kindK$),
    it follows $\nu \in \kindK$, a contradiction.
  \item $\tsbP = \TSumInt[i \in I]{\atomOut[i]{a}}{\guardG[i],\resetT[i]}{\tsbP[i]}$. Similar
    to the external choice case.
  \item $\tsbP = \rec \tsbY \tsbPi$. By rule \nrule{[T-Rec]}:
    \[
    \irule{\Gamma,\tsbY:\kindK[0]
      \vdash \tsbPi : \kindKi[0]}
    {\Gamma \vdash \rec \tsbY \tsbPi:\bigcup{\setcomp{\kindK}
        {\exists \kindKi \supseteq \kindK:\;\Gamma,\tsbY:\kindK 
          \vdash \tsbPi : \kindKi}} = \kindK} 
    \; 
    \]
    To prove $\hcn[\tsbP\setenum{\bind{\vec{\tsbX}}{\vec{\tsbP}}}] \subseteq \kindK$, 
    it is enough to show that, for some $\kindKi \supseteq \hcn[\tsbP\setenum{\bind{\vec{\tsbX}}{\vec{\tsbP}}}]$:
    \[
    \Gamma,\tsbY:\hcn[\tsbP\setenum{\bind{\vec{\tsbX}}{\vec{\tsbP}}}] 
    \vdash \tsbPi: \kindKi
    \]
    Since compliance is preserved by unfolding of recursion (because TSTs are considered up-to $\equiv$ in \Cref{def:tst:semantics}),
    by \Cref{def:has-compliant-nu} it follows that:
    \[\hcn[\tsbP\setenum{\bind{\vec{\tsbX}}{\vec{\tsbP}}}] = 
    \hcn[\tsbPi\setenum{\bind{\vec{\tsbX}\setminus\setenum{\tsbY}}{\vec{\tsbP}}}\setenum{\bind{\tsbY}{\rec{\tsbY}{(\tsbPi\setenum{\bind{\vec{\tsbX}\setminus \setenum{\tsbY}}{\vec{\tsbP}}})}}}]
    = \hcn[\tsbPi\setenum{\bind{\vec{\tsbXi}}{\vec{\tsbPi}}}]\] 
    where $\vec{\tsbXi}$ and $\vec{\tsbPi}$ are defined as follows:
    \[
    \hspace{40pt}
    \vec{\tsbXi} = 
    \begin{cases}
      \vec{\tsbX}\tsbY & \text{if } \tsbY \not\in \vec{\tsbX}\\
      \vec{\tsbX} &\text{otherwise}
    \end{cases}
    \qquad
    \vec{\tsbPi} = 
    \begin{cases}
      (\tsbP[1],\dots,\tsbP[i-1],
      \tsbP\setenum{\bind{\vec{\tsbX}}{\vec{\tsbP}}},\tsbP[i+1],\dots,\tsbP[m]) 
      &\text{if } \tsbY = \tsbX[i]\\
      \vec{\tsbP}(\tsbP\setenum{\bind{\vec{\tsbX}}{\vec{\tsbP}}}) &\text{otherwise}
    \end{cases}
    \]
    Since $\recLevel[\tsbPi] < \recLevel[\tsbP]$, by the induction hypothesis we have:
    \[
    \Gamma,\tsbY:\hcn[\tsbPi\setenum{\bind{\vec{\tsbXi}}{\vec{\tsbPi}}}] 
    \vdash \tsbPi: \kindKi \implies \hcn[\tsbPi\setenum{\bind{\vec{\tsbXi}}{\vec{\tsbPi}}}] \subseteq \kindKi
    \]
    Since $\fv{\tsbP} \subseteq \vec{\tsbX} = \dom{\Gamma}$ it follows that $\fv{\tsbPi} \subseteq \vec{\tsbXi} = \dom{\Gamma,\tsbY:\hcn[\tsbPi\setenum{\bind{\vec{\tsbXi}}{\vec{\tsbPi}}}]}$,
    and hence, by \Cref{lem:every-tst-kindable}, the premise of the induction hypothesis is satisfied, and we can conclude $\hcn[\tsbPi\setenum{\bind{\vec{\tsbXi}}{\vec{\tsbPi}}}] \subseteq \kindKi$.
    \qedhere
  \end{itemize}
\end{proof}

\begin{proofof}[Completeness]{th:dual-complete}
  Suppose $\vdash \tsbP : \kindK$ and $\exists \tsbQ,\clockE .\; (\tsbP,\clockN) \compliant (\tsbQ,\clockE)$. 
  By instantiating~\Cref{lem:dual-complete:aux1} with the empty kinding environment, we obtain
  $\hcn[\tsbP] \subseteq \kindK$. 
  Hence, by \Cref{def:has-compliant-nu} we conclude that $\clockN \in \kindK$. 
  \qed
\end{proofof}

\begin{proofof}{lem:compliance-trans}
  We prove that the relation:
  \[
  \relR 
  \;\; = \;\;
  \setcomp
  {((\tsbP,\clockN),(\tsbQ,\clockE))}
  {\exists \tsbPi,\clockNi:(\tsbP,\clockN)
    \compliant(\tsbPi,\clockNi)\land (\dual{\tsbPi},\clockNi)\compliant(\tsbQ,\clockE)}
  \]
  is a coinductive compliance relation (\Cref{def:coind-compliance}). 
  We proceed by cases on the form of~$\tsbP$:
  \begin{itemize}

  \item $\tsbP = \success$. 
    Trivial.

  \item $\tsbP = \TSumInt{\atomOut[i]{a}}{\guardG[i],\resetT[i]}{\tsbP[i]}$. 
    It must be:
    \[
    \hspace{40pt}
    \tsbPi = \TSumExt{\atomIn[j]{a}}{\guardG[j],\resetT[j]}{\tsbPi[j]}
    \quad
    \text{and} 
    \quad
    \dual{\tsbPi} = \TSumInt{\atomOut[j]{a}}{\guardG[j] \land \invReset{\kindK[j]}{\resetT[j]},
      \resetT[j]}{\dual{\tsbPi[j]}}
    \]
    with $\vdash\tsbP[j]:\kindK[j]$ for all $j \in J$ and, by \Cref{def:coind-compliance},
    for all $\delta,i$ such that $\clockN + \delta \in \sem{\guardG[i]}$,
    there exists $j$ such that $\atom[i]{a} = \atom[j]{a}$,
    $\clockNi + \delta \in \sem{\guardG[j]}$,
    $
    (\tsbP[i],\reset{(\clockN+\delta)}{\resetT[i]}) 
    \compliant
    (\tsbPi[j],\reset{(\clockNi+\delta)}{\resetT[j]})
    $,
    and
    $\clockN + \delta \in \rdy{\tsbP}$.
    Hence,
    $
    \tsbQ = \TSumExt{\atomIn[k]{a}}{\guardG[k],\resetT[k]}{\tsbP[k]}
    $,
    and for all $\delta,j$ such that $\clockNi + \delta \in \sem{\guardG[j]}$,
    there exists $k$ such that $\atom[k]{a} = \atom[j]{a}$,
    $\clockE + \delta \in\sem{\guardG[k]}$,
    $
    (\dual{\tsbPi[j]},\reset{(\clockNi+\delta)}{\resetT[j]})
    \compliant
    (\tsbQ[k],\reset{(\clockE+\delta)}{\resetT[k]})
    $,
    and $\clockNi\in\rdy{\dual{\tsbPi}}$.
    Now, assume $\clockN + \delta \in \guardG[i]$ for some $\delta,i$, and suppose,
    by contradiction, $\clockNi + \delta \not\in \invReset{\kindK[j]}{\resetT[j]}$ for some $j$
    such that $\atom[i]{a} = \atom[j]{a}$. 
    But then we should have
    $(\tsbP[i],\reset{(\clockN+\delta)}{\resetT[i]})
    \compliant(\tsbPi[j],\reset{(\clockNi+\delta)}{\resetT[j]})$ and
    $\reset{(\clockNi+\delta)}{\resetT[j]}\not\in \kindK[j]$ 
    --- contradiction by~\Cref{th:dual-complete}.

  \item $\tsbP = \TSumExt{\atomIn[i]{a}}{\guardG[i],\resetT[i]}{\tsbP[i]}$.
    It must be:
    \[
    \tsbPi = \TSumInt{\atomOut[j]{a}}{\guardG[j],\resetT[j]}{\tsbPi[j]}
    \qquad
    \text{ and } 
    \qquad
    \dual{\tsbPi} = \TSumExt{\atomIn[j]{a}}{\guardG[j],\resetT[j]}{\dual{\tsbP[j]}}
    \]
    and for all $\delta,j$ such that 
    $
    \clockNi + \delta\in\sem{\guardG[j]}
    $,
    there exists $i$ such that $\atom[j]{a} = \atom[i]{a}$,
    $\clockN + \delta\in\sem{\guardG[i]}$,
    $
    (\tsbPi[j],\reset{\clockNi+\delta}{\resetT[j]}) \compliant (\tsbP[i],\reset{\clockN+\delta}{\resetT[j]})
    $,
    and
    $\clockNi\in\rdy{\tsbPi}$.
    Hence,
    $
    \tsbQ = \TSumInt{\atomIn[k]{a}}{\guardG[k],\resetT[k]}{\tsbP[k]}
    $,
    and for all $\delta,k$ such that $\clockE + \delta\in\sem{\guardG[k]}$,
    there exists $j$ such that $\atom[k]{a} = \atom[j]{a}$,
    $\clockNi + \delta\in\sem{\guardG[j]}$,
    $
    (\tsbQ[k],\reset{(\clockE+\delta)}{\resetT[k]}) \compliant (\tsbPi[j],\reset{(\clockN+\delta)}{\resetT[i]})
    $,
    and
    $\clockE\in\rdy{\tsbQ}$.
    The thesis follows by the composition of the above.
    \qed
  \end{itemize}
\end{proofof}

\section{Proofs for~\texorpdfstring{\Cref{sec:comput-dual}}{Section 8}}
\label{sec:proofs-comput-dual}

\begin{thm}[\bf Knaster-Tarski fixed point theorem~\cite{tarski1955}] \label{th:knaster-tarski}
  Let $\mathcal{U} = (U,\leq)$ be a complete lattice, and let $f$ be a monotonic endofunction over $U$.
  Then $(P,\leq)$, where $P$ is the set of fixed points of $f$, is a complete lattice. In particular, we have:
  \[
  \bigsqcup P = \bigsqcup \setcomp{x}{x \leq f(x)} 
  \hspace{50pt}
  \bigsqcap P = \bigsqcap \setcomp{x}{f(x) \leq x} 
  \]
\end{thm}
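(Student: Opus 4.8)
The plan is to prove the two displayed identities first, by establishing the existence of the greatest and least fixed points of $f$, and then to bootstrap these into the full claim that $(P,\leq)$ is a complete lattice.

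First I would treat the greatest fixed point. Set $A = \setcomp{x}{x \leq f(x)}$ and $u = \bigsqcup A$. For every $x \in A$ monotonicity gives $x \leq f(x) \leq f(u)$, so $f(u)$ is an upper bound of $A$ and hence $u \leq f(u)$, i.e.\ $u \in A$. Applying $f$ once more yields $f(u) \leq f(f(u))$, so $f(u) \in A$ and therefore $f(u) \leq u$; together these give $f(u) = u$, so $u \in P$. Since any fixed point $y$ satisfies $y \leq f(y)$ and thus lies in $A$, we get $y \leq u$, so $u$ is the greatest element of $P$ and $\bigsqcup P = u = \bigsqcup A$. The dual argument, working with $\setcomp{x}{f(x) \leq x}$, gives the least fixed point and the identity $\bigsqcap P = \bigsqcap \setcomp{x}{f(x) \leq x}$.

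The hard part will be showing that \emph{every} subset $S \subseteq P$ has a join and a meet \emph{inside} $P$, since the join of $S$ taken in the ambient lattice $U$ need not be a fixed point. The key move is to relativise the fixed-point construction to a suitable interval. Given $S \subseteq P$, let $s = \bigsqcup S$ computed in $U$, and consider the up-set $[s,\top] = \setcomp{x \in U}{s \leq x}$, which is itself a complete lattice. I would then check that $f$ restricts to a monotone endofunction on $[s,\top]$: for $x \geq s$ and any $y \in S$ we have $y = f(y) \leq f(x)$ by monotonicity, so $f(x)$ is an upper bound of $S$ and hence $s \leq f(x)$, i.e.\ $f(x) \in [s,\top]$. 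Applying the least-fixed-point half of the first step to $f$ on $[s,\top]$ produces a fixed point $\ell \in P$ with $\ell \geq s$; this $\ell$ is an upper bound of $S$ in $P$, and if $z \in P$ is any other upper bound of $S$ then $z \geq s$, so $z \in [s,\top]$, whence minimality of $\ell$ among the fixed points of $[s,\top]$ gives $\ell \leq z$. Thus $\ell$ is the join of $S$ in $P$.

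Finally, the dual argument — using the down-set $[\bot,\bigsqcap S]$ together with the greatest-fixed-point half of the first step — supplies the meet of $S$ in $P$. Since arbitrary joins and meets then exist in $P$ (the empty cases recovering the least and greatest fixed points already found), $(P,\leq)$ is a complete lattice, completing the proof. I expect the only genuinely delicate point to be the verification that $f$ maps $[s,\top]$ into itself, which is exactly where completeness of $U$ and monotonicity of $f$ are jointly used.
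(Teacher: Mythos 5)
Your proof is correct: it is the classical Knaster--Tarski argument (the set of post-fixed points yields the greatest fixed point, the dual set yields the least one, and relativising $f$ to the intervals $[\bigsqcup S, \top]$ and $[\bot, \bigsqcap S]$ — which $f$ preserves by exactly the monotonicity computation you give — produces joins and meets of arbitrary sets of fixed points). The paper offers no proof of its own here, since the theorem is imported as a known result from Tarski's 1955 paper, and your argument is essentially the one found there, so there is nothing further to compare.
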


\begin{defi}[\bf Cocontinuous function] \label{def:cocontinuity}
  An endofunction $f$ over a complete lattice $D$ is said to be cocontinuous iff,
  for all sequences $d_0,d_1,\hdots$ of decreasing points in $D$, we have:
  \[
  \textstyle
  f(\bigsqcap_i d_i) \; = \; \bigsqcap_i f(d_i)
  \]
\end{defi}

\begin{proofof}{lem:domain-theory}
  For~\cref{item:finite-monotonic-is-cocontinuous},
  let $d_0,d_1,\hdots$ be a decreasing sequence of elements in $D$. 
  By finiteness of $D$, there must exist some $n$ such that, for all $n'$,
  $d_n = d_{n + n'}$. 
  Clearly, $f(\bigsqcap_i d_i) = f(d_n)$. 
  By monotonicity, the sequence $f(d_0),f(d_1),\hdots$ is decreasing, and its meet
  has to be $f(d_n)$. 

  \smallskip\noindent
  For~\cref{item:gfp-smaller-glb},
  we show, by induction on $n$, that, for all $n \geq 0$,
  $
  \gfp f \leq f^n(\top)
  $.
  The base case is trivial, since $f^0(\top) = \top$.
  For the induction case, suppose that $\gfp f \leq f^n(\top)$. 
  By monotonicity of $f$ we have that:
  $
  \gfp f = f(\gfp f) \leq f(f^n(\top)) = f^{n+1}(\top)
  $.

  \smallskip\noindent
  For~\cref{item:sangiorgi}, see~\cite{Sangiorgi12bi}.
  \qed
\end{proofof}

\begin{proofof}{lem:gamma-monotonicity}
  We define the partial order $\sqsubseteq$ between environments as follows:
  \[
  \Gamma \sqsubseteq \Gamma' \iff \forall \tsbX \in \dom{\Gamma}:\Gamma(\tsbX) \subseteq \Gamma'(\tsbX)
  \]
  and we show that,
  for all $\Gamma,\Gamma'$ such that $\Gamma \sqsubseteq \Gamma':$
  \begin{equation} \label{eq:gamma-monotonicity}
    \Gamma \vdash \tsbP:\kindK \implies \exists \kindKi \supseteq \kindK:~
    \Gamma' \vdash \tsbP:\kindKi
  \end{equation}
  \iftoggle{lmcs}{%
    This can be done by easy induction on the structure of $\tsbP$.
    To prove the main statement,
  }
  {%
  Suppose $\Gamma \sqsubseteq \Gamma'$ and $\Gamma \vdash \tsbP:\kindK$.
  By induction on the structure of $\tsbP$: 
  \begin{itemize}
  \item $\tsbP = \success$: Trivial.
  \item $\tsbP = \TSumExt[i \in I]{\atomIn[i]{a}}{\guardG[i],\resetT[i]}{\tsbP[i]}$:
    By rule $\nrule{[T-+]}$, it must be:
    \[
    \kindK = \bigcup_{i \in I} \past{\big(\sem {\guardG[i]} \cap \invReset{\kindK[i]}{\resetT[i]}\big)}
    \text{, with } \Gamma \vdash \tsbP[i]:\kindK[i]
    \]
    Then, by the induction hypothesis, $\forall i \in I: \exists \kindKi[i] \supseteq \kindK[i]$ such that: 
    \[\Gamma' \vdash \tsbP[i]:\kindKi[i]\]
    The thesis follows by \Cref{lem:gamma-monotonicity:past-invReset}.
  \item $\tsbP = \TSumInt[i \in I]{\atomOut[i]{a}}{\guardG[i],\resetT[i]}{\tsbP[i]}$: Similar to
    the external choice case.
  \item $\tsbP = \tsbX$: Trivial by definition of $\sqsubseteq$ and kinding rule $\nrule{[T-Var]}$.
  \item $\tsbP = \rec {\tsbX} {\tsbPi}$: By rule $\nrule{[T-Rec]}$, it must be, for some $\kindK[0],\kindKi[0]$:
    \begin{equation}\label{eq:gamma-monotonicity-K}
      \irule{\Gamma,\tsbX:\kindK[0] \vdash \tsbPi : \kindKi[0]}
      {\Gamma \vdash \rec \tsbX \tsbPi:\bigcup{\setcomp{\kindK}{\exists \kindKi \supseteq \kindK:~\Gamma,\tsbX:\kindK 
            \vdash \tsbPi : \kindKi}} = \kindK}
    \end{equation}
    By the induction hypothesis, for some $\kindKii[0]$:
    \begin{equation}\label{eq:gamma-monotonicity-K'}
      \irule{\Gamma',\tsbX:\kindK[0] \vdash \tsbPi : \kindKii[0]}
      {\Gamma' \vdash \rec \tsbX \tsbPi:\bigcup{\setcomp{\kindK}{\exists \kindKi \supseteq \kindK:~\Gamma',\tsbX:\kindK 
            \vdash \tsbPi : \kindKi}} = \kindKi}
    \end{equation}
    It remains to show $\kindK \subseteq \kindKi$. To see this, take some $\clockN \in \kindK$. By \cref{eq:gamma-monotonicity-K}, 
    there exist some $\kindK[0], \kindKi[0]$, with $\clockN \in \kindK[0] \subseteq \kindKi[0]$ such that:
    \[
    \Gamma:\tsbX,\kindK[0] 
    \vdash \tsbPi : \kindKi[0]
    \]
    By the induction hypothesis, there is some $\kindKi[1] \supseteq \kindKi[0]$:
    \[
    \Gamma',\tsbX:\kindK[0] 
    \vdash \tsbPi : \kindKi[1]
    \]
    Then, by \cref{eq:gamma-monotonicity-K'}, $\kindK[0] \subseteq \kindKi$, from which the thesis follows.
  \end{itemize}
  Back to the main statement,
  }
  let $\fv{\tsbP} \subseteq \dom{\Gamma} \cup \setenum{\tsbX}$, and suppose $\kindK \subseteq \kindKi$.
  We have to show $F(\kindK) \subseteq F(\kindKi)$. 
  Expanding~\eqref{def:kinding-functional}: 
  $\Gamma,\tsbX:\kindK \vdash \tsbP:\kindK[0]$ and $\Gamma,\tsbX:\kindKi \vdash \tsbP:\kindK[1]$, 
  for some $\kindK[0],\kindK[1]$ such that $\kindK[0] \subseteq \kindK[1]$.
  Such $\kindK[0],\kindK[1]$ exist and are unique (\Cref{lem:every-tst-kindable}); 
  $\kindK[0] \subseteq \kindK[1]$ follows by~\eqref{eq:gamma-monotonicity}.
  \qed
\end{proofof}

\begin{proofof}{lem:judments-equivalence}
  By structural induction on $\tsbP$. 
  The only non-trivial case is when $\tsbP = \rec{\tsbX}{\tsbPi}$.

  \smallskip\noindent
  For the $(\Rightarrow)$ direction, suppose that:
  \[
  \irule{\Gamma,\tsbX:\kindKii \vdash \tsbPi : \kindKiii}
  {\Gamma \vdash \rec \tsbX \tsbPi:\bigcup{\setcomp{\kindK[0]}{\exists \kindK[1] \supseteq \kindK[0]:~\Gamma,\tsbX:\kindK[0] 
        \vdash \tsbPi : \kindK[1]}} = \kindK}
  \; \nrule{[T-Rec]}
  \]
  and recall that, by \Cref{th:knaster-tarski} and \Cref{lem:gamma-monotonicity}, $\kindK = \gfp{F_{\Gamma,\tsbX,\tsbP}}$.
  By the induction hypothesis, for all zones $\zoneZ$: 
  \begin{equation}\label{eq:judments-equivalence-aux1}
    F_{\Gamma,\tsbX,\tsbPi}(\zoneZ) = \hat{F}_{\Gamma,\tsbX,\tsbPi}(\zoneZ)
  \end{equation}
  Hence, \Cref{lem:gamma-monotonicity} implies that $\hat{F}_{\Gamma,\tsbX,\tsbPi}$ is monotonic 
  (on the lattice of zones). 
  Since this lattice is finite, then
  $\hat{F}_{\Gamma,\tsbX,\tsbPi}$ is cocontinuous 
  (\cref{item:finite-monotonic-is-cocontinuous} of~\Cref{lem:domain-theory}).
  By~\eqref{eq:judments-equivalence-aux1} and~\Cref{lem:every-tst-kindable},
  $\hat{F}_{\Gamma,\tsbX,\tsbP}^i(\kindK[0])$ is defined for all $\kindK[0]$ and $i$. 
  Hence, by rule~\nrule{[I-Rec]} 
  and~\cref{item:sangiorgi} of~\Cref{lem:domain-theory}:
  \begin{equation} \label{eq:judments-equivalence-aux2}
    \textstyle
    \Gamma \vdash_I \rec \tsbX \tsbPi:\bigsqcap_{i \geq 0} \hat{F}_{\Gamma,\tsbX,\tsbPi}^i (\Val) 
    \; = \; 
    \hat{\kindK}
    \; = \;
    \gfp{\hat{F}_{\Gamma,\tsbX,\tsbPi}}
  \end{equation}
  To conclude that $\kindK = \hat{\kindK}$ it is enough to show that $\gfp{F_{\Gamma,\tsbX,\tsbPi}} = \hat{\kindK}$.
  By~\eqref{eq:judments-equivalence-aux1} and~\eqref{eq:judments-equivalence-aux2} 
  it follows that $\hat{\kindK}$ is a fixed point of $F_{\Gamma,\tsbX,\tsbPi}$,
  and so by definition of gfp, $\hat{\kindK} \subseteq \gfp{F_{\Gamma,\tsbX,\tsbPi}}$.
  By~\eqref{eq:judments-equivalence-aux1} 
  and~\cref{item:gfp-smaller-glb} of~\Cref{lem:domain-theory}
  we conclude that $\gfp{F_{\Gamma,\tsbX,\tsbPi}} \subseteq \hat{\kindK}$.

  \smallskip\noindent
  For the $(\Leftarrow)$ direction, suppose that by rule~\nrule{[I-Rec]} we have:
  \[
  \textstyle
  \Gamma \vdash_I \rec \tsbX \tsbPi:\bigsqcap_{i \geq 0} \hat{F}_{\Gamma,\tsbX,\tsbPi}^i (\Val)
  \]
  Since $\bigsqcap_{i \geq 0} \hat{F}_{\Gamma,\tsbX,\tsbPi}^i (\Val)$ is defined, 
  the induction hypothesis gives
  $\Gamma, \tsbX:\Val \vdash \tsbPi : F_{\Gamma,\tsbX,\tsbPi}(\Val)$,
  hence the premise of rule \nrule{[T-Rec]} is satisfied. 
  We can conclude with the same argument as in the previous case. 
\end{proofof}

\section{Proofs for~\Cref{sec:tst-to-ta}} 
\label{sec:proofs-tst-to-ta}

\begin{defi}[\bf \denf transformation]\label{def:mapping:denf}
  Let $\tsbP$ be a TST  according to~\Cref{def:tst:syntax}. 
  Let $V$ be an infinite set of recursion variables not occurring in $\tsbP$.  
  Then, the \denf of $\tsbP$,
  denoted by $\nf{\tsbP}$, is given by :
  \[
  \begin{array}{lllll}
    (a) & \nf{ \ocircle_{i = 1}^n \labL[i] \setenum{\guardG[i],\resetR[i]}.\tsbP[i]}
    &=& \ (\tsbX[0], \ \bigcup_{i=1}^n D_i \cup \setenum{\tsbX[0] \triangleq \ocircle_{i=1}^{n} \labL[i] \setenum{\guardG[i],\resetR[i]}.\tsbX[i]})  
    \vspace{4pt}\\
    & & & \qquad \quad \text{ for } \ocircle \in \setenum{\oplus,\sum},   \text{ with } \tsbX[0] \in V  \text{ and }\\   
    & & & \qquad \quad (\tsbX[i], D_i) = \nf[V \setminus W_i]{\tsbP[i]}   \text{and } \\
    & & & \qquad \quad  W_i = (\setenum{\tsbX[0]}\  \cup \ \bigcup_{j=1}^{i-1} \rv{D_j})\\
    (b) & \nf{ \rec{\tsbX}{\tsbPi} }  &=& (\tsbX[0], D[\bind{\tsbX}{\tsbX[0]}] )  \quad  \text{where } (\tsbX[0],D) =   \nf{\tsbPi}  \\
    (c) & \nf{ \tsbX } &=& (\tsbX, \emptyset)\\
    (d) & \nf{ \success } &=&  (\tsbX[0],\setenum{\tsbX[0] = \success}) \quad \text{where } \tsbX[0] \in V
  \end{array}
  \] 
  \noindent For short, we indicate the normal form of $\tsbP$, with  $\nf{\tsbP}$.
\end{defi}

\begin{exa}
  Consider the following translations of TSTs, where  $V$ is a set of
  recursion variables not occurring in any $\tsbP[i]$
  (we omit guards and reset sets). %  to focus on the transformation.
  \[
  \begin{array}{rcl}
    \nf{\atomOut{a} \oplus \atomOut{b}} 
    & = &
    (\tsbX[0] , \setenum{\tsbX[0]  \eqdef \atomOut{a}.\tsbX[1] \oplus \atomOut{b}.\tsbX[2],\,
      \tsbX[1]  \eqdef \success,  \,
      \tsbX[2] \eqdef \success})
    \vspace{6pt}
    \\
    \nf{\atomOut{a}.\rec{\tsbX}{( \atomOut{b}.\tsbX \oplus \atomOut{c}) }} 
    & = &
    (\tsbX[0] , \setenum{\tsbX[0]  \eqdef \atomOut{a}.\tsbX[1],\,
      \tsbX[1]  \eqdef \atomOut{b}.\tsbX[1] \oplus \atomOut{c}.\tsbX[2],\,
      \tsbX[2] \eqdef \success})
    \vspace{6pt}
    \\
    \nf{\rec{\tsbX}{\atomOut{a}.\rec{\tsbY}{( \atomOut{b}.\tsbX \oplus \atomOut{c}.\tsbY) }}} 
    & = &
    (\tsbX[0] , \setenum{\tsbX[0]  \eqdef \atomOut{a}.\tsbX[1],\,
      \tsbX[1]  \eqdef \atomOut{b}.\tsbX[0] \oplus \atomOut{c}.\tsbX[1]} )
    \vspace{6pt}
    \\
    \nf{\rec{\tsbX}{( \atomOut{a}.\rec{\tsbX}{\atomOut{b}.\tsbX}\ \oplus\ \atomOut{c}.\tsbX) } } 
    & = &
    (\tsbX[0] , \setenum{\tsbX[0]  \eqdef \atomOut{a}.\tsbX[1] \oplus \atomOut{c}.\tsbX[0],\,
      \tsbX[1]  \eqdef \atomOut{b}.\tsbX[1]})
  \end{array}
  \]
\end{exa}

\begin{lem}
  \label{lem:tst-to-denf:closed}
  If $\tsbP$ is  closed, then $\nf{\tsbP}$ is closed.
\end{lem}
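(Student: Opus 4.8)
The plan is to prove a strengthened invariant by structural induction on $\tsbP$, from which closedness of $\nf{\tsbP}$ follows by specialising to $\fv{\tsbP} = \emptyset$. Concretely, I would show that for every TST $\tsbP$ and every infinite set $V$ of recursion variables disjoint from those occurring in $\tsbP$, writing $(\tsbX[0],D) = \nf[V]{\tsbP}$: (1) $\rv{D} \subseteq V$, so every defined variable is fresh; (2) no variable is defined twice in $D$; (3) $\urv{D} \subseteq \rv{D} \cup \fv{\tsbP}$, so every used-but-undefined variable is free in $\tsbP$; and (4) $\tsbX[0] \in \rv{D}$ unless $\tsbP$ is a recursion variable, in which case $\tsbX[0] = \tsbX \in \fv{\tsbP}$. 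Since a closed TST satisfies $\fv{\tsbP} = \emptyset$ and is not a bare variable, (3) and (4) give $\setenum{\tsbX[0]} \cup \urv{D} \subseteq \rv{D}$, and (2) upgrades ``defined'' to ``defined exactly once'', which is precisely the definition of a closed \denf (\Cref{normalform}).

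The base cases are immediate: in case (d) of~\Cref{def:mapping:denf} we have $D = \setenum{\tsbX[0] \eqdef \success}$ with $\urv{D} = \emptyset$ and $\fv{\success} = \emptyset$, and in case (c) we have $D = \emptyset$, $\tsbX[0] = \tsbX \in \fv{\tsbX}$. The interesting cases are the choice and the recursion. In the choice case (a), the key point is that the fresh-variable discipline $W_i = \setenum{\tsbX[0]} \cup \bigcup_{j<i}\rv{D_j}$ guarantees that the recursively computed systems $D_i = \nf[V\setminus W_i]{\tsbP[i]}$ define pairwise disjoint sets of variables, all distinct from $\tsbX[0]$; combined with the induction hypothesis (2) for each branch this yields uniqueness of definitions for $D = \bigcup_i D_i \cup \setenum{\tsbX[0] \eqdef \ocircle_i \labL[i]\setenum{\guardG[i],\resetR[i]}.\tsbX[i]}$. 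For (3), the only new uses are the heads $\tsbX[i]$, each of which is either in $\rv{D_i}$ or, by hypothesis (4), free in $\tsbP[i]$; together with $\urv{D_i} \subseteq \rv{D_i} \cup \fv{\tsbP[i]}$ and $\fv{\tsbP} = \bigcup_i \fv{\tsbP[i]}$ this gives $\urv{D} \subseteq \rv{D} \cup \fv{\tsbP}$. Freshness (1) and the fact that $\tsbX[0]$ heads the new equation give the remaining clauses.

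The recursion case (b) is where I expect the main obstacle, because it involves the substitution $D[\bind{\tsbX}{\tsbX[0]}]$. Here guardedness of recursion (\Cref{def:tst:syntax}) is essential: the body $\tsbPi$ of $\rec{\tsbX}{\tsbPi}$ can be neither a bare variable nor a recursion, so $\nf{\tsbPi} = (\tsbX[0],D)$ falls under case (a) or (d), whence $\tsbX[0] \in \rv{D}$ by hypothesis (4). Since $\tsbX$ occurs in $\tsbP$ it lies outside $V$, so $\tsbX \notin \rv{D}$ by (1), and the substitution leaves the defining heads---and hence $\rv{D}$ and clauses (1), (2)---unchanged, while replacing every use of $\tsbX$ by a use of the defined variable $\tsbX[0]$. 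Using $\fv{\rec{\tsbX}{\tsbPi}} = \fv{\tsbPi} \setminus \setenum{\tsbX}$, I would then check that $\urv{D[\bind{\tsbX}{\tsbX[0]}]} \subseteq (\urv{D}\setminus\setenum{\tsbX}) \cup \setenum{\tsbX[0]} \subseteq \rv{D} \cup \fv{\rec{\tsbX}{\tsbPi}}$, establishing (3), with (4) immediate since the head $\tsbX[0]$ is unchanged and defined. This completes the induction, and specialising to the closed case proves the lemma.
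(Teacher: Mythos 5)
Your proof is correct and takes essentially the same route as the paper: the paper's own proof is only a sketch making exactly your two observations, namely that all newly introduced variables are declared by construction, and that the variables already occurring in $\tsbP$ are bound by some $\recsym$ (since $\tsbP$ is closed) and hence get renamed away by the substitution in case (b) of~\Cref{def:mapping:denf}. Your strengthened induction invariant (freshness, uniqueness of definitions, used variables covered by defined or free ones, and definedness of the head) is a rigorous elaboration of that same argument, with the guardedness of recursion correctly identified as the reason case (b) can rely on the head variable being defined.
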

\begin{proof}\emph{(Sketch).}
  A \denf is closed if its used recursion variables plus
  the initial one are defined exactly once. In the transformation, we
  see by construction that all the new variable are first declared.
  The only open issue is caused by variables already presents in the
  TST. However, if a TST is closed, then all its used variables are in the
  scope of some $\recsym$ declaration, and hence they will be correctly
  renamed by rule \ref{def:mapping:denf}$(b)$.
\end{proof}

\begin{proofof}{lem:preserv_compliance}\emph{(Sketch).}
  Rules in $\smove{}$ and $\demove{}$ have a strong correspondence:
  first of all, all the terms in configurations but one are the same,
  secondly every move but one in one system corresponds to one move in
  the other.  The only exception concerns rule \nrule{[$\sumInt$]} in
  $\smove{}$ which corresponds to  pair (\nrule{\erule},
  \nrule{\crule}) in $\demove{}$, and vice-versa. 
  Hence, the proof is done proceeding by absurd: if only one of the
  systems were deadlock and the other not, the other could still move
  and since rules and configurations are the same, also could the
  first one, leading to a contradiction.
\end{proofof}

\begin{lem}\label{lem:mapping:inv}
  Let $(\tsbX,D)$ be a \denf, and 
  let $\autA = \sem{D}^{\tsbX} = (\Loc,\UrgLoc,\locInit, \Edg, \Inv)$. 
  Then:
  \[
  \Inv(\locL) = 
  \begin{cases}
    \rdy{\tsbP} & \text{if } \locL = \tsbTauY, \text{ for some }  \tsbY \text{ and } \tsbX \eqdef \tsbP \in D  \\
    true    & otherwise
  \end{cases}
  \tag{$\forall \locL \in \Loc$}
  \]    
\end{lem}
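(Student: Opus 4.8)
The plan is to compute the invariant of each component automaton $\sem{d}$, for $d \in D$, in isolation, and then read off the invariant of $\autA = \trn{(\tsbX,D)} = \sumAut{d \in D}{\tsbX}{\sem{d}}$ from the conjunction clause of \Cref{def:aut_composition}. First I would fix a defining equation $\tsbY \eqdef \tsbP \in D$ and analyse the invariant function of $\sem{\tsbY \eqdef \tsbP}$ by cases on the shape of $\tsbP$, unfolding the patterns of \Cref{idle_pattern}, \Cref{pfx_pattern} and \Cref{br_pattern} as they are instantiated in \Cref{mapping}. If $\tsbP = \success$, then $\sem{\tsbY \eqdef \tsbP} = \idle(\tsbY)$ and its only location $\tsbY$ carries invariant $\guardTrue$. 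If $\tsbP$ is an external choice, then $\sem{\tsbY \eqdef \tsbP} = \br(\tsbY, \rdy{\tsbP}, \cdots)$ with $\rdy{\tsbP} = \Val$, so by the branch pattern the location $\tsbY$ receives invariant $\Val$ (that is, $\guardTrue$) and each continuation location $\tsbX[i]$, introduced by an $\idle$, again carries $\guardTrue$. If $\tsbP$ is an internal choice, unfolding the nested combination $\pfx(\tsbY, \tau, \emptyset, \br(\tsbTauY, \rdy{\tsbP}, \cdots))$ yields the location $\tsbY$ (invariant $\guardTrue$, from the outer prefix), the location $\tsbTauY$ (invariant $\rdy{\tsbP}$, assigned by the inner branch pattern), each committed location $\todo[\ldots]{\tsbX[i]}$ (invariant $\guardTrue$, from the inner prefix) and each $\tsbX[i]$ (invariant $\guardTrue$). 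Hence in any single component the only location carrying a non-trivial invariant is $\tsbTauY$, and only when $\tsbP$ is an internal choice, where that invariant equals $\rdy{\tsbP}$.

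Next I would combine these component invariants through \Cref{def:aut_composition}, by which $\Inv(\locL) = \bigwedge_{d} \Inv[d](\locL)$, the meet ranging over those $d \in D$ whose location set contains $\locL$. The decisive structural fact is that location names inhabit $\states$ (\Cref{def:states}), and that a $\tau$-location $\tsbTauY$ is produced \emph{only} by the encoding of its home equation $\tsbY \eqdef \tsbP$: any sibling equation that merely uses $\tsbY$ contributes, through an $\idle$, just the bare location $\tsbY$, never $\tsbTauY$. Since the \denf is closed, $\tsbY$ is defined exactly once, so $\tsbTauY$ belongs to a single component automaton and the conjunction at $\tsbTauY$ collapses to the single value $\rdy{\tsbP}$. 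At every other location the value contributed by each component is $\guardTrue$, so the conjunction is $\guardTrue$; this settles the plain variable locations $\tsbY$ (which may recur across components as both a call and a definition, but always with trivial invariant) and the committed locations $\todo[\ldots]{\tsbX[i]}$ (whose invariant is $\guardTrue$ in every component containing them, so their multiplicity is immaterial).

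The pattern unfolding is routine; I expect the only delicate point to be the bookkeeping about shared locations when forming $\sumAut{}{}{}$. Concretely, one must verify that no second defining equation can contribute a competing invariant to a $\tau$-location, which amounts to checking that the encoding assigns $\tau$-locations solely within the equation defining the corresponding variable as an internal choice, while every reference to that variable elsewhere introduces only the bare location with invariant $\guardTrue$. Once this is established, matching the union's conjunction against the claimed two-way case split is immediate.
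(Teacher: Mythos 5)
Your proposal is correct and takes essentially the same approach as the paper's proof: unfold the encoding patterns to see that the only non-trivial invariant any component automaton assigns is $\rdy{\tsbP}$ on the $\tau$-location of an internal-choice equation, then use the conjunction in the union operator together with closedness of the \denf (each variable defined exactly once, and mere uses of a variable contributing only the bare location) to collapse the conjunction to $\rdy{\tsbP}$ there and to $\guardTrue$ everywhere else. Your explicit case analysis, including the remark that an external choice's initial location receives $\rdy{\tsbP}=\Val$, i.e.\ semantically $\guardTrue$, merely spells out what the paper compresses into a single appeal to the encoding definition.
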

\begin{proof} % (\emph{Sketch})
  By~\Cref{def:aut_composition}, the invariant of $\autA$ is
  given by $\Inv(\locL) = \bigwedge_i \Inv[i] (\locL)$ for all~$\locL$, where each
  $\Inv[i]$ is the invariant of the encoding of some defining equation in $D$. %
  By~\Cref{mapping}, the only location with invariant other than $\guardTrue$ has the form $\tsbTauY$,
  which can only be obtained by encoding $\tsbY \eqdef \tsbP$.
  Since $(\tsbX,D)$ is closed, there is exactly one defining equation for $\tsbY$. 
  Hence, $\tsbTauY$ occurs only in one \TA (say, in $\autA[j]$), and so
  $\Inv(\locL) = \bigwedge_i \Inv[i] (\locL) = \Inv[j] (\locL) = \rdy{\tsbP}$.
\end{proof}

\iftoggle{lmcs}{}{%
In \Cref{bis} we recall the definition of strong bisimulation. 

\begin{defi}[\bf Strong bisimulation] \label{bis} 
  A binary relation  $\mathcal{R}$ between states of an LTS is a  bisimulation if whenever 
  whenever $s_1 \mathcal{R} s_2$, and $\alpha$
  is an action:
  \begin{enumerate}
  \item if $s_1  \xrightarrow{\alpha} s_1'$ then there
    is a transition  $s_2 \xrightarrow{\alpha} s_2'$ such that  $s_1'
    \mathcal{R} s_2'$
  \item if $s_2  \xrightarrow{\alpha} s_2'$ then there
    is a transition  $s_1 \xrightarrow{\alpha} s_1'$ such that  $s_1'
    \mathcal{R} s_2'$
  \end{enumerate}
  Two states $s_1$ and $s_2$ are  bisimilar, written $s_1 \sim s_2$,
  iff there is a  bisimulation that relates them. Henceforth the
  relation $\sim$ will be referred to strong bisimulation equivalence or strong 
  bisimilarity.
\end{defi}
}

\begin{proofof}{lem:bsim}
  Let $(\tsbX,D')$ and $(\tsbY,D'')$ as in the statement, and
  let $\trn{(\tsbX,D')}  = (\Loc[1],\UrgLoc[1],\locInit[1], \Edg[1], \Inv[1])$ 
  and $\trn{(\tsbY,D'')} = (\Loc[2],\UrgLoc[2],\locInit[2], \Edg[2], \Inv[2])$.
  We show that:
  \begin{equation}\label{lem:bsim:rel}
    \relR = \setcomp{((x,\clockN)|(y,\clockE), \quad (x,y,\clockN \sqcup \clockE))  }
    {x,y,z \in \states \text{ and }  \clockN \in \sem{\Inv[1](x)} \text{ and }  \clockE \in \sem{\Inv[2](y)} }
  \end{equation}
  is a bisimulation.
  We denote with $\ck{D}$ the set of clocks used in $D$.
  First, we show that every possible move from
  $(x,\clockN)|(y,\clockE)$ in $\demove{}$ is matched by
  a move of $(x,y,\clockN \cup \clockE)$ in $\umove{}$, and
  that the resulting states are related by~$\relR$.
  We have the following cases, according to the rule used to move:
  \begin{description} 
  \item[\nrule{\serule}]
    \[
    \irule{(x,\clockN) \demove{\; \tau \; }
      (x',\clockN)} {(x,\clockN) \mid (y,\clockE) \demove {\; \tau
        \;}(x',\clockN) \mid (y,\clockE)} 
    \]
    According to this rule, we
    have two sub cases in which the premise can fire a $\tau$ move:
    \begin{itemize} 
    \item \nrule{\erule}. 
      We have that
      $(x,\clockN) \demove{ \tau }(\tsbTauX,\clockN)$
      with $x = \tsbX$ and $\tsbX \eqdef \tsbP \in D$ with $\tsbP =
      \oplus \ldots $ and $\clockN \in \rdy{\tsbP}$.
      By~\Cref{mapping}, the mapping of an internal choice
      is
      $\sem{ \tsbX \triangleq \tsbP} = \pfx(\tsbX, \tau, \emptyset,
      \br(\tsbTauX , \rdy{\tsbP}, \setenum{(\tau, \guardG[i],
        \emptyset, \autA[i])}_i)$
      for some $\autA[i]$ and $\guardG[i]$.  
      Hence, by definition of the \pfx\ pattern, there exists an edge in
      $\Edg[1]$ such as
      $(\tsbX, \tau, \guardTrue, \emptyset, \tsbTauX)$.  
      By definition of the \br\ pattern and by~\Cref{lem:mapping:inv},
      $\Inv[1]({\tsbTauX}) = \rdy{\tsbP}$.
      In our case rule [TA1] of~\Cref{net_sem} states that:
      \[
      \irule{(\tsbX, \tau, \guardTrue, \emptyset, \tsbTauX) \in
        \Edg[1] \qquad \clockN \sqcup \clockE \in \sem{\guardTrue}
        \qquad \reset{(\clockN \sqcup \clockE)}{\emptyset} \in
        \sem{\Inv[1](\tsbTauX) \land \Inv[2](y)} } { (\tsbX,
        y,\clockN \sqcup \clockE) \umove{\quad \tau \quad } (\tau
        \tsbX , y, \clockN \sqcup \clockE) }
      \]
      We must show that $\clockN \sqcup \clockE \in \sem{\guardTrue}$
      and $\clockN \sqcup \clockE \in \sem{\Inv[1](x) \land
        \Inv[2](y)}$.
      The former holds trivially. For the second, according
      to~\Cref{lem:mapping:inv} we have
      $\Inv[1]({\tsbTauX}) = \rdy{\tsbP}$; by premises in
      \nrule{\serule}\ we have $\clockN \in \rdy{\tsbP}$ and,
      since $\ck{D'} \cap \ck{D''} = \emptyset$, we have
      $\clockN \sqcup \clockE \in \rdy{\tsbP}$.
      $\clockN \sqcup \clockE \in \sem{\Inv[2](y)}$ holds by
      hypothesis in the definition of $\relR$ in~\Cref{lem:bsim:rel}.
      Hence $ { (\tsbX, y,\clockN \sqcup \clockE) \umove{ \tau } (\tau
        \tsbX , y, \clockN \sqcup \clockE) }$ and the resulting states
      belong to $\relR$.

    \item[\nrule{\crule}]
      We have that
      $(x,\clockN)|(y,\clockE) \demove{\tau}
      (x',\clockN)|(y,\clockE) $ with $x = \tsbTauX$ and $\tsbX
      \eqdef \TsumI{\atomOut{a}}{\guardG,\resetR}{\tsbY} \oplus \tsbPi$
      and $\nu \in \sem{\guardG}$ and $x' =
      \todo[\TsumI{\atomOut{a}}{\guardG,\resetR}{}]{\tsbY}$.
      By~\Cref{mapping}, the mapping of an internal choice
      is 
      \[ \sem{ \tsbX \triangleq \tsbP} = \pfx(\tsbX, \tau,
      \emptyset,  \br(\tsbTauX , \rdy{\tsbP}, \setenum{(\tau, \guardG, \emptyset,  \autA)} \cup S)
      \]
      for some set $S$, and with 
      $\autA =  \pfx( \todo[\atomOut{a} \setenum{\guardG, \resetR}]{\tsbY}, \atomOut{a}, \resetR, \idle(\tsbY))$. 
      By definition of the \br\ pattern
      and~\Cref{lem:mapping:inv}, there exists an edge in $\Edg[1]$
      such as $(\tsbTauX, \tau, \guardG, \emptyset, x')$ with
      $\Inv[1](x') = \guardTrue$.
      In our case rule [TA1] of~\Cref{net_sem} states that:
      \[
      \irule{(\tsbTauX, \tau, \guardG, \emptyset, x') \in \Edg[1]
        \qquad \clockN \sqcup \clockE \in \sem{\guardG} \qquad
        \reset{(\clockN \sqcup \clockE)}{\emptyset} \in \sem{\Inv[1](x')
          \land \Inv[2](y)} } { (\tsbTauX, y,\clockN \sqcup \clockE)
        \umove{\quad \tau \quad } (x', y, \clockN \sqcup \clockE) }
      \]
      We must show that $\clockN \sqcup \clockE \in \sem{\guardG}$
      and $\clockN \sqcup \clockE \in \sem{\Inv[1](x') \land
        \Inv[2](y)}$.
      The former holds by hypothesis since  $\nu \in \sem{\guardG}$. 
      For the second, according to~\Cref{lem:mapping:inv} we have
      $\Inv[1]({\tsbTauX}) = \guardTrue$-- hence
      $\clockN \sqcup \clockE \in \sem{\Inv[1](x')}$ is trivially
      true; and $\clockN \sqcup \clockE \in \sem{\Inv[2](y)}$ holds by
      hypothesis in the definition of $\relR$ in~\Cref{lem:bsim:rel}.
      Hence $ { (\tsbTauX, y,\clockN \sqcup \clockE) \umove{ \tau } (x'
        , y, \clockN \sqcup \clockE) }$ and the resulting states
      belong to $\relR$. 
    \end{itemize}

  \item \nrule{\sdrule}.
    \[
    \irule 
    {(x,\clockN) \demove{\; \delta \; } (x,\clockNi) 
      \quad (y,\clockE) \demove{\; \delta \; } (y,\clockEi)} 
    {(x,\clockN) \mid (y,\clockE) \demove{\; \delta \;} (x,\clockNi) \mid (y,\clockEi)} 
    \]
    According to this rule, we have several possible combinations of the
    premises to fire a $\delta$ move:
    \begin{itemize}

    \item \nrule{\idrule} applied twice.
      \[
      \irule{ 
        \irule{ \tsbX \eqdef \tsbP \in D \quad \clockN + \delta \in \rdy{\tsbP}} {(\tsbTauX,\; \clockN)\demove{\; \delta \; }(\tsbTauX,\; \clockN+\delta)} \qquad 
        \irule{ \tsbY \eqdef
          \tsbQ \in D \quad \clockN + \delta \in \rdy{\tsbQ}} {(\tau
          \tsbY,\; \clockN)\demove{\; \delta \; }(\tsbTauY,\;
          \clockN+\delta)} } 
      { {(\tsbTauX,\clockN) \mid (\tau
          \tsbY,\clockE) \demove{\; \delta \;} (\tsbTauX,\clockN+\delta)
          \mid (\tsbTauY,\clockE+\delta)} }
      \]
      By~\Cref{mapping} and~\Cref{lem:mapping:inv}, there is
      only a case in which a location name has prefix $\tau$, and in
      that case we have $\Inv[1](\tsbTauX) = \rdy{\tsbP}$ and
      $\tsbTauX \not\in \UrgLoc[1]$. Similarly,
      $\Inv[2](\tsbTauY) = \rdy{\tsbQ}$ and
      $\tsbTauY \not\in \UrgLoc[2]$ .
      In our case rule [TA3] of~\Cref{net_sem} states that:
      \[
      \irule{
        ((\clockN  \sqcup \clockE)+ \delta) \in \sem{\Inv[1](\tsbTauX) \land \Inv[2](\tsbTauY)} \qquad \tsbTauX \not\in \UrgLoc[1] \qquad  \tsbTauY \not\in \UrgLoc[2]
      }
      {(\tsbTauX, \tsbTauY,\clockN \sqcup \clockE) \umove{\quad \delta \quad } (\tsbTauX, \tsbTauY, (\clockN \sqcup \clockE) + \delta)}
      \]

      By \nrule{\idrule}\ rule, we have $\clockN + \delta \in \rdy{\tsbP}$,
      hence, since $\ck{D'} \cap \ck{D''} = \emptyset$ and
      so $(\clockN \sqcup \clockE) + \delta \in \rdy{\tsbP}$.  Similarly, since $\clockE + \delta \in
      \rdy{\tsbQ}$, it follows $(\clockN \sqcup \clockE) + \delta \in
      \rdy{\tsbQ}$. We already noted that $\tsbTauY$ and $\tsbTauX$
      are not urgent, so we conclude $(\tsbTauX, y, \clockN \sqcup
      \clockE) \umove{\delta} (\tsbTauX, y, \clockN + \delta \sqcup
      \clockE + \delta)$. The resulting states belong to $\relR$.
      
      % 2 

    \item \nrule{\idrule}\ and \nrule{\edrule}.
      \[
      \irule{ 
        \irule{  (\tsbX \eqdef   \ldots ) \in D  \lor  (\tsbX \eqdef   \success ) \in D } 
        {( \tsbX,\; \clockN)\demove{\; \delta \; }(\tsbX,\; \clockN+\delta)} \qquad 
        \irule{ \tsbY \eqdef
          \tsbQ \in D \quad \clockN + \delta \in \rdy{\tsbQ}} {(\tau
          \tsbY,\; \clockN)\demove{\; \delta \; }(\tsbTauY,\;
          \clockN+\delta)} } 
      { {(\tsbX,\clockN) \mid (\tau
          \tsbY,\clockE) \demove{\; \delta \;} (\tsbX,\clockN+\delta)
          \mid (\tsbTauY,\clockE+\delta)} }
      \]
      By~\Cref{mapping} and~\Cref{lem:mapping:inv}, there is
      only a case in which a location name has prefix $\tau$, and in
      that case we have $\Inv[2](\tsbTauY) = \rdy{\tsbQ}$ and
      $\tsbTauY \not\in \UrgLoc[2]$ .
      For $\tsbX$ we have two possibilities: either (i) it is an
      internal choice or (i) it is a success term.  
      
      \medskip
      (i) In the first case, according with~\Cref{mapping}
      and~\Cref{lem:mapping:inv}, the mapping for an external choice is:
      $\sem{ \tsbX \triangleq \tsbP} = \br(\tsbX, \rdy{\tsbP},S)$ for
      some set $S$.  Since $\rdy{\tsbP}$ is $\guardTrue$ for external
      choices, then there exists a location $\tsbX$ with invariant
      $\Inv[1](\tsbX) = \rdy{\tsbP} = \guardTrue$ and
      $\tsbTauY \not\in \UrgLoc[1]$ .
      In this case rule [TA3] of~\Cref{net_sem} states that:
      \[
      \irule{
        ((\clockN  \sqcup \clockE)+ \delta) \in \sem{ \guardTrue \land \Inv[2](\tsbTauY)} \qquad \tsbX \not\in \UrgLoc[1] \qquad  \tsbTauY \not\in \UrgLoc[2]
      }
      {( \tsbX, \tsbTauY,\clockN \sqcup \clockE) \umove{\quad \delta \quad } (\tsbX, \tsbTauY, (\clockN \sqcup \clockE) + \delta)}
      \]

      By \nrule{\idrule}\ rule, we have  $\clockE +
      \delta \in \rdy{\tsbQ}$, it follows $(\clockN \sqcup \clockE) +
      \delta \in \rdy{\tsbQ}$. We already noted  that $\tsbTauY$ and
      $ \tsbX$ are not urgent, so we conclude  $( \tsbX, y, \clockN \sqcup
      \clockE) \umove{\delta} ( \tsbX, y, \clockN + \delta \sqcup
      \clockE + \delta)$. The resulting states belong to $\relR$.
      
      % (ii)
      In the second case, according with~\Cref{mapping}, the mapping
      for the success termination is: $\sem{ \tsbX \triangleq \success} =
      \idle(\tsbX)$.  
      By definition of the \idle\ pattern
      and~\Cref{lem:mapping:inv}, the location $\tsbX$ is not urgent,
      with $\Inv[1](\tsbX) = \guardTrue$.  Again
      $\tsbTauY \not\in \UrgLoc[1]$ .
      In this case rule [TA3] of~\Cref{net_sem} states that:
      \[
      \irule{
        ((\clockN  \sqcup \clockE)+ \delta) \in \sem{ \guardTrue \land \Inv[2](\tsbTauY)} \qquad \tsbX \not\in \UrgLoc[1] \qquad  \tsbTauY \not\in \UrgLoc[2]
      }
      {( \tsbX, \tsbTauY,\clockN \sqcup \clockE) \umove{\quad \delta \quad } (\tsbX, \tsbTauY, (\clockN \sqcup \clockE) + \delta)}
      \]

      By \nrule{idrule} rule, we have  $\clockE +
      \delta \in \rdy{\tsbQ}$, it follows $(\clockN \sqcup \clockE) +
      \delta \in \rdy{\tsbQ}$. We already noted  that $\tsbTauY$ and
      $ \tsbX$ are not urgent, so we conclude  $( \tsbX, y, \clockN \sqcup
      \clockE) \umove{\delta} ( \tsbX, y, \clockN + \delta \sqcup
      \clockE + \delta)$. The resulting states belong to $\relR$.

    \item \nrule{\edrule}\ applied twice.  Similar to  previous cases.

    \end{itemize}
    \vspace{10pt}

  \item[\nrule{\strule}]
    \[
    \irule
    {(x,\clockN) \demove{\; \atomOut{a} \; } (x',\clockNi) \quad 
      (y,\clockE) \demove{\; \atomIn{a} \; } (y',\clockEi)}
    {(x,\clockN) \mid (y,\clockE) \demove{\; \atom{a} \;} 
      (x',\clockNi) \mid (y',\clockEi)} 
    \] 
    According to this rule, we can synchronize on $ \atom{a}$ only if  
    \[
    \irule{
      ({\todo[\TsumI{\atomOut{a}}{\guardG,\resetR}{}]{\tsbXi}}, \; \clockN) \;\demove{\atomOut{a}}\; (\tsbXi,\; \reset{\clockN}{\resetR})  \quad
      \irule{(\tsbY \eqdef \TsumE{\atomIn{a}}{\guardF,\resetT}{\tsbYi} + \tsbQ) \in D \quad \clockE \in \sem{\guardF}  }
      {(\tsbY, \; \clockN) \;\demove{\atomIn{a}}\; (\tsbYi, \; \reset{\clockE}{\resetT}) } 
    }
    {(\todo[\TsumI{\atomOut{a}}{\guardG,\resetR}{}]{\tsbXi},\clockN) \mid (\tsbY,\clockE) \demove{\; \atom{a} \;}   (\tsbXi,\reset{\clockN}{\resetR}) \mid (\tsbYi,\reset{\clockE}{\resetT})}
    \]
    Let $x = \todo[\atomOut{a} \setenum{\guardG, \resetR}]{\tsbXi}$.
    By~\Cref{mapping}, the last part of the mapping for an internal choice is  such as
    $\pfx(x, \atomOut{a}, \resetR, \idle(\tsbXi))$.
    By definition of the \pfx\ pattern
    and~\Cref{lem:mapping:inv}, there exists an edge
    $(x, \atomOut{a}, \guardTrue, \resetR, \tsbXi) \in \Edg[1]$ and
    $\Inv[1](\tsbXi) = \guardTrue$.
    By~\Cref{mapping}, the mapping for an external choice
    is:
    $\sem{ \tsbY \triangleq \TsumE{\atomIn{a}}{\guardF,\resetT}{\tsbYi}
      + \tsbQ } = \br(\tsbY, \guardTrue, \setenum{ (\atomIn{a},
      \guardF, \resetT, \idle(\tsbYi))} \cup S)$
    for some $S$.  
    By definition of the \br\ pattern and~\Cref{lem:mapping:inv}, 
    there exists an edge
    $(\tsbY, \atomIn{a}, \guardF, \resetT, \tsbYi)$ and
    $\Inv[1](\tsbYi) = \guardTrue$.
    In this case rule [TA2] of~\Cref{net_sem} states that:
    \[
    \irule{  \begin{array}{c}
        (x, \atomOut{a}, \guardTrue, \resetR, \tsbXi) \in \Edg[1] \quad   (\clockN \sqcup \clockE) \in \sem{\guardTrue  } 
        \\
        (\tsbY, \atomIn{a}, \guardF, \resetT, \tsbYi) \in \Edg[2] \quad  (\clockN \sqcup \clockE) \in \sem{\guardF  } 
      \end{array}
      \quad   \resett{ (\clockN \sqcup \clockE)}{\resetR}{\resetT} \in \sem{\guardTrue \land \guardTrue }
    }
    {(x,\tsbY,\clockN \sqcup \clockE) \umove{\quad \atom{a} \quad } ( \tsbXi,\tsbYi,\resett{(\clockN \sqcup \clockE)}{\resetR}{\resetT})} 
    \]
    Since by hypothesis $ \clockE \in \sem{\guardF}$, since $\ck{D'}
    \cap \ck{D''} = \emptyset$ and we obtain 
    $(\clockE \sqcup \clockN) \in \sem{\guardF}$.
    Hence by rule [TA2] of~\Cref{net_sem} , we have $(x,\tsbY, \clockN \sqcup \clockE)
    \umove{\atom{a}} (\tsbXi, \tsbYi,\resett{\clockN \sqcup \clockE}{\resetR}{\resetT}$. 
    The resulting states, belong to $\relR$.

  \end{description}

\noindent  We now show that every possible move of $(x,y,\clockN \sqcup \clockE)$ in $\umove{}$ is matched by
  a move of $(x,\clockN)|(y,\clockE)$ in $\demove{}$, and that
  the resulting states are related by~$\relR$.
  We have the following cases, according to the rules of~\Cref{net_sem} used to justify the move:
  \begin{itemize}

  \item \nrule{[TA1]}.
    \[ 
    \irule{(x, \tau, \guardG, \resetR, x') \in \Edg[1]
      \qquad (\clockN \sqcup \clockE) \in \sem{\guardG} \qquad
      \reset{(\clockN \sqcup \clockE)}{\resetR} \in \sem{\Inv[1](x')
        \land \Inv[2](y)} } { (x, y,\clockN \sqcup \clockE) \umove{\quad
        \tau \quad } (x', y, \reset{\clockN \sqcup \clockE}{\resetR}) }
    \] 
    By~\Cref{mapping}, there exist two possibilities for
    an edge in network $N$ to display a $\tau$ label, and both derive
    from the mapping of an internal choice.  Let $\tsbP =
    \TsumI{\atomOut{a}}{\guardG,\resetR}{}{\tsbY} \oplus \tsbPi$, then
    $ \sem{ \tsbX \triangleq \tsbP} = \pfx(\tsbX, \tau, \emptyset,
    \br(\tsbTauX , \rdy{\tsbP}, \setenum{(\tau, \guardG, \emptyset,
      \autA)} \cup S))$
    for some set $S$ and \TA $\autA$.  
    By definition of the \pfx\ and \br\ patterns and by~\Cref{lem:mapping:inv}, 
    we have two edges with $\tau$ label in
    $\Edg[1]$: 
    (i) $(\tsbX, \tau, \guardTrue, \emptyset, \tsbTauX)$
    with $\Inv[1]({\tsbTauX}) = \rdy{\tsbP}$; 
    and (ii)
    $(\tsbTauX, \tau, \guardG, \emptyset, \tsbY)$ with
    $\Inv[1](x') = \guardTrue$.  So we have two sub-cases:
    \begin{enumerate}[label=(\roman*)]

    \item Assume that we fire the first edge $(\tsbX, \tau,
      \guardTrue, \emptyset, \tsbTauX)$. Hence $x = \tsbX$, $x' = \tau
      \tsbX$ and $\resetR = \emptyset$.  By hypothesis $\clockN \sqcup
      \clockE \in \sem{\Inv[1](\tsbTauX)} =
      \rdy{\tsbP}$. Since $\ck{D'} \cap
      \ck{D''} = \emptyset$, we derive $\clockN \in \rdy{\tsbP}$.
      Hence, we can use \nrule{\erule}\ and \nrule{\serule}\ rule from~\Cref{demove} to
      obtain:
      \[
      \irule{
        \irule{ (\tsbX \eqdef \tsbP)  \in D \quad  p = \oplus \ldots \quad  \clockN \in \rdy{p}  }
        {  (\tsbX,\; \clockN) \;\demove{\tau}\; (\tsbTauX,\; \clockN) }
      }{
        (\tsbX,\clockN) \mid (y,\clockE) \demove {\; \tau  \;}(\tsbTauX,\clockN) \mid (y,\clockE)
      }
      \]
      Since by \nrule{[TA1]},  $ \reset{(\clockN \sqcup \clockE)}{\emptyset} \in \sem{\Inv[1](\tsbTauX)
        \land \Inv[2](y)}$,  the resulting states, belong to $\relR$.
      
      \vspace{10pt}
    \item Assume that we fire the second edge $(\tsbTauX, \tau,
      \guardG, \emptyset, x')$.  Hence $x = \tsbTauX$, $x' = \tsbY$
      and $\resetR = \emptyset$.  By \nrule{[TA1]}, $\clockN \sqcup \clockE \in
      \sem{\guardG}$, since $\ck{D'} \cap
      \ck{D''} = \emptyset$, we derive $\clockN \in
      \sem{\guardG}$. Hence, we can use \nrule{\crule}\ and \nrule{\serule}\ rules
      from~\Cref{demove} to obtain:
      \[
      \irule{
        \irule{ (\tsbX \eqdef \TsumI{\atomOut{a}}{\guardG,\resetR}{}{\tsbY} \oplus \tsbPi)  \in D  
          \quad  \clockN \in \sem{\guardG} }
        {  (\tsbTauX,\; \clockN) \;\demove{\tau}\; ( \todo[\TsumI{\atomOut{a}}{\guardG,\resetR}{}]{\tsbY},\; \clockN) }
      }{
        (\tsbTauX,\clockN) \mid (y,\clockE) \demove {\; \tau  \;}( \todo[\TsumI{\atomOut{a}}{\guardG,\resetR}{}]{\tsbY},\clockN) \mid (y,\clockE)
      }
      \]
      By hypothesis  
      $\reset{(\clockN \sqcup \clockE)}{\emptyset} \in \sem{\Inv[1](x') \land \Inv[2](y)}$,  
      so the resulting states belong to $\relR$.

    \end{enumerate}

  \item \nrule{[TA2]}.
    \[
    \irule{  
      \begin{array}{c}
        (x, \atomOut{a}, \guardG, \resetR, x') \in \Edg[1]  \quad  
        (\clockN \sqcup \clockE) \in \sem{\guardG} \quad 
        \resett{ (\clockN \sqcup \clockE)}{\resetR}{\resetT} \in \sem{\Inv[1](x)}
        \\
        (y, \atomIn{a}, \guardF, \resetT, y') \in \Edg[2]   \quad  
        (\clockN \sqcup \clockE) \in \sem{\guardF} \quad 
        \resett{ (\clockN \sqcup \clockE)}{\resetR}{\resetT} \in \sem{\Inv[2](y)}
      \end{array}
    }
    { (x,y,\clockN  \sqcup \clockE) \umove{\quad \atom{a} \quad } (x', y',\resett{(\clockN \sqcup \clockE)}{\resetR}{\resetT})} 
    \]
    By~\Cref{mapping}, a synchronization happens only if an
    internal-external choice synchronizes.
    Hence,  
    $x = \todo[\TsumI{\atomOut{a}}{\guardG,\resetR}{}]{\tsbX}$ and 
    $y = \tsbY$ for some $\tsbY \eqdef \atomIn{a}\setenum{\guardF,\resetS}.{\tsbYi} + \tsbP$.  
    By~\Cref{mapping}, the last part of the mapping for an
    internal choice is such as $\pfx(x, \atomOut{a}, \resetR,
    \idle(\tsbXi))$.
    By definition of the \pfx\ pattern
    and by~\Cref{lem:mapping:inv}, 
    there exists an edge $(x, \atomOut{a}, \guardTrue, \resetR, \tsbXi) \in
    \Edg[1]$ and $\Inv[1](\tsbXi) = \guardTrue$.
    By~\Cref{mapping}, the mapping for an external choice is:
    $\sem{ \tsbY \triangleq  \TsumE{\atomIn{a}}{\guardF,\resetT}{\tsbYi} + \tsbQ }  =   
    \br(\tsbY, \guardTrue, \setenum{ (\atomIn{a}, \guardF, \resetT, \idle(\tsbYi))} \cup S )$ for some $S$.
    By definition of the \br\ pattern, there
    exists an edge   
    $(\tsbY, \atomIn{a}, \guardF, \resetT, \tsbYi)$ 
    and $\Inv[1](\tsbYi) = \guardTrue$.
    So in this case \nrule{[TA2]} becomes:
    \[
    \irule{
      \begin{array}{c}
        ( x, \atomOut{a}, \guardTrue, \resetR, \tsbX') \in \Edg[1]  \quad  
        (\clockN \sqcup \clockE) \in \sem{\guardTrue} \quad 
        \resett{ (\clockN \sqcup \clockE)}{\resetR}{\resetT} \in \sem{\Inv[1](\tsbX')}
        \\
        (\tsbY, \atomIn{a}, \guardF, \resetT, \tsbY') \in \Edg[2]   \quad  
        (\clockN \sqcup \clockE) \in \sem{\guardF} \quad 
        \resett{ (\clockN \sqcup \clockE)}{\resetR}{\resetT} \in \sem{\Inv[2](\tsbY')}
      \end{array}
    }
    { ( x, \tsbY,\clockN  \sqcup \clockE) \umove{\quad \atom{a} \quad } (\tsbX', \tsbY',\resett{(\clockN \sqcup \clockE)}{\resetR}{\resetT})} 
    \] 
    Since $\ck{D'} \cap \ck{D''} = \emptyset$ and since $(\clockN
    \sqcup \clockE) \in \sem{\guardF}$, we
    derive $\clockE \in \sem{\guardF}$.
    Hence we can apply \inrule \nrule{\outrule} and \nrule{\strule} to obtain:
    \[
    \irule{
      ({\todo[\TsumI{\atomOut{a}}{\guardG,\resetR}{}]{\tsbXi}}, \; \clockN) \;\demove{\atomOut{a}}\; (\tsbXi,\; \reset{\clockN}{\resetR})  \quad
      \irule{(\tsbY \eqdef \TsumE{\atomIn{a}}{\guardF,\resetT}{\tsbYi} + \tsbQ) \in D \quad \clockE \in \sem{\guardF}  }
      {(\tsbY, \; \clockN) \;\demove{\atomIn{a}}\; (\tsbYi, \; \reset{\clockE}{\resetT}) } 
    }
    {(\todo[\TsumI{\atomOut{a}}{\guardG,\resetR}{}]{\tsbXi},\clockN) \mid (\tsbY,\clockE) \demove{\; \atom{a} \;}   (\tsbXi,\reset{\clockN}{\resetR}) \mid (\tsbYi,\reset{\clockE}{\resetT})}
    \] 
    By hypothesis 
    $\resett{(\clockN \sqcup \clockE)}{\resetR}{\resetT} \in \sem{\Inv[1](\tsbX) \land \Inv[2](\tsbY)}$,  
    so the resulting states belong to $\relR$.

  \item \nrule{[TA3]}. 
    \[
    \irule{ ((\clockN \sqcup \clockE) + \delta) \in \sem{\Inv[1](x)}  \quad  x \not\in \UrgLoc[1] \quad  
      ((\clockN \sqcup \clockE) + \delta) \in \sem{\Inv[2](y)}  \quad   y \not\in \UrgLoc[2] } 
    {(x,y,\clockN \sqcup \clockE) \umove{\quad \delta \quad } (x,y,(\clockN \sqcup \clockE) + \delta)}
    \]
    By \nrule{[TA3]}, time can pass in a network only if
    all the locations of the current state are not urgent. According
    with~\Cref{mapping}, this happens for: the second location in
    the mapping of an internal choice or the first location in the
    mapping of an external choice or a success location. Hence, we
    have several sub-cases:

    \begin{enumerate}[label=(\roman*)]

    \item Assume that $x$ derives from an internal choice and
      $y$ from an external choice.  
      By~\Cref{mapping} and~\Cref{lem:mapping:inv}, $x$ must be of the form
      $x = \Inv[1](\tsbTauX)$ for some $\tsbX \eqdef \tsbP \in D$ with
      $\tsbP = \oplus \ldots$, obtaining
      $\Inv[1](\tsbTauX) = \rdy{\tsbP}$ and
      $\tsbTauX \not\in \UrgLoc[1]$.
      By~\Cref{mapping} and~\Cref{lem:mapping:inv}, $y$ must be of the form $y = \tsbY$
      for some $\tsbY \eqdef \tsbQ \in D$ with $\tsbQ = + \ldots$,
      obtaining $\Inv[2](\tsbY) = \rdy{\tsbQ} = \guardTrue$ and $\tau
      \tsbY \not\in \UrgLoc[2]$.
      So in this case \nrule{[TA2}] becomes: 
      \[
      \irule{((\clockN \sqcup \clockE) + \delta) \in \sem{\Inv[1](\tsbTauX)} \quad   (\tsbTauX) \not\in \UrgLoc[1] \quad 
        ((\clockN \sqcup \clockE) + \delta) \in \sem{\Inv[2](\tsbY)} \quad   (\tsbY) \not\in \UrgLoc[2]  } 
      {(\locL[1],\locL[2],\clockN \sqcup \clockE) \umove{\quad \delta \quad } (\locL[1],\locL[2],(\clockN \sqcup \clockE) + \delta)}
      \]
      By hypothesis of \nrule{[TA2]}, $((\clockN \sqcup \clockE) + \delta) \in
      \sem{\Inv[1](\tsbTauX)} = \rdy{\tsbP}$.  Since $\ck{D'} \cap
      \ck{D''} = \emptyset$, we derive $\clockN + \delta \in
      \rdy{\tsbP}$.  Hence, we have:
      \[
      \irule{  
        \irule{ \tsbX \eqdef \tsbP \in D \quad \clockN + \delta \in \rdy{\tsbP}} 
        {(\tsbTauX,\; \clockN)\demove{\; \delta \; }(\tsbTauX,\; \clockN+\delta)} \quad
        \irule{ (\tsbY = + \ldots)\in D  \lor  (\tsbY = \success)\in D } 
        {( \tsbY,\; \clockN)\demove{\; \delta \; }(\tsbY,\; \clockN+\delta)} 
      }
      { (\tsbTauX,\clockN) \mid (\tsbY,\clockE) \demove{\; \delta \;} (\tsbTauX,\clockN+\delta)
        \mid (\tsbY,\clockE+\delta)
      } 
      \]
      Since by hypothesis,  $ ((\clockN \sqcup \clockE)+\delta) \in \sem{\Inv[1](\tsbTauX)
        \land \Inv[2](\tsbY)}$,  the resulting states, belong to $\relR$.
      
    \item All the other combinations can be proved  similarly to the previous one. 
      
    \end{enumerate}
  \end{itemize}
\end{proofof}

\subsection{Implementation  details in \UPP} 
\label{sec:upp_guards}

\UPP does not allow disjunctions in guards and invariants, because of efficiency issues.
However, with little adjustments it is possible to recover disjunctions.
Disjunctions in guards on edges can be achieved by creating  several
copies of the edge, one for each disjunct. 
E.g., the encoding of 
$\atomOut{a}\setenum{\clockT < 4 \lor \clockT > 10}$ may give an \TA with
edges $(l_0,\atomOut{a}, \clockT < 4, \emptyset, l_1)$ and 
$(l_0,\atomOut{a}, \clockT > 10, \emptyset, l_1) $ for some $l_0$ and $l_1$.

An invariant with disjunctions invariant can  be split into its disjuncts, 
and the location can be multiplied  accordingly. 
In~\Cref{fig:upp_invariant} we see the actual implementation of~\Cref{ex:mapping}: 
location names are different for what seen in the theory, since they are immaterial when checking deadlock-freedom. 
The invariant of location $\tsbTauX$ is $\clockT < 7 \lor \clockFmt{c} \leq 2$, 
so we split it in two clauses $\clockT < 7$ and $\clockFmt{c} \leq 2$,
and each clause is assigned to a new location ($XX0aXX01$ and $XX0bXX21$).
The resulting network
is no longer bisimilar to the previous one:
\eg, in the original network time can delay until one of the two
guards is still true, while in the one in~\Cref{fig:upp_invariant} it
can delay, non-deterministically, sometimes until $2$ and sometimes
until $7$ time unit.  
Nevertheless, both networks are equivalent respect to deadlock-freedom.

Finally, since \UPP supports \emph{urgent} locations, no other
adjustments are needed.

\begin{figure*}[t]
   {\includegraphics[width=0.5\textwidth]{./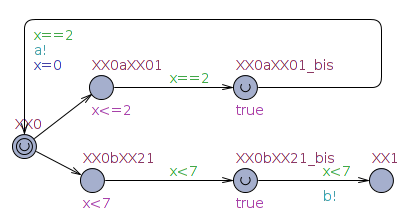}} 
   {\includegraphics[width=0.35\textwidth]{./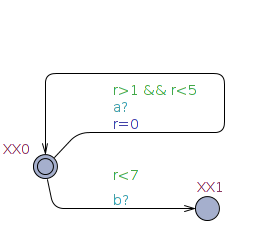}}
\caption{\UPP mapping of the TSTs in~\Cref{ex:mapping}.}
\label{fig:upp_invariant}
\end{figure*}
\newpage

\section{Proofs for~\texorpdfstring{\Cref{sec:tst-monitoring}}{Section 9}}
\label{sec:proofs-monitoring}

\begin{proofof}{lem:st:turn-bisimilar}
  Let us consider the LTS $\smove{}$ of~\Cref{fig:tst:s_semantics} 
  and the set of success states
  $S_1 = \setcomp{(\success,\clockN)\mid (\tsbQ,\clockE)}{\tsbQ \text{ TST}, \clockN, \clockE \in \Val}$. %
  Also, consider the LTS $\mcmove{}$ of~\Cref{fig:tst:turn-based} and 
  the set of success states 
  $S_2 = \setcomp{(\success, \queue{}, \clockN) \mmid (\tsbQ, \queue{}, \clockE)}{\tsbQ \text{ TST}, \clockN, \clockE \in \Val}$. %
  Let $\relR$ be the relation:
  \begin{align*}
    \relR & \; = \; 
    \relR[1] \, \cup \, \relR[2] \,\cup\, \relR[3] \,\cup\; \relR[2S] \cup \relR[3S]
    \\
    \relR[1]
    & \; = \; 
    \setcomp
    {( (\tsbP,\clockN) \mid (\tsbQ, \clockE), \;\; (\tsbP, \queue{}, \clockN) \mmid  (\tsbQ,  \queue{}, \clockE))}
    {\tsbP,\tsbQ \text{ TST}, \clockN,\clockE \in \Val}
    \\
    \relR[2] & \; = \;  
    \{ ( (\todo[\atomOut{a}\setenum{\guardG,\resetR}]\tsbP, \clockN) \mid( \tsbQ, \clockE), \;\;
    (\tsbP,  \queue{\atomOut{a}}, \reset{\clockN}{\resetR}) \mmid (\tsbQ, \queue{}, \clockE)  ) 
    \; | \; (\tsbP, \tsbQ \text{ TST}, \clockN \in \sem{\guardG} \}
    \\
    \relR[3] & \; = \;  
    \{ ( (\todo[ \atomOut{a}\setenum{\guardG,\resetR}]\tsbP,\clockN) \mid (\todo[ \atomOut{b}\setenum{\guardF, \resetS}]\tsbQ , \clockE),  \;\;
    (\tsbP ,  \queue{\atomOut{a}}, \reset{\clockN}{\resetR}) \mmid  (\atomOut{b}\setenum{\guardF, \resetS}.\tsbQ \oplus \tsbQi, \queue{}, \clockE))
    \; | \; \cdots \} 
    \\
    \relR[2S] & \; = \;  
    \{ ( (\tsbP, \clockN) \mid  (\todo[\atomOut{a}\setenum{\guardF,\resetS}]\tsbQ, \clockE), \;\;
     (\tsbP, \queue{}, \clockN) \mmid (\tsbQ,  \queue{\atomOut{a}}, \reset{\clockN}{\resetS})  ) 
    \; | \; \tsbP, \tsbQ \text{ TST}, \clockE \in \sem{\guardF} \}
    \\
    \relR[3S] & \; = \;  
    \{ ( (\todo[ \atomOut{a}\setenum{\guardG,\resetR}]\tsbP,\clockN) \mid (\todo[ \atomOut{b}\setenum{\guardF, \resetS}]\tsbQ , \clockE),  \;\;
    ( \atomOut{a}\setenum{\guardG, \resetR}.\tsbP \oplus \tsbPi,  \queue{}, \clockN) \mmid  (\tsbQ, \queue{\atomOut{b}}, \reset{\clockE}{\resetR})
    \; | \; \cdots \}
  \end{align*}
  Let $s_1 = (\tsbP,\clockN) \mid (\tsbQ,\clockE)$,
  and let $s_2 = (\tsbP, \queue{}, \clockN) \mmid (\tsbQ,  \queue{}, \clockE)$.
  Clearly, $s_1 \relR s_2$,
  hence to obtain the thesis we will prove that $\relR$ is a turn-bisimulation.
  The proof is organised as follows.
  In {\bf Part A} we show that $s_2$ turn-simulates $s_1$ via $\relR$,
  and in {\bf Part B} that $s_1$ turn-simulates $s_2$ via $\relR$.
  Within each part, we proceed by cases on the form of $s_1$ and $s_2$:
  in {\bf Case 1} we assume that $(s_1,s_2) \in \relR[1]$,
  in {\bf Case 2} that $(s_1,s_2) \in \relR[2]$, and 
  in {\bf Case 3} that $(s_1,s_2) \in \relR[3]$.
  We omit cases for $\relR[2S]$ and $\relR[3S]$, since they are specular to  cases $\relR[2]$  and $\relR[3]$.
  For each case, we show that 
  items~\eqref{item:turn-simulation:a}, \eqref{item:turn-simulation:b}, and~\eqref{item:turn-simulation:c} 
  of~\Cref{def:turn-simulation} hold. %
  We will only consider the moves of the LHS of a composition $P \circ Q$;
  all the symmetric cases will be omitted.

  \medskip
  \noindent {\bf Part A:} $s_2$ turn-simulates $s_1$ via $\relR$.

  \smallskip
  \noindent {\bf Case 1}: 
  Let $s_1 = (\tsbP,\clockN) \mid (\tsbQ,\clockE)$ and 
  $s_2 = (\tsbP, \queue{}, \clockN) \mmid  (\tsbQ,  \queue{}, \clockE)$.
  \begin{enumerate}[label=\({\alph*}]

  \item To prove item~\eqref{item:turn-simulation:a} of~\Cref{def:turn-simulation}, 
    we consider the possible moves of $s_1$:
    \begin{itemize}[label=$-$]

    \item \nrule{[S-$\sumInt$]}. %
      We have:
      \[
      \irule
      {(\tsbP,\clockN) \smove{\tau} (\todo[\atomOut{a}\setenum{\guardG,\resetR}]\tsbPi, \clockN)}
      {s_1 \smove{\tau} (\todo[\atomOut{a}\setenum{\guardG,\resetR}]\tsbPi,\clockN) \mid (\tsbQ, \clockE) = s_1'}
      \]      
      where the premise requires 
      $\tsbP = \atomOut{a}\setenum{\guardG,\resetR}.\tsbPi \sumInt \tsbPii$ and $\clockN \in \sem{\guardG}$. %
      Hence, by rule~\nrule{[M-$\sumInt$]} we have:
      \[
      s_2 
      \cmove{\pmvA \says\atomOut{a}} 
      (\tsbPi, \queue{\atomOut{a}}, \reset{\clockN}{\resetR}) \mmid (\tsbQ, \queue{}, \clockE)
      = s_2'
      \]
      Then,
      $(s_1',s_2') \in \relR[2] \subseteq \relR$. 

    \item \nrule{[S-$\tau$]}. %
      This case does not apply.

    \item \nrule{[S-Del]}. 
      We have:
      \[
      \irule
      {(\tsbP, \clockN) \smove{\delta} (\tsbPi,\clockN+\delta) \qquad
       (\tsbQ, \clockE) \smove{\delta} (\tsbQi,\clockE+\delta)}
      {s_1 \smove{\delta} (\tsbPi, \clockN+\delta) \mid (\tsbQi,\clockE+\delta) = s_1'}
      \]
      The only rule which can be used in the premises of the above
      is~\nrule{[Del]}, which implies that
      $\tsbP = \tsbPi$, $\tsbQ = \tsbQi$,
      $\clockN + \delta \in \rdy{\tsbP}$ and 
      $\clockE + \delta \in \rdy{\tsbQ}$.
      Then, the thesis follows by rule~\nrule{[M-Del]}.

    \end{itemize}

  \item To prove item~\eqref{item:turn-simulation:b} of~\Cref{def:turn-simulation}, 
    we consider the possible moves of $s_2$:
    \begin{itemize}[label=$-$]

    \item \nrule{[M-$\sumInt$]}.
      We have
      $\tsbP = \atomOut{a}\setenum{\guardG,\resetR}.\tsbPi \sumInt \tsbPii$,
      $\clockN \in \sem{\guardG}$, and:
      \[
      s_2 
      \cmove{\pmvA \says\atomOut{a}}
      (\tsbPi, \queue{\atomOut{a}}, \reset{\clockN}{\resetR}) \mmid (\tsbQ,\queue{},\clockE)
      = s_2'
      \]
      So, by rules~\nrule{[$\sumInt$]} and~\nrule{[S-$\sumInt$]} we have:
      \[
      \irule
      {\tsbP \smove{\tau} (\todo[\atomOut{a}\setenum{\guardG,\resetR}] \tsbPi, \clockN)}
      {s_1 \smove{\tau} (\todo[\atomOut{a}\setenum{\guardG,\resetR}] \tsbPi, \clockN) \mid (\tsbQ, \clockE)}
      = s_1'
      \]
      and we conclude that $(s_1',s_2') \in \relR[2] \subseteq \relR$.

    \item \nrule{[M-$\sumExt$]}.
      This case does not apply, since both buffers are empty.

    \item \nrule{[M-Del]}.
      We have
      $\clockN + \delta \in \rdy{\tsbP}$, 
      $\clockE + \delta \in \rdy{\tsbQ}$, and:
      \[
      s_2 
      \cmove{\delta}
      (\tsbP, \queue{}, \clockN+\delta) \mmid (\tsbQ,\queue{},\clockE+\delta)
      = s_2'
      \]
      Then, rule~\nrule{[Del]} yields 
      $(\tsbP,\clockN) \smove{\delta} (\tsbP, \clockN+\delta)$
      and
      $(\tsbQ,\clockE) \smove{\delta} (\tsbP, \clockE+\delta)$.
      Hence, by rule~\nrule{[S-Del]} we conclude that:
      \[
      s_1 \smove{\delta} (\tsbP, \clockN+\delta) \mid (\tsbQ, \clockE+\delta) = s_1'
      \]
      and the thesis follows because $(s_1',s_2') \in \relR[1] \subseteq \relR$ .  %

    \end{itemize}

  \item
    To prove item~\eqref{item:turn-simulation:c} of~\Cref{def:turn-simulation}, 
    assume that $s_2 \in S_2$.
    By definition of $S_2$, $s_2$ has the form
    $(\success,\queue{},\clockN) \mmid (\tsbQ, \queue{}, \clockE)$.
    Then, $s_1 = (\success, \clockN) \mid (\tsbQ, \clockE) \in S_1$. 

  \end{enumerate}

  \noindent {\bf Case 2}: 
    Let $s_1 = (\todo[ \atomOut{a}\setenum{\guardG,\resetR}]\tsbP, \clockN) \mid (\tsbQ, \clockE)$ and 
    $s_2 = (\tsbP, \queue{\atomOut{a}}, \reset{\resetR}{\clockN}) \mmid
    (\tsbQ, \queue{}, \clockE)$ with  $\clockN \in \sem{\guardG}$.
 
  \begin{enumerate}[label=\({\alph*}]
   \item
    To prove item~\eqref{item:turn-simulation:a} of~\Cref{def:turn-simulation}, 
    we consider the possible moves of $s_1$:
    \begin{itemize}[label=$-$]
    \item \nrule{[S-$\sumInt$]}. We have: 
      \[
      \irule
      {(\tsbQ,\clockE) \smove{\tau} (\todo[\atomOut{b}\setenum{\guardF,\resetS}]\tsbQi, \reset{\resetS}{\clockE)}}
      {s_1 \smove{\tau} 
        (\todo[\atomOut{a}\setenum{\guardG,\resetR}]\tsbP,\clockN)
        \mid  (\todo[\atomOut{b}\setenum{\guardF, \resetS}]\tsbQ',  \reset{\resetS}{\clockE}) = s_1'}
      \]      
      where the premise requires 
      $\tsbQ = \atomOut{a}\setenum{\guardF,\resetS}.\tsbQi \sumInt \tsbQii$ and $\clockE \in \sem{\guardF}$. %
      Hence, by rule~\nrule{[M-$\sumInt$]} we have:
      \[
      s_2 
      \cmove{\pmvB \says\atomOut{b}} 
      (\tsbPi, b, \clockN) \mmid (\tsbQi, \queue{\atomOut{b}}, \reset{\clockE}{\resetS}) = s_2'
      \]
      Then,
      $(s_1',s_2') \in \relR[2] \subseteq \relR$. 

    \item \nrule{[S-$\tau$]}. We have: 
      \[
       \irule
       { (\todo[ \atomOut{a}\setenum{\guardG,\resetR}]\tsbP, \clockN) \smove{\atomOut{a}} (\tsbP,
          \reset{\clockN}{\resetR}) \quad 
        (\tsbQ, \clockE) \smove{\atomIn{a}} (\tsbQi,\reset{\clockE}{\resetS}) 
       }
       {(\tsbP,\clockN) \mid (\tsbQ,\clockE) \smove{\; \tau \;} 
            (\tsbPi,\clockNi) \mid (\tsbQi,\clockEi) = s_1'}
      \]
      where the premise requires 
      $\tsbQ = \atomIn{a}\setenum{\guardF, \resetS}.\tsbQi + \tsbQii$,
      with $\clockN \in \sem{\guardG}$ and
      $\clockE \in \sem{\guardF}$ . %
      Since $\clockE \in \sem{\guardF}$, by rule~\nrule{[M-$\sumExt $]}
      we have:
      \[
         { (\tsbP, \queue{\atomOut{a}}, \clockN) \mmid 
           (\TsumE{\atomIn{a}}{\guardG,\resetR}{\tsbQi} \sumExt \tsbQii}, \queue{}, \clockE)
           \cmove{\pmv B \says \atomIn{a}}
         (\tsbP, \queue{}, \clockN)  \mmid  (\tsbQi, \queue{},
          \reset{\clockE}{\resetR}) = s_2'
      \]
      Then,
      $(s_1',s_2') \in \relR[2] \subseteq \relR$.

    \item \nrule{[S-Del]}. 
      This case does not apply, since one
      buffers is not  empty.
    \end{itemize}

  \item
    To prove item~\eqref{item:turn-simulation:b} of~\Cref{def:turn-simulation}, 
    we consider the possible moves of $s_2$:
    \begin{itemize}[label=$-$]

    \item \nrule{[M-$\sumInt$]}.
      This case does not apply, given the form of $s_2$.

    \item \nrule{[M-$\sumExt$]}.
      We have:
      \[
      s_2 
      \cmove{\pmv{B} \says\atomIn{a}}  
      (\tsbP, \queue{},\reset{\clockN}{\resetR}) \mmid (\tsbQi,\queue{}, \reset{\clockE}{\resetS})
      = s_2'
      \]
      which requires
      $\tsbQ = \atomIn{a}\setenum{\guardF, \resetS}.\tsbQi + \tsbQii$, with $ \clockE \in \sem{\guardF}$. %
      Since by hypothesis $\clockN \in \sem{\guardG}$, 
      by rule~\nrule{[S-$\sumInt$]} we have:
      \[
      \irule
      {\irule{}{(\todo[\atomOut{a}\setenum{\guardG,\resetR}]\tsbP, \clockN) \smove{\atomOut{a}} (\tsbP,\reset{\clockN}{\resetR}) } \;\nrule{[$\sumInt$]} 
        \quad
       \irule{}{(\atomIn{a}\setenum{\guardF, \resetS}.\tsbQi + \tsbQii, \clockE) \smove{\atomIn{a}} (\tsbQi,\reset{\clockE}{\resetS}) } \;\nrule{[$\sumExt$]}}
      {s_1 \smove{\tau} (\tsbP,\reset{\clockN}{\resetR})  \mid (\tsbQi, \reset{\clockE}{\resetS}) = s_1'}
      \] 
      and the thesis follows because $(s_1',s_2') \in \relR[1] \subseteq \relR$. %

    \item \nrule{[M-Del]}.
      This case does not apply, given the form of $s_2$.  

    \end{itemize}

  \item
    To prove item~\eqref{item:turn-simulation:c} of~\Cref{def:turn-simulation}, 
    assume that $s_2 \in S_2$.
    By definition of $S_2$, $s_2$ has the form
    $(\success,\queue{},\clockN) \mmid (\tsbQ, \queue{}, \clockE)$.
    Then, this case does not apply.

  \end{enumerate}

  \noindent {\bf Case 3}: 
  Let $s_1 = (\todo[\atomOut{a}\setenum{\guardG,\resetR}] \tsbP \mid \todo[\atomOut{b}\setenum{\guardF, \resetS}] \tsbQ, \clockN)$, 
  and let 
  $s_2 = (\tsbP, \queue{\atomOut{a}}, \reset{\clockN}{\resetR}) \mmid$ 
  $(\atomOut{b}\setenum{\guardF, \resetS}.\tsbQ \oplus \tsbQi,\queue{},\clockE)$.
  The thesis follows trivially, since both $s_1$ and $s_2$ are stuck, and neither of them is a success state.
  
  \medskip
  \noindent {\bf Part B:} $s_1$ turn-simulates $s_2$ via $\relR$.
    {

  \noindent{\bf Case 1}: 
  Let $s_1 = (\tsbP,\clockN) \mid (\tsbQ,\clockE)$ and 
  $s_2 = (\tsbP, \queue{}, \clockN) \mmid  (\tsbQ,  \queue{}, \clockE)$.
  \begin{enumerate}[label=\({\alph*}]

  \item To prove item~\eqref{item:turn-simulation:a} of~\Cref{def:turn-simulation}, 
    we consider the possible moves of $s_2$:
    \begin{itemize}[label=$-$]

    \item \nrule{[M-$\sumInt$]}. We have: 
      \[
         s_2 \cmove {\pmvA \says\atomOut{a}} (\tsbPi, \queue{\atomOut{a}},  \reset{\clockN }{\resetR})
                           \mmid  (\tsbQ, \queue{}, \clockE  ) = s_2'
      \] 
      where the premise requires  $\tsbP =  \atomOut{a}\setenum{\guardG,\resetR}.\tsbPi \oplus \tsbPii$ 
      and  $\clockN \in \sem{\guardG}$.
      Hence, \nrule{[S-$\sumInt$]} yields:
      \[
      \irule
      {(\tsbP,\clockN) \smove{\tau} (\todo[\atomOut{a}\setenum{\guardG,\resetR}]\tsbPi, \clockN)}
      {s_1 \smove{\tau} (\todo[\atomOut{a}\setenum{\guardG,\resetR}]\tsbPi,\clockN) \mid (\tsbQ, \clockE) = s_1'}
      \]      
      Then,
      $(s_1',s_2') \in \relR[2] \subseteq \relR$. 

    \item \nrule{[M-$\sumExt$]}.
      This case does not apply.

    \item \nrule{[M-Del]}. We have: 
      \[
        s_2 \cmove{\delta}  (\tsbP, \queue{}, \clockN +\delta \mmid \tsbQ, \queue{},  \clockE +\delta )
      \]
      where the premise requires  
      $\clockN +\delta \in  \rdy{\tsbP}$ and  $\clockN +\delta \in \rdy{\tsbQ}$.
      Hence, \nrule{[Del]} yields:
      \[
      \irule
      {(\tsbP, \clockN) \smove{\delta} (\tsbP,\clockN+\delta) \qquad
       (\tsbQ, \clockE) \smove{\delta} (\tsbQ,\clockE+\delta)}
      {s_1 \smove{\delta} (\tsbP, \clockN+\delta) \mid (\tsbQ,\clockE+\delta) = s_1'}
      \]
       Then,
      $(s_1',s_2') \in \relR$. 
  \end{itemize}

  \item
    To prove item~\eqref{item:turn-simulation:b} of~\Cref{def:turn-simulation}, 
    we consider the possible moves of $s_1$:
    \begin{itemize}[label=$-$]

    \item \nrule{[S-$\sumInt$]}. We have: 
      \[
      s_1 \smove{\todot} (\todo[ \atomOut{a}\setenum{\guardG,\resetR}]\tsbPi, \reset{\clockN}{\resetR})
      \mid (\tsbQ, \clockE)
      \] whose premise requires  
      $\tsbP =
      \atomOut{a}\setenum{\guardG,\resetR}.\tsbPi \oplus \tsbPii$, with
      $\clockN \in \sem{\guardG}$.  Hence $s_2 \cmove{\pmvA \says\atomOut{a}}$.

    \item \nrule{[S-$\tau$]}.
     This case does not apply.

    \item \nrule{[S-Del]}. We have: 
    \[
     s_1 \smove{\delta} (\tsbP, \clockN + \delta) \mid (\tsbQ, \clockE + \delta)
    \] 
    where the premise requires  
      $\clockN +\delta \in  \rdy{\tsbP}$ and  $\clockN +\delta \in \rdy{\tsbQ}$.
    Hence, by \nrule{[M-Del]} we have   $s_2 \cmove{\delta}$.
    \end{itemize}

  \item To prove  item~\eqref{item:turn-simulation:b} of~\Cref{def:turn-simulation},  assume  that  $s_2
    \in S_2$. By definition of $S-2$, $s_2$ has the form   $(\success \mid \success, \queue{},
    \clockN)$.  Then  $s_1 = (\success, \clockN) \mid (\success, \clockE) \in S_1$.
  \end{enumerate}

  \noindent {\bf Case 2}: 
  Let $s_1 = (\todo[ \atomOut{a}\setenum{\guardG,\resetR}]\tsbP, \clockN) \mid (\tsbQ, \clockE)$, 
  $s_2 = (\tsbP, \queue{\atomOut{a}}, \reset{\clockN}{\resetR}) \mmid (\tsbQ, \queue{}, \clockE)$, 
  and $\clockN \in \sem{\guardG}$.
  \begin{enumerate}[label=\({\alph*}]

  \item
    To prove item~\eqref{item:turn-simulation:a} of~\Cref{def:turn-simulation}, 
    we consider the possible moves of $s_2$:
    \begin{itemize}[label=$-$]

    \item \nrule{[M-$\sumInt$]}.
      This case does not apply.

    \item \nrule{[M-$\sumExt$]}. We have: 
      \[ 
      s_2 \cmove{\pmvB \says \atomIn{a}}(\tsbP , \queue{}, \reset{\clockN}{\resetR}), \mmid (\tsbQi,
      \queue{} ,\reset{\clockE}{\resetS})= s'_2 
      \] whose premise requires  
      $\tsbQ =
      \atomIn{a}\setenum{\guardF, \resetS}.\tsbQi + \tsbQii$ with $\clockE
      \in \sem{\guardF}$. %
      Since by hypothesis $\clockN \in \sem{\guardG}$, by \nrule{[S-$\tau$]} we have:
      \[
       \irule
       { (\todo[ \atomOut{a}\setenum{\guardG,\resetR}]\tsbP, \clockN) \smove{\atomOut{a}} (\tsbP,
          \reset{\clockN}{\resetR}) \quad 
        (\tsbQ, \clockE) \smove{\atomIn{a}} (\tsbQi,\reset{\clockE}{\resetS}) 
       }
       {(\tsbP,\clockN) \mid (\tsbQ,\clockE) \smove{\; \tau \;} 
            (\tsbPi,\clockNi) \mid (\tsbQi,\clockEi) = s_1'}
      \]
      Then,
      $(s_1',s_2') \in \relR$. 

    \item \nrule{[M-Del]}.
      This case does not apply.

    \end{itemize}

  \item
    To prove item~\eqref{item:turn-simulation:b} of~\Cref{def:turn-simulation}, 
    we consider the possible moves of $s_1$:
    \begin{itemize}[label=$-$]

    \item \nrule{[S-$\sumInt$]}. We have: 
      \[
      s_1 \smove{\todot} (\todo[
      \atomOut{a}\setenum{\guardG,\resetR}]\tsbP, \clockN) \mid (\todo
      [\atomOut{b}\setenum{\guardF, \resetS}]\tsbQ, \clockE) = s'_1
      \]
      whose premise requires $\tsbQ = \atomOut{b}\setenum{\guardF, \resetS}.\tsbQi \oplus
      \tsbQii$ and $\clockN \in \sem{\guardF}$.  We have that $s_2$ is stuck but
      so is $s'_1$; and $(s_1',s_2) \in \relR$.

    \item \nrule{[S-$\tau$]}. We have: 
      \[
      s_1 \smove{\tau} (\tsbP, \reset{\clockN}{\resetR}) \mid (\tsbQi,
      \reset{\clockE}{\resetS})
      \]whose premise requires $\tsbQ =
      \atomIn{b}\setenum{\guardF, \resetS}.\tsbQi + \tsbQii$ and $\clockE
      \in \sem{\guardF}$. Hence $s_2 \cmove{\pmvB \says \atomIn{a}}(\tsbP
      \mmid \tsbQi, \queue{}, \reset{\clockN}{\resetR}\sqcup \reset{\clockE}{\resetS}) = s'_2$ 
      and $(s'_1,s'_2) \in \relR$.

    \item \nrule{[S-Del]}.
    This case does not apply.
    \end{itemize}

  \item To prove \ref{item:turn-simulation:c}, let us assume $s_2 \in S_2$, which implies $s_2 = (\success \mid \success, \clockN)$.  By hypothesis of \emph{case 2} this is not possible.
  \end{enumerate}

  \noindent {\bf Case 3}: Let $s_1 = (\todo[
  \atomOut{a}\setenum{\guardG,\resetR}]\tsbP, \clockN) \mid (\todo[
  \atomOut{b}\setenum{\guardF, \resetS}]\tsbQ , \clockE)$, 
  $s_2 = (\tsbP, \queue{\atomOut{a}}, \reset{\clockN}{\resetR}) 
          \mmid (  \atomOut{b}\setenum{\guardF, \resetS}.\tsbQ \oplus \tsbQi), \queue{}, \clockE).
  $
  In this case, both $s_1$ and $s_2$ are stuck and neither of them is a
  success state.
  }
  \qed
\end{proofof}

\bigskip
The diagram in~\Cref{fig:graph-proofs} illustrates the 
dependencies among the proofs.

\begin{figure}[t]
  \centering
  \scalebox{0.5}[0.4]{
    \Large
    \input{graph-proofs.tex}
  }
  \caption{Dependencies among the proofs.}
  \label{fig:graph-proofs}
\end{figure}

\end{document}